\newcommand{\Rspace}{\mathbb{R}}
\newcommand{\Tspace}{\mathbb{T}}
\newcommand{\Xspace}{\mathbb{X}}
\newcommand{\R}{\mathbb{R}}
\newcommand{\N}{\mathbb{N}}
\newcommand{\X}{\mathbb{X}}
\newcommand{\T}{\mathbb{T}}
\newcommand{\tr}{\text{tr}}
\newcommand{\sign}{\text{sign}}
\newcommand{\para}{\noindent\textbf}
\newcommand{\TF}{f}
\newcommand{\toolname}{\text{TFZ}}
\newcommand{\newcolor}{black}
\newcommand{\new}{\textcolor{\newcolor}}
\newcommand{\SFIX}{\mathsf{S}}
\newcommand{\RSIGN}{\mathsf{R}}
\newcommand{\DSIGN}{\mathsf{D}}
\newcommand{\ROVERS}{\mathsf{RS}}
\newcommand{\DLARGEST}{\mathsf{D_{m}}}
\newcommand{\StressA}{{Stress A}}
\newcommand{\StressB}{{Stress B}}
\newcommand{\BrainA}{{Brain A}}
\newcommand{\BrainB}{{Brain B}}
\newcommand{\Ocean}{{Ocean}}
\newcommand{\Miranda}{{Miranda}}
\newcommand{\VS}{{Vortex Street}}
\newcommand{\HC}{{Heated Cylinder}}
\newcommand{\DP}{{D_+}}
\newcommand{\DN}{{D_-}}
\newcommand{\RP}{{R_+}}
\newcommand{\RN}{{R_-}}
\newcommand{\SA}{{S}}
\newcommand{\RRP}{{r_{r+}}}
\newcommand{\RRN}{{r_{r-}}}
\newcommand{\SRP}{{s_{r+}}}
\newcommand{\SRN}{{s_{r-}}}
\newcommand{\DDP}{{d_{d+}}}
\newcommand{\DDN}{{d_{d-}}}
\newcommand{\SE}{{\mathcal{S}}}
\declaretheorem[name=Lemma]{lemma}
\declaretheorem[name=Corollary]{corollary}
\title{TFZ: Topology-Preserving Compression of 2D Symmetric and Asymmetric Second-Order Tensor Fields}
\author{
  \authororcid{Nathaniel Gorski}{0009-0001-8205-5640}, 
  \authororcid{Xin Liang}{0000-0002-0630-1600}, 
  \authororcid{Hanqi Guo}{0000-0001-7776-1834},
  and \authororcid{Bei Wang}{0000-0002-9240-0700}
}
\abstract{
In this paper, we present a novel compression framework, {\toolname}, that preserves the topology of 2D symmetric and asymmetric second-order tensor fields defined on flat triangular meshes. A tensor field assigns a tensor---a multi-dimensional array of numbers---to each point in space. \new{Tensor fields, such as the stress and strain tensors, and the Riemann curvature tensor, are essential to both science and engineering.} The topology of tensor fields captures the core structure of data, and is useful in various disciplines, such as graphics (for manipulating shapes and textures) and neuroscience (for analyzing brain structures from diffusion MRI). Lossy data compression may distort the topology of tensor fields, thus hindering downstream analysis and visualization tasks. {\toolname} ensures that certain topological features are preserved during lossy compression. Specifically, {\toolname} preserves degenerate points essential to the topology of symmetric tensor fields and retains eigenvector and eigenvalue graphs that represent the topology of asymmetric tensor fields. \new{{\toolname} scans through each cell, preserving the local topology of each cell, and thereby ensuring certain global topological guarantees.} We showcase the effectiveness of our framework in enhancing \new{the} lossy scientific data compressors SZ3 and SPERR.

%In this paper, we present a novel compression framework, {\toolname}, that preserves the topology of 2D symmetric and asymmetric second-order tensor fields defined on flat triangular meshes. A tensor field assigns a tensor---a multi-dimensional array of numbers---to each point in space and is fundamental in both science and engineering. In materials science, stress and strain tensor fields characterize internal forces and deformation intensities within a material, providing essential insights into its response to applied loads. In general relativity, the Riemann curvature tensor field plays a key role in describing gravity as the curvature of spacetime. The topology of tensor fields captures the core structure of data, and is useful in various disciplines, such as graphics (for manipulating shapes and textures) and neuroscience (for analyzing brain structures from diffusion MRI). Lossy data compression may distort the topology of tensor fields, thus hindering downstream analysis and visualization tasks. {\toolname} ensures that certain topological features are preserved during lossy compression. Specifically, {\toolname} preserves degenerate points essential to the topology of symmetric tensor fields and retains eigenvector and eigenvalue graphs that represent the topology of asymmetric tensor fields. We showcase the effectiveness of our framework in enhancing lossy scientific data compressors (such as SZ3 and SPERR) by providing topological guarantees.

}
\keywords{Lossy compression, tensor fields, topology preservation, topological data analysis, topology in visualization}
\begin{document}

\maketitle

\section{Introduction}
\label{sec:introduction}

The concept of a tensor is intricate and has evolved across various disciplines from differential geometry to machine learning~\cite{Guo2021}. 
Intuitively, a \emph{tensor} is a mathematical entity that generalizes scalars, vectors, and matrices to higher dimensions while adhering to specific transformation rules. 
It can be represented as a multidimensional array of numbers. 
A \emph{tensor field} is a function that assigns a tensor to each point in a space. 
Tensor fields naturally arise across a variety of scientific domains. 
In fluid dynamics, the velocity gradient tensor field describes how the fluid's velocity changes from one point to another within the flow, providing information about the deformation and rotation of fluid elements \cite{meneveau2011lagrangian, haimes1999velocity}. 
In general relativity, the energy-momentum tensor field describes the matter and energy content of the universe \cite{callan1970new}, whereas the Riemann curvature tensor field describes the geometry of spacetime itself \cite{ahsan2008riemann}. 
In materials science, the stress and strain tensor fields are fundamental concepts that respectively describe the internal forces within a material and how the material deforms under external loads \cite{savage1981stress,schlenker1978strain}. 
In neuroscience, diffusion tensor fields (from diffusion tensor imaging or DTI) captures  the diffusion of water molecules in the white matter for the study of neural pathways \cite{le2001diffusion}.

The topology of tensor fields reflects the core structure of data and proves valuable across various fields, including graphics~\cite{alliez2003anisotropic, chen2008interactive, xu2017autonomous} (for manipulating shapes and textures) and neuroscience~\cite{jankowai2019robust} (for analyzing brain structures through DTI). 
The topology of a tensor field depends on the dimension of the domain, the size of the tensor, and whether or not the tensors are symmetric. 
For instance, the stress tensor is symmetric, while the velocity gradient tensor is asymmetric.
In this work, we consider 2D symmetric and asymmetric second-order tensor fields, where each field maps points in $\Rspace^2$ to $2 \times 2$ matrices. 

The topology of a 2D symmetric second-order tensor field is defined by its corresponding eigenvector fields~\cite{jankowai2019robust}, which are obtained by calculating the eigenvectors and eigenvalues of each tensor. The topology includes \emph{degenerate points} (where the eigenvalues are equal) and their connections through \emph{tensorlines}  (lines aligned with the eigenvector fields). \new{The topology determines high-level patterns in the tensorlines, which are one of the main tools used to analyze symmetric tensor fields. It is also often used to derive various types of visualizations \cite{jankowai2019robust, tricoche2001tensor}. Further, in certain scenarios, the degenerate points of tensor fields have real physical interpretations \cite{lavin1997singularities, zhang2017applying, Filho2016Automatic}. }
On the other hand, the topology of a 2D asymmetric second-order tensor field is derived from certain partitions of the domain according to decompositions of asymmetric tensors. \new{Asymmetric tensor fields are very difficult to analyze, and their topology is one of the only tools available for analysis and visualization}~\cite{zhang2008asymmetric, LinYehLaramee2012, palke2011asymmetric, khan2019multi, auer2013automatic, hergl2021visualization}.

Despite their utility, working with tensor fields is often constrained by significant storage requirements. For instance, storing a tensor field composed of \( 2 \times 2 \) matrices necessitates 256 bits of storage per data point (assuming double precision). Consequently, the storage and transmission of tensor field data can present substantial challenges within scientific workflow\new{s}.
One possible solution to address this issue is to utilize lossy compressors developed specifically for scientific data, which typically ensure strict pointwise error bounds.  
However, applying these compressors to tensor fields often distorts \new{high-level geometric or topological patterns in the data,} even with minimal error bounds, thereby hindering downstream analysis and visualization. \new{This is especially problematic if topological analysis is prioritized}.

In this paper, we introduce {\toolname}, a novel framework that augments any error-bounded lossy compressor to preserve the topology of 2D symmetric and asymmetric second-order tensor fields defined on flat triangular meshes. It iterates through the cells in the mesh, enforcing specific local properties that, when preserved across all cells, ensure that global topological properties are preserved. In summary: 
\begin{itemize}[noitemsep,leftmargin=*]
\item {\toolname} could be used to enhance any error-bounded lossy compressor to provide topological guarantees. For symmetric tensor fields, it preserves the degenerate points in the eigenvector fields. For asymmetric tensor fields, it maintains the integrity of eigenvector and eigenvalue graphs. Additionally, {\toolname} guarantees a strict pointwise error bound for each entry of every tensor.
\item As a secondary contribution, we present a new topological invariant that characterizes the topology of a mesh cell in a 2D asymmetric tensor field. 
\item We demonstrate the effectiveness of our framework through a comprehensive evaluation, enhancing two lossy scientific data compressors—SZ3 and SPERR—while ensuring topological guarantees.
\end{itemize}
\section{Related Work}
\label{sec:related-work}

We review error-bounded lossy compression for scientific data, the topology and visualization of tensor fields, and topology-preserving compression.

\para{Lossy data compression.} Lossless compression ensures perfect data recovery but achieves limited compression ratios for scientific data, typically less than \(2\times\) according to~\cite{son2014data}. 
Given the massive size of scientific datasets, lossy compression is generally preferred, \new{and} error-bounded lossy compression has been developed \new{for applications that} require guarantees on data distortion.~\new{See \cite{di2024survey} for a survey.}

There are two main types of error-bounded lossy compressors: prediction-based and transformation-based. Prediction-based methods use interpolation strategies to generate an initial guess for the data. They also compute and encode any corrections that must be made to the initial guess to ensure that a strict error bound is maintained. ISABELA~\cite{lakshminarasimhan2013isabela}, one of the first error-controlled prediction-based compressors, uses B-splines to predict data. SZ3~\cite{liang2022sz3,zhao2021optimizing,liang2018error}, the most recent general release in the SZ compressor family, uses a Lorenzo predictor~\cite{ibarria2003out}, cubic spline interpolation, and linear interpolation. 

Recently, deep learning models have been employed as predictors for data compression, such as the autoencoder~\cite{le2023hierarchical} and implicit neural representation (INR)~\cite{lu2021compressive}.
An autoencoder is a neural network with two components: an encoder and a decoder.
The encoder produces low-dimensional representations of the input data, while the decoder reconstructs the original input data from the output of the encoder.  
An INR trains a small neural network that can be used to approximate the ground truth. 
The neural network itself is shipped as a compressed file, and to decompress it, one must simply evaluate the network on an appropriate input. 
One notable INR model for volumetric scalar fields is Neurcomp~\cite{lu2021compressive}. However, neural-based compression is computationally expensive and suffers from limited performance.

Transform-based lossy compressors rely on domain transformations for data decorrelation. 
For instance, ZFP~\cite{lindstrom2014fixed} divides data into small blocks. Each block is separately compressed using exponent alignment for fixed point conversion, a near-orthogonal domain transform, and embedded encoding. TTHRESH~\cite{ballester2019tthresh} treats the entire dataset as one tensor, and uses singular value decomposition (SVD) to improve the decorrelation efficiency for high-dimensional data. SPERR~\cite{li2023lossy} uses wavelet transforms and SPECK coding~\cite{pearlman2004efficient} to compress data.

\para{Tensor compression.}
Data are typically stored in multidimensional arrays across a variety of domains. As such, tensor compression strategies have been developed to compress multidimensional arrays. However, these strategies are not designed for compressing tensor fields specifically. Rather, they compress large individual multidimensional arrays. Some strategies use tensor decompositions such as \new{SVD}, \new{canonical polyadic decomposition} or Tucker \new{decomposition}. One example is TTHRESH~\cite{ballester2019tthresh}, which uses SVD. Recently, deep-learning based approaches have been proposed for tensor compression, such as NeuKron~\cite{kwon2023neukron}, TensorCodec~\cite{kwon2023tensorcodec}, and ELiCiT~\cite{Ko2024Elicit}. 
%The task of tensor compressors is not fundamentally different from that of scientific compressors such as SZ3. However, scientific compressors are not typically labeled as tensor compressors due to differences in terminology across domains.
\new{While tensor compressors are not fundamentally different from scientific compressors, scientific compressors are not typically labeled as tensor compressors due to differences in terminology across domains.}

\para{\new{Topology and visualization of 2D tensor fields.}} 
2D tensor fields are inherently difficult to visualize. Visualization typically involves glyphs \cite{ennis2005visualization}, color plots, \cite{LinYehLaramee2012}, or streamlines \cite{zheng2003hyperlic}; see~\cite{hergl2021visualization} for a survey. The topology of 2D symmetric tensor fields was first applied to visualization by Delmarcelle \cite{delmarcelle1995visualization}. Since then, the topology of 2D symmetric tensor fields has been applied to numerous problems in computer graphics such as remeshing \cite{alliez2003anisotropic}, street layout generation \cite{chen2008interactive} and scene reconstruction \cite{xu2017autonomous}. It has also been used in the visualization of tensor fields \new{\cite{jankowai2019robust, tricoche2001tensor}} and rotation fields \cite{palacios2010interactive}. \new{Outside of visualization, the degenerate points have been shown to have meaningful physical interpretations in certain domains \cite{lavin1997singularities, zhang2017applying, Filho2016Automatic}}.
The topology of asymmetric tensor fields has been utilized to support their visualization~\cite{LinYehLaramee2012,palke2011asymmetric,khan2019multi}.

\para{Topology-preserving compression.}~Recently, a number of topology-preserving data compression techniques have been developed. Most of the work thus far has pertained to scalar fields; to our knowledge, no topology-preserving compression technique has been developed for tensor field data. Soler et al.\cite{soler2018topologically} developed a compressor that preserves the persistence diagram of a scalar field, up to \new{a user-specified} persistence \new{threshold}. 
%and removes all other extrema.
Yan et al.~\cite{yan2023toposz} developed TopoSZ, which preserves the contour tree of a scalar field, up to a user-specified persistence threshold. Gorski et al. \cite{gorski2025general} developed a contour tree preserving compression strategy with improved performance compared to TopoSZ. Li et al. developed mSZ \cite{li2024msz}, which preserves the Morse--Smale segmentation of a scalar field. For vector fields, Liang et al. \cite{liang2022toward} developed cpSZ, which preserves the critical points of a vector field. \new{Later, Xia et al. \cite{Xia2025tspsz} developed TspSZ, which preserves the entire topological skeleton.}
\section{Technical Background}
\label{sec:background}

\subsection{Tensors and Tensor Fields} 
\label{sec:tensors}

A tensor may be represented as a multidimensional array. The rank of a tensor indicates the number of indices required to specify its components: a scalar has rank 0, a vector has rank 1, and a matrix has rank 2. 
Tensors play a significant role in science and engineering, such as the Riemann curvature tensor from general relativity, the \new{stress and strain} tensor\new{s} in mechanics, and diffusion tensor imaging in medicine.
We consider 2D symmetric and asymmetric second-order tensor fields, which we refer to simply as tensor fields when the context is clear.  
In this paper, a \emph{tensor} \( T \) is a linear operator that maps any vector \( v \) to another vector \( u = Tv \), where both \( v \) and \( u \) belong to the vector space \( \mathbb{R}^2 \). 
With a chosen basis of $\Rspace^2$, $T$ can be represented by a $2 \times 2$ matrix 
$$T =    \begin{pmatrix}
    	T_{11}      &  T_{12} \\
		T_{12}      &  T_{22} 
	\end{pmatrix}.$$ 
$T$ is \emph{symmetric} when $T_{ij} = T_{ji}$ and \emph{asymmetric} otherwise.
Let $\Tspace$ denote the space of second-order tensors.
A \emph{tensor field} $\TF: \Xspace \to \Tspace$ assigns each point $x$ from a domain $\Xspace$ to a tensor $\TF(x) \in \Tspace$. 

Consistent with previous work \cite{LinYehLaramee2012, jankowai2019robust}, we work with piecewise-linear (PL) tensor fields. For {\toolname}, $\Xspace$ is a flat 2D triangular mesh (i.e. all vertices have zero Gaussian curvature). $\TF$ is stored at vertices of the mesh and PL interpolated. 
Specifically, let $\sigma$ be a triangular cell with vertices $v_1$, $v_2$, and $v_3$. Any point $x \in \sigma$ can be written uniquely based on barycentric coordinates, that is, $x = a_1v_1 + a_2v_2 + a_3v_3$ where $a_i \geq 0$ and $\sum_i a_i=1$. We set $f(x) = a_1f(v_1) + a_2f(v_2) + a_3f(v_3)$. When studying the topology of tensor fields, Khan et al. claimed that this interpolation scheme can lead to discontinuities when vertices have nonzero Gaussian curvature \cite{khan2019multi}. Thus, we restrict $\Xspace$ to be flat.

\subsection{Topology of Tensor Fields}
\label{sec:topology}

The topology of a tensor field varies significantly depending on the order, dimension, and symmetry of the tensor field (e.g.,~\cite{hesselink1997topology, jankowai2019robust, zhang2008asymmetric, roy2018robust, hung2021feature}). 
We focus on the topology of 2D second-order tensor fields. 

\subsubsection{Topology of Second-Order Symmetric Tensor Fields} 
The topology of a second-order symmetric tensor field $f$ consists of \emph{degenerate points} (i.e., points with equal eigenvalues) and their connections via \emph{tensorlines} (i.e., lines that follow the direction of the eigenvector fields)~\cite{DelmarcelleHesselink1994}.  
At a fixed location $x$, \new{let} $f(x) = T$. A symmetric tensor $T$ has two linearly independent real eigenvectors $v_1$ and $v_2$ that correspond to real eigenvalues $\lambda_1$ and $\lambda_2$ respectively. \new{If we order} the eigenvalues such that $\lambda_1 \geq \lambda_2$, then $v_1$ are $v_2$ are the major and minor eigenvectors, respectively. 
The major (resp.,~minor) \emph{eigenvector field} of $f$, denoted as  $e_1$ (resp.,~$e_2$), maps each $x$ to the major (resp.,~minor) eigenvector of $f(x)$. Integrating the eigenvector fields \new{yields} two families of continuous curves, referred to as the major and minor \emph{tensorlines}. 

A \emph{degenerate point} of $f$ is a point where the eigenvalues are identical and the eigenvectors are no longer defined uniquely, that is, $\lambda_1 = \lambda_2$. 
Degenerate points behave like the zeros of a vector field. The topology of $f$ is defined in terms of its degenerate points, and the tensorlines that connect them. 
In the PL setting, there are two types of degenerate points that can occur within a cell: a trisector and a wedge, as shown in \cref{fig:degenerate-points}(A) and (B) respectively. 
They can be classified and detected according to the behavior of the eigenvector fields around them. 

\begin{figure}[!ht]
\centering
    \vspace{-2mm}
    \includegraphics[width=0.7\linewidth]{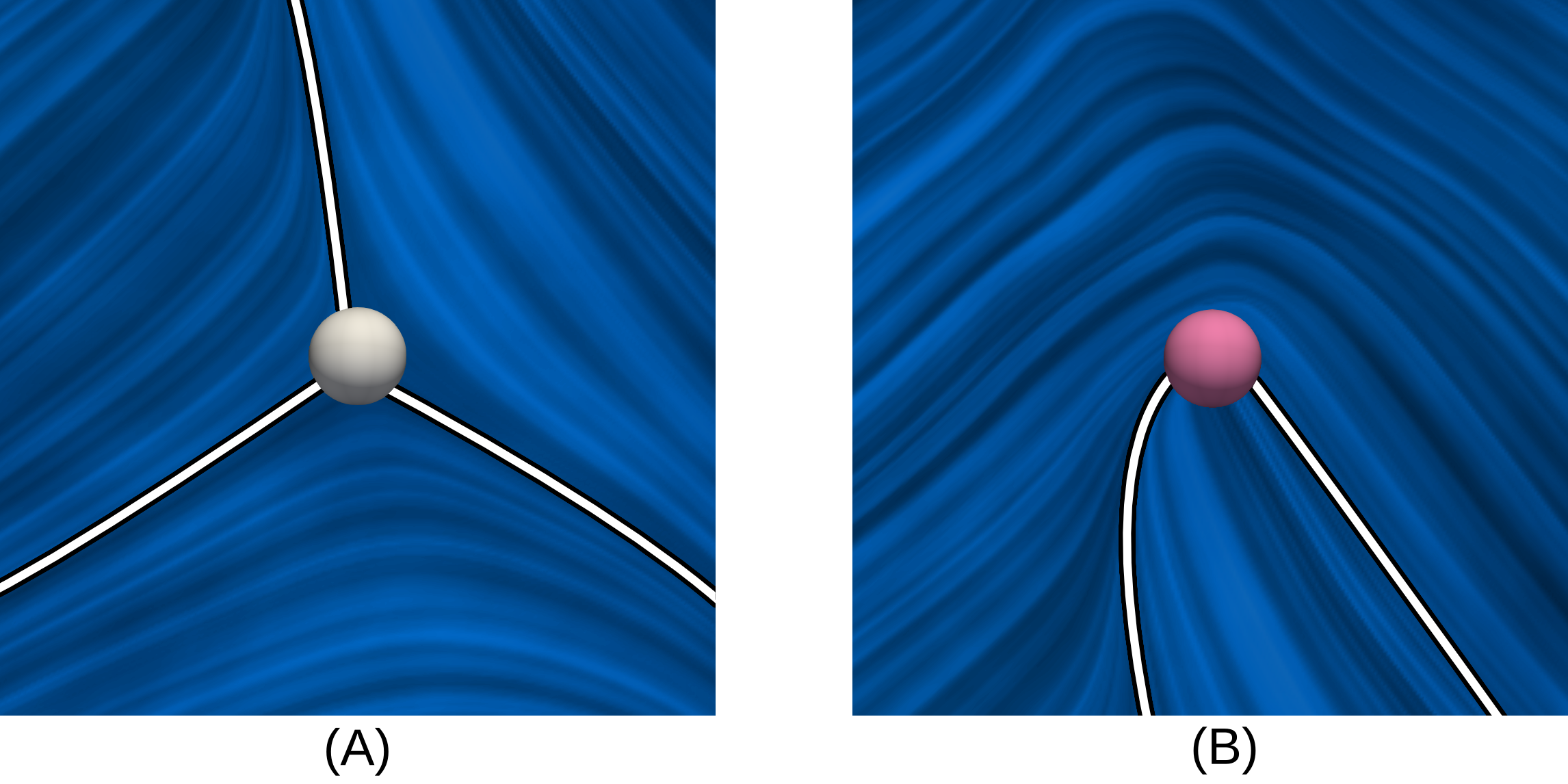}
    \vspace{-4mm}
    \caption{Degenerate points of a 2D second-order symmetric tensor field. The major eigenvector field is visualized using the Line Integral Convolution (LIC) together with major tensorlines passing through the degenerate points; trisectors (A) are in white, wedges (B) are in pink.}
    \label{fig:degenerate-points}
    \vspace{-2mm}
\end{figure}

We detect a degenerate point within a cell $\sigma$ using the concept of a deviator following Jankowai et al.~\cite{jankowai2019robust}. 
Given a cell $\sigma$, let $T_1$, $T_2$, and $T_3$ be tensors (represented as matrices) associated with its vertices in a clockwise order. 
Define the deviator of a tensor $T$ by $D(T) := T - \frac{1}{2}\tr(T)$, where $\tr(T)$ is the trace.  
It has been shown that $D(T)$ has the same eigenvectors as $T$, and $\tr(D(T)) = 0$ by construction~\cite{jankowai2019robust}. For each $T_i$, denote the \new{entries} of its deviator by
\begin{equation}
D(T_i) := \begin{pmatrix} \Delta_i & F_i \\ F_i & -\Delta_i \end{pmatrix}.
\end{equation}
For a pair $T_i$ and $T_j$, let 
\begin{equation}
\label{eqn:lij}
l_{i,j} := \sign(F_j\Delta_i-F_i\Delta_j).
\end{equation}
Then $\sigma$ contains a wedge if $l_{1,2} = l_{2,3} = l_{3,1} =1$, and a trisector if $l_{1,2}=l_{2,3}=l_{3,1}=-1$; otherwise, if $l_{i,j} \neq 0$ for all pairs, $\sigma$ contains no degenerate points. 
Jankowai et al.~\cite{jankowai2019robust} did not consider the scenario where \( l_{i,j} = 0 \) and instead applied small perturbations to prevent it from occurring. We address such a scenario explicitly in~\cref{sec:method-symmetric}.

\subsubsection{Topology of Second-Order Asymmetric Tensor Fields} 
\label{sec:background-asymmetric}

Zhang et al.~\cite{zhang2008asymmetric} introduced the tensor decomposition of any second-order tensor $T$, 
\begin{equation}
\label{eqn:decomposition}
T = \gamma_d \begin{pmatrix} 1 & 0 \\ 0 & 1 \end{pmatrix} + \gamma_r \begin{pmatrix} 0 & -1 \\ 1 & 0 \end{pmatrix} + \gamma_s \begin{pmatrix} \cos(\theta) & \quad\sin(\theta) \\ \sin(\theta) & -\cos(\theta) \end{pmatrix}, 
\end{equation}
where $\gamma_d = \frac{T_{11}+T_{22}}{2}$, $\gamma_r = \frac{T_{21}-T_{12}}{2}$, and 

$$
\gamma_s = \frac{\sqrt{(T_{11}-T_{22})^2 + (T_{12}+T_{21})^2}}{2}.  
$$
$\gamma_d$, $\gamma_r$, and $\gamma_s$ reflect the strengths of isotropic scaling, rotation, and anisotropic stretching, respectively~\cite{zhang2008asymmetric}; where $\gamma_s \geq 0$ and $\gamma_d, \gamma_r \in \Rspace$. 
$\theta \in [0, 2\pi)$ is the angular component of the vector 
\[
\begin{pmatrix}
T_{11} - T_{22}\\
T_{12} + T_{21} 
\end{pmatrix}.
\]

In a PL setting, $\gamma_d$ and $\gamma_r$ interpolate linearly, whereas the interpolation of $\gamma_s$ is more involved; see~\cite{zhang2008asymmetric} for details.

\para{Dual-eigenvector field.} 
Although the eigenvectors of an asymmetric tensor can be complex, Zheng and Pang~\cite{zheng20052d} introduced the major and minor \emph{dual-eigenvector fields}, which are real-valued eigenvector fields derived from an asymmetric tensor field $f$. For an asymmetric tensor $T$, the major and minor \emph{dual-eigenvectors} of $T$ are the major and minor eigenvectors of a symmetric tensor \cite[Theorem 4.2]{zhang2008asymmetric}: 
\begin{equation}
P_T = \frac{\gamma_r}{|\gamma_r|}\gamma_s\begin{pmatrix} \cos\left(\theta+\frac{\pi}{2}\right) & \quad\sin\left(\theta+\frac{\pi}{2}\right) \\ \sin\left(\theta+\frac{\pi}{2}\right) & -\cos\left(\theta+\frac{\pi}{2}\right)
\end{pmatrix}.
\end{equation}
Thus the dual-eigenvector fields are the major and minor eigenvector fields of the symmetric tensor field $x \mapsto P_{f(x)}$. 

Recall that any square matrix T can be written as $T = T_S + T_A$, where $T_S = \frac{1}{2} (T + T^{\top})$ is the symmetric part of $T$ and $T_A$ is the antisymmetric part. The symmetric part of a tensor $T$ is defined as the average of the tensor with its transpose. 
Following \cref{eqn:decomposition}, 
\begin{equation}
\label{eqn:symmetric-part}
T_S = \gamma_d \begin{pmatrix} 1 & 0 \\ 0 & 1 \end{pmatrix} + \gamma_s \begin{pmatrix} \cos(\theta) & \quad\sin(\theta) \\ \sin(\theta) & -\cos(\theta) \end{pmatrix}.  
\end{equation}
Accordingly, let $f_S$ be the symmetric part of $f$. 
A point is \new{an isolated degenerate point (and not part of a larger degenerate curve or region)} in the dual-eigenvector field of $f$ if and only if it is \new{an isolated} degenerate \new{point} in the eigenvector field of $f_S$. Further, the type of that degenerate point (trisector or wedge) remain the same \cite[Theorem 4.4]{zhang2008asymmetric}.

\para{Eigenvector manifold and eigenvalue manifold.} 
To illustrate the structures in asymmetric tensor fields, Zhang et al.~\cite{zhang2008asymmetric} introduced the notions of eigenvalue manifold (a hemisphere) and eigenvector manifold (a sphere), that encode the relative strengths of three components in \cref{eqn:decomposition} affecting the eigenvalues and eigenvectors in the tensor. For this purpose, we assume that $\gamma_d^2 + \gamma_r^2 + \gamma_s^2 = 1$.
Following~\cite{zhang2008asymmetric}, the \emph{eigenvector manifold} is defined as a sphere  
\begin{equation}
\label{eqn:eigenvector-manifold}
\{(\gamma_r,\gamma_s,\theta) \mid \gamma_r^2 + \gamma_s^2 = 1 \text{ and } \gamma_s \geq 0 \text{ and } 0 \leq \theta < 2\pi \}, 
\end{equation}
whereas the \emph{eigenvalue manifold} is defined as a hemisphere 
\begin{equation}
\label{eqn:eigenvalue-manifold}
\{(\gamma_d,\gamma_r,\gamma_s) \mid \gamma_d^2 + \gamma_r^2 + \gamma_s^2 = 1 \text{ and } \gamma_s \geq 0\}.
\end{equation}

\begin{figure}[!ht]
\centering
    \includegraphics[width=1.0\linewidth]{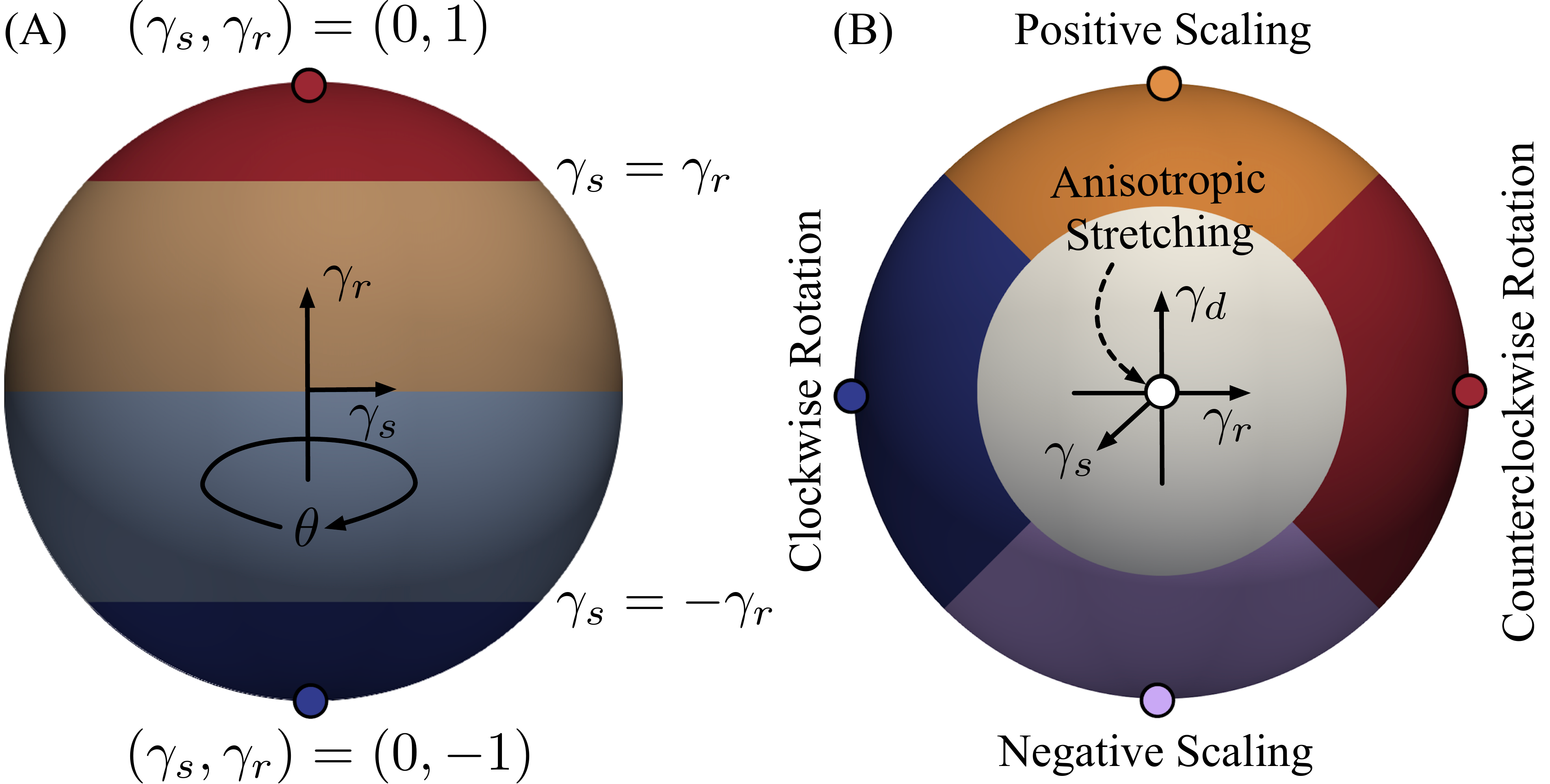}
    \vspace{-6mm}
    \caption{(A) The eigenvector manifold colored according to its partition based on $\gamma_r$ and $\gamma_s$. (B) The eigenvalue manifold colored according to its  partition based on $\gamma_d$, $\gamma_r$, and $\gamma_s$.}
    \label{fig:manifolds}
    \vspace{-3mm}
\end{figure}

\para{Eigenvector partition.}
We visualize the eigenvector manifold of \cref{eqn:eigenvector-manifold} in \cref{fig:manifolds}(A). 
To construct the manifold, we first define a semicircle $\gamma_r^2 + \gamma_s^2 = 1$ (where $\gamma_r \geq 0$). 
Then, we rotate such a semicircle around a vertical axis to obtain a sphere. On this sphere, each longitude line that runs from the North Pole to the South Pole corresponds to a value of $\theta$, and each latitude line corresponds to a value of $\gamma_r$. 

Next, we partition the eigenvector manifold based on (a) whether \(|\gamma_r|\) or \(\gamma_s\) is greater and (b) the sign of \(\gamma_r\).~\cref{fig:manifolds}(A) illustrates four regions from such a partition using a categorical colormap, from the top to bottom: the first region  near the North Pole corresponds to the parameter setting where $\gamma_r > 0$ and $|\gamma_r| > \gamma_s$; the second region  corresponds to $\gamma_r > 0$ and $|\gamma_r| < \gamma_s$; the third region corresponds to $\gamma_r < 0$ and $|\gamma_r| < \gamma_s$; and the fourth region near the South Pole corresponds to $\gamma_r < 0$ and $|\gamma_r| > \gamma_s$. 
The two regions near the equator correspond to real eigenvalues, and the regions near the Poles correspond to complex eigenvalues. 

For a tensor field $f$, each tensor $f(x)$ at a location $x \in \Xspace$ maps to a point within a partition of the eigenvector manifold. Therefore, we can visualize the domain of $f$ based on the partition of its corresponding eigenvector manifold; this is referred to as the \emph{eigenvector partition}, as illustrated in~\cref{fig:manifold-topology}(A). The north and south poles correspond to \new{isolated} degenerate points of the dual eigenvector field, and such points are typically part of the partition visualization.

\begin{figure}[!t]
\centering
    \vspace{-2mm}
    \includegraphics[width=1.0\linewidth]{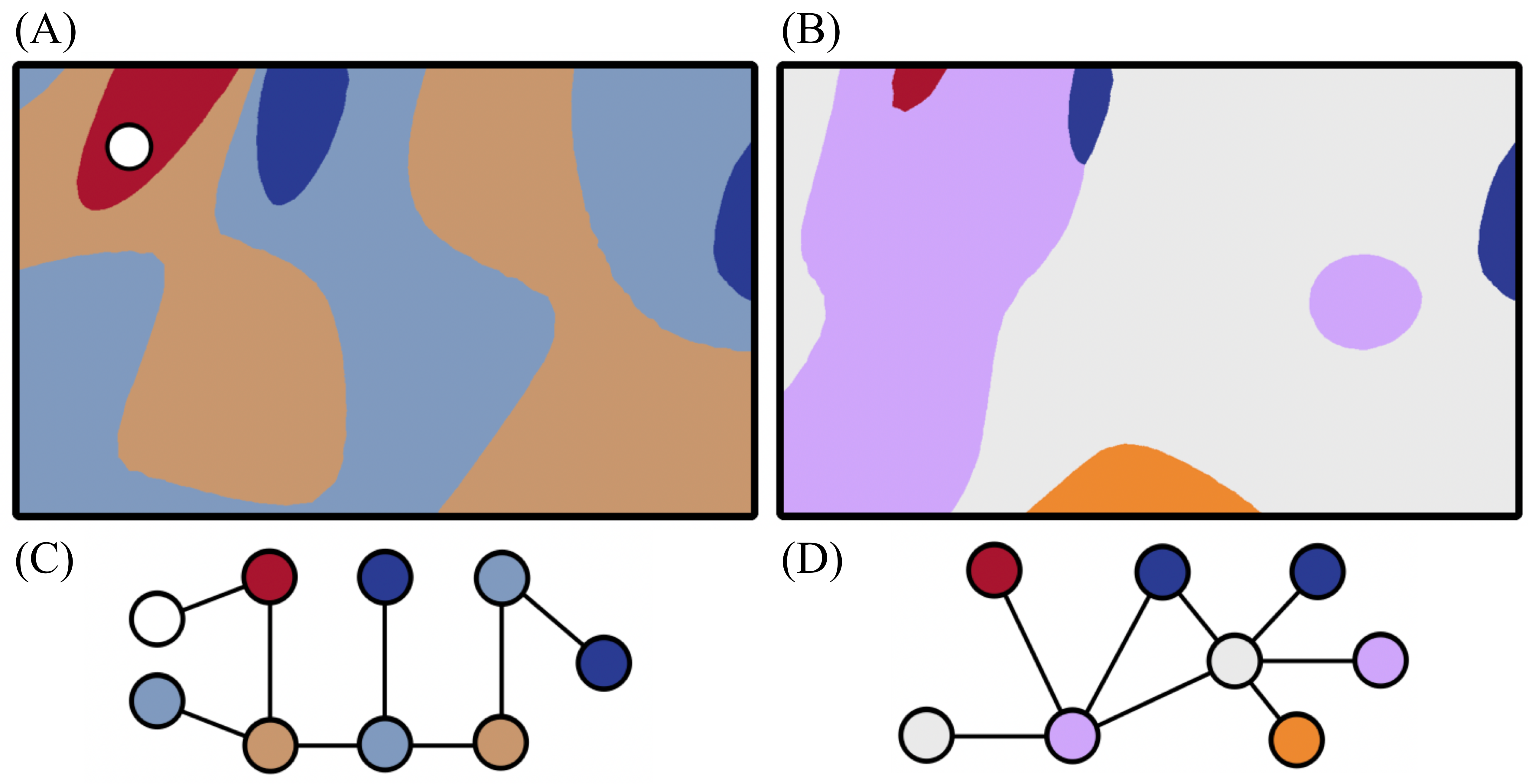}
    \vspace{-8mm}
    \caption{A portion of the Miranda dataset. (A) Eigenvector partition of the domain following the colormap of \cref{fig:manifolds}(A); degenerate points  of the dual-eigenvector field are shown in white (trisectors) or pink (wedges). (B) Eigenvalue partition of the domain following the colormap of \cref{fig:manifolds}(B). (C) Eigenvector graph corresponding to the eigenvector partition in (A). (D) Eigenvalue graph corresponding to the eigenvalue partition in (B).}
    \label{fig:manifold-topology}
    \vspace{-6mm}
\end{figure}

\para{Eigenvalue partition.} The eigenvalue manifold of \cref{eqn:eigenvalue-manifold} is designed to highlight the dominant operation between isotropic scaling ($\gamma_d$), rotation ($\gamma_r$), and anisotropic stretching ($\gamma_s$). We visualize its corresponding hemisphere in~\cref{fig:manifolds}(B), where the point $(\gamma_r,\gamma_d,\gamma_s) = (0,0,1)$ corresponds to the point labeled ``anisotropic stretching.'' 

We then partition the eigenvalue manifold according to (a) which coefficient has the largest magnitude and (b) the sign of that coefficient. Each region in the partition corresponds to a different type of linear transformation. 
In \cref{fig:manifolds}(B): the top orange region (Positive Scaling) corresponds to when $\gamma_d$ has the largest magnitude and $\gamma_d > 0$; the red region (Counterclockwise Rotation) corresponds to when $\gamma_r$ has the largest magnitude and $\gamma_r > 0$; the purple region (Negative Scaling) corresponds to when $\gamma_d$ has the largest magnitude and $\gamma_d < 0$; the blue region (Clockwise Rotation) corresponds to when $\gamma_r$ has the largest magnitude and $\gamma_r < 0$; and finally, the white region in the middle corresponds to when $\gamma_s$ has the largest magnitude. 

Similarly, a partition of the eigenvalue manifold could be used to visualize the domain of a tensor field, referred to as the \emph{eigenvalue partition}, as illustrated in~\cref{fig:manifold-topology}(B).   

\para{Eigenvector graph and eigenvalue graph.}
To describe the topology of an asymmetric tensor field, Lin et al.~\cite{LinYehLaramee2012} introduced the notion of an \emph{eigenvector graph}, that is, the dual graph of the eigenvector partition of the domain.  
Each node corresponds to a region of the partition or one of the poles of the eigenvector manifold, and is colored by the region type (or the degenerate point type). 
Each edge encodes an adjacency relation among a pair of regions (or a region and a degenerate point).
We visualize a tensor field based on its eigenvector partition in~\cref{fig:manifold-topology}(A), and its eigenvector graph in (C).  

As illustrated in \cref{fig:graph-construction}(A), to compute the eigenvector graph, we first trace the boundary curves defined by $\gamma_r^2 = \gamma_s^2$ (a conic section) and $\gamma_r = 0$ (a line) within each cell. These curves do not intersect with each other and simply divide a cell according to the eigenvector partition. Next, we iteratively merge adjacent regions of the same type in neighboring cells. Finally, we detect each connected component of the partition, from which we compute the eigenvector graph. 

Analogously, Lin et al.~\cite{LinYehLaramee2012} also introduced the \emph{eigenvalue graph} as the dual of the eigenvalue partition of the domain. We visualize a tensor field based on its eigenvalue partition in~\cref{fig:manifold-topology}(B), and its corresponding eigenvalue graph in~\cref{fig:manifold-topology}(D). Following~\cite{LinYehLaramee2012}, as illustrated in \cref{fig:graph-construction}(B), to compute the eigenvalue graph, we trace the boundary curves within each cell: $\gamma_d^2 = \gamma_r^2$ (two intersecting lines), $\gamma_d^2 = \gamma_s^2$ (a conic section), and $\gamma_r^s = \gamma_s^2$ (another conic section). We then  merge adjacent regions of the same type in the same cell (see~\cref{fig:graph-construction}(C)) and among neighboring cells to construct the eigenvalue graph.  

For compression purposes, we focus on preserving the topology inside each cell, that is, the cell-wise eigenvector (or eigenvalue) partition, without explicitly constructing the eigenvector (or eigenvalue) graph. 

\begin{figure}[!t]
\centering
    \includegraphics[width=1.0\linewidth, trim={0 0 0 0mm}, clip]{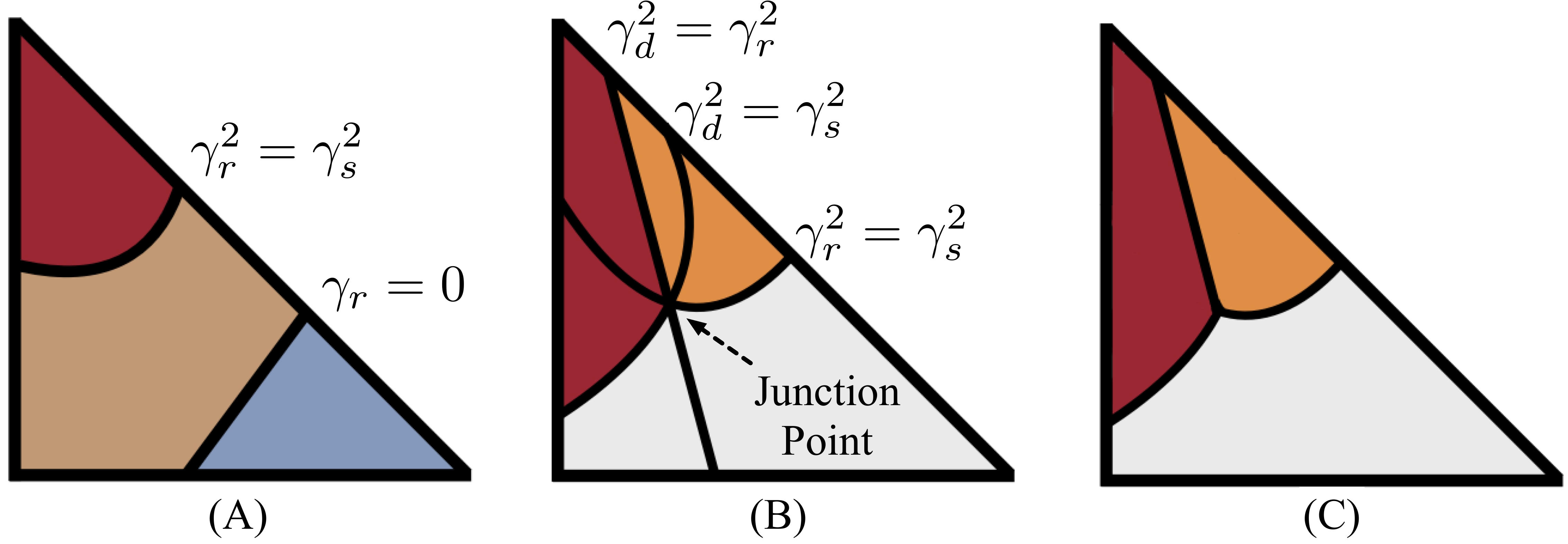}
    \vspace{-6mm}
    \caption{(A) A cell is partitioned into three regions according to the eigenvector manifold by tracing boundary curves $\gamma_r^2=\gamma_s^2$ and $\gamma_r = 0$. 
    (B) A cell is partitioned into six regions by tracing the boundary curves of $\gamma_d^2=\gamma_r^2$, $\gamma_d^2 = \gamma_s^2$ and $\gamma_r^2 = \gamma_s^2$ that intersect at the junction point. The regions are classified according to the eigenvalue partition. (C) Adjacent regions of the same type based on eigenvalue partition in (B) are merged inside a cell.}
    \label{fig:graph-construction}
    \vspace{-6mm}
\end{figure}

\subsection{Quantization in Lossy Compression}
\label{sec:quantization}

We review two quantization techniques, linear-scaling quantization and logarithmic-scaling quantization, used by {\toolname}.

\para{Linear-scaling quantization.}~A popular lossy data compressor, SZ1.4 \cite{tao2017significantly}, introduces \emph{linear-scaling quantization} that ensures a pointwise absolute error bound $\xi$ is maintained.
Suppose that $f:\Xspace \rightarrow \Rspace$ (defined on a finite domain $\Xspace$) is a scalar field to be compressed. Suppose that $g:\Xspace \rightarrow \Rspace$ is an initial guess for $f$ (e.g.,~from a regression predictor). Let $f':\Xspace \rightarrow \Rspace$ be the final decompressed data. 
Then for each $x \in \Xspace$, we can shift $g(x)$ by an integer multiple of $2\xi$ to obtain a final value of $f'(x)$ such that $|f'(x) - f(x)| \leq \xi$.
To conceptualize linear-scaling quantization for a point $x$ in~\cref{fig:linear-scaling-quantization}, we divide the real line into intervals of width $2\xi$, one of which is centered on $g(x)$. Then we compute how many intervals to shift $g(x)$ to obtain a value for $f'(x)$ that lies in the same interval as $f(x)$. By construction, if $f(x)$ lies in an interval of width $2\xi$ centered on $f'(x)$, it must hold that $|f(x) - f'(x)| \leq \xi$. 

\begin{figure}[!ht]
    \vspace{-4mm}
    \includegraphics[width=\linewidth]{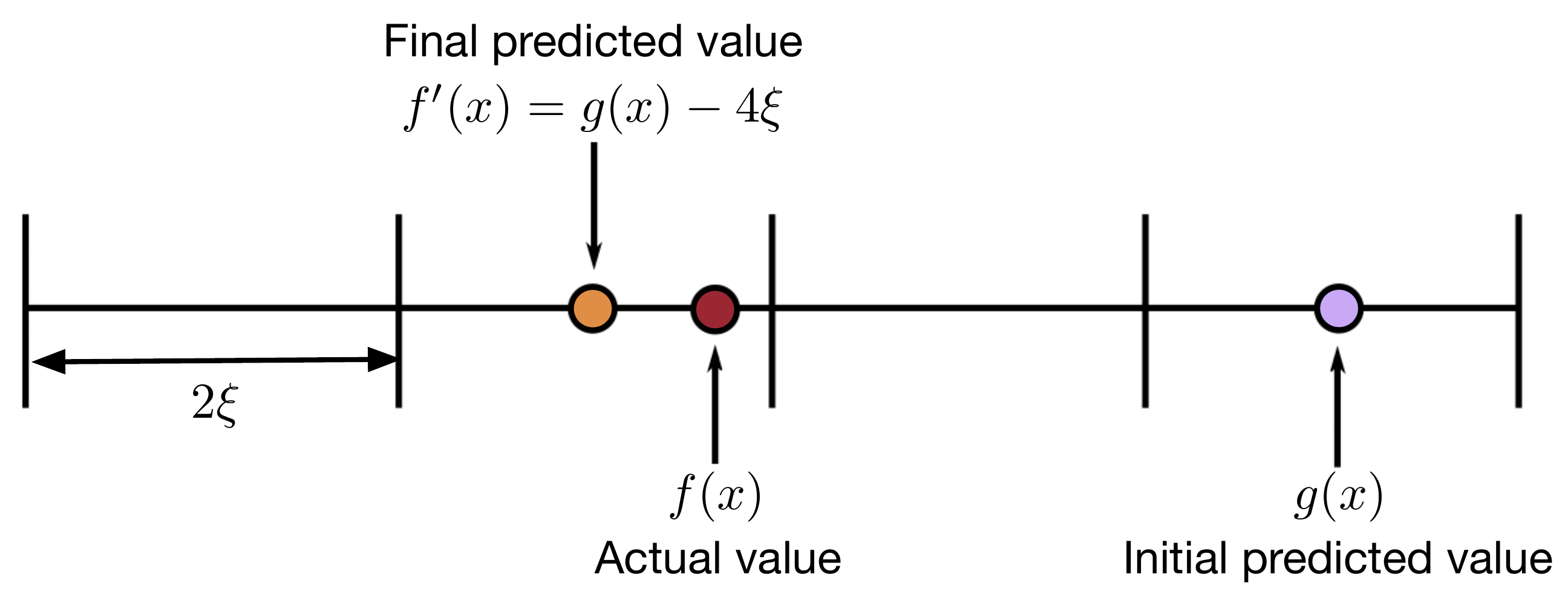}
    \vspace{-8mm}
    \caption{A standard implementation of linear-scaling quantization.}
    \label{fig:linear-scaling-quantization}
\end{figure}

During construction, we assign to each $x \in \X$ some integer $n_x$ such that $f'(x) = g(x) + n_x(2\xi)$, e.g., $n_x = -2$ in~\cref{fig:linear-scaling-quantization}. If the entropy of the $\{n_x\}$ is low (e.g. the $\{n_x\}$ are mostly zero), then the $\{n_x\}$ can be efficiently compressed using an entropy-based compressor, such as the Huffman coding. More accurate predictions for $g(x)$ generally lead to distributions of $\{n_x\}$ with lower entropy.

\para{Logarithmic-scaling quantization.}~In topology-preserving data compression, it can be advantageous for certain data points to have predicted values that are nearly identical to the ground truth. Gorski et al.~\cite{gorski2025general} introduced logarithmic-scaling quantization, enabling a compressor to apply linear-scaling quantization with varying error bounds for different points without needing to store additional information about the specific error bound used. 
Each $x \in \Xspace$  is assigned a precision $p_x$, so that its error bound is $2\xi / (2^{p_x})$. 
The value of $p_x$ is set independently for each point. If the compressor determines that, for a point $x$, $f'(x)$ must be very close to $f(x)$, then it can set $p_x$ to be high.

To encode a point using logarithmic-scaling quantization, we perform standard linear-scaling quantization on each $x \in \Xspace$ using an error bound of $2\xi / (2^{p_x})$ to obtain a quantization number $n_x$, such that $x$ will have the decompressed value $f'(x) = g(x) + 2\xi n_x / (2^{p_x})$. Let $P$ be the maximum value for $p_x$. When encoding the compressed data, we store the integer $a_x \gets n_x \times 2^{P-p_x}$. Then, when decompressing the data, the decompressed value for $x$ is $f'(x) = g(x) + 2\xi a_x / (2^{P})$. Notice that $x$ will have its correct decompressed value because 
\begin{equation}
    f'(x) = g(x) + \frac{2\xi a_x}{2^P} = g(x) + \frac{2\xi n_x2^{P-p_x}}{2^{P}} = g(x) + \frac{2\xi n_x}{2^{p_x}}.
\end{equation}
\section{Method}
\label{sec:method}

Although the topology and topological guarantees differ for symmetric and asymmetric tensor fields, our overall compression pipeline remains largely the same in both cases. We describe our general pipeline in~\cref{sec:method-overview}. We describe the specifics for symmetric tensor fields in~\cref{sec:method-symmetric} and asymmetric tensor fields in~\cref{sec:method-asymmetric}.

\subsection{An Overview of Compression Pipeline}
\label{sec:method-overview}

We store an asymmetric tensor field as four scalar fields that correspond to the four entries of a $2 \times 2$ matrix. For a symmetric tensor field, we only store three scalar fields due to symmetry (i.e.,~for any symmetric tensor $T$, $T_{1,2} = T_{2,1}$). In addition to providing topological guarantees, our compression pipeline ensures that individual entries of a tensor do not vary by more than a user-specified error bound $\xi$. 
 
Following the previous work by Gorski et al.~\cite{gorski2025general}, we design an \emph{augmentation layer} that runs on top of an existing compressor (referred to as a \emph{base compressor}) and corrects the topology of the decompressed output. {\toolname} can augment any error-bounded lossy compressor to provide topological guarantees. Our pipeline is as follows: 

\para{\underline{Step 1. Base compressor.}}~We compress each scalar field using the base compressor with an error bound $\xi$. This operation ensures that the error of individual entries of each tensor is at most $\xi$. We then decompress the data to analyze any changes that must be made to preserve topology. We refer to the compressed-then-decompressed output of the base compressor as the \emph{intermediate data}. 

\para{\underline{Step 2. Cell correction.}}~{\toolname} ensures that the topology of each individual cell is maintained. \new{We scan} across each cell, making corrections to the tensors at the vertices of each cell as needed. We describe this step in more detail in~\cref{sec:method-symmetric} and~\cref{sec:method-asymmetric}. 
% If the tensor at a vertex is modified, the topology of any other adjacent cells may also be affected. 
%Thus, after altering the tensor $f(x)$ at some vertex $x$, we re-process any adjacent cell that borders $x$. 
%If a tensor is processed 20 times, we store it losslessly.

\new{If, when processing a cell $\sigma$, we alter the tensor $f(x)$ at some vertex $x$, other cells that share $x$ as a vertex may be altered topologically. Thus, we re-process every previously processed cell $\sigma'$ that shares $x$ as a vertex, and track which cells must be re-processed using a stack. This strategy can create cycles between adjacent cells. To prevent infinite loops, we store a tensor losslessly if it is modified 20 times.}

\para{\underline{Step 3. Lossless compression.}}~Our cell compression step outputs integer values for each cell according to quantization strategies detailed in~\cref{sec:quantization}. We also generate some integers that serve as flags. We encode these integers using Huffman coding. We combine the output of Huffman coding, any information stored losslessly, and the four compressed fields from the base compressors into one tar archive, and then compress the archive using ZSTD (a lossless compressor).

\subsection{Compression of Symmetric Tensor Fields}
\label{sec:method-symmetric}

\para{Topological guarantees.}~{\toolname} ensures that each cell retains its classification (i.e.,~degeneracy type): whether it contains no degenerate points, a single degenerate point, a degenerate line (i.e.,~a line of degenerate points), or is entirely degenerate (i.e.,~a cell of degenerate points). For cells with a single degenerate point in their interior, we further distinguish between trisectors and wedges. Additionally, for cells containing a degenerate line or a degenerate vertex, {\toolname} preserves the precise location of the degenerate feature. 

\para{Theoretical basis.}~We provide the theoretical basis for the cell preservation step, where the lemmas are proven in the supplement. In our first lemma, we apply the decomposition in \cref{eqn:decomposition} to symmetric tensors. In the symmetric case, we must have $\gamma_r = 0$. We then obtain the decomposition of \cref{eqn:symmetric-part}, where the $\gamma_s$ term corresponds to the deviator.
\vspace{-6mm}
\begin{restatable}{lemma}{lemmaSymmetricTheta}
Let $T_1$ and $T_2$ be two $2\times 2$ symmetric tensors with nonzero deviators. 
Let $l_{1,2}$ be defined following~\cref{eqn:lij}. 
Then $l_{1,2}$ depends only on $\theta_1$ and $\theta_2$, where each $\theta_i$ is taken from the decomposition in~\cref{eqn:symmetric-part}.
\label{lemma:symmetric-theta}\
\end{restatable}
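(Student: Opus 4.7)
The plan is to compute $l_{1,2}$ directly in terms of the decomposition parameters of \cref{eqn:symmetric-part} and observe that all dependence on the $\gamma_s$ and $\gamma_d$ components drops out of the sign. Concretely, for a symmetric tensor $T_i$ written as in \cref{eqn:symmetric-part} (with $\gamma_r = 0$), the trace is $\tr(T_i) = 2\gamma_{d,i}$, so the isotropic part is exactly cancelled by subtracting $\tfrac{1}{2}\tr(T_i)\cdot I$. This leaves
\[
D(T_i) \;=\; \gamma_{s,i}\begin{pmatrix}\cos\theta_i & \sin\theta_i \\ \sin\theta_i & -\cos\theta_i\end{pmatrix},
\]
so by comparison with the definition of $\Delta_i$ and $F_i$ in the excerpt, $\Delta_i = \gamma_{s,i}\cos\theta_i$ and $F_i = \gamma_{s,i}\sin\theta_i$.

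Next I would substitute these into $F_2\Delta_1 - F_1\Delta_2$ and apply the sine subtraction identity:
\[
F_2\Delta_1 - F_1\Delta_2 \;=\; \gamma_{s,1}\gamma_{s,2}\bigl(\sin\theta_2\cos\theta_1 - \sin\theta_1\cos\theta_2\bigr) \;=\; \gamma_{s,1}\gamma_{s,2}\sin(\theta_2 - \theta_1).
\]
The only remaining step is to eliminate the $\gamma_s$ factors from the sign. Here we use the hypothesis that both deviators are nonzero: $D(T_i) \neq 0$ forces $\gamma_{s,i} \neq 0$ (since the matrix appearing with $\gamma_{s,i}$ has unit entries in absolute value and is nonzero). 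Combined with the standing convention $\gamma_s \geq 0$ from \cref{sec:background-asymmetric}, we get $\gamma_{s,1}\gamma_{s,2} > 0$. Therefore
\[
l_{1,2} \;=\; \sign\bigl(F_2\Delta_1 - F_1\Delta_2\bigr) \;=\; \sign\bigl(\sin(\theta_2 - \theta_1)\bigr),
\]
which is a function of $\theta_1$ and $\theta_2$ alone.

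There is essentially no technical obstacle; the computation is a direct substitution and a trig identity. The only subtle point worth flagging in the write-up is the justification of $\gamma_{s,i} \neq 0$ from the nonzero-deviator assumption, since this is what allows the positive factor to be pulled out of $\sign(\cdot)$ without altering its value.
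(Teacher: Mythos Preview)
Your proposal is correct and follows essentially the same approach as the paper: both compute the deviator entries as $\Delta_i=\gamma_{s,i}\cos\theta_i$, $F_i=\gamma_{s,i}\sin\theta_i$, substitute into $F_2\Delta_1-F_1\Delta_2$, and use $\gamma_{s,i}>0$ (from the nonzero-deviator hypothesis together with $\gamma_s\ge 0$) to conclude the sign depends only on $\theta_1,\theta_2$. Your version goes one step further by collapsing the expression via the sine subtraction identity to $\sin(\theta_2-\theta_1)$, which the paper leaves unsimplified, but this is a cosmetic difference.
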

\vspace{-9mm}

\begin{restatable}{lemma}{lemmaSymmetricZero}
Let $x_1$, $x_2$, and $x_3$ be the vertices of a cell. 
\begin{itemize}[noitemsep]
\item[(a)] Suppose that exactly one vertex has a tensor $T$ with a deviator $D(T)$ equal to zero. (w.l.o.g.,~suppose~$D(f(x_1)) = 0$).
\begin{itemize}[leftmargin=*]
\item[(i)] If $l_{2,3} \neq 0$, then $x_1$ is the only degenerate point in the cell.
\item[(ii)] If $l_{2,3} = 0$, then there exists some $k$ such that $D(f(x_2)) = kD(f(x_3))$. If $k > 0$ then $x_1$ is the only degenerate point in the cell. If $k < 0$ then the cell contains a degenerate line.
\end{itemize}
\item[(b)] If exactly two vertices have a tensor with a deviator equal to zero, then the cell contains a degenerate line connecting them.
\end{itemize}
\label{lemma:symmetric-zero}
\end{restatable}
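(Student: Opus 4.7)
The plan is to exploit the linearity of the PL interpolation of the deviator inside the cell. Since $T \mapsto D(T) = T - \tfrac{1}{2}\tr(T)I$ is linear and $f$ is PL, the map $x \mapsto D(f(x))$ is also PL: at $x = a_1 x_1 + a_2 x_2 + a_3 x_3$ in barycentric coordinates we have $D(f(x)) = \sum_i a_i D(f(x_i))$. A point $x$ is degenerate precisely when $D(f(x)) = 0$, so everything reduces to analyzing when nonnegative barycentric combinations of the three vertex deviators vanish.

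For part (a), assume WLOG $D(f(x_1)) = 0$, so $D(f(x)) = a_2 D(f(x_2)) + a_3 D(f(x_3))$. First I would observe, building on \cref{lemma:symmetric-theta} (or directly from $F_j\Delta_i - F_i\Delta_j = \gamma_{s,i}\gamma_{s,j}\sin(\theta_j - \theta_i)$ in the symmetric decomposition), that $l_{2,3} \neq 0$ is equivalent to $D(f(x_2))$ and $D(f(x_3))$ being linearly independent as $2\times 2$ matrices. Under that independence (case (a)(i)), the only solution to $a_2 D(f(x_2)) + a_3 D(f(x_3)) = 0$ is $a_2 = a_3 = 0$, forcing $x = x_1$. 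For case (a)(ii), $l_{2,3} = 0$ together with both deviators being nonzero forces $D(f(x_2)) = k\,D(f(x_3))$ for some $k \neq 0$, so $D(f(x)) = (a_2 k + a_3)\,D(f(x_3))$, and degeneracy amounts to $a_2 k + a_3 = 0$. If $k > 0$, nonnegativity of $a_2, a_3$ forces $a_2 = a_3 = 0$, so $x_1$ is the unique degenerate point; if $k < 0$, the equation $a_3 = |k|\,a_2$ together with $a_1 + a_2 + a_3 = 1$ parameterizes a line segment from $x_1$ (at $a_2 = 0$) to a point on the opposite edge (at $a_1 = 0$), which is the claimed degenerate line.

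For part (b), assume $D(f(x_1)) = D(f(x_2)) = 0$. Any point $x$ on the edge between $x_1$ and $x_2$ has $a_3 = 0$, so $D(f(x)) = a_1 \cdot 0 + a_2 \cdot 0 = 0$, and the entire edge is a degenerate line connecting the two vertices.

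The only step requiring care is the equivalence between $l_{2,3} = 0$ and linear dependence of the two deviators, since $l_{i,j}$ is defined via the sign function and thus loses a priori information about magnitudes. I would verify it via the explicit identity above from the symmetric decomposition, noting that the assumption of nonzero deviators ensures both $\gamma_{s,2}, \gamma_{s,3} > 0$, so $l_{2,3} = \sign(\sin(\theta_3 - \theta_2))$ and the vanishing condition reduces cleanly to $\theta_3 - \theta_2 \in \{0,\pi\}\pmod{2\pi}$, which is exactly parallelism/anti-parallelism of $D(f(x_2))$ and $D(f(x_3))$, with the sign of $k$ determined accordingly. Everything else is a direct nonnegativity argument on barycentric coordinates.
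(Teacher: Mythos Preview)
Your proposal is correct and follows essentially the same approach as the paper: both reduce degeneracy to the vanishing of a barycentric combination of the vertex deviators, and both hinge on the equivalence between $l_{2,3}=0$ and linear dependence of $D(f(x_2)),D(f(x_3))$ (the paper isolates this as a separate lemma, while you derive it inline via the $\gamma_{s,i}\gamma_{s,j}\sin(\theta_j-\theta_i)$ identity). The only organizational difference is that the paper argues (a)(i) and the $k>0$ subcase of (a)(ii) by contrapositive and, for $k<0$, only exhibits a single degenerate point $y\neq x_1$ on the opposite edge, whereas you argue each case directly and explicitly parametrize the full degenerate segment; your version is slightly more self-contained in that respect.
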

\vspace{-4mm}
\para{Cell correction.} We scan through each cell $\sigma$ of the mesh. Let $T_1$, $T_2$, and $T_3$ be the tensors at the vertices of $\sigma$ for the ground truth data, and let $T_1'$, $T_2'$, and $T_3'$ be the corresponding tensors in the intermediate data.
Following the tensor decomposition of~\cref{eqn:symmetric-part}, we obtain coefficients $\gamma_{d,1}$ for $T_1$, $\gamma_{d,1}'$ for $T_1'$, etc. We compute $l_{1,2}$, $l_{2,3}$, and $l_{3,1}$ for the ground truth data at $\sigma$ and evaluate the following two cases. 

\underline{Case 1.}~All $l_{i,j} \neq 0$. We first verify whether the degeneracy type of \(\sigma\) (i.e., trisector, wedge, or non-degenerate) in the intermediate data matches the ground truth. If it does, we proceed to the next cell. Otherwise, we examine \(\theta_1'\), \(\theta_2'\), and \(\theta_3'\) and identify the \(\theta_i'\) value that deviates the most from the ground truth. This value is then corrected to be $\theta_i''$ using linear-scaling quantization with an error bound of \(2\pi / (2^6-1)\), ensuring that the quantization number \(n_x\) can be stored using six bits. Once \(\theta_i''\) is computed, we use it along with the coefficients \(\gamma_{d,i}'\) and \(\gamma_{s,i}'\) to reconstruct a new tensor \(T_i''\) based on the decomposition. 

According to~\cref{lemma:symmetric-theta}, adjusting the values of $\theta_i'$ is sufficient to ensure that the $l_{i,j}$ match between the ground truth and reconstructed data. Likewise, Jankowai et al.~\cite{jankowai2019robust} showed that the $l_{i,j}$ determine the degeneracy type of a cell $\sigma$. If adjusting one of the $\theta_i'$ does not correct the topology of the cell, we adjust the other two values as necessary.

In rare cases, due to limited precision, adjusting all three of the $\theta_i'$ using linear-scaling quantization does not preserve the topology. It is also possible that the error bound may not be maintained. In such cases, we store the deviator losslessly, and if necessary the entire tensor. We provide specifics in the supplement.

\underline{Case 2}.~One of the $l_{i,j} = 0$. In this case, if none of the $T_i$ are equal to zero, we store the vertices of $\sigma$ losslessly. Otherwise, we store the $T_i$ that equal zero losslessly, and compute the topology of $\sigma$ following~\cref{lemma:symmetric-zero} for both the ground truth and reconstructed data. If the topology between the ground truth and the reconstructed data does not match, we store the vertices of $\sigma$ losslessly. If any of the $T_i$ satisfy $D(T_i) = 0$ but $T_i \neq 0$, then we also store the vertices of $\sigma$ losslessly.

\subsection{Compression of Asymmetric Tensor Fields}
\label{sec:method-asymmetric}

\para{Topological guarantees.}~For each cell \(\sigma\), {\toolname} preserves the topological structure of both the eigenvalue and eigenvector partitions. Consequently, the eigenvector and eigenvalue graphs are maintained across the entire domain. Since the \new{isolated} degenerate points of the dual-eigenvector fields are part of the eigenvector graph, we also preserve the topology of the dual-eigenvector fields using the same approach as for symmetric tensor fields, \new{operating on the symmetric component of the tensor field. However, when the deviator would be stored losslessly, we instead store $\theta$ losslessly.}
{\toolname} allows users to choose whether to preserve topology based on the eigenvector manifold, eigenvalue manifold, or both. While we describe the process for preserving both, it can be easily adapted to maintain only one. 

For any cell \(\sigma\), preserving its topology involves two main steps. First, we ensure that each vertex of \(\sigma\) is correctly classified within the appropriate type of region in both the eigenvalue and eigenvector partitions. Next, we preserve the internal topology of \(\sigma\) by maintaining the locations and connectivity of the different partition regions. These processes are detailed in \cref{sec:method-asymmetric-vertex} and \cref{sec:method-asymmetric-internal}, respectively. Additionally, our method accounts for various special cases that require careful handling, which we discuss in the supplement. 

\subsubsection{Asymmetric Cell Correction: Vertex Correction}
\label{sec:method-asymmetric-vertex}

In this portion of the algorithm, we ensure that if $x$ is a vertex of a cell $\sigma$, and then the classifications of $f(x)$ and $f'(x)$ match according to the eigenvalue and eigenvector partitions.

\para{Theoretical basis.} We provide the theoretic basis below, see the supplement for proofs. We define a function $\sign : \Rspace \rightarrow \Rspace$ that satisfies $\sign(x) = 1$ if $x \geq 0$ and $\sign(x) = -1$ otherwise.
\vspace{-2mm}
\begin{restatable}{lemma}{lemmaErrorBounds}
Suppose \( T \) and \( T' \) are \( 2 \times 2 \) tensors, where each entry of \( T' \) differs from the corresponding entry of \( T \) by at most \( \xi \),  i.e.,~$|T_{1,1} - T_{1,1}'| \leq \xi$, and so on. 
Denote their coefficients from the decomposition as $\gamma_d$, $\gamma_r$, $\gamma_s$, $\gamma_d'$, $\gamma_r'$, and $\gamma_s'$, respectively. 
Then $|\gamma_d - \gamma_d'| \leq \xi$, $|\gamma_r - \gamma_r'| \leq \xi$, and $|\gamma_s - \gamma_s'| \leq \sqrt{2}\xi$.
\label{lemma:error-bounds}
\end{restatable}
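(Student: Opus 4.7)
The plan is to handle each of the three bounds independently, since each coefficient is a simple algebraic function of the entries of $T$ and applying elementary inequalities (triangle inequality and reverse triangle inequality) suffices. Throughout I will use the hypothesis $|T_{ij} - T_{ij}'| \leq \xi$ for all $i,j \in \{1,2\}$.

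First I would handle $\gamma_d$ and $\gamma_r$, which are the easy cases. Since $\gamma_d - \gamma_d' = \tfrac{1}{2}\bigl((T_{11}-T_{11}') + (T_{22}-T_{22}')\bigr)$, the ordinary triangle inequality yields $|\gamma_d - \gamma_d'| \leq \tfrac{1}{2}(\xi + \xi) = \xi$. The bound for $\gamma_r$ is analogous, with $\gamma_r - \gamma_r' = \tfrac{1}{2}\bigl((T_{21}-T_{21}') - (T_{12}-T_{12}')\bigr)$, again giving $|\gamma_r - \gamma_r'| \leq \xi$.

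The bound for $\gamma_s$ is the only step requiring a bit more care, and it is where the $\sqrt{2}$ factor enters. I would write $2\gamma_s$ as the Euclidean norm of the vector $u = (T_{11}-T_{22},\, T_{12}+T_{21})$, and similarly $2\gamma_s' = \|u'\|$. By the reverse triangle inequality on $\Rspace^2$,
\begin{equation*}
\bigl| 2\gamma_s - 2\gamma_s' \bigr| \;\leq\; \|u - u'\| \;=\; \sqrt{\bigl((T_{11}-T_{11}') - (T_{22}-T_{22}')\bigr)^2 + \bigl((T_{12}-T_{12}') + (T_{21}-T_{21}')\bigr)^2}.
\end{equation*}
Each of the two components in the square root is bounded in absolute value by $2\xi$ via the triangle inequality, so the whole expression is at most $\sqrt{(2\xi)^2 + (2\xi)^2} = 2\sqrt{2}\,\xi$. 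Dividing by $2$ yields $|\gamma_s - \gamma_s'| \leq \sqrt{2}\,\xi$.

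The main (and essentially only) obstacle is remembering to use the reverse triangle inequality in the $\gamma_s$ case rather than attempting a direct expansion of the two square roots, which would be much messier and would not clearly produce the tight $\sqrt{2}$ constant. No special assumptions about nonzero deviators or symmetry are needed, so the argument applies to both symmetric and asymmetric tensors.
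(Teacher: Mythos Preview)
Your proof is correct and is essentially the same approach the paper takes: the paper factors the two elementary inequalities you use (the averaging bound and the reverse triangle inequality on $\Rspace^2$) into a short helper lemma and then says the result ``follows'' from it, whereas you carry out those same steps directly. There is no substantive difference in content or strategy.
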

\vspace{-4mm}
\begin{restatable}{lemma}{lemmaSignSwap}
Suppose that $x \in \Rspace$ and $x'$ is a guess for $x$ within an error bound $\xi$, i.e.,~$|x - x'| \leq \xi$. 
\begin{itemize}[noitemsep,leftmargin=*]
\item If $x > 0$ but $x' < 0$. Then $x' + \xi > 0$ and $|x - (x' + \xi)| \leq \xi$. That is, $x' + \xi$ is a valid guess for $x$ that is within an error bound $\xi$ and has the same sign as $x$. 
\item If $x < 0$ but $x' > 0$, then $x' - \xi$ is analogously valid.
\end{itemize}
\label{lemma:sign-swap}
\end{restatable}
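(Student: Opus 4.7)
The plan is direct: the hypothesis $|x - x'| \le \xi$ unpacks to the two-sided bound $-\xi \le x - x' \le \xi$, and both bullets follow immediately by reading this bound against the given sign constraints on $x$ and $x'$. I would prove the first bullet in full and then obtain the second either by a symmetric computation or by applying the first bullet to $-x$ and $-x'$.

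For the first bullet, I would begin by noting that $x > 0 > x'$ forces $x - x' > 0$, so combined with the hypothesis we have $0 < x - x' \le \xi$. The right half of this rearranges to $x \le x' + \xi$, and since $x > 0$ this immediately yields $x' + \xi > 0$. For the error-bound claim, I would rewrite $x - (x' + \xi) = (x - x') - \xi$ and observe that by the preceding bound this quantity lies in $(-\xi, 0]$, so $|x - (x' + \xi)| \le \xi$ as required.

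For the second bullet, the symmetric computation is equally short: from $x < 0 < x'$ one gets $-\xi \le x - x' < 0$, and adding $\xi$ gives $(x - x') + \xi \in [0, \xi)$, which is exactly the statement $|x - (x' - \xi)| \le \xi$. The sign condition $x' - \xi < 0$ comes from rearranging the left half of the hypothesis to $x' - \xi \le x < 0$. There is no substantive obstacle here; the lemma is bookkeeping on a triangle-inequality-style bound, and the only thing to watch is tracking signs and distinguishing strict from non-strict inequalities across the four pieces of the argument.
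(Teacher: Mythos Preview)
Your proposal is correct and follows essentially the same route as the paper: both unpack $|x-x'|\le\xi$ into a two-sided bound and read off the sign and error claims directly, with the second bullet handled by symmetry. Your treatment of strict versus non-strict inequalities is if anything slightly more careful than the paper's.
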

\vspace{-2mm}
By applying \cref{lemma:sign-swap}, we can generate a guess \( x' \) for \( x \) that shares the same sign as \( x \). In this context, if \( x' \) and \( x \) already have the same sign, no further adjustment is needed.
\vspace{-2mm}
\begin{restatable}{lemma}{lemmaMagnitudeSwap}
Suppose that $x,y \in \Rspace$ and $x'$ and $y'$ are guesses for $x$ and $y$, respectively, within an error bound $\xi$, such that $|x-x'| \leq \xi$ and $|y - y'| \leq \xi$. Suppose that $|x| > |y|$ but $|x'| < |y'|$. 

Let $x'' = \sign(x')|y'|$, $y'' = \sign(y')|x'|$. Then $|x - x''| \leq \xi$ and $|y - y''| \leq \xi$, and $|x''| > |y''|$. That is, by swapping the magnitudes of $x'$ and $y'$, we can obtain two valid guesses $x''$ and $y''$ with $|x''| > |y''|$.
\label{lemma:magnitude-swap}
\end{restatable}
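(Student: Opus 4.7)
The proof splits into three parts corresponding to the three conclusions. First, the magnitude claim $|x''|>|y''|$ is immediate: since $\sign(x'),\sign(y')\in\{\pm 1\}$, we have $|x''|=|y'|$ and $|y''|=|x'|$, so the hypothesis $|x'|<|y'|$ yields the claim directly with no further work.

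For the two error-bound conclusions, my plan is to reduce each statement to a one-sided inequality about absolute values. The lemma is designed to be applied after \cref{lemma:sign-swap} has already aligned signs, i.e., in the context where $\sign(x)=\sign(x')$ and $\sign(y)=\sign(y')$. Under this matching-sign setup, $x=\sign(x')|x|$ and $x''=\sign(x')|y'|$, so
$$|x-x''|=\bigl||x|-|y'|\bigr|.$$
I would then bound $\bigl||x|-|y'|\bigr|$ from both sides. The reverse triangle inequality applied to $|x-x'|\le\xi$ gives $|x|\le|x'|+\xi$, and combining with $|x'|<|y'|$ yields $|x|-|y'|<\xi$. Symmetrically, $|y'|\le|y|+\xi$ together with $|y|<|x|$ gives $|y'|-|x|<\xi$. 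Together these two inequalities show $\bigl||x|-|y'|\bigr|<\xi$, hence $|x-x''|\le\xi$. The argument for $|y-y''|\le\xi$ is obtained by swapping the roles of $x$ and $y$ throughout (and the primed variables correspondingly), so it requires no new ideas.

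The main point of friction is organizational: one must thread two chains of inequalities in opposite directions---one using $|x|>|y|$ and the other using $|x'|<|y'|$---through the reverse triangle inequality, and line them up so that the $\xi$-slack cancels on each side. Beyond that, the algebra is mechanical. If one wanted to drop the implicit matching-sign hypothesis, a case split on $\sign(x)$ vs.\ $\sign(x')$ would be needed and the bound would fail in the mismatched-sign case; however, the text preceding the lemma explicitly stipulates that \cref{lemma:sign-swap} is invoked first, so no such case analysis arises in the intended application, and this is where I would focus the proof.
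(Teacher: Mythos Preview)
Your proposal is correct and takes essentially the same approach as the paper: both reduce $|x-x''|$ to $\bigl||x|-|y'|\bigr|$ via the sign-matching assumption, then squeeze this quantity by $\xi$ using the reverse triangle inequality together with the hypotheses $|x'|<|y'|$ and $|y|<|x|$. The paper organizes the squeeze as a two-case split on whether $|y'|\le|x|$ while you bound both directions directly, and you are more explicit than the paper about the sign-matching hypothesis (which the paper's final step also tacitly uses).
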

\vspace{-5mm}

\para{Vertex correction algorithm.}~Let $T$ be a tensor from the ground truth and $T'$ be the corresponding tensor in the intermediate data. Denote their coefficients from the decomposition as $\gamma_d$, $\gamma_r$, $\gamma_s$, $\gamma_d'$, $\gamma_r'$, and $\gamma_s'$ respectively. To ensure that the classifications of $T'$ matches $T$, we must preserve four properties. 
For the eigenvector manifold, we must preserve: (1) whether $|\gamma_r| > \gamma_s$, and (2) the sign of $\gamma_r$. 
For the eigenvalue manifold, we must preserve: (3) which of $|\gamma_r|$, $|\gamma_d|$, or $\gamma_s$ is the largest, and (4) the sign of that coefficient. We also ensure that each of $\gamma_d'$, $\gamma_r'$ and $\gamma_s'$ differs from its ground truth value by at most $\xi$.

For each $T$ represented as a matrix, we check whether these four properties have been preserved in $T'$. 
If not, we correct $T'$. To do so, we introduce five variables, $\SFIX$, $\RSIGN$, $\DSIGN$, $\ROVERS$, and $\DLARGEST$.  
Each variable serves as a flag indicating whether an adjustment is needed for \(\gamma_d'\), \(\gamma_r'\), or \(\gamma_s'\) to ensure the preservation of the four properties. In particular,
\begin{itemize}[noitemsep,leftmargin=*]
\item \(\SFIX\) is enabled when \(\gamma_s'\) needs adjustment to satisfy \( |\gamma_s - \gamma_s'| \leq \xi \). According to~\cref{lemma:error-bounds}, this can be achieved by adding or subtracting \((\sqrt{2}-1)\xi\) from \(\gamma_s'\).  
\item \(\RSIGN\) is enabled when \(\gamma_r'\) requires a sign change. The sign is adjusted using the approach outlined in~\cref{lemma:sign-swap}. 
\item \(\DSIGN\) is enabled when \(\gamma_d'\) requires a sign change, which is performed using the strategy described in~\cref{lemma:sign-swap}.  
\item \(\ROVERS\) is enabled if the ordering of \( |\gamma_r'| \) and \(\gamma_s'\) is incorrect (i.e., one is larger than the other when it should be smaller). Their magnitudes are swapped using the approach outlined in~\cref{lemma:magnitude-swap}. 
\item $\DLARGEST$ is enabled if $|\gamma_d'|$ should have a larger magnitude than $|\gamma_r'|$ and $|\gamma_s'|$, but  does not. We also set $\DLARGEST$ if $|\gamma_d'|$ has the largest magnitude, but it should be smaller than $|\gamma_r'|$ or $|\gamma_s'|$. In such cases, we swap the magnitudes using the strategy in~\cref{lemma:magnitude-swap}.
\end{itemize}
Once we have made appropriate adjustments to $\gamma_d'$, $\gamma_r'$ and $\gamma_s'$, we reassemble $T'$ from the coefficients according to~\cref{eqn:decomposition}. 
Although unlikely, it is possible that after doing so, \( T' \) may not adhere to the element-wise error bound \( \xi \). 
In such a case, we quantize $\gamma_d'$, $\gamma_r'$ and $\gamma_s'$ using logarithmic-scaling quantization with increased precision and repeat; we report specifics on these adjustments in the supplement.

\subsubsection{Asymmetric Cell Correction: Cell Topology Correction} 
\label{sec:method-asymmetric-internal}
  
Preserving the classifications of the vertices is often (but not always) sufficient to preserve the cell topology. In this step, we ensure the topology of each cell $\sigma$ is preserved.

First, we preserve the degeneracies of the symmetric component of the tensor field using the same strategy applied to symmetric tensor fields. However, when the procedure would allow the deviator to be stored losslessly, we instead store the entire matrix losslessly. 

Second, we preserve the topology of each cell $\sigma$ to ensure that the global eigenvector and eigenvalue graphs are maintained. These graphs are computed by determining the topology of each cell individually and then merging regions of the same type in adjacent cells. We preserve both the eigenvector and eigenvalue graphs for $\sigma$, as well as the edges and vertices of $\sigma$ that border each region of the corresponding partitions. This approach guarantees that the merging step in the global eigenvalue and eigenvector graph computation will produce consistent results. 

\para{Theoretical basis.} The following lemmas are relevant to computing the cell topology of $\sigma$, whose proofs are in the supplement.
\vspace{-2.5mm}
\begin{restatable}{lemma}{lemmaEigenvectorTopology}
Except in special cases, the topology of the eigenvector partition of $\sigma$ is determined by the following conditions:
\begin{itemize}[noitemsep]
\item[(a)] Whether each of the curves $\gamma_r = \gamma_s$ and $-\gamma_r = \gamma_s$ intersects the interior of $\sigma$.
\item[(b)] How many times the curves $\gamma_r = \gamma_s$ and $-\gamma_r = \gamma_s$ intersect each edge of $\sigma$.
\item[(c)] The classification of each vertex of $\sigma$ according to the eigenvector partition.
\end{itemize}
\label{lemma:eigenvector-topology}
\end{restatable}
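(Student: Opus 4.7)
The plan is to show that the data in (a), (b), and (c) suffice to reconstruct both the arrangement of partition boundaries inside $\sigma$ and the region label assigned at every point of $\sigma$. First I would identify the three boundary curves of the eigenvector partition that can appear inside $\sigma$: the line $\gamma_r = 0$, and the two conic curves $\gamma_r = \gamma_s$ and $-\gamma_r = \gamma_s$. Since $\gamma_r$ interpolates linearly on $\sigma$, its zero set is either empty, a single line segment across $\sigma$, or the entire cell (in degenerate cases), and its placement is determined by the signs of $\gamma_r$ at the three vertices---information already contained in the vertex classifications of (c). So no additional data is needed to locate $\gamma_r = 0$; only the two conics require conditions (a) and (b).

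Next I would analyze how each conic can sit inside $\sigma$. Because $\gamma_s^2$ is a quadratic polynomial in the barycentric coordinates of $\sigma$ and $\gamma_r$ is linear, the relations $\gamma_r = \pm\gamma_s$ cut out algebraic curves of degree at most two. Generically, each such conic meets $\sigma$ in a disjoint union of finitely many arcs, each of which is either (i) a boundary-to-boundary arc with both endpoints on $\partial\sigma$ or (ii) a simple closed loop in the interior of $\sigma$. The per-edge intersection counts in (b) tell us how many endpoints of type-(i) arcs lie on each edge; pairing up these endpoints along $\partial\sigma$ then fixes the type-(i) arcs up to isotopy inside $\sigma$. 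The flag in (a) distinguishes the empty case from the case of an interior loop with no boundary crossings. Together with the analogous data for the other conic, this pins down the full arrangement of partition boundary curves within $\sigma$ up to ambient isotopy.

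Finally, I would use (c) to label the resulting regions. Each vertex lies in (or on the closure of) a unique region of the partition, and its label is given by (c). Crossing $\gamma_r = 0$ flips only the sign of $\gamma_r$, while crossing $\gamma_r = \pm\gamma_s$ flips only the inequality $|\gamma_r| \gtrless \gamma_s$; walking from a labeled vertex across the reconstructed curves therefore determines the label of every region consistently. The main obstacle and the source of the ``except in special cases'' clause is pinning down exactly which non-generic configurations must be excluded: a vertex lying precisely on a boundary curve (making its classification ambiguous), a conic tangent to an edge of $\sigma$ (making the edge intersection count undercount crossings), the two conics meeting at an interior point other than a degenerate point where $\gamma_s = 0$, or a conic whose arrangement cannot be distinguished by edge counts and a single interior/exterior flag. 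Ruling these out is what makes (a)--(c) sufficient in the generic case; they are handled separately in the supplement.
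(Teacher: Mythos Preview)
Your plan has a real gap at the step ``pairing up these endpoints along $\partial\sigma$ then fixes the type-(i) arcs up to isotopy inside $\sigma$.'' Per-edge intersection counts alone do not determine how the boundary points are connected by arcs: if the curve $\gamma_r = \gamma_s$ meets one edge twice and a second edge twice, there are two non-crossing matchings of the four endpoints (two nested arcs versus two disjoint bumps), and these yield genuinely different decompositions of $\sigma$. What rules out the wrong matching is that the region $\{\gamma_r > \gamma_s\}$ is \emph{convex}---because $\gamma_r$ is affine on $\sigma$ while $\gamma_s$ is convex (it is the Euclidean norm of an affine map into $\Rspace^2$)---and you never invoke this. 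Once convexity is in hand, $\{\gamma_r > \gamma_s\}\cap\sigma$ is a convex set whose boundary is a single closed curve alternating between arcs of $\gamma_r=\gamma_s$ and segments of $\partial\sigma$; then the cyclic order of the intersection points along $\partial\sigma$ (which (b) records) together with one vertex label from (c) fixes the pairing. The ``two bumps'' matching would force $\{\gamma_r>\gamma_s\}$ to be disconnected or non-convex and is therefore impossible, but nothing in your sketch excludes it.

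The paper's proof proceeds by a different route that makes this dependence explicit. It first proves that $\gamma_s$ is convex and hence that $\{\gamma_r>\gamma_s\}$ and $\{-\gamma_r>\gamma_s\}$ are convex (so each has at most one component). It then splits into two cases: when both regions are nonempty, a separate structural lemma shows the eigenvector graph must be the fixed four-node chain; when only one is nonempty, the paper runs an explicit case analysis on the total number of boundary crossings ($0$, $2$, $4$, or $6$), using convexity in each case to locate the single convex region relative to the vertices and to read off which edges it borders. Your reconstruct-then-label strategy is more uniform in spirit, but it leans on the same convexity fact to close; once you add it (and argue as above that it forces a unique consecutive pairing along $\partial\sigma$), your outline can be completed.
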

\vspace{-6.5mm}

\begin{restatable}{lemma}{lemmaEigenvalueTopology}
Except in special cases, the topology of the eigenvalue partition of $\sigma$ is determined by the following conditions:
\begin{itemize}[noitemsep]
\item[(a)] The classification of each vertex in the eigenvalue partition.
\item[(b)] For each of the curves \(\gamma_d = \gamma_s\), \(-\gamma_d = \gamma_s\), \(\gamma_r = \gamma_s\), and \(-\gamma_r = \gamma_s\), determine the order in which the following points are encountered when traveling counterclockwise around the curve. 
\begin{itemize}[noitemsep]
\item[(i)] Each boundary point (i.e, a point on the boundary between two regions of the eigenvalue partition) where the curve enters or leaves $\sigma$, and the orientation of the curve at that point.
\item[(ii)] Each junction point, along with the orientations of the curves that intersect at that point. 
\end{itemize}
\item[(c)] For each edge \( e \) of \(\sigma\), for each point identified in (b.i) that lies on \( e \), determine which point is closest to each vertex of \( e \). 
\item[(d)] For each region type in the eigenvalue partition, except where \(\gamma_s > |\gamma_d|\) and \(\gamma_s > |\gamma_r|\), if its boundary curve does not intersect any edges or other curves, check whether that region is present in the eigenvalue partition of \(\sigma\).
\end{itemize}
\label{lemma:eigenvalue-topology}
\end{restatable}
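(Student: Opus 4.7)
The plan is to prove the characterization constructively: given the data in conditions (a)--(d), I would show how to reassemble the eigenvalue partition of $\sigma$ up to homeomorphism, and argue that this reassembly is unique. Because $\gamma_d$ and $\gamma_r$ interpolate linearly while $\gamma_s$ does not, the partition inside $\sigma$ is the arrangement cut out by the linear curves $\gamma_d = \pm\gamma_r$ together with the conic curves $\pm\gamma_d = \gamma_s$ and $\pm\gamma_r = \gamma_s$, labeled by which of $|\gamma_d|$, $|\gamma_r|$, $\gamma_s$ dominates and the sign of that coefficient. The combinatorial structure of such a planar arrangement inside a disk is determined by three ingredients: the cyclic pattern of curve endpoints on $\partial\sigma$, the local incidence data at each interior intersection, and the labels carried by each face.

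First I would handle the boundary. Condition (c) gives, for each edge $e$ of $\sigma$, the exact order in which the conic curves cross $e$, and condition (a) pins down the region type at each vertex. Since the label of a partition region changes only when a curve is crossed, walking along $\partial\sigma$ starting from a vertex and using (c) and (a) lets me reconstruct the label of every arc of the subdivided boundary. In particular, this also forces the positions and labels where the linear curves $\gamma_d = \pm\gamma_r$ cross $\partial\sigma$, because they border regions whose labels are already determined at the endpoints of each arc. Next, condition (b) traces each conic curve through the interior: walking counterclockwise around the curve and recording every boundary intersection (with its orientation, from (b.i)) and every junction point (with the orientations of all curves meeting there, from (b.ii)) determines the interior arrangement up to ambient homeomorphism, since a planar arrangement in a disk is determined by the cyclic order of its endpoints on the boundary together with the rotation scheme at each crossing. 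Condition (d) then fills in the one remaining gap: a partition region bounded entirely by a single closed curve that touches neither the boundary nor any other curve leaves no trace in (a)--(c), so its presence must be recorded separately; the one region type for which this is automatic, namely $\gamma_s > |\gamma_d|$ and $\gamma_s > |\gamma_r|$, is the ``central'' face of the arrangement and is always present when it is topologically forced by the surrounding structure.

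The main obstacle will be rigorously justifying the junction step. At a junction point, two or more of the conic curves meet, and the way the four incident label-regions are permuted depends on the relative tangent directions of the curves there; I would need to check that the orientations recorded in (b.ii) suffice to recover this local rotation scheme, and that a junction point visited by one curve in (b) is the same junction point visited by the companion curve, so that gluing the traces from the four curves produces a consistent arrangement. A secondary subtlety is cataloguing the ``special cases'' excluded in the statement --- tangencies of a conic with an edge of $\sigma$, coincident junction points on $\partial\sigma$, and configurations where a conic degenerates --- so that the reader knows precisely when (a)--(d) may fail to determine the topology; I would dispatch these by listing them explicitly and observing that they form a measure-zero subset of coefficient space and are handled separately in the cell-correction algorithm. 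With the boundary walk, the curve traces, and the explicit check in (d) all combined, the labelled arrangement, and therefore the eigenvalue partition of $\sigma$, is reconstructed uniquely, completing the proof.
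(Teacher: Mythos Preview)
Your high-level strategy --- reconstruct the labelled arrangement from boundary data and crossing data --- is the same idea the paper uses to enumerate the regions of type $S$, where it builds a permutation on the set of junction points, topologically significant edge crossings, and $S$-type vertices, and reads off each $S$-region as a cycle of that permutation. That part of your plan is sound.

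The gap is in your boundary walk. You assert that knowing the vertex labels from (a) and the ordered conic crossings from (c) lets you label every arc of $\partial\sigma$, and in particular forces the positions where the linear curves $\gamma_d=\pm\gamma_r$ meet $\partial\sigma$. This is not automatic: a region such as $D_+$ can intersect an edge $e$ even when neither vertex of $e$ lies in $D_+$ and the curve $\gamma_d=\gamma_s$ has no topologically significant crossing with $e$ (it enters and exits $e$ flanked by $R_+$ and $R_-$, not by $S$). Detecting this from (a)--(d) is nontrivial; the paper devotes a separate lemma (the long case analysis on edge intersections) to giving necessary and sufficient conditions, and those conditions reach across to the \emph{other} edges and to the junction-point data in (b), not just to the local data on $e$. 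Your sketch does not account for this.

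More fundamentally, the paper's argument rests on the convexity of $D_+$, $D_-$, $R_+$, $R_-$ (each is cut out by affine and convex inequalities), which forces each to have at most one connected component. This is what reduces the problem to counting and tracing only the $S$-regions, and what makes the edge-intersection and region-adjacency lemmas work. Your general ``planar arrangement in a disk'' framework does not invoke convexity, so it cannot by itself guarantee uniqueness of the reconstruction: without convexity there is no reason two disjoint $D_+$ blobs could not realise the same data in (a)--(d). You should make the convexity explicit and use it to pin down the non-$S$ regions before running the boundary-tracing argument for $S$.
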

\vspace{-3mm}
Both~\cref{lemma:eigenvector-topology} and~\cref{lemma:eigenvalue-topology} admit special cases proved and discussed in the supplement. 

\para{Cell topology preservation algorithm.}~To accomplish our goal, we create a topological invariant that allows us to verify whether the topology of \(\sigma\) in the intermediate data aligns with the ground truth. This invariant is represented as a data structure containing information about the cell. As part of the invariant, we include the classification of each vertex based on the eigenvector and eigenvalue manifolds. Additionally, we incorporate other relevant information, which we will describe next.

\noindent\underline{Eigenvector manifold invariant.}~By~\cref{lemma:eigenvector-topology}, the topology of the eigenvector partition of $\sigma$ is determined by the boundaries of the regions where $\gamma_r > \gamma_s$ and $-\gamma_r > \gamma_s$. We divide the curve $\gamma_r^2 = \gamma_s^2$ into two connected components: $\gamma_r = \gamma_s$ and $-\gamma_r = \gamma_s$. Next, in our invariant, we track how many times each curve intersects each edge. If the curves do not intersect any edges but intersect the interior of \(\sigma\), we also record this case in the invariant. In \cref{fig:method-intersections}(A), we illustrate one possible example. In this example, we would record that the curve \(\gamma_r = \gamma_s\) intersects the hypotenuse once and the left edge once, while the curve \(-\gamma_r = \gamma_s\) does not intersect the cell.

\begin{figure}[!h]
    \begin{overpic}[width=\linewidth]{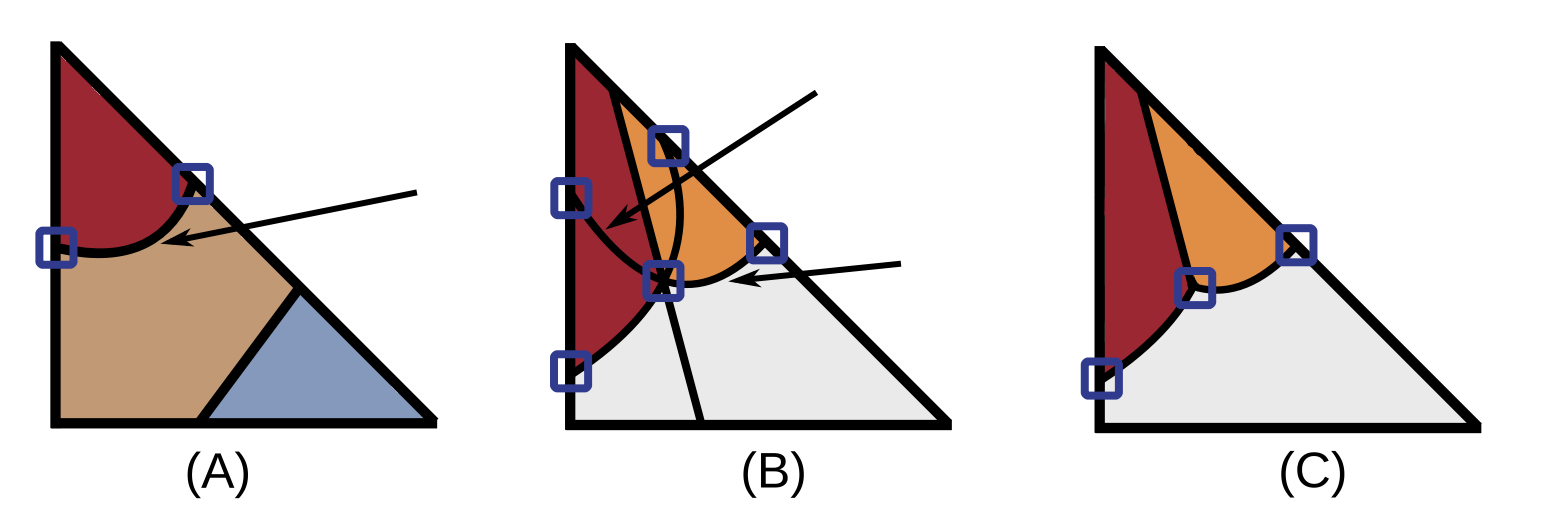}

    \put (22,22) {\smaller[1]{$\gamma_r=\gamma_s$}}
    
    \put (48,28) {\smaller[1]{$\gamma_r=\gamma_s$}}
    \put (53,18) {\smaller[1]{$\gamma_d=\gamma_s$}}

    \put (72,7) {\smaller[1]{$a$}}
    \put (78,12) {\smaller[1]{$b$}}
    \put (82.5,14.5) {\smaller[1]{$c$}}
    
    \end{overpic}
    \vspace{-7mm}
    \caption{(A) We find the edge intersections for the curve $\gamma_r = \gamma_s$. (B) We compute the edge intersections and junction points for the curves $\gamma_d = \gamma_s$ and $\gamma_r = \gamma_s$. (C) We only retain cell intersections that are topologically significant.}
    \label{fig:method-intersections}
    \vspace{-3mm}
\end{figure}

\noindent\underline{Eigenvalue manifold invariant.}~Following~\cref{lemma:eigenvalue-topology}, to describe the cell topology based on the eigenvalue manifold, we need to track where each of \(\gamma_d^2 = \gamma_s^2\) and \(\gamma_r^2 = \gamma_s^2\) intersects each edge of the cell \(\sigma\), as well as where they intersect each other. 
To accomplish this, we divide each curve into two components, resulting in four total curves: \(\gamma_d = \gamma_s\), \(-\gamma_d = \gamma_s\), \(\gamma_r = \gamma_s\), and \(-\gamma_r = \gamma_s\). We then trace each curve clockwise, recording the order in which it intersects an edge or another curve, and noting the specific edge or curve it intersects in each case, and the orientation. If an intersection does not affect the topology, we omit it from our tracking. This process produces four lists of intersections, which we incorporate into our invariant. 

In \cref{fig:method-intersections}(B), we provide an example showing the intersection points we compute, which \new{are} four edge intersections and one junction point. However, two of these intersections do not affect the topology, so we exclude them from our tracking. In \cref{fig:method-intersections}(C), we show the three intersections that we do track. Ultimately, our invariant records that the curve \(\gamma_d = \gamma_s\) starts by intersecting the hypotenuse at \(c\) and ends at a junction point at \(b\). The curve \(\gamma_r = \gamma_s\) starts at a junction point at \(b\) and ends by intersecting the left edge at \(a\). The curves \(-\gamma_d = \gamma_s\) and \(-\gamma_r = \gamma_s\) do not contribute to the topology. In our invariant, we do not track the specific locations \(a\), \(b\), and \(c\); instead, we track whether the intersection is a junction point, and if not, which edge it lies on.

\noindent\underline{Topology correction.}~We compute our invariant for $\sigma$ for both the ground truth and reconstructed data. If there is a mismatch, then we procedurally quantize $\theta$ using linear-scaling quantization and lower the error bounds of $\gamma_d$, $\gamma_r$, and $\gamma_s$ using logarithmic-scaling quantization.  We provide details in the supplement. \new{After the degenerate points are preserved, it is possible that one will lie in the wrong region type in the eigenvector partition. In such a case, we use the same adjustment process used to correct degenerate point errors.}

\para{Encoding of adjustments.} Finally, once the correction for each cell \(\sigma\) is complete, we generate several lists of integers. Since many of our variables require fewer than 8 bits, we combine multiple variables into a single byte. For each matrix \(T\), we combine \(\SFIX\) with the quantization number for \(\theta\) to create an 8-bit integer. We also group \(\RSIGN\), \(\DSIGN\), \(\ROVERS\), and \(\DLARGEST\) into another 8-bit integer. These integers, along with the quantization numbers for \(\gamma_d\), \(\gamma_r\), and \(\gamma_s\), are encoded using Huffman coding and included in the final compressed file.

\section{Experimental Results}
\label{sec:results}

We provide an overview of our experiments in \cref{sec:results-overview}. We evaluate the performance of {\toolname} with a comparison between augmented and base compressors in~\cref{sec:results-evaluation}. We include an analysis of run time in~\cref{sec:results-time}. We aim to preserve both eigenvalue and eigenvector graphs; preserving only one of them is included in the supplement, \new{where we also provide error maps, as well as statistics on the number of times that cells are visited and on the topological errors made by the base compressors.}

\para{Highlighted results.}~We highlight our experimental results below.
\begin{itemize}[noitemsep,leftmargin=*]
\item Applying SZ3 or SPERR to any of our eight datasets results in numerous topological errors, whereas augmenting them with {\toolname} eliminates topological errors in every case. 
\item Augmented SPERR produces a better tradeoff between bit-rate and \new{Peak Signal-To-Noise Ratio} (PSNR) compared to augmented SZ3.
\item {\toolname} \new{takes} \( O(nk) \) time, where \( n \) \new{is} the number of cells and \( k \) \new{is} the maximum number of times a cell is processed before being stored losslessly. In practice, most of the run time is spent on cell correction. 
\end{itemize}

\subsection{Overview of Experiments}
\label{sec:results-overview}

We present an experimental study of our framework, {\toolname}, by augmenting two state-of-the-art error-bounded lossy compressors for 2D scientific data: SZ3~\cite{liang2022sz3} and SPERR~\cite{li2023lossy}. We also experimented with ZFP~\cite{lindstrom2014fixed}, TTHRESH~\cite{ballester2019tthresh}, and Neurcomp~\cite{lu2021compressive}. However, augmented ZFP~\cite{lindstrom2014fixed} yielded poor compression ratios, while TTHRESH and Neurcomp did not natively support 2D data; therefore these compressors were excluded from our analysis.  

We tested {\toolname} on four symmetric and four asymmetric tensor fields (\cref{tab:results-datasets}). 
Each contains a number of 2D slices, and we compress each slice individually and report evaluation metrics that are aggregated across all slices. The Stress datasets are stress tensor fields. The Brain datasets are from brain MRI scans. The Ocean data is from an ocean flow simulation, the Miranda data is from a turbulence simulation, and the Vortex Street and the Heated Cylinder data come from \new{fluid dynamics} simulations. See the supplement for details on the datasets. 

\begin{figure*}[!t]
\begin{center}
\includegraphics[width=\textwidth]{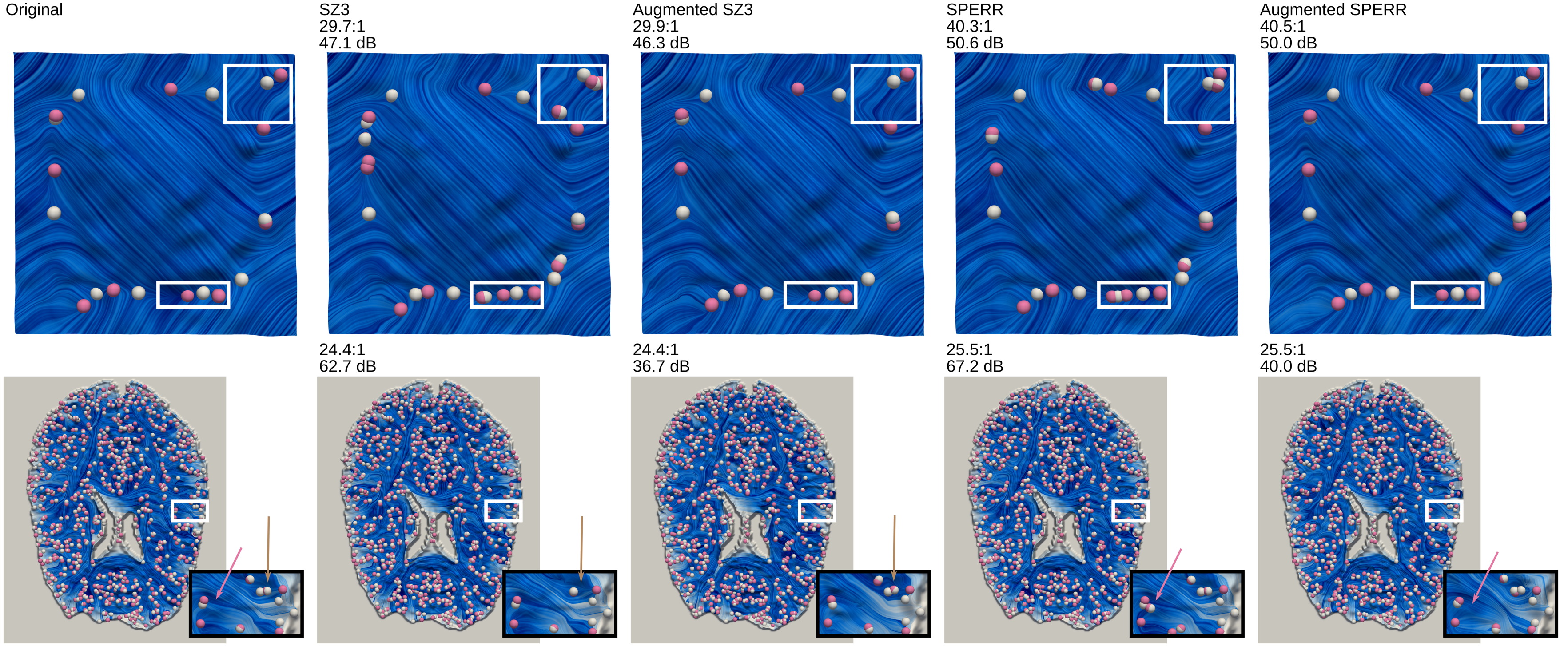}
\end{center}
\vspace{-7mm}
\caption{LIC visualization of the eigenvector fields of two 2D symmetric second-order tensor fields compressed with SZ3, augmented SZ3, SPERR, and augmented SPERR, along with the ground truth. Trisectors are in white, wedges are in pink. Top: Stress B data slice 10. Bottom: {\BrainB} data slice 50. In the top row, we highlight a region of interest. In the bottom row, we highlight and provide a zoomed-in view of a region of interest. We also provide an orange arrow highlighting a discrepancy from SZ3, and a pink arrow highlighting a discrepancy of SPERR. \new{The Z position of each point corresponds to the Frobenius norm with smoothing applied.}}
\vspace{-4mm}
\label{fig:results-syms}
\end{figure*}

\begin{figure*}[!h]
\begin{center}
\includegraphics[width=\textwidth]{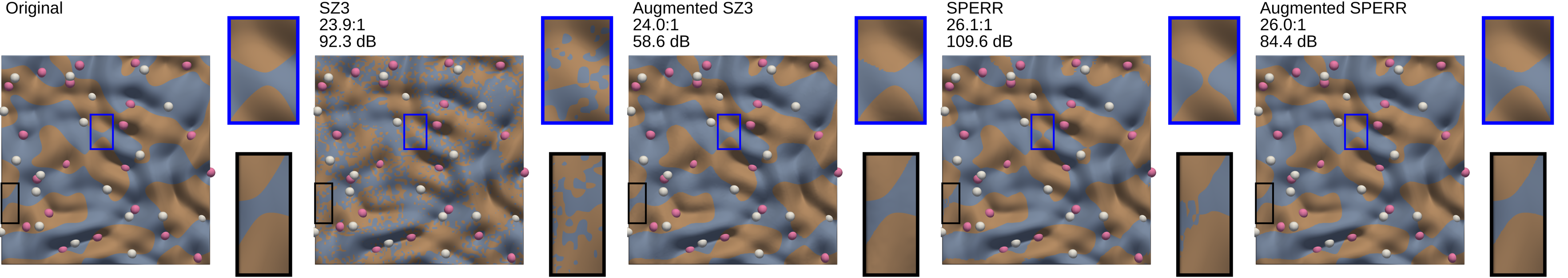}
\end{center}
\vspace{-7mm}
\caption{Visualizing the eigenvector partition of the Miranda dataset slice 30 compressed with SZ3, augmented SZ3, SPER, and augmented SPERR, along with the ground truth. We provide zoomed-in views that highlight the differences between the compressors and the ground truth. We also label compression ratio and PSNR. We use the same colormap as \cref{fig:manifolds}. \new{Z positions correspond to the Frobenius norm with smoothing applied.}}
\vspace{-7mm}
\label{fig:results-asym2}
\end{figure*}

When running a base compressor (SZ3 or SPERR) on each slice, we compress each of the four fields separately, similar to {\toolname}. 
We then combine the four compressed outputs into one tar archive, and compress it losslessly with ZSTD. 
For the Brain datasets, empty regions outside the brain are set to zero. 
In those regions, small perturbations introduced by lossy compressors may lead to many topological errors. When testing the lossy compressors on the brain datasets, we store all zero tensors losslessly, introducing minimal overhead to the compressed file while eliminating the topological errors outside of the brain. 
We denote the choice of a dataset, base compressor, and error bound as a trial.
We run each trial on a personal computer with an Intel Core i9-12900HK processor and 32GB of RAM. Our  implementation is written in Julia version 1.10.2, and does not include any parallelization.

\begin{table}[!ht]
\vspace{-3mm}
\begin{center}
\caption{Scientific datasets used for compression analysis. ``Sym.'' means symmetric and ``Asym.'' means asymmetric dataset respectively. ``Slice Dim." means the dimension of each 2D slice.}
\label{tab:results-datasets}
\vspace{-3mm}
\begin{tabular}{ccccc}
\hline
\textbf{Dataset} & \textbf{Type} & \textbf{Slice Dim.} & \textbf{\#Slices} & \textbf{Size (MB)} \\ \hline
Stress A         & Sym.           & $65 \times 65$     & $25$            & $3.4$              \\
Stress B         & Sym.           & $65 \times 65$     & $25$            & $3.4$              \\
Brain A         & Sym.           & $66 \times 108$    & $76$           & $17.3$             \\
Brain B          & Sym.           & $148 \times 190$   & $157$           & $141.3$            \\ \hline
Ocean            & Asym.          & $101 \times 101$   & $27$            & $8.8$              \\
Miranda          & Asym.          & $384 \times 384$   & $256$           & $1208.0$           \\
{\VS}    & Asym.          & $640 \times 80$    & $1501$          & $2459.2$           \\
{\HC}  & Asym.          & $150 \times 450$   & $2000$          & $4320.0$           \\ \hline
\end{tabular}
\end{center}
\vspace{-7mm}
\end{table}

\para{Evaluation metrics.}~For each symmetric tensor field, we evaluate how many cells have their degeneracy type (trisector, wedge, non-degenerate etc.) preserved compared to the ground truth. For each asymmetric tensor field, we evaluate how many cells have their eigenvector and eigenvalue partitions preserved. We check the topology using the invariant described in~\cref{sec:method-asymmetric-internal}. For each trial, we also study the tradeoff between compressed size and reconstruction quality, by measuring the standard compression metrics such as compression ratio, bit-rate and PSNR. Recall the compression ratio is the original file size divided by the compressed file size. Bit-rate is the average number of bits used to encode each tensor. PSNR measures the reconstruction quality (higher values are better). Finally, we quantify total compression and augmentation time, and decompression time.

\para{Parameter configurations.}~We define the range of a tensor field as the largest entry in any tensor minus the smallest entry in any tensor. We vary the pointwise error bound $\xi$, which represents the error bound as a percentage of the range. 
For example, $\xi = 0.01$ means that the global error bound is $1\%$ of the range. We recompute $\xi$ for each slice. We set the pointwise error bound for SZ3 and SPERR to $\xi$.

\subsection{Performance Evaluation}
\label{sec:results-evaluation}

\begin{figure}[!h]
    \includegraphics[width=\linewidth]{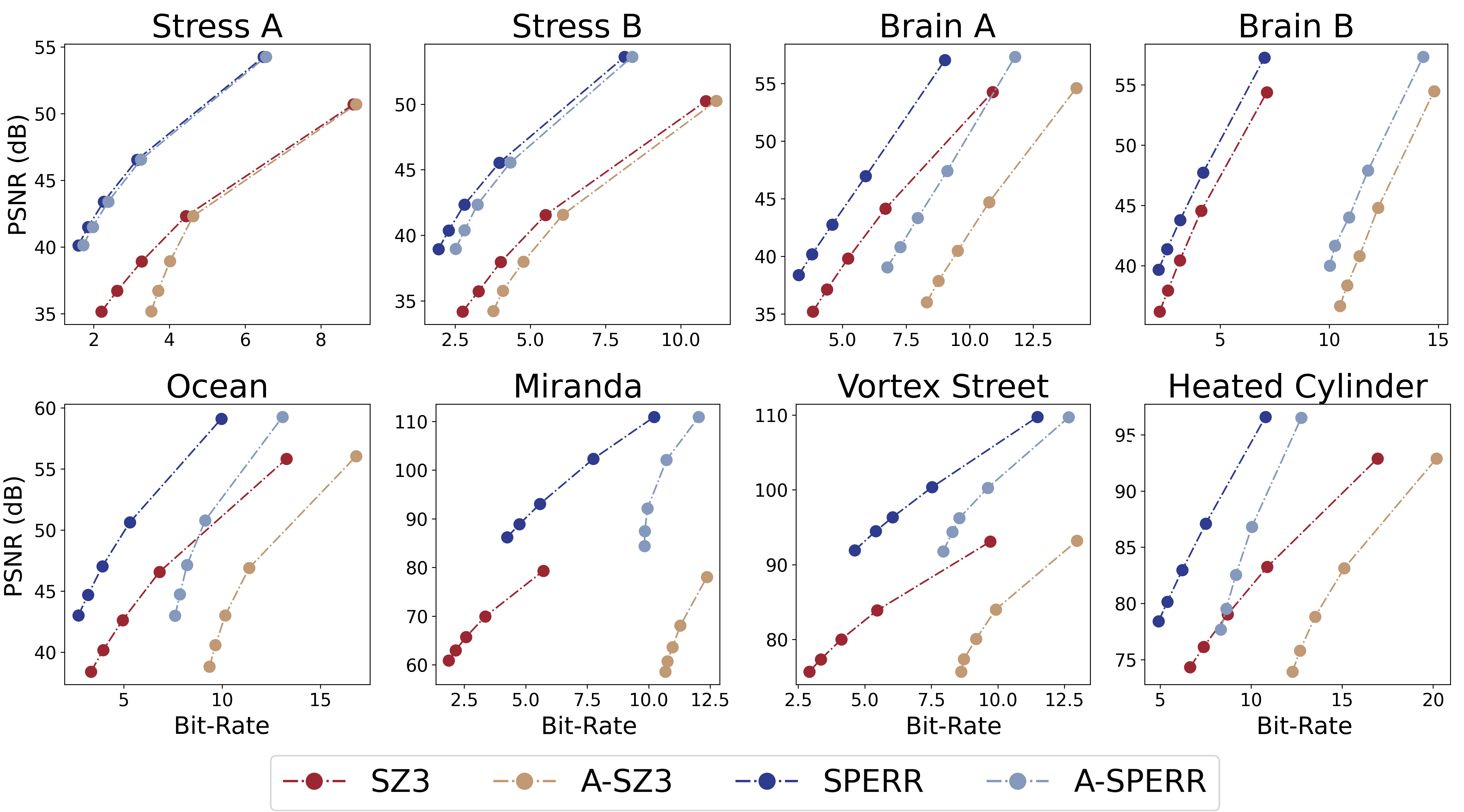}
    \vspace{-6mm}
    \caption{Curves showing the tradeoff between bit-rate and PSNR for SZ3 and SPERR on each dataset, as well as our augmented compressors given as A-SZ3 and A-SPERR.}
    \label{fig:results-rateDistortion}
    \vspace{-7mm}
\end{figure}

\begin{table*}
\begin{center}
\caption{Number of cells whose topology is misclassified by SZ3 and SPERR when compressed to the same ratio as augmented SZ3 and augmented SPERR, respectively. For symmetric data, we set the error bound of {\toolname} to be $\xi = 0.01$ and for asymmetric data we set $\xi = 0.001$. \#Cells: number of cells in the mesh. \#SZ3: number of cells misclassified by SZ3. \%SZ3: percent of cells misclassified by SZ3. \#SPERR: number of cells misclassified by SPERR. \%SPERR: percent of cells misclassified by SPERR.}
\vspace{-3mm}
\begin{tabular}{cc|ccc|ccc}
\hline
Dataset & \#Cells & \#SZ3 & \%SZ3 & \#A-SZ3 & \#SPERR & \%SPERR & \#A-SPERR \\ \hline
{\StressA} & 204,800 & 85 & 0.04\% & 0 & 47 & 0.02\% & 0  \\
{\StressB} & 204,800 & 582 & 0.28\% & 0 & 381 & 0.19\% & 0  \\
{\BrainA} & 1,057,160 & 2,144 & 0.20\% & 0 & 1,660 & 0.16\% & 0  \\
{\BrainB} & 8,723,862 & 16,831 & 0.19\% & 0 & 6,609 & 0.08\% & 0  \\
{\Ocean} & 540,000 & 17,185 & 3.2\% & 0 & 13,536 & 2.5\% & 0  \\
{\Miranda} & 75,104,768 & 16,359,113 & 21.8\% & 0 & 2,464,647 & 3.3\% & 0  \\
{\VS} & 151,543,962 & 32,160,009 & 21.2\% & 0 & 21,881,208 & 14.4\% & 0  \\
{\HC} & 267,604,000 & 50,817,728 & 19.0\% & 0 & 32,859,532 & 12.3\% & 0  \\\hline
\end{tabular}
\end{center}
\vspace{-9mm}
    \label{tab:results-equivEB}
\end{table*}

\para{Topological guarantees.}~We empirically verify that {\toolname} perfectly preserves the topology of each cell in each trial. By contrast, when we run the base compressors SZ3 and SPERR with very low error bounds, such that the resulting data has a comparable compressed size to {\toolname}, we still notice a large number of topological errors. 

In~\cref{tab:results-equivEB}, we list the number and percentage of cells topologically misclassified by SZ3 and SPERR. We chose error bounds for SZ3 and SPERR so that they achieve compression ratios similar to those from {\toolname} with $\xi=0.01$ for symmetric data and $\xi = 0.001$ for asymmetric data. We provide these bounds in the supplement to ensure reproducibility. We observe that the base compressors exhibit significant topological errors when applied to asymmetric data. While the percentage (error rate) is relatively low for a symmetric dataset, the percentage of cells that contain a degenerate point is typically very small. Thus, having a relatively small percentage of incorrectly predicted cells can produce a significant effect on the topology of the decompressed data.

In~\cref{fig:results-syms}, we visualize the {\StressB} and {\BrainB} datasets before and after compression with SZ3, augmented SZ3, SPERR, and augmented SPERR. We chose parameter configurations such that the compression ratio achieved by both compressors is the same (see the supplement for details). In each visualization, we highlight a region of interest, and provide a zoomed-in view when needed. We display the 
%{\HC} and 
{\Miranda} dataset in the same fashion in~\cref{fig:results-asym2}, and provide zoomed-in views of the eigenvector and eigenvalue partitions. We also visualize the {\Ocean} dataset before and after compression with SZ3 and augmented SZ3 in~\cref{fig:teaser}. \new{In} the figures, we can see that SZ3 and SPERR make noticeable topological errors, while {\toolname} perfectly preserves the topology.

\para{Compressed file size.}~In \cref{fig:results-rateDistortion}, we visualize the tradeoff between bit-rate and PSNR for both the augmented and base compressors. Consistent with previous works on topology-preserving compression~\cite{gorski2025general,li2024msz}, we observe that when $\xi$ is high, increasing $\xi$ can lead to lower compression ratios. When evaluating the tradeoff between bit-rate and PSNR, we use low values of $\xi$ such that increasing $\xi$ increases compression ratios. The values for $\xi$ are different for each dataset and base compressor; we report them in the supplement for reproducibility.

In~\cref{fig:results-rateDistortion}, we can see that the tradeoff between bit-rate and PSNR for the augmented compressors mirrors that of the base compressors with some amount of overhead. Overall, augmented SPERR seems to achieve \new{better reconstruction quality} than augmented SZ3.

\subsection{Run Time Analysis}
\label{sec:results-time}

\para{Compression and decompression times.}~{\toolname} processes each cell independently and limits the number of processing iterations before storing it losslessly. Let $n$ be the number of cells and $k$ the number of iterations, {\toolname} runs in linear time $O(nk)$. Recall $k = 20$.

Empirically, we measure the compression time for each dataset using both base and augmented compressors; see the results in~\cref{tab:time-all}. For asymmetric data, we report the compression time while preserving both eigenvector and eigenvalue partitions, while cases where only one is preserved are provided in the supplement.

We find that augmented SPERR \new{generally} achieves slightly worse compression times than augmented SZ3. Varying the error bound $\xi$ does not have a significant effect on the run times; see the supplement.

\para{Analysis of compression times.}~In~\cref{tab:time-breakdown}, we provide a more detailed breakdown of the run times reported in \cref{tab:time-all} for augmented SZ3 and augmented SPERR. As expected, the cell correction step (labeled as “Cells”) accounts for the largest portion of the run time, typically dominating the overall processing time. For symmetric data, this step is dedicated entirely to preserving degenerate points. In contrast, for asymmetric data, it involves correcting vertices, degenerate points, and cell topology. Among these, cell topology correction is the most time-consuming, though all three contribute significantly to the total run time. We provide more detail in the supplement.

\begin{table}
\begin{center}
\caption{Compression and decompression times for each dataset using base and augmented compressors. A-SZ3 and A-SPERR denote augmented SZ3 and augmented SPERR respectively.}
\label{tab:time-all}
\vspace{-3mm}
\new{\begin{tabular}{c|cc|cc}
\hline
Dataset & SZ3 & A-SZ3 & SPERR & A-SPERR \\ \hline
\multicolumn{5}{c}{Total Compression Time (s)} \\ \hline
Stress A & 0.22 & 0.56 & 0.22 & 0.67 \\
Stress B & 0.22 & 0.57 & 0.22 & 0.66 \\
Brain A & 0.76 & 1.92 & 0.77 & 2.38 \\
Brain B & 2.11 & 9.17 & 2.22 & 10.93 \\ \hline
Ocean & 0.35 & 1.90 & 0.39 & 2.28 \\
Miranda & 7.00 & 274.14 & 9.53 & 287.16 \\
Vortex Street & 24.34 & 784.85 & 28.86 & 835.99 \\
Heated Cylinder & 37.98 & 762.21 & 48.23 & 713.11 \\ \hline
\multicolumn{5}{c}{Decompression Time (s)} \\ \hline
Stress A & 0.28 & 0.25 & 0.28 & 0.25 \\
Stress B & 0.28 & 0.25 & 0.28 & 0.26 \\
Brain A & 0.82 & 0.86 & 0.81 & 0.81 \\
Brain B & 2.15 & 2.31 & 2.08 & 2.47 \\ \hline
Ocean & 0.34 & 0.38 & 0.36 & 0.42 \\
Miranda & 4.73 & 11.44 & 6.34 & 12.60 \\
Vortex Street & 19.78 & 33.76 & 22.81 & 35.13 \\
Heated Cylinder & 29.01 & 50.78 & 36.68 & 53.44 \\ \hline
\end{tabular}}
\end{center}
\vspace{-10mm}
\end{table}

\begin{table}
\begin{center}
\caption{Breakdown of compression times for each dataset when compressed with augmented SZ3 and augmented SPERR. Times are in seconds. All trials use an error bound of $\xi = 0.001$. BC: Base compressor. Cells: Cell correction step. LL: Lossless storage. Save: Save compressed file. Clean: Remove intermediate files.}
\label{tab:time-breakdown}
\vspace{-3mm}
\resizebox{\linewidth}{!}{
\new{\begin{tabular}{c|cccccc|c}
\hline
Dataset & Setup & BC & Cells & LL & Save & Clean & Total \\ \hline
\multicolumn{8}{c}{Augmented SZ3} \\ \hline
Stress A & 0.01 & 0.20 & 0.13 & 0.03 & 0.06 & 0.13 & 0.56 \\
Stress B & 0.02 & 0.20 & 0.14 & 0.03 & 0.06 & 0.12 & 0.57 \\
Brain A & 0.05 & 0.48 & 0.69 & 0.08 & 0.22 & 0.39 & 1.92 \\
Brain B & 0.34 & 1.34 & 5.42 & 0.51 & 0.71 & 0.85 & 9.17 \\
Ocean & 0.13 & 0.32 & 1.13 & 0.06 & 0.09 & 0.18 & 1.90 \\
Miranda & 10.64 & 7.02 & 245.17 & 6.90 & 1.75 & 1.94 & 274.14 \\
Vortex Street & 21.05 & 19.86 & 708.43 & 15.73 & 7.99 & 10.39 & 784.85 \\
Heated Cylinder & 31.67 & 33.11 & 644.57 & 24.78 & 11.28 & 14.32 & 762.21 \\ \hline 
\multicolumn{8}{c}{Augmented SPERR} \\ \hline
Stress A & 0.02 & 0.30 & 0.13 & 0.03 & 0.06 & 0.13 & 0.67 \\
Stress B & 0.02 & 0.30 & 0.13 & 0.03 & 0.06 & 0.13 & 0.66 \\
Brain A & 0.05 & 0.82 & 0.80 & 0.08 & 0.22 & 0.41 & 2.38 \\
Brain B & 0.34 & 2.45 & 6.09 & 0.49 & 0.71 & 0.86 & 10.93 \\
Ocean & 0.08 & 0.57 & 1.27 & 0.07 & 0.09 & 0.18 & 2.28 \\
Miranda & 11.61 & 13.82 & 250.47 & 6.86 & 1.73 & 1.95 & 287.16 \\
Vortex Street & 21.27 & 37.22 & 744.16 & 13.64 & 7.78 & 10.54 & 835.99 \\
Heated Cylinder & 31.96 & 63.80 & 568.57 & 21.67 & 10.19 & 14.41 & 713.11 \\ \hline
\end{tabular}}
}
\end{center}
\vspace{-10mm}
\end{table}
\section{\new{Limitations and Discussion}}
\label{sec:limitations}

Our {\toolname} framework has its limitations. By enhancing  a base compressor and generating additional output files, the augmented compressor typically achieves lower compression ratios and requires more time compared to the base compressor. We highlight the key takeaways here and report detailed analysis in the supplement.

\para{Storage overhead.}~In \cref{fig:results-rateDistortion}, we can see that {\toolname} imposes up to $1\times$ storage overhead for symmetric data and $3\times$ for asymmetric data (\new{at a fixed PSNR}). These values are similar to other topology-preserving compressors at similar compression ratios~\cite{yan2023toposz, li2024msz}. For asymmetric data, preserving only one of the eigenvector or eigenvalue partitions will lower the overhead. \new{For a fixed compression ratio, we argue improved topological correctness can be more important than increased PSNR, depending on the application. We believe that there is room to improve the compression ratios by developing targeted strategies to preserve the internal cell topology of asymmetric tensor fields, as well as more efficient multiscale quantization schemes.}

\para{Time overhead.}~In \cref{tab:time-all}, we observe that {\toolname} introduces a significant compression time overhead compared to the base compressors. {\toolname} typically achieves a throughput of 3–6 MB/second. \new{We also timed {\toolname} compressing datasets with larger 2D slices and found similar throughputs. The results are in the supplement. Our throughputs} are comparable to other topology-preserving methods~\cite{gorski2025general,li2024msz}, \new{but} a direct comparison is challenging due to variations in processor speed. \new{Also, scientific datasets are often compressed once, and the compressed file is distributed many times, so this overhead may represent a one-time cost.} We report performance on a single CPU. {\toolname} could potentially benefit from parallel or GPU implementations. Since {\toolname} processes each cell independently, parallelization would be relatively straightforward.

\para{\new{Visual artifacts.}} \new{Our strategy can produce visual artifacts, particularly in the asymmetric case. Such artifacts are most noticeable in areas where the magnitude of tensors is low compared to the absolute error bound. We discuss the causes of such artifacts in detail in the supplement. Managing these artifacts is a possible area of future work.}

%Our strategy can produce visual artifacts, such as in the zoomed-in region of \cref{fig:teaser}. We found that such artifacts often occur where tensors have low magnitudes compared to the absolute error bound, and more relative distortion of eigenvector directions or decomposition coefficients is allowed. For asymmetric tensor fields, our strategy ensures that artifacts are no thicker than one cell, and thus will be less visible in higher resolution meshes (e.g. \cref{fig:results-asym2}).}

\para{Other limitations.} For symmetric tensor fields, {\toolname} makes no guarantees about the tensorlines that connect the degenerate points. The tensorlines are part of the definition of the topology of a symmetric tensor field, and one logical extension of our work is to preserve the tensorlines.  \new{However, our strategy still benefits applications that use the entire topology by preserving the degenerate points. Further, there are use cases that do not require tensorlines \cite{lavin1997singularities, zhang2017applying, Filho2016Automatic,tricoche2001tensor}.}
\section{Conclusion}
\label{sec:conclusion}

We introduce a novel framework, {\toolname}, designed to augment any \new{error-bounded} lossy compressor in order to preserve the topology of 2D symmetric and asymmetric tensor fields. In both cases, we scan through each cell and correct the topology one cell at a time. Our experiments show that {\toolname} preserves the degenerate points of symmetric data and the eigenvalue and eigenvector partitions of asymmetric data, while introducing a reasonable overhead. Looking ahead, a common asymmetric tensor field is the gradient of a vector field, and the decomposition in \cref{eqn:decomposition} is used to visualize vector fields \cite{zhang2008asymmetric, auer2013automatic}. {\toolname} could be applied to vector fields to preserve the topology of the gradient.

\acknowledgments{
This work was supported in part by grants from National Science Foundation OAC-2313122, OAC-2313123, and OAC-2313124.}

\bibliographystyle{abbrv-doi-hyperref}
\bibliography{refs-tf-compression.bib}
\newpage
\appendix
\section{An Overview of Datasets}
\label{appendix:datasets}

We provide details on the datasets used in our experiments. 
Each dataset contains a set of 2D slices, obtained by slicing a corresponding 3D dataset along the $z$-axis and discarding information associated with the $z$ direction. 
Any slices where the tensor field was entirely zero were removed.

The \textbf{Stress A} and \textbf{Stress B} datasets come from the public dataset associated with the work by Patel and Laidlaw~\cite{stressArticle}. 
The datasets are publicly accessible from the IEEE DataPort~\cite{stressData}; Stress A dataset corresponds to the first, whereas Stress B dataset corresponds to the third dataset in the DataPort.  

The \textbf{Brain A} dataset comes from the public dataset of brain MRI scans from Tian et al.~\cite{tian2022comprehensive}. We use the data from patient 23. We then employ Diffusion Imaging in Python (DIPY) to extract the diffusion tensor field from the data.

The \textbf{Brain B} dataset is a scan of Dr. Gordan Kindlmann's brain that he released for public use. The dataset can be accessed from his personal homepage~\cite{KindlemannBrain}. 
The dataset also comes with a mask indicating which areas are part of the brain. 
We set the tensors equal to zero in all areas outside of the brain (i.e. where the mask is  not equal to 1).

The asymmetric tensor fields are all derived from 2D flow fields $v:\R^2 \rightarrow \R^2$. We compute the gradient at each point of a flow field, producing a 2D matrix.
Now we describe the computation for $\frac{\partial v_x}{\partial x}$, which proceeds analogously for the other partial derivatives. Each flow field is defined on a regular grid $[n] \times [m]$ for some $m,n \in \N$. We describe the computation for a point $p = (p_x,p_y)$ at one of the grid points. The gradients are computed differently based on its location in the interior or at the boundary:
\begin{itemize}[noitemsep]
\item \underline{$1 < p_x < n$:} $\left.\frac{\partial v_x}{\partial x}\right|_p \gets v_x(p_x+1,p_y) - v_x(p_x-1,p_y)$.
\item \underline{$p_x = 1$:} $\left. \frac{\partial v_x}{\partial x}\right|_p \gets v_x(2,p_y) - v_x(1,p_y)$.
\item \underline{$p_x = n$:} $\left. \frac{\partial v_x}{\partial x} \right|_p \gets v_x(n,p_y) - v_x(n-1,p_y)$.
\end{itemize}

The \textbf{Ocean} dataset comes from the Global Ocean Physics Reanalysis dataset from the E.U. Copernicus Marine Service \cite{oceanData}. 
For the flow field, we use the $uo$ and $vo$ fields from the daily data (file name: ``cmems\_mod\_glo\_phy\_my\_0.083deg\_P1D-m''), which is then sliced in the range $x:$ 100-200, $y:$ 10-110 $z:$ 0-26, where $z$ corresponds to depth (all ranges are inclusive). We use the data on June 2, 2019.

The \textbf{Miranda} dataset comes from the hydrodynamics code for large turbulence simulations conducted by Lawrence Livermore National Laboratory. We use the $x$ velocity and $y$ velocity fields to derive the flow field. It can be accessed through the SDR Bench \cite{zhao2020sdrbench, SDRBench}.

The \textbf{Vortex Street} dataset comes from the Cylinder Flow with von Karman Vortex Street simulation \cite{gerrisflowsolver, Guenther17}. The \textbf{Heated Cylinder} dataset comes from the Heated Cylinder with Boussinesq Approximation simulation \cite{gerrisflowsolver, Guenther17}. Both datasets are accessed through the Computer Graphics Laboratory at ETH Zurich \cite{ETHCGLData}.

For each dataset, we eliminate any slices where all data points had the same value.

\section{Evaluation Metrics}
\label{appendix:evaluation-metrics}

We now discuss evaluation metrics. 
During compression, we compress each slice of each dataset separately, but report evaluation metrics aggregated across all datasets.

The \emph{compression ratio} is the size of the ground truth data file divided by the size of the compressed file. To report the compression ratio for a dataset, we report the total size of every ground truth data slice divided by the total size of every compressed slice.

The \emph{bit-rate} is the average number of bits used to encode each tensor. For each slice, we compute the bit-rate by dividing the compressed file size (in bits) by the number of tensors. We compute the bit-rate for a dataset by taking the average bit-rate across all slices.

\emph{Peak-Signal to Noise Ratio (PSNR)} is typically defined for the compression of a single scalar field. If $MSE$ is the mean squared error, and $R$ is the range of the data, then PSNR for a scalar field is defined as
\[ \text{PSNR} = 10\log_{10}\left(\frac{R^2}{\text{MSE}}\right).\]
Our PSNR calculation for tensor fields is a bit more complex. 
Let $D$ be a dataset containing a set of slices. 
If $s \in D$ is a slice, let $f_s:\R^2 \rightarrow \T$ be the ground truth tensor field and $f'_s:\R^2 \rightarrow \T$ the reconstructed tensor field. 
Let $v(s)$ be the set of vertices of each slice $s$. To compute the PSNR, we first compute two values for each slice $s$: a mean squared error $\text{MSE}_s$, and a range $R_s$. For asymmetric tensor fields, we define $\text{MSE}_s$ by
\[ \text{MSE}_s = \frac{1}{4|v(s)|} \sum_{x \in v(s)} \sum_{(i,j) \in [1,2]^2} (f(x)-f'(x))_{i,j}^2.\] 
For a symmetric tensor field, we define $\text{MSE}_s$ similarly, except that we ignore the $T_{21}$ entry of each tensor $T$ in our computation, because it is equal to the $T_{12}$ entry. Corresponding to this omission, we divide by $3|v(s)|$ instead of $4|v(s)|$. 

For a slice $s$, the range $R_s$ is defined as the maximum entry across all tensors in $s$ minus the minimum entry across all tensors of $s$. More formally, we define it as
\[ R_s = \max_{\substack{x \in v(s) \\ (i,j) \in [1,2]^2}}\{ f(x)_{i,j} \} - \min_{\substack{x \in v(s)\\ (i,j) \in [1,2]^2}}\{ f(x)_{i,j} \}. \]
Let $N$ be the number of slices. We then define the PSNR as:
\[ \text{PSNR} = 10\log_{10}\left( \frac{1}{N}\sum_{s \in D}  \frac{R_s^2}{\text{MSE}_s} \right).\]
We define the PSNR in this way to account for the fact that certain slices may have different ranges.
\section{Parameter Configurations}
\label{appendix:parameter-configurations}

We include the parameter configurations for our experiments. 
In \cref{tab:rateDistortionParameters}, we include the error bounds used to generate the plots in \cref{fig:results-rateDistortion} and \cref{fig:additional-experiments-rateDistortion}. We use the same error bounds for the base and augmented compressors.

In \cref{tab:equivEBParameters}, we include the error bounds used to generate \cref{tab:results-equivEB}. 
Each row represents a pair of trials, one using a base compressor, and the other using {\toolname}. 
For each row, we run the base compressor (listed under `BC'), as well as {\toolname} augmenting that base compressor on the given dataset. 
For the base compressor, we use the error bound listed under `BC-EB'. 
For {\toolname}, we use the error bound listed under `A-EB'.

We include a similar table, \cref{tab:equivEBParametersFigures}, describing the error bounds used to generate \cref{fig:teaser}, \cref{fig:results-syms}, \cref{fig:results-asym2}, \cref{fig:extra-sym}, \cref{fig:extra-asym}, and \cref{fig:teaser-sperr}.

\begin{table}
\caption{Parameter configurations used to generate \cref{fig:results-rateDistortion} and \cref{fig:additional-experiments-rateDistortion}. BC: Base Compressor.}
\label{tab:rateDistortionParameters}
\vspace{-6mm}
\begin{center}
\begin{tabular}{c|c|p{0.5\linewidth}} \hline
Dataset & BC & Error Bounds ($\xi$) \\ \hline
Stress A & SZ3 & 0.006, 0.0195, 0.033, 0.0465, 0.06 \\ \hline
Stress A & SPERR & 0.006, 0.0195, 0.033, 0.0465, 0.06 \\ \hline
Stress B & SZ3 & 0.006, 0.0195, 0.033, 0.0465, 0.06 \\ \hline
Stress B & SPERR & 0.006, 0.0195, 0.033, 0.0465, 0.06 \\ \hline
Brain A & SZ3 & 0.006, 0.0195, 0.033, 0.0465, 0.06 \\ \hline
Brain A & SPERR & 0.006, 0.0195, 0.033, 0.0465, 0.06 \\ \hline
Brain B & SZ3 & 0.006, 0.0195, 0.033, 0.0465, 0.06 \\ \hline
Brain B & SPERR & 0.006, 0.0195, 0.033, 0.0465, 0.06 \\ \hline
Ocean & SZ3 & 0.003, 0.00975, 0.0165, 0.02325, 0.03 \\ \hline
Ocean & SPERR & 0.003, 0.00975, 0.0165, 0.02325, 0.03 \\ \hline
Miranda & SZ3 & 0.0003, 0.000975, 0.00165, 0.002325, 0.003 \\ \hline
Miranda & SPERR & $1 \times 10^{-5}$, $3 \times 10^{-5}$, $9.75 \times 10^{-5}$, 0.000165, 0.0002325 \\ \hline
Vortex Street & SZ3 & $5 \times 10^{-5}$, 0.0001625, 0.000275, 0.0003975, 0.0005 \\ \hline
Vortex Street & SPERR & $1 \times 10^{-5}$, $3.25 \times 10^{-5}$, $5.5 \times 10^{-5}$, $7.05 \times 10^{-5}$, 0.0001 \\ \hline
Heated Cylinder & SZ3 & $5 \times 10^{-5}$, 0.0001625, 0.000275, 0.0003975, 0.0005 \\ \hline
Heated Cylinder & SPERR & $5 \times 10^{-5}$, 0.0001625, 0.000275, 0.0003975, 0.0005 \\ \hline
\end{tabular}
\end{center}
\end{table}

\begin{table}[!ht]
\begin{center}
\vspace{-6mm}
\caption{Parameter configurations used to generate \cref{tab:results-equivEB}. BC: Base compressor used in each trial. A-EB: Error bound for an augmented compressor. BC-EB: Error bound for the base compressor that produces a similar compression ratio to the augmented counterpart (i.e.,~when the augmented compressor uses the error bound listed under A-EB).}
\label{tab:equivEBParameters}
\begin{tabular}{cc|cc} \hline
Dataset & BC & A-EB & BC-EB \\ \hline
Stress A & SZ3 & $0.01$ & $0.00953125$ \\ 
Stress A & SPERR & $0.01$ & $0.009609375$ \\ 
\hline
Stress B & SZ3 & $0.01$ & $0.0090625$ \\ 
Stress B & SPERR & $0.01$ & $0.009140625$ \\ 
\hline
Brain A & SZ3 & $0.01$ & $0.00265625$ \\ 
Brain A & SPERR & $0.01$ & $0.00203125$ \\ 
\hline
Brain B & SZ3 & $0.01$ & $0.00111328125$ \\ 
Brain B & SPERR & $0.01$ & $0.00056640625$ \\ 
\hline
Ocean & SZ3 & $0.001$ & $0.00075$ \\ 
Ocean & SPERR & $0.001$ & $0.00065625$ \\ 
\hline
Miranda & SZ3 & $0.001$ & $5.078125\times 10^{-5}$ \\ 
Miranda & SPERR & $0.001$ & $9.765625\times 10^{-6}$ \\ 
\hline
Vortex Street & SZ3 & $0.001$ & $6.4453125\times 10^{-5}$ \\ 
Vortex Street & SPERR & $0.001$ & $3.466796875\times 10^{-5}$ \\ 
\hline
Heated Cylinder & SZ3 & $0.001$ & $0.00013671875$ \\ 
Heated Cylinder & SPERR & $0.001$ & $0.0001484375$ \\ 
\hline
\end{tabular}
\end{center}
\vspace{-4mm}
\end{table}

\begin{table}[!ht]
\caption{Parameter configurations used to generate \cref{fig:teaser}, \cref{fig:results-syms}, \cref{fig:results-asym2}, \cref{fig:extra-sym}, \cref{fig:extra-asym}, and \cref{fig:teaser-sperr}. BC: Base compressor used in each trial. A-EB: Error bound used for an augmented compressor. BC-EB: Error bound for the base compressor that produces a similar compression ratio to the augmented counterpart (i.e.,~when the augmented compressor uses the error bound listed under A-EB.)}
\label{tab:equivEBParametersFigures}
\vspace{-2mm}
\begin{tabular}{cc|cc} \hline
Dataset & BC & A-EB & BC-EB \\ \hline
Stress A & SZ3 & $0.06$ & $0.028798828125$ \\ 
Stress A & SPERR & $0.06$ & $0.052265625$ \\ 
\hline
Stress B & SZ3 & \new{$0.01$} & $0.03609375$ \\ 
Stress B & SPERR & \new{$0.01$} & $0.038671875$ \\ 
\hline
Brain A & SZ3 & $0.06$ & $0.0196875$ \\ 
Brain A & SPERR & $0.06$ & $0.0234375$ \\ 
\hline
Brain B & SZ3 & \new{$0.03$} & $0.00228515625$ \\ 
Brain B & SPERR & \new{$0.03$} & $0.001875$ \\ 
\hline
Ocean & SZ3 & $0.01$ & $0.003984375$ \\ 
Ocean & SPERR & $0.01$ & $0.00359375$ \\ 
\hline
Miranda & SZ3 & $0.003$ & $5.859375\times 10^{-5}$ \\ 
Miranda & SPERR & $0.0002325$ & $1.1806640625\times 10^{-5}$ \\ 
\hline
Vortex Street & SZ3 & $0.0005$ & $6.4453125\times 10^{-5}$ \\ 
Vortex Street & SPERR & $0.0001$ & $2.8515625\times 10^{-
5}$ \\ 
\hline
\hline
\end{tabular}
\end{table}
\section{\textcolor{\newcolor}{Reconstruction Quality}}
\label{appendix:additional-experiments}

\new{In this section, we give further analysis of reconstruction quality. We analyze the tradeoff between bit-rate and PSNR for asymmetric tensor fields in \cref{appendix:additional-experiments-asymmetric}. We provide error maps in \cref{appendix:additional-experiments-error-maps}. We give a distribution of the errors imposed by the base compressors (and fixed during augmentation) in \cref{appendix:additional-experiments-errors-corrected}. Finally, we analyze the causes of visual artifacts in \cref{appendix:additional-experiments-visual-artifacts}}

\subsection{\textcolor{\newcolor}{Compression of Asymmetric Tensor Fields}}
\label{appendix:additional-experiments-asymmetric}

In addition to the experiments highlighted in \cref{sec:results}, we study the tradeoff between bit-rate and PSNR for {\toolname} when compressing asymmetric tensor fields. 
We preserve the topology of either the eigenvector or the eigenvalue partition, but not both, as illustrated in \cref{fig:additional-experiments-rateDistortion}.
\begin{figure}[!ht]
    \includegraphics[width=\linewidth]{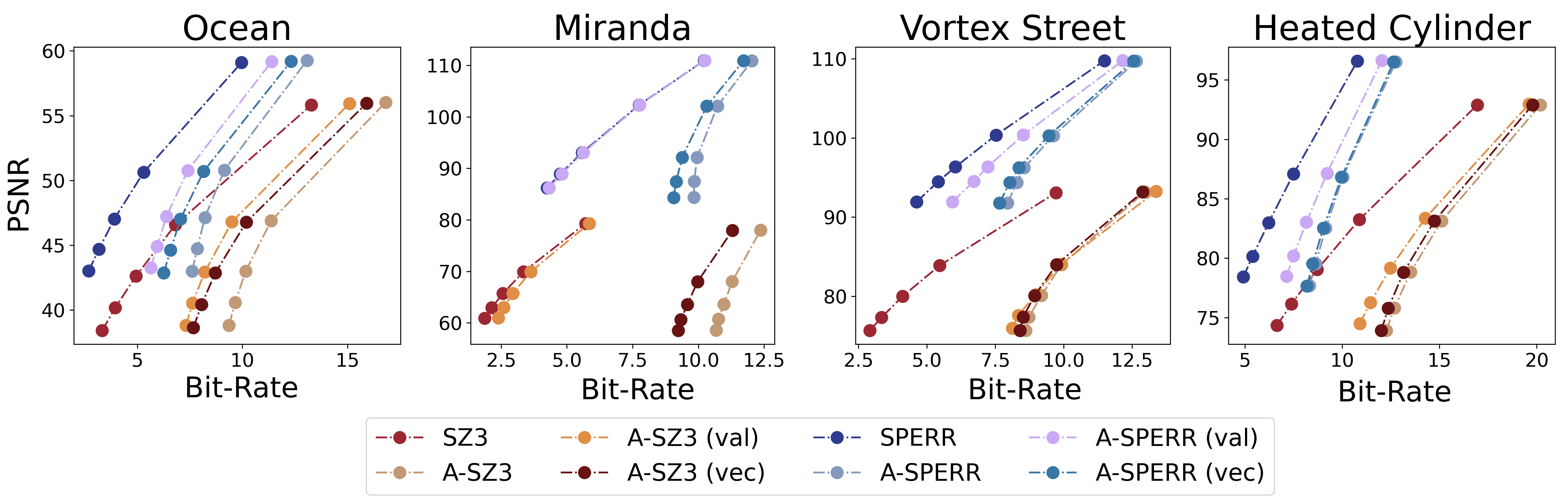}
    \vspace{-4mm}
    \caption{Plots showing the tradeoff between bit-rate and PSNR for SZ3 and SPERR on each dataset, as well as the augmented compressors. SZ3 (val) and SPERR (val) respectively preserve the topology of the eigenvalue partition only. 
    SZ3 (vec) and SPERR (vec) respectively preserve the topology of the eigenvector partition only.}
    \label{fig:additional-experiments-rateDistortion}
    \vspace{-4mm}
\end{figure}
In \cref{fig:additional-experiments-rateDistortion}, we plot the tradeoff between bit-rate and PSNR for SZ3, SPERR, augmented SZ3 and augmented SPERR. We show the same tradeoff  for augmented SZ3, which preserves the topology of either the eigenvalue or the eigenvector partition (but not both), denoted as A-SZ3 (val) or A-SZ3 (vec). 
We also show the corresponding curves for augmented SPERR. 

In \cref{fig:additional-experiments-rateDistortion}, we observe that preserving only one type of partition results in lower storage overhead compared to preserving both. 
In general, preserving the topology of the eigenvalue partition requires less storage than preserving the topology of the eigenvector partition. Specifically, for the Miranda dataset, preserving the topology of the eigenvalue partition incurs almost no storage overhead, whereas significant overhead is needed to preserve the topology of its eigenvector partition. 
   
\subsection{\textcolor{\newcolor}{Error Maps}}
\label{appendix:additional-experiments-error-maps}

\new{In \cref{fig:additional-experiments-error-map-vortex-street}, We include an error map of the Vortex Street dataset compressed with the SZ3 compressor both (A) before augmentation with {\toolname} and (B) after augmentation. Here, we can see that the error profile of the dataset is generally the same before and after augmentation. However, there are a few areas where augmentation increases or decreases the error in order to correct the topology.}

\new{In \cref{fig:additional-experiments-error-map-stress}, we include an error map of the Stress B dataset. In the top row, we include an error map similar to that showed in \cref{fig:additional-experiments-error-map-vortex-street} compressed with SZ3 (A) before augmentation and (B) after augmentation with $\xi = 0.01$. We show a $65 \times 65$ pixel grid, where each pixel corresponds to a tensor. We can see that the error profile is nearly identical in both cases. In the bottom row, we show an error profile for the eigenvector directions of the Stress B dataset (C) before augmentation and (D) after augmentation with $\xi = 0.01$. We show a linearly interpolated view, rather than a pixel grid as before, due to the nontrivial fashion in which the tensorlines behave under interpolation. We can see that the error profile is nearly identical before and after augmentation, except that, in a few areas of where the base compressor has significantly distorted the eigenvector directions, {\toolname} corrects the error in those regions during augmentation.}

\begin{figure}[!ht]
    \includegraphics[width=\linewidth]{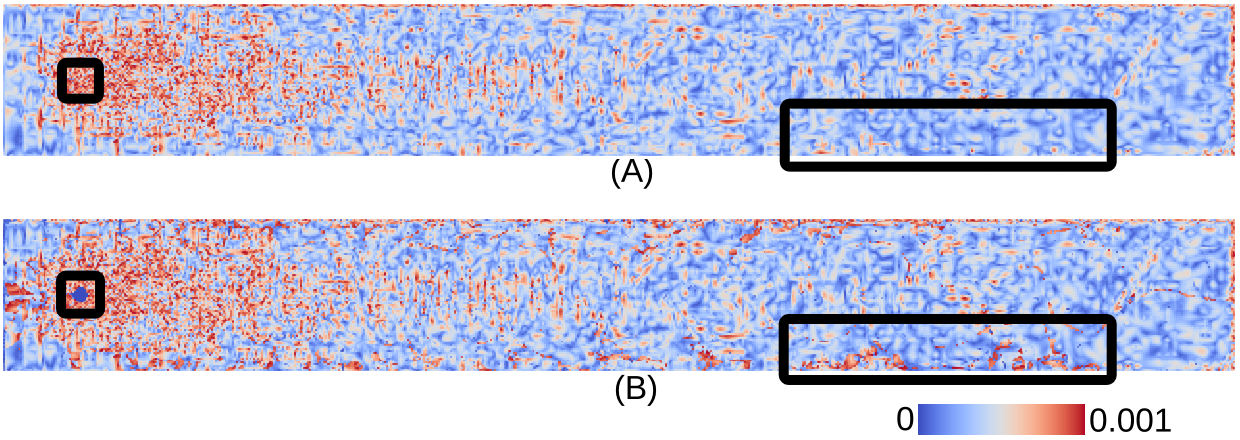}
    \vspace{-6mm}
    \caption{Error maps of Vortex Street dataset compressed with SZ3 both (A) before augmentation and (B) after augmentation. The error value is given as the maximum error across the four entries of each tensor as a percentage of the range. We used $\xi = 0.001$ for compression.}
    \label{fig:additional-experiments-error-map-vortex-street}
\end{figure}

\begin{figure}[!ht]
    \includegraphics[width=\linewidth]{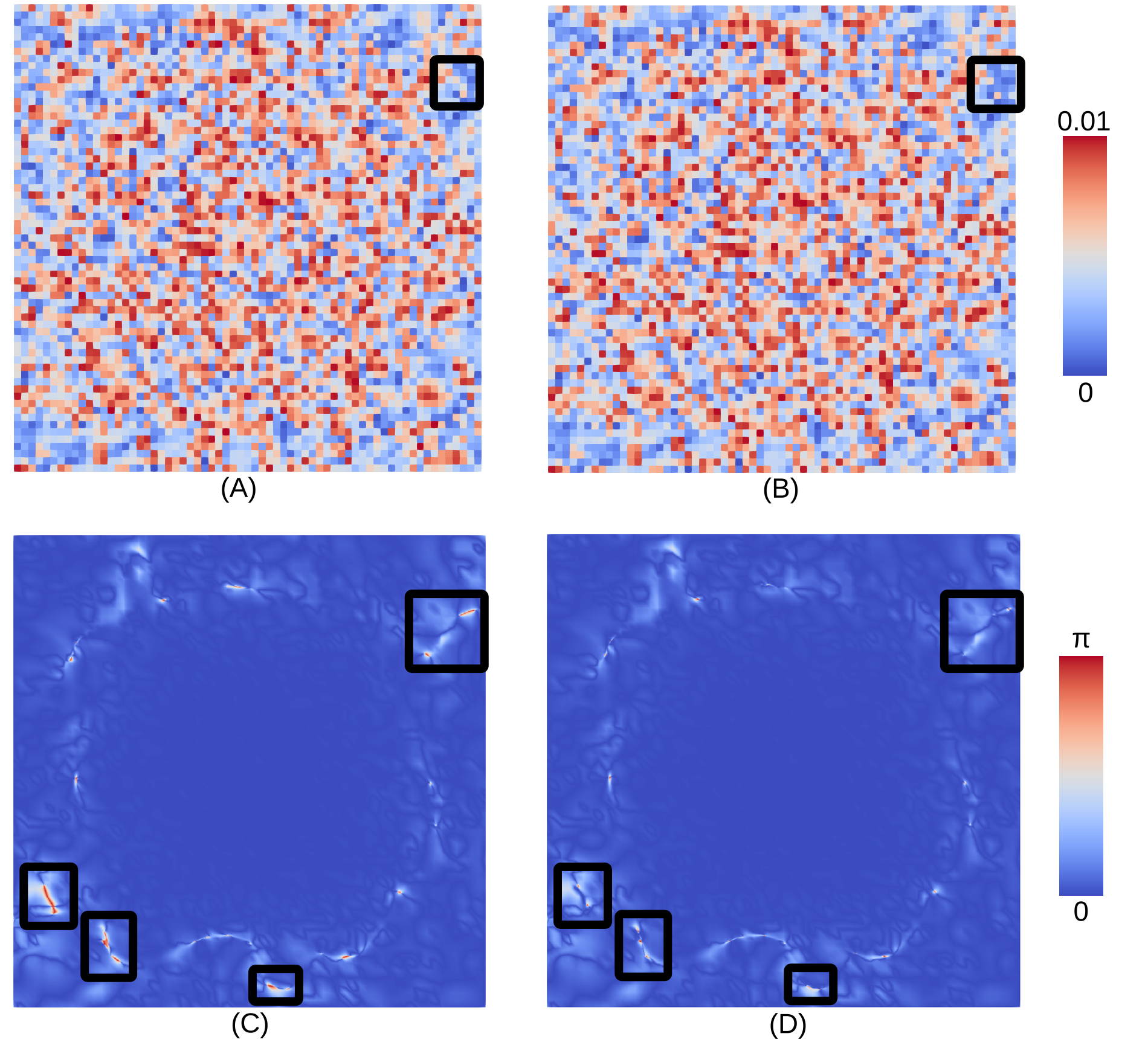}
    \vspace{-6mm}
    \caption{Error maps of the stress B dataset compressed with SZ3 and augmented SZ3. Top row: error map of the maximum error across the four entries of each tensor as a percentage of the range (A) before augmentation and (B) after augmentation. Each pixel represents a tensor. Bottom row: error map of the angle of the tensorlines (C) before augmentation and (D) after augmentation. We interpolate the tensor field to show distortions in the interior of cells. We used $\xi = 0.01$ for compression.}
    \label{fig:additional-experiments-error-map-stress}
    \vspace{-4mm}
\end{figure}

\subsection{Errors Corrected}
\label{appendix:additional-experiments-errors-corrected}
In \cref{fig:additional-experiments-error-histograms}, we give distributions of the different errors achieved on each asymmetric dataset by each base compressor before augmentation with {\toolname}. Here, we can see that the percentage of points that are misclassified according to the eigenvector and eigenvalue partitions are less than the total number of cells whose topology is distorted according to the eigenvector and eigenvalue partitions, respectively. This is logical, because whenever a point has the wrong classification type according to the eigenvalue or eigenvector partition, it distorts the topology of all six surrounding cells. Likewise, we found that, in most cases, correcting the vertices of a cell is sufficient to correct the internal topology of the cell, and further correction of the internal topology of a cell is not needed. We can see that there are typically very few errors related to degenerate points. To generate these numbers, we used the largest error bound for each combination of a dataset and base compressor listed in \cref{tab:rateDistortionParameters}.

\begin{figure}[!ht]
    \includegraphics[width=\linewidth]{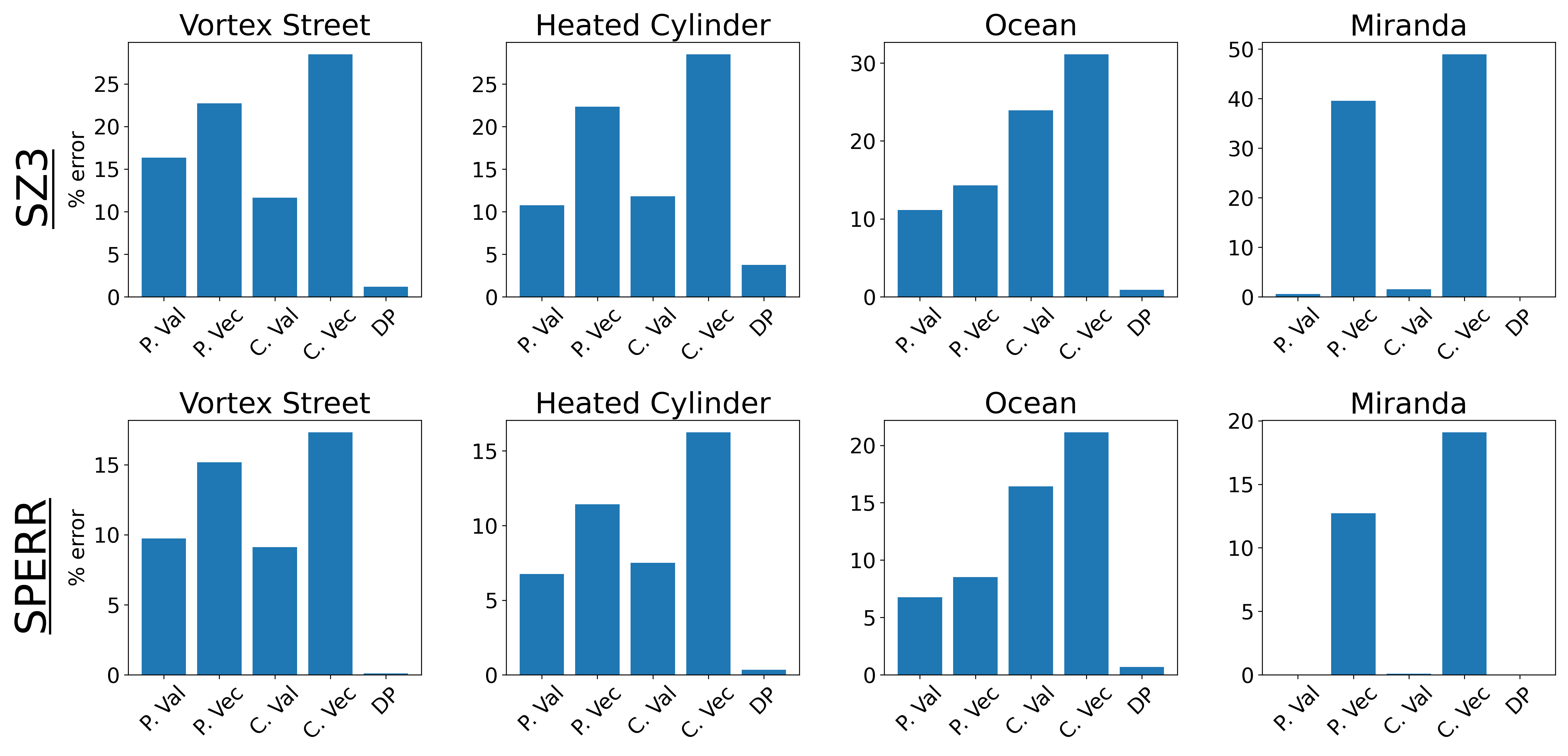}
    \vspace{-6mm}
    \caption{Histograms demonstrating the percentage of points and cells in each asymmetric dataset that exhibit certain types of errors before augmentation with {\toolname}. P. Val: Incorrect point classification in eigenvalue partition. P. Vec: Incorrect point classification in eigenvector partition. C. Val: Incorrect cell classification in eigenvalue partition. C. Vec: Incorrect cell classification in eigenvector partition (excluding degenerate points). DP: Errors pertaining to degenerate points within each cell.}
    \label{fig:additional-experiments-error-histograms}
    \vspace{-4mm}
\end{figure}

\subsection{\new{Visual Artifacts}}
\label{appendix:additional-experiments-visual-artifacts}

\new{One limitation of {\toolname} is that the decompressed data may exhibit visual artifacts, notably when using tensorline LIC visualization for symmetric data, and eigenvector/eigenvalue partition visualization  for asymmetric data. In the symmetric case, such artifacts primarily originate from the base compressor, rather than our augmentation layer. One can verify mathematically that the eigenvector directions are determined entirely by the value $\theta$ in the decomposition equation. Our augmentation layer can only improve the value of $\theta$ produced from the base compressor, and thus will not further distort the eigenvector directions. As shown in~\cref{fig:additional-experiments-error-map-stress}, our strategy produces an error profile for the eigenvector directions that is very similar to the base compressor, except for a few areas where {\toolname} has corrected some distortions. We observed that distortions are most prominent in regions where tensors magnitudes are low compared to the absolute error bound, allowing for greater relative distortion.}

In the asymmetric case, our strategy can cause visual artifacts. We particularly notice jagged artifacts, such as those in the zoomed-in region of \cref{fig:teaser}. These artifacts are most prominent in regions where the decomposition coefficients $(\gamma_d,\gamma_r,\gamma_s)$ are small compared to the absolute error bound $\xi$. In such cases, if our strategy replaces e.g. $\gamma_d' \gets \gamma_d' + \xi$, then the magnitude of $\gamma_d'$ can be significantly increased. 
If one coefficient of a tensor is increased in this manner while a neighboring tensor still has very small coefficients, the resulting discrepancy can produce jagged artifacts. 
However, our strategy ensures that the boundaries of partition regions will not shift by more than the width of a cell. Thus, in high resolution datasets, such as \cref{fig:results-asym2}, these artifacts will be less noticeable.  
\section{Algorithm Details}
\label{appendix:algorithm-details}

In this section, we provide additional details about our topology-preserving compression algorithm, for both symmetric (\cref{appendix:algorithm-details-symmetric}) and asymmetric (\cref{appendix:algorithm-details-asymmetric}) cases.

\subsection{Compression of Symmetric Tensor Fields}
\label{appendix:algorithm-details-symmetric}

Let $\sigma$ be a (triangular) cell. 
Let $T_1$, $T_2$, and $T_3$ be the tensors at the vertices of $\sigma$ for the ground truth data, and $T_1'$, $T_2'$ and $T_3'$ be the corresponding tensors in the reconstructed data. 
Following the decomposition in \cref{eqn:decomposition}, we obtain coefficients $\gamma_{d,1}$ for $T_1$, $\gamma_{d,1}'$ for $T_1'$ and so forth.

Suppose that we have handled case 1 of the cell correction step identified in \cref{sec:method-symmetric}. It is possible that, due to the limited precision afforded by linear-scaling quantization, the cell topology will not be preserved. 
In this case, we store the deviators of each tensor losslessly, one at a time, until the cell topology is preserved. In particular, for $i \in \{1,2,3\}$, we store $\Delta_i = \gamma_{s,i}\cos(\theta)$ and $F_i = \gamma_{s,i}\sin(\theta)$ losslessly. During decompression, we compose a tensor according to
\begin{equation}
T_i'' \gets \gamma_{d,i}' + \begin{pmatrix} \Delta_i & \quad F_i \\ F_i & -\Delta_i
\end{pmatrix}.
\end{equation}
In extremely rare cases, even storing all three deviators losslessly will not preserve the cell topology or respect the error bound, in which case we preserve the tensors completely losslessly. Likewise, it is possible that after adjusting one of the $\theta_i'$ using linear-scaling quantization, the error bound will not be respected. If this occurs, we store the deviator losslessly (and store the tensor losslessly if necessary).

\subsection{Compression of Asymmetric Tensor Fields}
\label{appendix:algorithm-details-asymmetric}

In rare cases, it is possible for the vertex correction step in \cref{sec:method-asymmetric-vertex} to fail. There are two ways that it can fail.

First, although each coefficient individually respects the error bound, it is possible—though unlikely—that $T''$ (the reconstructed tensor) does not. In such cases, if $\theta'$ is not currently quantized, we quantize it using the same strategy as in the symmetric case. Otherwise, we revert $\theta'$ to its unquantized value and apply logarithmic-scaling quantization to $\gamma_d'$, $\gamma_r'$, and $\gamma_s'$, using their values from the intermediate data prior to the procedure described in \cref{sec:method-asymmetric-vertex}. By \cref{lemma:error-bounds}, these values deviate from their respective ground truth values by at most $\xi$, $\xi$, and $\sqrt{2}\,\xi$, respectively. We then use logarithmic-scaling quantization to halve these error bounds and repeat the logic from the vertex correction step. This process of quantizing $\theta'$ and halving each error bound is repeated up to ten times, after which $T$ is stored losslessly.

Second, two coefficients may have equal magnitude in the reconstructed data but not in the ground truth, potentially causing classification issues. In such cases, as well as when an error is detected in the internal cell topology during the topology preservation step, we apply the process described above.

\section{Additional Description on Run Time Analysis}
\label{appendix:runTimes}

In this section, we provide additional information pertaining to the run times of {\toolname}. We explore the tradeoff between the total compression time, augmentation time,  and the error bound $\xi$ in \cref{sec:appendixRunTimesAblation}. 
We report in \cref{appendix:runTimesLimitedPreservation} the run times for asymmetric data when preserving one type of partition (but not both). 
We give a disaggregation of the run times during the cell correction step; see~\cref{appendix:runTimesCellCompression}. \new{We give statistics on how many times cells are visited in \cref{appendix:runTimesIterationStats}. Finally, we demonstrate the abilities of {\toolname} to compress data with larger 2D slices in \cref{appendix:runTimesLargeDatasets}.}

\begin{figure}[!h]
    \includegraphics[width=\linewidth]{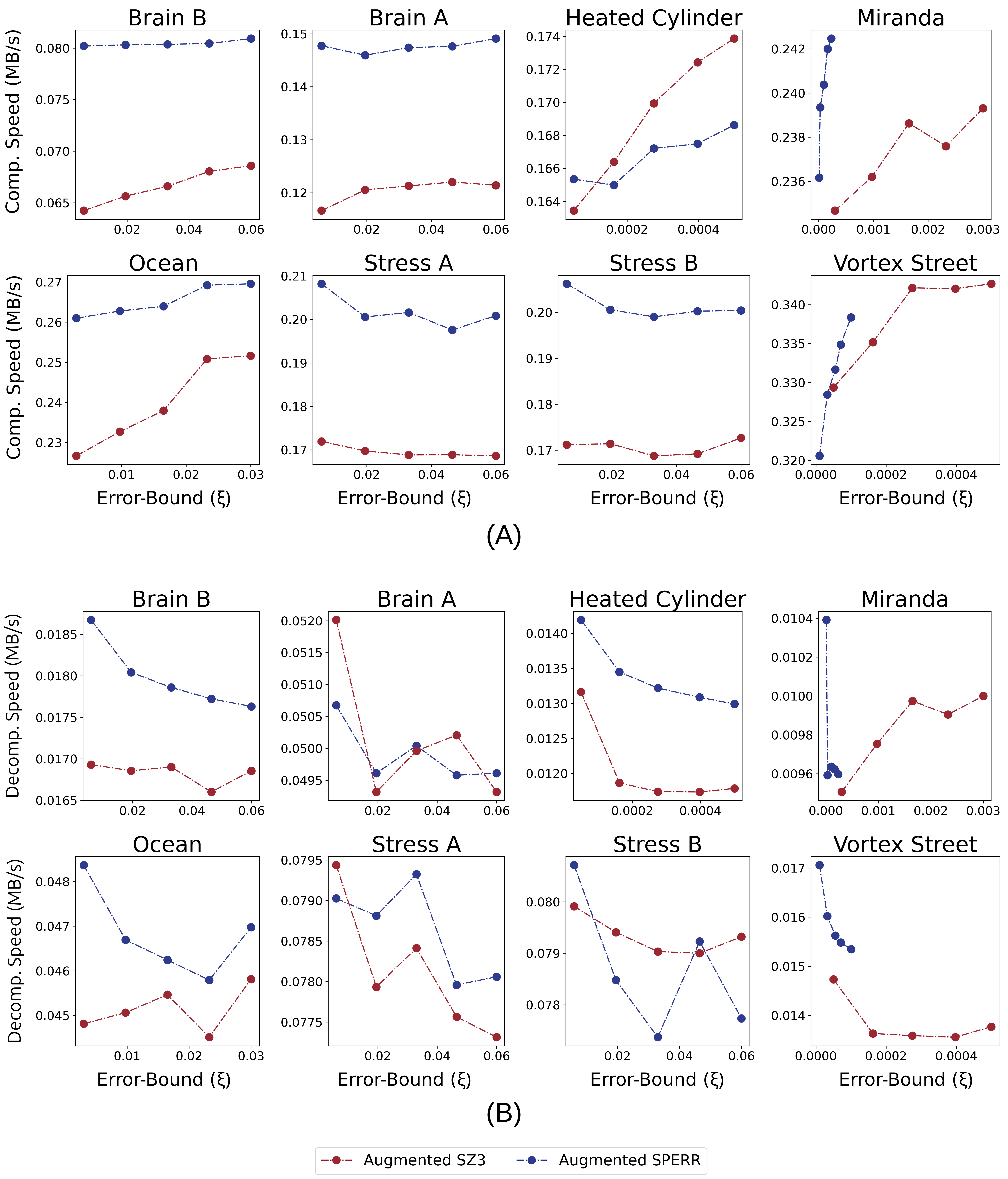}
    \caption{Plots showing throughput (in MB/s) versus error bound $\xi$ for each dataset, using augmented SZ3 and augmented TTHRESH: (A) throughput of compression; (B) throughput of decompression.}
    \label{fig:appendixRunTimesAblation}
\end{figure}

\subsection{Run Time Versus Error Bound}
\label{sec:appendixRunTimesAblation}

In \cref{fig:appendixRunTimesAblation}, we provide plots demonstrating the throughput in MB/s versus error bound $\xi$ for both augmented SZ3 and augmented TTHRESH on all eight datasets. We display the throughput for compression in (a) and decompression in (b). In general, there is no clear trend between the error bound and the throughput of either compression or decompression.

\begin{table*}
\caption{Run times (compression and decompression times) for SZ3, augmented SZ3, SPERR, and augmented SPERR on each of the four asymmetric datasets. We also provide the run times for augmented compressors preserving one type of partition, as A-SZ3 (val), A-SZ3 (vec), etc. We provide similar run times for augmented SPERR. $\xi = 0.001$. Times are in seconds.}
\label{tab:appendixRunTimesLimitedPreservation}
\vspace{-2mm}
\begin{tabular}{c|cccc|cccc}
Dataset & SZ3 & A-SZ3 & A-SZ3 (val) & A-SZ3 (vec) & SPERR & A-SPERR & A-SPERR (val) & A-SPERR (vec) \\ \hline
\multicolumn{9}{c}{Total Compression Time} \\ \hline
Ocean & 0.35 & 1.90 & 1.59 & 1.61 & 0.39 & 2.28 & 1.87 & 1.97 \\
Miranda & 7.00 & 274.14 & 161.84 & 213.26 & 9.53 & 287.16 & 171.51 & 221.46 \\
Vortex Street & 24.34 & 784.85 & 526.90 & 755.19 & 28.86 & 835.99 & 490.62 & 776.49 \\
Heated Cylinder & 37.98 & 762.21 & 655.79 & 717.39 & 48.23 & 713.11 & 609.04 & 692.66 \\ \hline
\multicolumn{9}{c}{Decompression Time} \\ \hline
Ocean & 0.34 & 0.38 & 0.39 & 0.38 & 0.36 & 0.42 & 0.42 & 0.42 \\
Miranda & 4.73 & 11.44 & 7.06 & 11.21 & 6.34 & 12.60 & 9.12 & 11.93 \\
Vortex Street & 19.78 & 33.76 & 32.99 & 32.96 & 22.81 & 35.13 & 34.57 & 34.97 \\
Heated Cylinder & 29.01 & 50.78 & 46.56 & 49.56 & 36.68 & 53.44 & 51.22 & 53.07 \\ \hline
\end{tabular}
\vspace{-4mm}
\end{table*}

\subsection{Preserving Eigenvalue or Eigenvector Partitions}
\label{appendix:runTimesLimitedPreservation}
In \cref{tab:appendixRunTimesLimitedPreservation}, we provide the compression and decompression times for SZ3, augmented SZ3, SPERR, and augmented SPERR. 
We also provide the run times preserving eigenvalue or eigenvector partition (but not both). All times reported are for $\xi = 0.001$. 

Preserving the topology of only one partition (eigenvector or eigenvalue) can reduce compression time by up to about $45\%$, with more time saved when preserving the eigenvector partition (as opposed to the eigenvalue partition). 
Decompression time can be reduced by up to about $40\%$. The gains in decompression time are generally smaller than those for compression time, typically under $10\%$.
However, the time savings vary across datasets and (base) compressors. 

\subsection{Run Time Analysis of Cell Compression Step}
\label{appendix:runTimesCellCompression}

For asymmetric data, the cell correction step consists of three main parts: (1) correcting the vertex classifications, (2) preserving the degenerate points of the dual-eigenvector fields, and (3) preserving the topology of each cell.  
In \cref{tab:appendixRunTimesCellCompression}, we present the run times (in seconds) for each step across datasets and compressors.  
It is evident that preserving cell topology requires the most time in each trial, although all three steps contribute significantly to the overall run time of the cell correction process.

\begin{table}
\begin{center}
\caption{Run times (in seconds) during cell correction for three asymmetric datasets. We display the times for augmented SZ3 in the top and augmented SPERR in the bottom. `Vertices': correcting the vertex classifications; `Degen. Pts.': preserving the degenerate points of the dual-eigenvector fields; `Cell Top.': preserving the topology of each cell.}
\label{tab:appendixRunTimesCellCompression} 
\vspace{-2mm}
\begin{tabular}{c|ccc} \hline
Dataset & Vertices & Degen. Pts. & Cell Top. \\ \hline
\multicolumn{4}{c}{Augmented SZ3} \\ \hline
Ocean & 0.23 & 0.17 & 0.68 \\
Miranda & 53.00 & 31.52 & 153.31 \\
Vortex Street & 192.46 & 125.35 & 368.00 \\
Heated Cylinder & 179.34 & 102.34 & 334.12 \\ \hline
\multicolumn{4}{c}{Augmented SPERR} \\ \hline
Ocean & 0.25 & 0.17 & 0.79 \\
Miranda & 54.19 & 32.12 & 156.66 \\
Vortex Street & 207.51 & 139.99 & 372.44 \\
Heated Cylinder & 166.35 & 91.92 & 283.27 \\ \hline
\end{tabular}
\vspace{-4mm}
\end{center}
\end{table}

\subsection{Statistics on Iterations}
\label{appendix:runTimesIterationStats}

In this section, we analyze how many times cells are visited to provide insight into the overall running times. In \cref{fig:additional-experiments-iteration-counts}, we show how many times each cell is processed for each dataset for SZ3. The numbers are similar for SPERR. Here, we can see that almost all points (90-95\%) are processed only once in general, while very few cells are processed six or more times. The notable exception is the Vortex Street dataset, where many points are processed six or more times. One can verify that the Vortex Street dataset has the lowest throughput overall, and this may be the cause.

To generate \cref{fig:additional-experiments-iteration-counts}, we use the largest error bound for each dataset given in \cref{tab:rateDistortionParameters} for augmented SZ3.

\begin{figure}[!ht]
    \includegraphics[width=\linewidth]{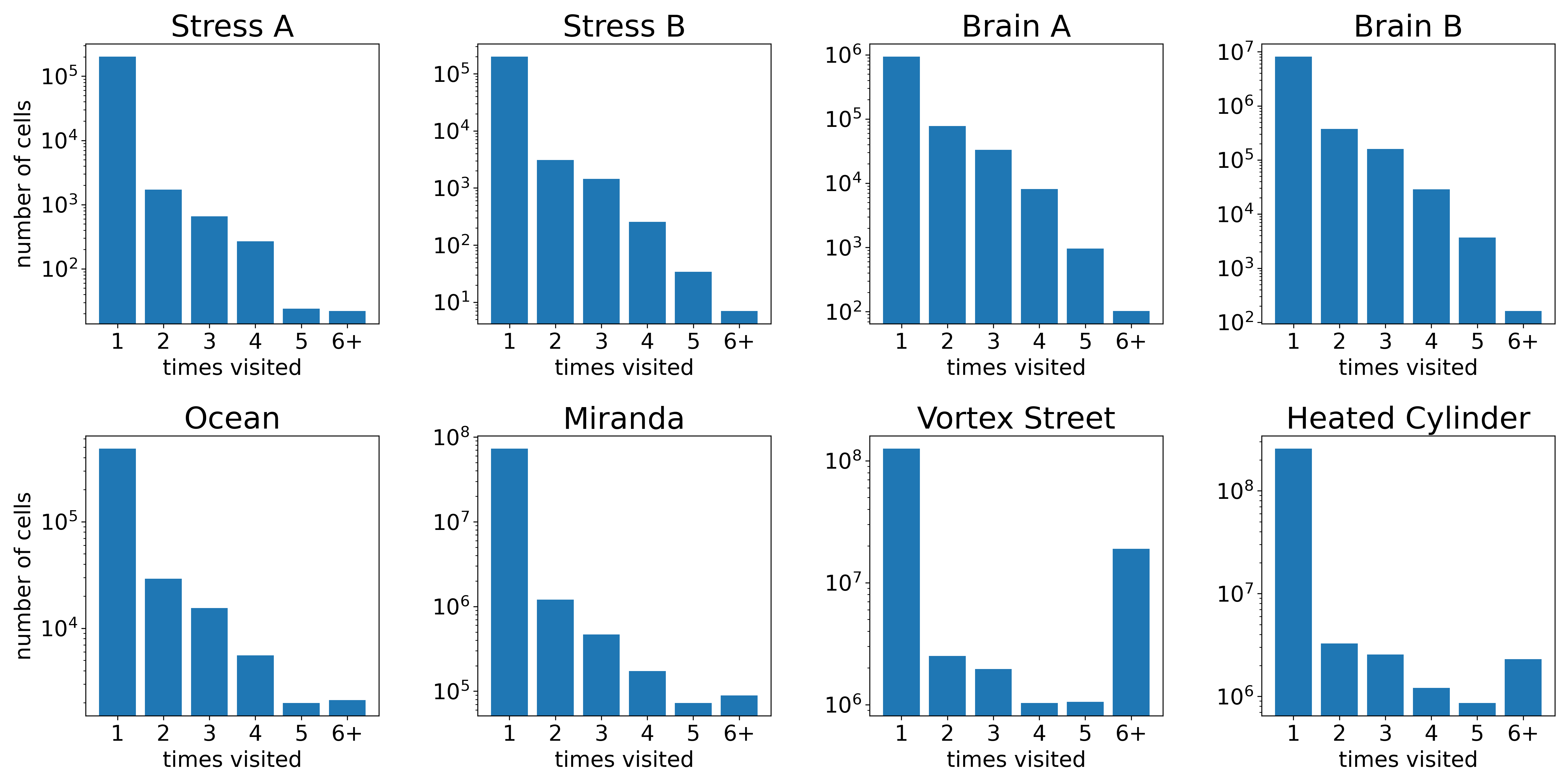}
    \vspace{-2mm}
    \caption{Histograms demonstrating the distribution of how many times cells were processed. For example, the ``1'' bar shows the number of cells that were processed only once. ``6+'' gives the number of cells that were processed six or more times.}
    \label{fig:additional-experiments-iteration-counts}
    \vspace{-4mm}
\end{figure}

\subsection{\textcolor{\newcolor}{Testing on Data With Large Slices}}
\label{appendix:runTimesLargeDatasets}
In most of our experiments, we use relatively small 2D slices—smaller than those in typical use cases. Since our algorithm runs in linear time, we believe these experiments are sufficient to demonstrate the framework’s throughput. 
To verify that the throughput observed on smaller datasets also scales to larger data, we run {\toolname} on datasets with larger 2D slices. Specifically, we derive an asymmetric tensor field from the E.U. Copernicus Marine Service Global Ocean Physics Reanalysis dataset (the source of our main ocean dataset). We select the largest rectangular slice available that does not significantly overlap with land.
Our slice is taken from the South Pacific, bounded by the following coordinates: North: $6.870530532659802$, East: $-80.59961032889228$, South: $-71.2633145566259$, and West: $-181.429859882464$. The depth ranges from 1.54 m at the top to 2533.3359375 m at the bottom. We derive the asymmetric tensor field using the same strategy applied to the other asymmetric datasets, and obtain a symmetric tensor field by taking the symmetric part of the tensor field.

Overall, our tensor fields have 42 slices of size $1210 \times 938$. The symmetric tensor field has a file size of 1144MB, while the asymmetric tensor field is 1525MB. We compress the symmetric dataset with $\xi=0.01$ and the asymmetric dataset with $\xi=0.001$. We obtain compression times of 121.7s for the symmetric tensor field, and 416.9s for the asymmetric tensor field, yielding respective throughput of 9.4 MB/s and 3.6 MB/s. These numbers are comparable to those yielded by the smaller datasets. The symmetric dataset has a bit-rate of 4.9 and a PSNR of 47.3, while the asymmetric dataset has a bit-rate of 13.6. and a PSNR of 65.7. These numbers are consistent with the reconstruction qualities reported on the smaller datasets. 
\section{Additional Renderings for Visual Comparison}
\label{appendix:extra-figures}

In this section, we provide additional renderings of our experimental datasets for visual comparison. 
In \cref{fig:extra-sym}, we visualize Stress A and Brain A datasets compressed with SZ3, augmented SZ3, SPERR, and augmented SPERR, respectively.
We chose an error bounds $\xi$ such that each augmented compressor achieves a similar compression ratio to its corresponding base compressor. 
For the Stress A dataset, we highlight a region of interest. 
And for the Brain A dataset, we provide a zoomed-in view to mark the discrepancies between the base compressors and the ground truth, not visible in the augmented compressors.
In \cref{fig:extra-asym}, we provide a similar view for the Vortex Street \new{and Heated Cylinder} dataset\new{s}. 
In \cref{fig:teaser-sperr}, we include a variation of \cref{fig:teaser} using SPERR and augmented SPERR (instead of SZ3 and augmented SZ3). 
We report the error bounds $\xi$ used to generate each figure in \cref{appendix:parameter-configurations}.

\begin{figure*}
\includegraphics[width=\textwidth]{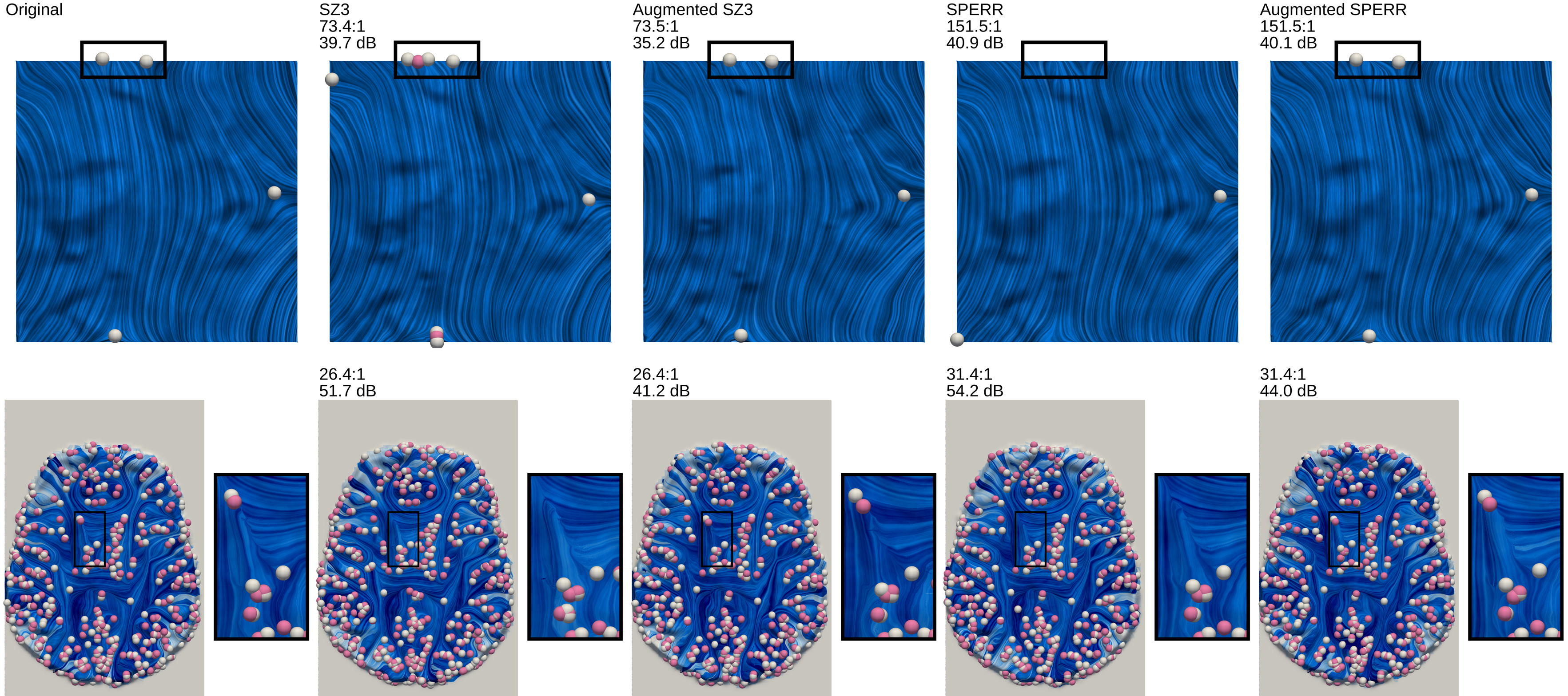}
\vspace{-2mm}
\caption{LIC visualization of the eigenvector fields of two 2D symmetric second-order tensor fields compressed with SZ3, augmented SZ3, SPERR, and augmented SPERR, along with the ground truth. Trisectors are in white, wedges are in pink. Top: Stress A data slice 13. Bottom: {\BrainA} data slice 50. In the top row, we highlight a region of interest in black boxes. In the bottom row, we provide a zoomed-in view of a region of interest encloded by black boxes. The Z position of each point corresponds to the Frobenius norm with smoothing applied.}
\label{fig:extra-sym}
\vspace{-8mm}
\end{figure*}

\begin{figure*}
\includegraphics[width=\textwidth]{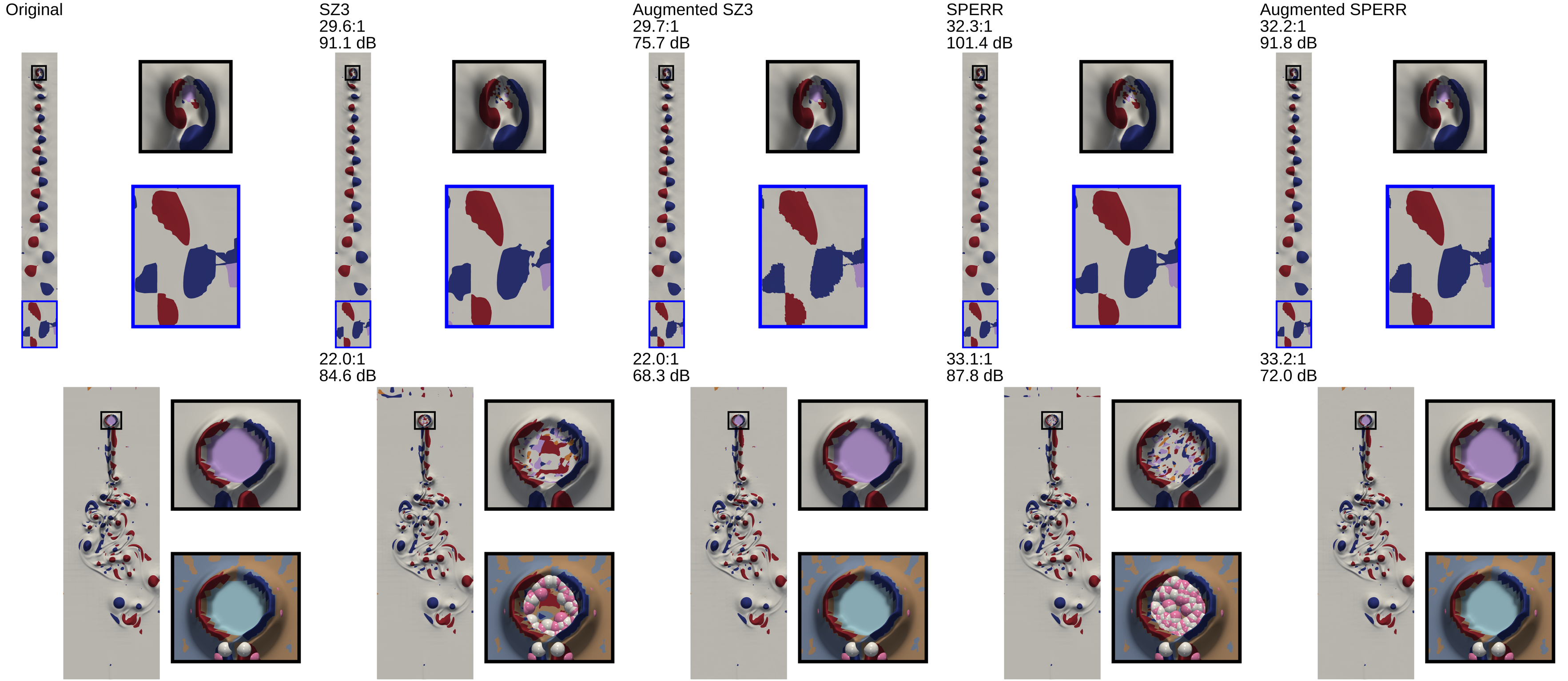}
\vspace{-2mm}
\caption{Visualizing the eigenvalue partition of the Vortex Street (top) and Heated Cylinder (bottom) datasets (slices 1000 and 800 resp.) compressed with SZ3, augmented SZ3, SPER, and augmented SPERR, along with the ground truth. We provide zoomed-in views (of black and blue boxes) that highlight the differences between the compressors and the ground truth. For the heated cylinder dataset, one zoomed in view corresponds to the eigenvector partition. We also label compression ratio and PSNR. We use the same colormap as \cref{fig:manifolds}. The Z position of each point corresponds to the Frobenius norm with smoothing applied.}
\label{fig:extra-asym}
\vspace{-8mm}
\end{figure*}

\begin{figure*}
\begin{center}
\includegraphics[width=0.8\textwidth]{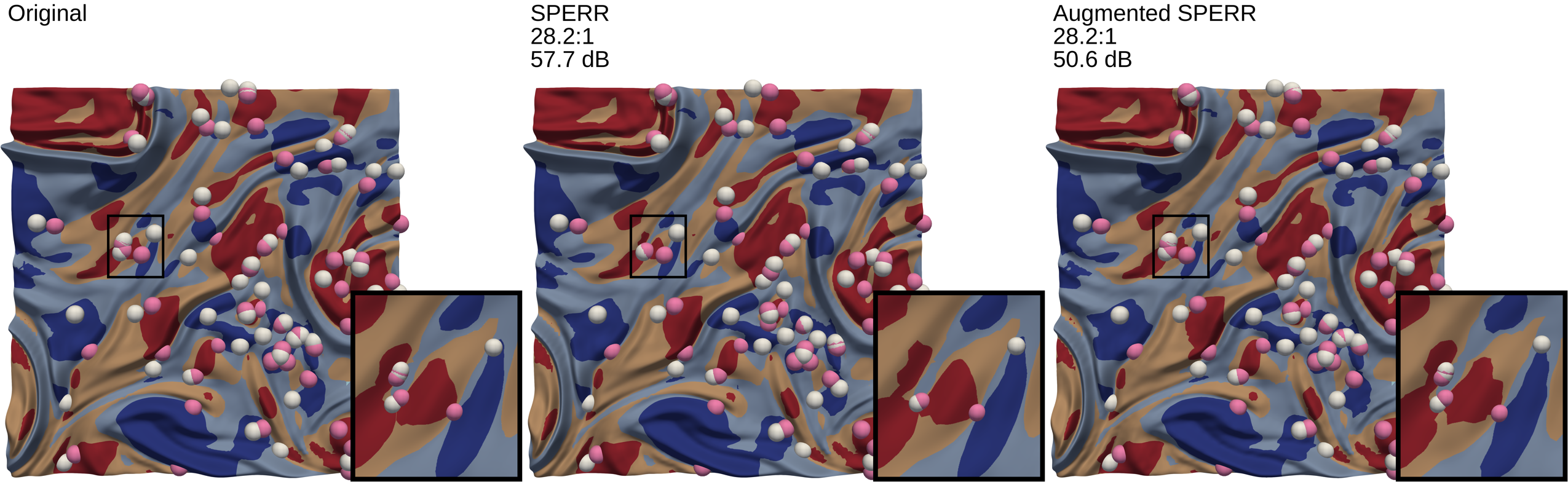}
\end{center}
\vspace{-4mm}
\caption{Visualizing the eigenvector partition of the Ocean dataset compressed with SPERR and augmented SPERR. Left: input data visualized with degenerate points of the dual-eigenvector field. Trisectors are in white, wedges are in pink. Middle: reconstructed data using SPERR, labeled with compression ratio and PSNR. Right: reconstructed data using augmented SPERR along with compression ratio and PSNR. We also provide zoomed-in views that highlight the differences between the classic and augmented SPERR results. We use the same colormap as \cref{fig:manifolds}. \new{The Z position  of each point corresponds to the Frobenius norm with smoothing applied.}}
\label{fig:teaser-sperr}
\end{figure*}
\section{Edge Cases During Asymmetric Cell Correction}
\label{appendix:edge-cases}

We handle various edge cases that arise in our topology-preserving framework. 
We describe edge cases pertaining to vertex correction in \cref{sec:edge-cases-vertex-correction} and cell topology preservation in \cref{sec:edge-cases-internal-topology}. 
We describe how we handle issues arising from floating point precision in \cref{sec:edge-cases-numerical-precision}.

\subsection{Vertex Correction}
\label{sec:edge-cases-vertex-correction}
We assign values to the variables $\DSIGN$, $\RSIGN$, $\ROVERS$, and $\DLARGEST$ to ensure that, for each vertex of the mesh, the following conditions hold if and only if they hold in the ground truth:
\begin{itemize}[noitemsep]
\item $|\gamma_r'| = \gamma_s'$
\item $\gamma_r' = 0$
\item $|\gamma_d'| = |\gamma_r'| > \gamma_s'$
\item $|\gamma_d'| = \gamma_s' > |\gamma_r| $
\item $|\gamma_r'| = |\gamma_d'| = |\gamma_s|$
\end{itemize}
To accomplish this, we adjust the decompressed data using strategies similar to those in \cref{sec:method-asymmetric-vertex}.

\subsection{Cell Topology Preservation}
\label{sec:edge-cases-internal-topology}

Sometimes, edge cases arise when computing the topological invariant for preserving cell topology using {\toolname}. Each edge case yields a number of sub-cases. 
We describe non-transverse intersections in \cref{sec:edge-cases-internal-topology-non-transverse}, junction points on cell boundaries in \cref{sec:edge-cases-internal-topology-junction-point-edge}, and intersections with cell vertices in \cref{sec:edge-cases-internal-topology-corners}. Finally, we describe how we handle degenerate conics in \cref{sec:edge-cases-internal-topology-degenerate-conic}.

\subsubsection{Non-Transverse Intersections}
\label{sec:edge-cases-internal-topology-non-transverse}

When computing the topological invariant, we trace the curves $\gamma_d^2 = \gamma_s^2$ and $\gamma_r^2 = \gamma_s^2$. An edge case occurs when one of these curves intersects the cell boundary or the other curve non-transversally (i.e,~an intersection that does not satisfy the transversality condition). It is also possible for the two conics to have significant overlap but not be equal. We handle each case using a virtual perturbation. We describe these cases in \cref{tab:edge-cases-internal-topology-non-transverse}. In the left column, we provide a description of the case and how we handle it. In the middle column, we visualize the case. In the right column, we visualize the virtual perturbation.

There are also a few other cases of non-transverse intersections that cannot be easily visualized. We describe them below:

\begin{itemize}[noitemsep,leftmargin=*]
\item If, for the entire cell, $|\gamma_d| = |\gamma_r|$, then we proceed as though $|\gamma_r| > |\gamma_d|$.
\item If, for the entire cell, $|\gamma_d| = \gamma_s$, then we proceed as though $\gamma_s > |\gamma_d|$.
\item If, for the entire cell, $|\gamma_r| = \gamma_s$, then we proceed as though $\gamma_s > |\gamma_r|$.
\item If there exists a point $z \in \sigma$ where $\gamma_d(z) = \gamma_r(z) = \gamma_s(z) = 0$, we note this in our invariant. We also note whether it occurs in the interior, on an edge, or on a vertex.
\item If, for the entire cell, $\gamma_d = \gamma_r = \gamma_s = 0$, then we note this as part of our invariant, and save the cell losslessly.
\end{itemize}

Finally, consider the case where the curves $|\gamma_r| = \gamma_s$ and $|\gamma_d| = \gamma_s$ are the same curve, but $|\gamma_d| \neq |\gamma_r|$ for the entire cell. 
Because $\gamma_r$ and $\gamma_d$ are both PL functions over $\sigma$, this can occur only if $|\gamma_r| = \gamma_s$ is a line. In such a case, one of $|\gamma_r|$ or $|\gamma_d|$ will be smaller on the entire cell; we ignore whichever is smaller.

% Table: Non-Transverse Intersections
\begin{table}[!ht]
\caption{Descriptions and images of edge cases that arise in internal cell topology computation from non-transverse intersections. We describe each case and how it is handled in the first column. In the second column, we provide a visualization of the case. In the third column, we provide an image of the virtual perturbation used to handle the case.}
\label{tab:edge-cases-internal-topology-non-transverse}
\begin{tabular}{p{0.35\linewidth}|c|c} \hline
Case & Image & Perturbation \\ \hline
\para{Non-transverse edge intersection:} If one of the conics intersects the edge non-transversally, then we apply a virtual perturbation such that the intersection never occurs. & \includegraphics[width=0.25\linewidth,valign=t]{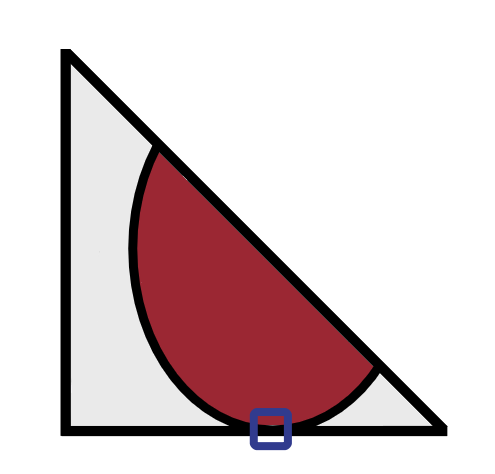} & \includegraphics[width=0.25\linewidth,valign=t]{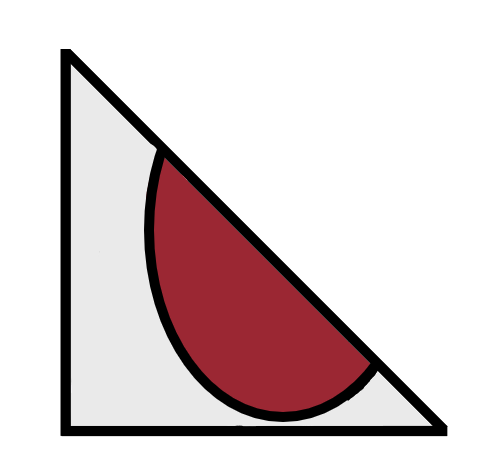} \\ \hline

\para{Non-transverse intersection between conics:} If two of the conics intersect non-transversally, then we apply a virtual perturbation such that the intersection never occurs. & \includegraphics[width=0.25\linewidth,valign=t]{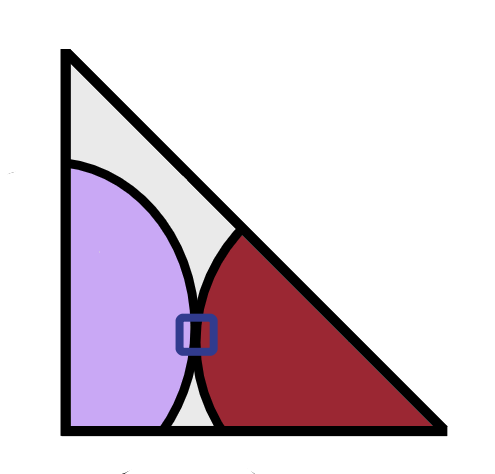} & \includegraphics[width=0.25\linewidth,valign=t]{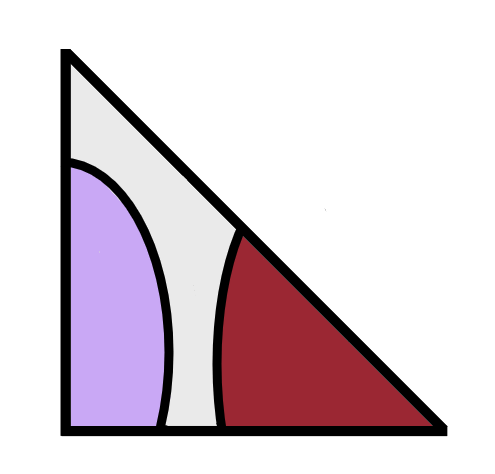} \\ \hline

\para{Partial overlap of conics:} If the conics are degenerate (e.g., intersecting lines, parallel lines, etc.), they may not be exactly identical but can still exhibit substantial overlap. In this case, we perturb the conics so that their intersection has measure zero. & \includegraphics[width=0.25\linewidth,valign=t]{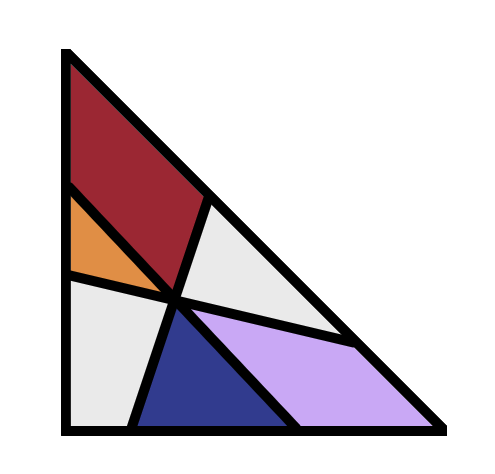} & \includegraphics[width=0.25\linewidth,valign=t]{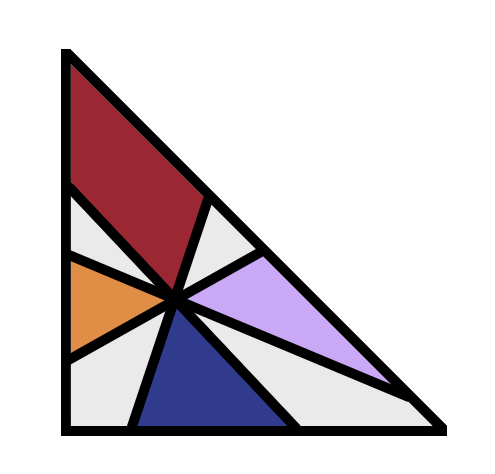} \\ \hline
\end{tabular}
\vspace{-2mm}
\end{table}

\subsubsection{Junction Point at Cell Boundary}
\label{sec:edge-cases-internal-topology-junction-point-edge}

If a junction point occurs on the boundary of a cell, this can lead to ambiguity. There are many variations of this case. We handle them all using virtual perturbations. We describe how we handle such cases in \cref{tab:edge-cases-internal-topology-junction-point-edge}.

\begin{table}
\caption{Descriptions and images of edge cases that arise in internal cell topology computation when a junction point occurs on the cell boundary. We describe each case and how it is handled in the first column. In the second column, we provide a visualization of the case. In the third column, we provide an image of the virtual perturbation used to handle the case.}
\label{tab:edge-cases-internal-topology-junction-point-edge}
\begin{tabular}{p{0.35\linewidth}|c|c} \hline
Case & Image & Perturbation \\ \hline
\para{Junction point at an edge with transverse intersection:} If a junction point occurs at an edge, and the two conics both intersect the edge transversally, then we apply a virtual perturbation such that the junction point does not occur. & \includegraphics[width=0.25\linewidth,valign=t]{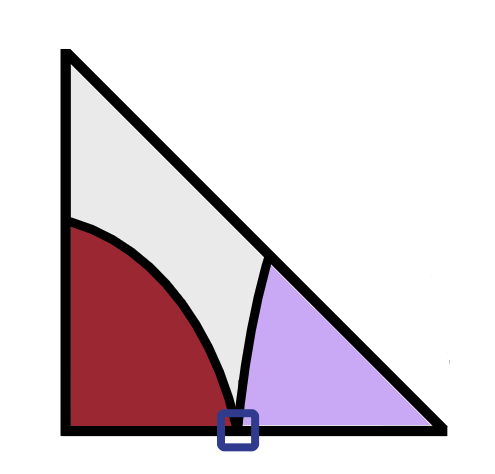} & \includegraphics[width=0.25\linewidth,valign=t]{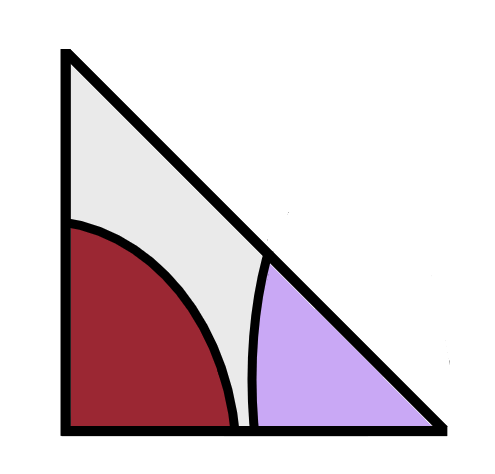} \\ \hline

\para{Junction point at an edge with single non-transverse conic-edge intersection:} If a junction point occurs at an edge, and exactly one conic intersects the edge non-transversally at the junction point, then we apply a virtual perturbation so that the nontransverse intersection does not occur. & \includegraphics[width=0.25\linewidth,valign=t]{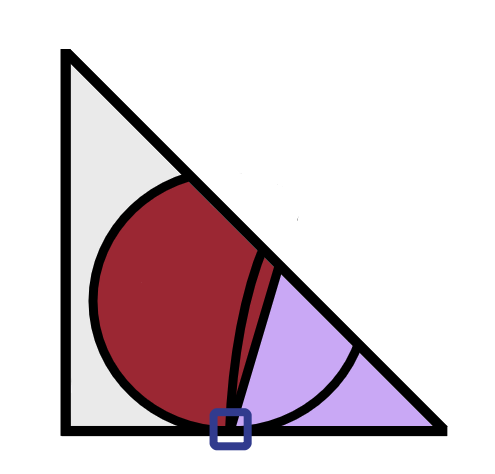} & \includegraphics[width=0.25\linewidth,valign=t]{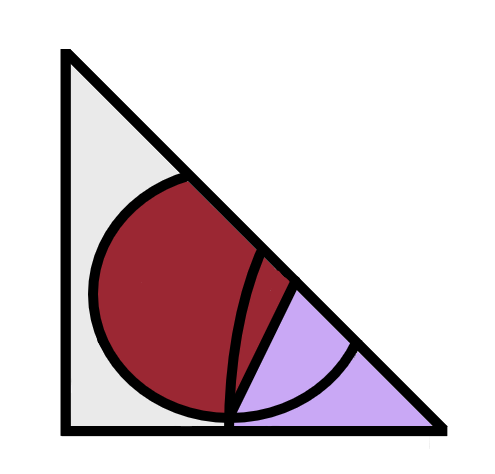} \\ \hline

\para{Junction point at an edge with only non-transverse intersections:} If a junction point occurs at an edge, and all intersections are non-transverse, then we apply a virtual perturbation so that none of the intersections occur. & \includegraphics[width=0.25\linewidth,valign=t]{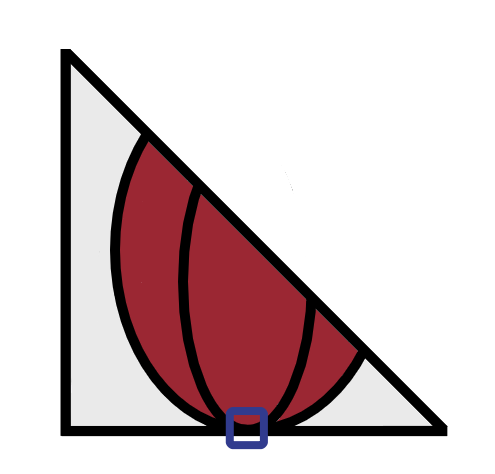} & \includegraphics[width=0.25\linewidth,valign=t]{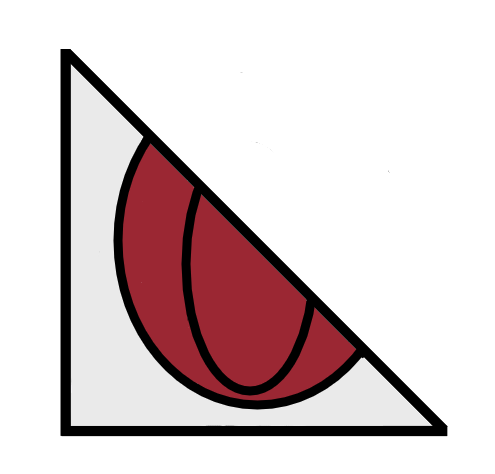} \\ \hline

\end{tabular}
\end{table}

\subsubsection{Intersections at Vertices}
\label{sec:edge-cases-internal-topology-corners}

If one of the conics intersects a vertex of the cell, this can lead to ambiguity. We describe how such cases are handled in \cref{tab:edge-cases-internal-topology-corners}. In the left column, we describe the case and how we handle it. In the right column, we provide a visualization. We also handle these cases in a similar fashion if the curves $|\gamma_r| = |\gamma_d|$ or $\gamma_r = 0$ intersects any vertex.

\begin{table}
\caption{Descriptions and images of edge cases that arise in internal cell topology computation when a conic section intersects a vertex. In the first column we describe each case and how it is handled. In the second column we provide a visualization of the case.}
\label{tab:edge-cases-internal-topology-corners}
\begin{tabular}{p{0.6\linewidth}|c} \hline

Case & Image \\ \hline

\para{Conic intersects a vertex (no topological effect):} If one conic section intersects a vertex, but that intersection does not mark the location of any topological change, then we ignore it. & \includegraphics[width=0.25\linewidth,valign=t]{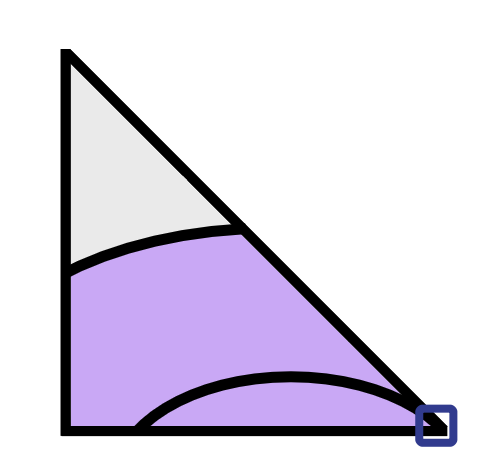} \\ \hline

\para{Conic intersects a vertex (topologically significant intersection):} If one conic section intersects a vertex, and that intersection does affect the topology, then we track this intersection in our invariant. & \includegraphics[width=0.25\linewidth,valign=t]{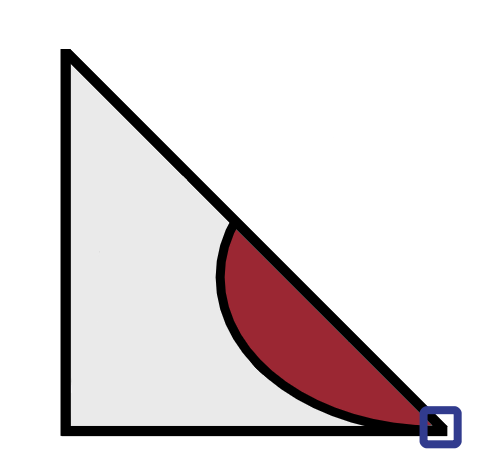} \\ \hline

\para{Two conics intersect at a vertex:} If two conic intersects intersect at a vertex, we ignore the junction point that occurs at the vertex. We treat each of the two conics as separately intersecting the vertex and handle them according to the two previous cases. & \includegraphics[width=0.25\linewidth,valign=t]{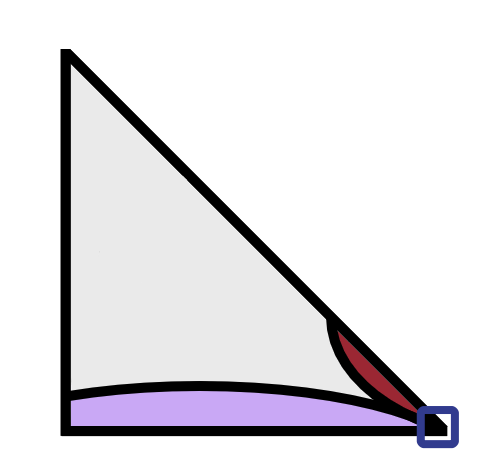} \\ \hline

\end{tabular}
\end{table}

\subsubsection{Degenerate Conic Sections}
\label{sec:edge-cases-internal-topology-degenerate-conic}

A conic section is typically a circle, ellipse, parabola, or hyperbola. However, it can also take degenerate forms, such as a single line, two parallel lines, two intersecting lines, or a single point.

If one of the conic sections, $\gamma_d^2 = \gamma_s^2$ or $\gamma_r^2 = \gamma_s^2$, is degenerate, it can cause issues when computing the invariant. Unless the conic is two parallel lines, our algorithm’s output may be affected.

For a single point, {\toolname} can be influenced because, although the point has no topological impact, the algorithm still detects its presence. For a single line, problems may arise if $\gamma_d = \gamma_s$ forms a line but $\gamma_s > |\gamma_d|$ holds on both sides, leading {\toolname} to misinterpret the topology. A similar issue can occur with the curve $\gamma_r = \gamma_s$. When the conic consists of two intersecting lines, our algorithm struggles because the intersection point alters the cell topology, yet it is neither a junction point nor an intersection with the cell boundary, making it difficult to handle directly.

We describe how we handle these cases in \cref{tab:edge-cases-internal-topology-intersecting-lines}. In the left column, we describe each case and how it is handled. In the right column, we provide an illustration.

\begin{table}[!ht]
\caption{Descriptions and images of edge cases that arise in internal cell topology computation from degenerate conic sections. In the first column we describe each case and how it is handled. In the second column we provide a visualization of the case.}
\label{tab:edge-cases-internal-topology-intersecting-lines}
\begin{tabular}{p{0.6\linewidth}|c} \hline
Case & Image \\ \hline
\para{One conic is a single point:} In this case, we apply a virtual perturbation so that the conic disappears. & \includegraphics[width=0.25\linewidth,valign=t]{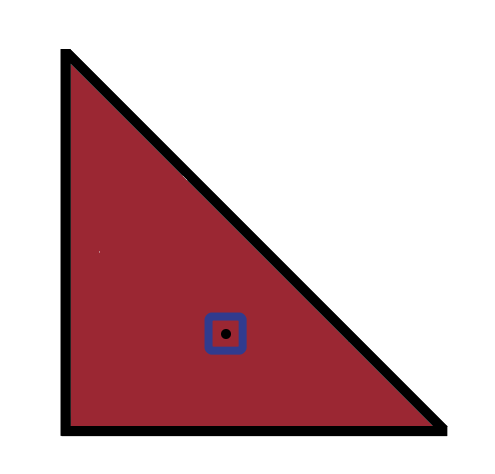} \\ \hline
\para{One conic does not separate two regions appropriately:} In particular, if the conic $\gamma_d = \gamma_s$ has $\gamma_s > |\gamma_d|$ on both sides, or if the conic $\gamma_r = \gamma_s$ has $\gamma_s > |\gamma_r|$ on both sides, we ignore the conic. & \includegraphics[width=0.25\linewidth,valign=t]{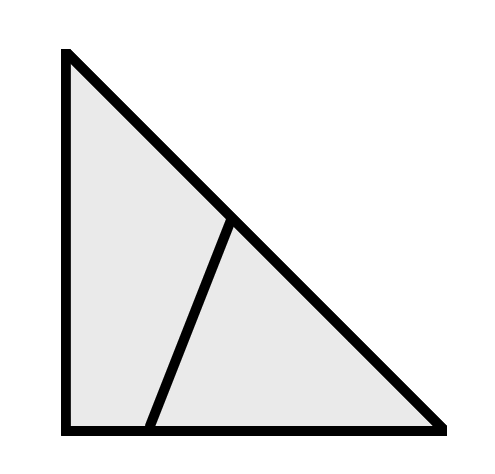} \\ \hline
\para{One conic consists of two intersecting lines. The intersection point does not lie on the boundary of two regions:} If the intersection point does not divide regions of different classifications, we ignore the intersection point. In future cases, assume that any conic that is two intersecting lines lies on the boundary between different regions. & \includegraphics[width=0.25\linewidth,valign=t]{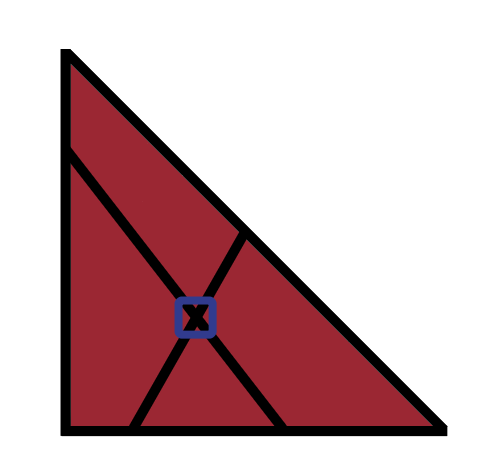} \\ \hline

\para{One conic consists of two intersecting lines. The intersection point lies on the boundary of two regions:} In this case, we track the location of the intersection point relative to any edge intersections and junction points in the invariant. & \includegraphics[width=0.25\linewidth,valign=t]{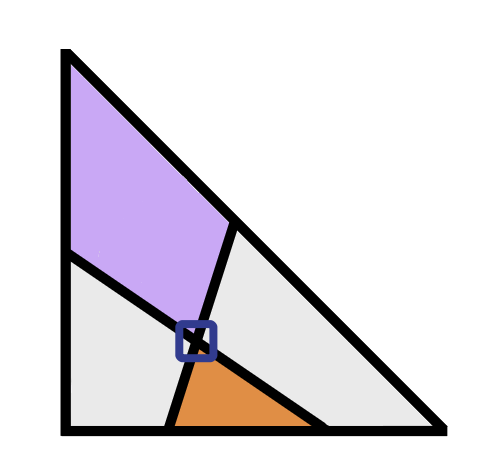} \\ \hline

\para{One conic consists of two intersecting lines. The intersection point is on an edge:} In this case, we track the location of the intersection point relative to any edge intersections and junction points in our invariant, noting the edge on which it occurs. & \includegraphics[width=0.25\linewidth,valign=t]{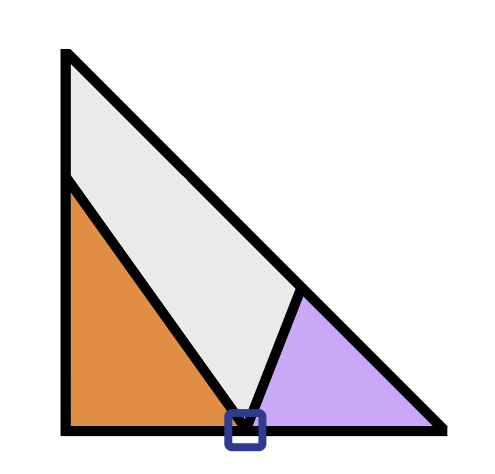} \\ \hline

\para{One conic consists of two intersecting lines. The intersection point is on a vertex:} In this case, we track the location of the intersection point relative to any edge intersections and junction points in our invariant, noting the vertex on which it occurs. & \includegraphics[width=0.25\linewidth,valign=t]{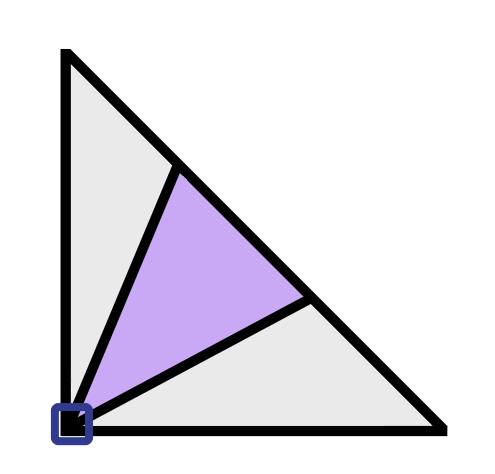} \\ \hline

\para{Both conics, each composed of two intersecting lines that share the same point of intersection:} In this case, we do not track any junction point. We handle each conic separately according to the previous cases. & \includegraphics[width=0.25\linewidth,valign=t]{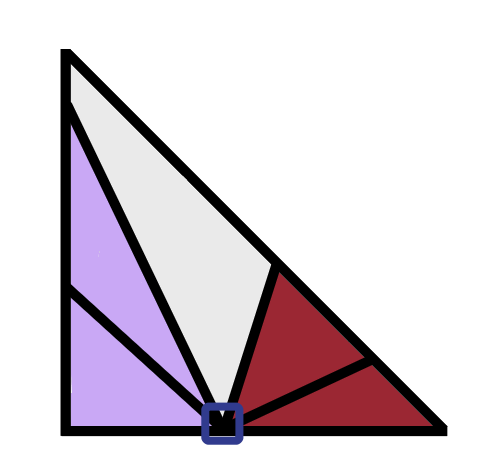} \\ \hline

\end{tabular}
\end{table}

\subsection{Numerical Precision}
\label{sec:edge-cases-numerical-precision}

The finite precision of floating-point values can cause difficulties for TFZ in certain cases. To mitigate these issues, we introduce several fixes. Admittedly, TFZ may still make mistakes in extreme scenarios, in which case it might be better implemented using integer representations of floating-point numbers.

\para{Close values.}~We consider two numbers $x$ and $y$ to be equal if $|x-y| < 10^{-10}\max(|x|,|y|)$. Similarly, we consider $x$ to be greater than $y$ if $x-y > 10^{-10}\max(|x|,|y|)$ and consider $x$ to be less than $y$ if $x-y < -10^{-10}\max(|x|,|y|)$.

When computing the classification according to the eigenvector manifold, we consider $\gamma_r$ to be equal to zero if $|\gamma_r| < 10^{-10}$. Similarly, if the number $u$ is the dot product of two unit vectors, the output of a trigonometric equation, or the $x$ or $y$ position of a point in the unit square, we consider $u$ to be equal to zero if $|u| < 10^{-10}$.

\para{Tensor normalization.}~When computing the topology of a cell $\sigma$ with tensors $T_1$, $T_2$ and $T_3$ at the vertices, we compute:
\[ x = \min\{ |y| : y \text{ is an entry of } T_1, T_2 \text{ or } T_3 \text{ and } |y| > 10^{-10} \}.\]
Set $T_1' \gets \frac{1}{|x|}T_1$, $T_2' \gets \frac{1}{|x|}T_2'$ and $T_3 \gets \frac{1}{|x|}T_3'$. We then proceed to compute the topology of $\sigma$ using $T_1'$, $T_2'$, and $T_3'$.

\para{Quadratic formula stability.} We have found that the quadratic formula can become unstable with very large or small values. Suppose that we are computing the roots of $ax^2+bx+c$. First, we compute $k = \max(|a|,|b|,|c|)$. If $k = 0$, then there are infinitely many solutions. Otherwise, we set $a' \gets \frac{a}{k}$, $b' \gets \frac{b}{k}$ and $c' \gets \frac{c}{k}$. Then, if any of $a'$, $b'$, or $c'$ has a magnitude less than $10^{-10}$, we set it equal to zero. We then compute the roots with the standard quadratic formula using $a'$, $b'$ and $c'$.

When computing the discriminant $(b')^2-4a'c'$, if $\frac{|(b')^2-4a'c'|}{(|a'|+|b'|+|c'|)^2} < 10^{-10}$ (i.e., the discriminant is very small compared to the largest coefficient) then we treat it as being equal to zero.

\para{Relative error bound.} After decompressing the data, it is possible that some entries of a tensor will be very large or small in magnitude compared to one another. For example, it is possible one entry will be $10^{-8}$ while another is $10^{-2}$. Such variability can exacerbate numerical precision issues. To resolve this issue, we enforce a relative error bound of $20$ on all entries of a given tensor. That is, if $x$ is an entry of a tensor, and $x'$ is a guess for $x$, if $x' \not = 0$ and $x \not = 0$, we require that $\frac{|x|}{|x'|} < 20$ and $\frac{|x'|}{|x|} < 20$.
\section{Proofs of Theoretical Results}
\label{appendix:lemma-proofs}

In this section, we prove most of the lemmas from the background section. The proofs of \cref{lemma:eigenvector-topology} and \cref{lemma:eigenvalue-topology} are substantially more involved, and are therefore given separately in \cref{appendix:partition-correctness}.
We begin with several supporting lemmas before proving the main lemmas from \cref{sec:method}. With a slight abuse of notation, we write $M = 0$ for a matrix $M$ if all of its entries are zero. 

\begin{lemma}
Let $M$ be a symmetric matrix. Then $M$ is degenerate if and only if $D(M) = 0$.
\end{lemma}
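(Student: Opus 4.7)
The plan is to verify both directions by direct computation with the generic form of a $2\times 2$ symmetric matrix, showing that the two conditions reduce to the same pair of scalar equations on the matrix entries.

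First I would write $M = \begin{pmatrix} a & b \\ b & c \end{pmatrix}$ and compute the deviator explicitly: since $\tr(M) = a+c$, we get
\[
D(M) = M - \tfrac{1}{2}\tr(M)\,I = \begin{pmatrix} (a-c)/2 & b \\ b & -(a-c)/2 \end{pmatrix}.
\]
Hence $D(M) = 0$ if and only if $a = c$ and $b = 0$.

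Next I would characterize degeneracy via the characteristic polynomial. The eigenvalues of $M$ are the roots of $\lambda^2 - (a+c)\lambda + (ac - b^2)$, whose discriminant simplifies to $(a-c)^2 + 4b^2$. Since $M$ is degenerate precisely when this discriminant vanishes, and the expression is a sum of two squares of real numbers, degeneracy is equivalent to $a = c$ and $b = 0$.

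The two conditions coincide, which gives the equivalence. The main (mild) obstacle is merely keeping the bookkeeping of the deviator consistent with the definition in~\cref{sec:background-asymmetric}, namely that $D(M)$ is understood as $M - \tfrac{1}{2}\tr(M)\,I$ so that $\tr(D(M)) = 0$ by construction; once that is fixed, both directions are immediate from the two-entry computation above.
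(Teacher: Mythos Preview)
Your proof is correct. The paper takes a slightly different route: it invokes the decomposition in \cref{eqn:symmetric-part}, observes that the eigenvalues of a symmetric $2\times2$ matrix are $\gamma_d \pm \gamma_s$ (so degeneracy is equivalent to $\gamma_s = 0$), and notes that the deviator is exactly the $\gamma_s$-term of the decomposition, hence vanishes iff $\gamma_s = 0$. Your argument instead computes the characteristic-polynomial discriminant $(a-c)^2 + 4b^2$ directly and matches it entrywise with $D(M)$. Both are equally short; the paper's version ties the lemma into the $\gamma_d,\gamma_s,\theta$ framework used throughout the rest of the argument, while yours is self-contained and does not rely on the eigenvalue formula $\gamma_d \pm \gamma_s$ being quoted from elsewhere.
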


\begin{proof}
According to \cite{zhang2008asymmetric}, the eigenvalues of $M$ are $\gamma_d \pm \gamma_s$. Thus, $M$ is degenerate if and only if $\gamma_s = 0$. Following \cref{eqn:symmetric-part}, we have
\[ D(M) = \gamma_s \begin{pmatrix}\cos(\theta) & \quad\sin(\theta) \\ \sin(\theta) & -\cos(\theta) \end{pmatrix}.\]
Thus, $D(M) = 0$ if and only if $\gamma_s = 0$, which occurs if and only if $M$ is degenerate.
\end{proof}

\begin{lemma}
Let $M_1$ and $M_2$ be two symmetric matrices such that $D(M_1) \not = 0$ and $D(M_2) \not = 0$. Then $l_{1,2} = 0$ if and only if there exists some $k \in \R$ such that $D(M_1) = kD(M_2)$.
\label{lemma:zero-k}
\end{lemma}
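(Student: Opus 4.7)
The plan is to prove both implications by direct manipulation of the scalar equation $F_2\Delta_1 - F_1\Delta_2 = 0$, which is what $l_{1,2}=0$ means by the definition in \cref{eqn:lij}. Nothing deep is needed; the only wrinkle is to use the hypotheses $D(M_1),D(M_2)\neq 0$ to exhibit the scalar $k$ in the reverse direction.

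For the forward (easy) direction, I would assume $D(M_1)=kD(M_2)$, so $\Delta_1=k\Delta_2$ and $F_1=kF_2$, and then substitute into $F_2\Delta_1 - F_1\Delta_2$ to get $F_2(k\Delta_2)-(kF_2)\Delta_2 = 0$. Since the argument of $\sign$ vanishes, $l_{1,2}=0$.

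For the reverse direction, assume $F_2\Delta_1 = F_1\Delta_2$. The main step is to produce the scalar $k$. Since $D(M_2)\neq 0$, at least one of $\Delta_2, F_2$ is nonzero, which splits into two cases. In the case $\Delta_2\neq 0$, set $k := \Delta_1/\Delta_2$; then $\Delta_1 = k\Delta_2$ by construction, and multiplying through the hypothesis gives $F_1\Delta_2 = F_2\Delta_1 = F_2 k\Delta_2$, so cancelling $\Delta_2$ yields $F_1 = kF_2$. In the complementary case $\Delta_2 = 0$ (so necessarily $F_2\neq 0$), the hypothesis $F_2\Delta_1 = 0$ forces $\Delta_1 = 0$; then taking $k := F_1/F_2$ gives $F_1 = kF_2$ and trivially $\Delta_1 = 0 = k\Delta_2$. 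In either case $D(M_1)=kD(M_2)$ as required.

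There is essentially no obstacle — the argument is a textbook two-case proportionality check. The only subtlety worth flagging in the writeup is that the hypothesis $D(M_1)\neq 0$ is not actually needed for the reverse implication itself (the scalar $k$ exists regardless), but it guarantees $k\neq 0$, which keeps the statement symmetric in the two matrices and matches its later use in \cref{lemma:symmetric-zero}. I would note this in passing but not belabor it.
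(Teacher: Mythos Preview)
Your proof is correct and essentially identical to the paper's: both directions are handled by direct substitution, and the reverse direction splits into the same two cases ($\Delta_2\neq 0$ versus $\Delta_2=0$, $F_2\neq 0$) with the same choices of $k$. Your closing remark about the hypothesis $D(M_1)\neq 0$ being unnecessary for the reverse implication is a nice observation that the paper does not make explicit.
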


\begin{proof}
Denote the entries of $D(M_1)$ by
\[ D(M_1) = \begin{pmatrix} \Delta_1 & F_1 \\ F_1 & -\Delta_1\end{pmatrix} \]
Denote the entries of $D(M_2)$ similarly.

First, suppose that $D(M_1) = kD(M_2)$. Then $\Delta_1 = k\Delta_2$ and $F_1 = kF_2$. Thus, $F_2\Delta_1 - F_1\Delta_2 = F_2(k\Delta_2) - (kF_2)\Delta_2 = 0$, so $l_{1,2} = 0$.

Now suppose that $l_{1,2} = 0$. Then $F_2\Delta_1 - F_1\Delta_2 = 0$. We check two cases:

\noindent\underline{Case 1: $\Delta_2 \not = 0$}. Set $k = \frac{\Delta_1}{\Delta_2}$. Notice that $F_1\Delta_2 = F_2\Delta_1$ meaning that $F_1 = \frac{\Delta_1}{\Delta_2}F_2 = kF_2$. And clearly, $\Delta_1 = \frac{\Delta_1}{\Delta_2}\Delta_2 = k\Delta_2$. Thus, $D(M_1) = kD(M_2)$

\noindent\underline{Case 2: $\Delta_2 = 0$}. Since $D(M_2) \not = 0$ and $\Delta_2 = 0$, we must have $F_2 \not = 0$. Thus, set $k = \frac{F_1}{F_2}$ and proceed similarly to case 1.
\end{proof}

\begin{lemma}
Suppose that $x,y \in \R$ and $x',y' \in \R$ are respectively guesses for $x$ and $y$. Let $\xi$ be an error bound such that $|x - x'| \leq \xi$ and $|y - y'| \leq \xi$. Then
(a) $\left| \frac{x+y}{2} - \frac{x'+y'}{2} \right| \leq \xi$
(b) $\left| \sqrt{x^2+y^2} - \sqrt{(x')^2+(y')^2}\right| \leq \xi\sqrt{2}$
\label{lemma:deviation}
\end{lemma}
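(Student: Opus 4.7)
My plan is to prove each part separately, relying on elementary inequalities.

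For part (a), I will use linearity followed by the triangle inequality. Writing
\[
\frac{x+y}{2} - \frac{x'+y'}{2} = \frac{(x-x') + (y-y')}{2},
\]
I can then bound the absolute value by $\frac{|x-x'| + |y-y'|}{2}$, and apply the hypotheses $|x-x'| \le \xi$ and $|y-y'| \le \xi$ to conclude the bound of $\xi$. This is a routine computation and I expect no obstacles.

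For part (b), the natural approach is to view $(x,y)$ and $(x',y')$ as vectors in $\mathbb{R}^2$ equipped with the Euclidean norm, so that the quantities $\sqrt{x^2+y^2}$ and $\sqrt{(x')^2+(y')^2}$ are just $\|(x,y)\|$ and $\|(x',y')\|$. The reverse triangle inequality then yields
\[
\bigl|\|(x,y)\| - \|(x',y')\|\bigr| \le \|(x-x',\, y-y')\| = \sqrt{(x-x')^2 + (y-y')^2}.
\]
Applying the pointwise bounds $|x-x'| \le \xi$ and $|y-y'| \le \xi$ gives $\sqrt{(x-x')^2 + (y-y')^2} \le \sqrt{\xi^2 + \xi^2} = \xi\sqrt{2}$, as desired.

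Neither part poses a real obstacle; the only item worth noting is that the factor of $\sqrt{2}$ in (b) is tight, matching the worst case where $x-x'$ and $y-y'$ both saturate the bound with the same sign, so there is no way to improve the constant without additional hypotheses. This lemma is a pure stepping stone used downstream to establish \cref{lemma:error-bounds}, where the Euclidean-norm argument of (b) is applied to $(T_{11}-T_{22}, T_{12}+T_{21})$ to control $\gamma_s$.
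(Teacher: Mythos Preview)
Your proposal is correct and matches the paper's approach: the paper's proof reads simply ``This can be verified with simple algebra,'' and your use of the triangle inequality for (a) and the reverse triangle inequality for the Euclidean norm in (b) is exactly the kind of elementary verification the authors have in mind.
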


\begin{proof}
This can be verified with simple algebra.
\end{proof}

\lemmaSymmetricTheta*

\begin{proof}
Let $\gamma_{d_1}$, $\gamma_{s,1}$ and $\theta_1$ be the coefficients from decomposing $M_1$ according to \cref{eqn:symmetric-part}. Let $\gamma_{d,2}$, $\gamma_{s,2}$ and $\theta_2$ be the coefficients from decomposing $M_2$. The deviator of $M_1$ is:

\[ D(M_1) = \gamma_{s,1} \begin{pmatrix} \cos(\theta_1) & \quad\sin(\theta_1) \\ \sin(\theta_1) & -\cos(\theta_1) \end{pmatrix} \]

Let $\Delta_1$ and $F_1$ be the entries of $D(M_1)$. Then $\Delta_1 = \gamma_{s,1}\cos(\theta_1)$ and $F_1 = \gamma_{d,1}\sin(\theta_1)$. Similarly, we have $\Delta_2 = \gamma_{s,2}\cos(\theta_2)$ and $F_2 = \gamma_{s,2}\sin(\theta_2)$.

Now observe that:
\begin{align*}
l_{1,2} &= \Delta_2F_1 - \Delta_1F_2 \\
&= (\gamma_{s,2}\cos(\theta_2))(\gamma_{s,1}\sin(\theta_1)) - (\gamma_{s,1}\cos(\theta_1))(\gamma_{s,2}\sin(\theta_2)) \\
&= \gamma_{s,1}\gamma_{s,2}(\cos(\theta_2)\sin(\theta_1) - \cos(\theta_1)\sin(\theta_2))
\end{align*}
By definition, $\gamma_{s,1}$ and $\gamma_{s,2}$ are both nonnegative. By assumption, both deviators are nonzero, meaning that $\gamma_{s,1} \not = 0$ and $\gamma_{s,2} \not = 0$. Thus, the sign of $l_{1,2}$ only depends on $\theta_1$ and $\theta_2$.
\end{proof}

\lemmaSymmetricZero*

\begin{proof}
Denote the cell as $\sigma$. We first show (a). Suppose that there exists some $x \in \sigma$ with $x \not = x_1$ but $D(x) = 0$. There exists $t_1,t_2,t_3 \in [0,1]$ such that $x = t_1f(x_1)+t_2f(x_2)+t_3f(x_3)$. Because the deviator is a linear operator, $D(f(x)) = t_1D(f(x_1)) + t_2D(f(x_2)) + t_3D(f(x_3))$. 

Since $D(f(x)) = 0$, and $D(f(x_1)) = 0$, but $D(f(x_2)) \not = 0$, $x$ cannot lie on the edge between $x_1$ and $x_2$. Similarly, $x$ cannot lie on the edge between $x_1$ and $x_3$. Thus, $t_2 \not = 0$ and $t_3 \not = 0$.

Because $D(f(x)) = t_2D(f(x_2)) + t_3D(f(x_3)) = 0$, it follows that $D(f(x_2)) = \frac{-t_3}{t_2}D(f(x_3))$. So there exists $k$ such that $D(f(x_2)) = kD(f(x_3))$. By \cref{lemma:zero-k}, this implies that $l_{2,3} = 0$, proving (a) (i).

For (a) (ii), \cref{lemma:zero-k} tells us that if $l_{2,3} = 0$ there exists $k$ such that $D(f(x_2)) = kD(f(x_3))$. We just showed that if there exists some internal point $x \in C$ with $D(x) = 0$, then $D(x_2) = kD(x_3)$ with $k < 0$. 

To finish (a) (ii), we must show that if $k < 0$ then there exists some $y \not = x_1$ such that $D(f(y)) = 0$. Suppose that $k < 0$ and let $t_1 = 0$, let $t_2 = \frac{1}{1-k}$ and $t_3 = \frac{-k}{1-k}$. Then $t_1,t_2,t_3 \in (0,1)$ and $t_1+t_2+t_3 = 1$. Let $y = t_1x_1 + t_2x_2 + t_3x_3$, so $y \in \sigma$. Similar to before, $D(f(x)) = t_2D(f(x_2)) + t_3D(f(x_3))$.

Notice that $t_2D(f(x_2)) + t_3D(f(x_3)) = \frac{1}{1-k}D(f(x_2)) + \frac{-k}{1-k}D(f(x_3)) = \frac{1}{1-k}(D(f(x_2)) - kD(f(x_3))) = 0$, so $y$ is degenerate. This finishes the proof for (a) (ii).

To prove (b), note that since the two degenerate vertices have a deviator equal to zero and the deviator is a linear operator, all tensors interpolated between them will also be degenerate. 
\end{proof}

\lemmaErrorBounds*

\begin{proof}
This follows from \cref{lemma:deviation}.
\end{proof}

\lemmaSignSwap*

\begin{proof}
Let $x > 0$ and $x' < 0$. Then we have $|x - x'| = x - x'$. Thus, $x - x' < \xi$, so $x' + \xi > x > 0$.

Since $(x + \xi) > x$, it follows that $|x - (x' + \xi)| = (x' + \xi) - x = \xi + (x' - x)$. Because $x' - x < 0$, it follows that $\xi + (x' - x) < \xi$. Putting these inequalities together yields $|x - (x' + \xi)| < \xi$.

The proof for the case where $x < 0$ is similar.
\end{proof}

\lemmaMagnitudeSwap*

\begin{proof}
We show that $|x - y''| \leq \xi$. The proof that $|y - x''| \leq \xi$ is similar. Recall that $||x|-|x'|| \leq |x-x'|$ so $||x|-|x'|| \leq \xi$. Similarly, $||y|-|y'|| \leq \xi$. We check two cases for $|x|$:

\noindent\underline{Case 1: $|y'| \leq |x|$:} Then $|x'| < |y'| \leq |x|$. Since $||x|-|x'|| \leq \xi$, it must follow that $||x|-|y'|| \leq \xi$.

\noindent\underline{Case 2: $|y'| > |x|$:} Then $|y| < |x| < |y'|$. Since $||y|-|y'|| \leq \xi$, it must follow that $||x|-|y'|| \leq \xi$.

In either case, it follows that $||x|-|y'|| \leq \xi$. If $x > 0$, then $x = |x|$ and  $y'' = |y'|$, so $|x - y''| \leq \xi$. If $x < 0$, the proof is similar.
\end{proof}
\section{Correctness Proofs of Topological Invariant}
\label{appendix:partition-correctness}

In this section, we build toward the proofs of \cref{lemma:eigenvector-topology} and \cref{lemma:eigenvalue-topology}. We provide definitions and assumptions in \cref{appendix:partition-correctness-definitions}. We prove preliminary lemmas in \cref{appendix:partition-correctness-supporting-lemmas}. We prove \cref{lemma:eigenvector-topology} and \cref{lemma:eigenvalue-topology} in \cref{appendix:partition-correctness-main-results}. Finally, we prove how we handle edge cases in \cref{appendix:partition-correctness-edge-cases}. For the eigenvector partition, we ignore degenerate points of the dual eigenvector field, and only focus on partition regions.

\subsection{Definitions and Assumptions}
\label{appendix:partition-correctness-definitions}
In this section, we specify all of the notations and assumptions that we will be using for the remainder of the section.

\para{Function Definitions.} Let $\sigma \subset \R^2$ be a 2D triangular cell, and let $f:\sigma \rightarrow \T$ be a PL tensor field on $\sigma$. 
With an abuse of notation, define the function $\gamma_d:\sigma \rightarrow \R$ as the function that maps each point $p \in \sigma$ to the value of $\gamma_d$ at $p$. Define similar functions for $\gamma_r$, $\gamma_s$, and $\theta$. 

Let $c:\sigma \rightarrow \R$ and $s:\sigma \rightarrow \R$ be defined by
\[ c(p) = \gamma_d(p)\cos(\theta(p)) \quad\quad s(p) = \gamma_d(p)\sin(\theta(p)) \]
Then notice that 
\[ \gamma_s(p) = \sqrt{c(p)^2 + s(p)^2} \]
Since $f$ is PL, all of its coordinate functions are affine in $\sigma \subset \R^2$. Thus, $f$ can be extended to all of $\R^2$. All of the other functions that we have just defined can thus also be extended to $\R^2$.

\para{Subsets and Regions.} We first define some subsets of $\sigma$.

\noindent \underline{Eigenvalue partition:} Let
\begin{align*}
\DP &= \{p \in \sigma : \gamma_d > |\gamma_r| \text{ and } \gamma_d > \gamma_s\} \\ 
\DN &= \{p \in \sigma : -\gamma_d > |\gamma_r| \text{ and } -\gamma_d > \gamma_s \}  
\end{align*}
Define $\RP$, and $\RN$ similarly. Let $\SA$ be the region where $\gamma_s > |\gamma_d|$ and $\gamma_s > |\gamma_r|$.

\noindent \underline{Eigenvector partition:} Define $\RRP$ as the subset of $\sigma$ where $\gamma_r > \gamma_s$ and $\RRN$ as the subset where $-\gamma_r > \gamma_s$. Define $\SRP$ as the subset where $\gamma_s > \gamma_r > 0$ and $\SRN$ as the subset where $\gamma_s > -\gamma_r > 0$.

\noindent \underline{Other subsets:} While not part of either partition, let $\DDP$ be the subset of $\sigma$ where $\gamma_d > \gamma_s$ and $\DDN$ be the subset where $-\gamma_d > \gamma_s$. 
Notice that, by definition, 
\begin{itemize}[noitemsep]
\item $\DP \subset \DDP$
\item $\DN \subset \DDN$
\item $\RP \subset \RRP$
\item $\RN \subset \RRN$.
\end{itemize}
We define a \textit{region} of $\sigma$ as a connected subset of $\sigma$. We say that a region $R$ is of \textit{type} $\DP$ if $R \subset \DP$. Extend this defintition of type to the other subsets that we have defined.
We say that two regions \textit{border} each other if their boundaries intersect.

\para{Topologically Significant Intersections.} If $e$ is an edge of $\sigma$, and $p$ lies on the intersection of $e$ with the curve $\gamma_d = \gamma_s$, say that $p$ is \textit{topologically significant} if it lies on the boundary between $\DP$ and $\SA$. Define topological significance analogously for $-\gamma_d = \gamma_s$, $\gamma_r = \gamma_s$, and $-\gamma_r = \gamma_s$.

We say that an intersection between $\gamma_d^2 = \gamma_s^2$ and $e$ is topologically significant if it separates a region of type $\SA$ from a region of either type $\DP$ or $\DN$. Define topologically significant intersections analogously for $\gamma_r^2 = \gamma_s^2$.

\para{Edge Cases.} For now, we make the following assumptions in order to avoid edge cases. Some of these cases are avoided by {\toolname} using a symbolic perturbation (see~\cref{appendix:edge-cases}), and thus we can assume that they never occur. Other cases can occur. In \cref{appendix:partition-correctness-edge-cases}, we demonstrate that when such cases occur that our strategy still works. We assume that:
\begin{itemize}[noitemsep]
\item[(i)] $\gamma_d^2 = \gamma_s^2$ and $\gamma_r^2 = \gamma_s^2$ only intersect each other and each edge of $\sigma$ transversally. 
\item[(ii)] No junction points occur on the edges of $\sigma$. 
\item[(iii)] No two of the functions $|\gamma_d|$, $|\gamma_r|$, or $\gamma_s$ are exactly equal on all of $\R^2$.
\item[(iv)] The conic sections $\gamma_d^2 = \gamma_s^2$ and $\gamma_r^2 = \gamma_s^2$ are either empty or have infinitely many points.
\item[(v)] The curve $\gamma_d = \gamma_s$ will border one region of type $\DDP$ and another where $\gamma_s > |\gamma_d|$. Make similar assumptions about $-\gamma_d = \gamma_s$, $\gamma_r = \gamma_s$, and $-\gamma_r = \gamma_s$.
\item[(vi)] $|\gamma_d|$, $|\gamma_r|$ and $\gamma_s$ are never equal on a vertex of $\sigma$.
\item[(vii)] There are no vertices of $\sigma$ where $\gamma_r = 0$.
\item[(viii)] There are no points in $\sigma$ where $\gamma_d = \gamma_s = 0$ or $\gamma_r = \gamma_s = 0$.
\end{itemize}

\begin{figure}
\begin{overpic}[width=\linewidth]{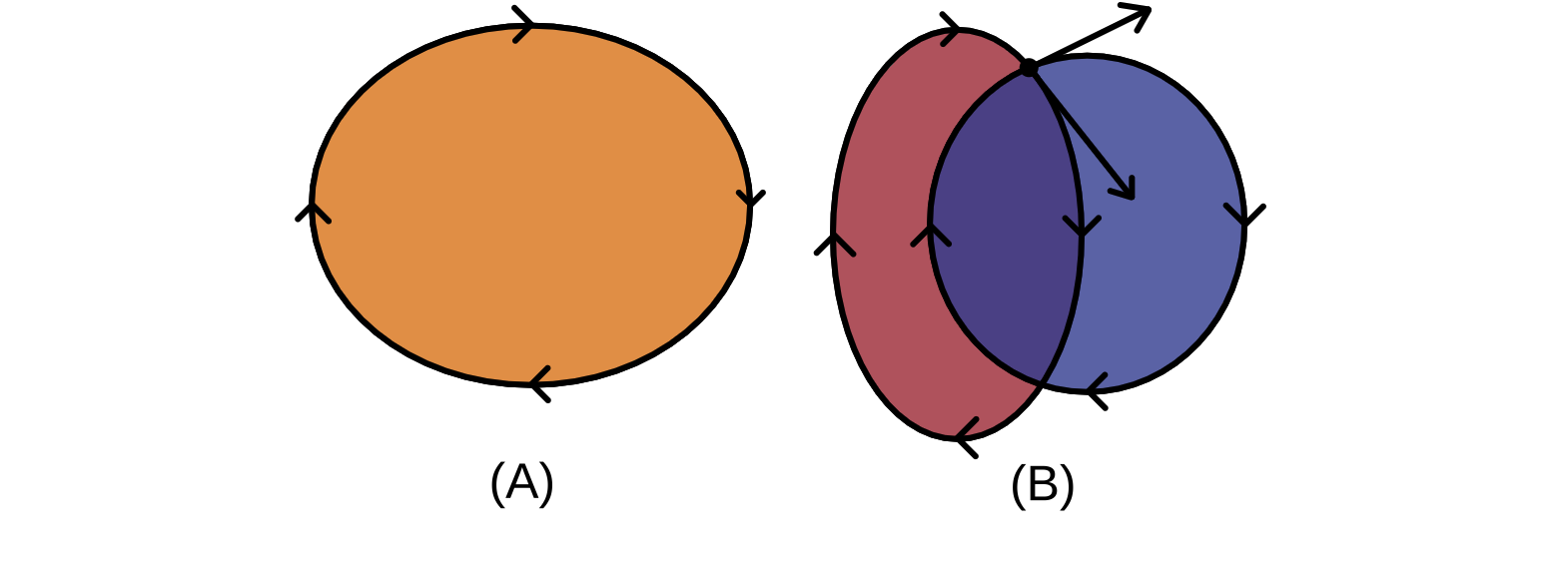}
\put(29, 22){interior}
\put(66,34){$p$}
\end{overpic}
\vspace{-6mm}
\caption{(A) A connected region with the interior shaded. Arrows denote clockwise orientation. (B) Two connected regions overlap. At point $p$, they intersect. Here, the curve bounding the red region enters the blue region. The curve bounding the blue region leaves the red region.}
\label{fig:proofs-orientation}
\vspace{-6mm}
\end{figure}

\para{Clockwise Orientation.} Suppose that $c$ is a closed curve that divides the plane into two regions. Suppose that one of them is labeled as the ``interior.'' Then we define \textit{clockwise orientation} or $c$ to be such that, when traveling along $c$, the ``interior'' of $c$ is on the right of $c$. We demonstrate this orientation in \cref{fig:proofs-orientation}(A).

If a curve $c$ divides the plane into two regions $R_1$ and $R_2$, and $c$ is said to be clockwise oriented, then label one of $R_1$ or $R_2$ to be the interior of $c$ consistent with the previous definition.

Suppose that $c_1$ and $c_2$ are two clockwise oriented curves. Let $p$ be a point where they intersect transversally. We say that $c_1$ \textit{enters} the interior of $c_2$ at $p$ if the oriented vector tangent to $c_1$ at $p$ points into the interior of $c_2$. In \cref{fig:proofs-orientation}(B), we illustrate the case where one curve enters another. At point $p$, the curve bounding the red region enters the blue region, while the curve bounding the blue region leaves the red region.

We assume that $\gamma_r = \gamma_s$ separates a region of type $\DDP$ from a region where $\gamma_s > |\gamma_d|$. In this case, we denote that $\DDP$ is the interior of $\gamma_d = \gamma_s$, and orient $\gamma_d = \gamma_s$ using clockwise orientation. We proceed similarly for the curves $-\gamma_d = \gamma_s$, $\gamma_r = \gamma_s$, and $-\gamma_r = \gamma_s$.
\subsection{Supporting Lemmas}
\label{appendix:partition-correctness-supporting-lemmas}
In order to prove \cref{lemma:eigenvector-topology} and \cref{lemma:eigenvalue-topology}, we first prove some intermediate results.
\begin{lemma}
The functions $\gamma_d$ and $\gamma_r$, $c$ and $s$ are affine over $\sigma \subset \R^2$.
\label{lemma:its-all-PL}
\end{lemma}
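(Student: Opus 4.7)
The plan is to reduce each of the four functions to an affine combination of the tensor entries $T_{11}, T_{12}, T_{21}, T_{22}$, which are themselves affine on $\sigma$ by the PL assumption on $f$ (since affine interpolation is applied to each entry independently over the triangular cell, using barycentric coordinates). Since affine combinations of affine functions are affine, this immediately gives the conclusion.

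For $\gamma_d$ and $\gamma_r$, I would simply invoke the closed-form expressions from the decomposition recalled in~\cref{sec:background-asymmetric}, namely $\gamma_d = (T_{11}+T_{22})/2$ and $\gamma_r = (T_{21}-T_{12})/2$. Both are manifestly affine in the tensor entries, and this step is one line of computation each.

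The only real subtlety is $c$ and $s$, whose defining formulas involve $\cos(\theta)$ and $\sin(\theta)$, and $\theta$ is not affine. I would resolve this by unwinding the definition of $\theta$: $\theta$ is the angular component of the vector $(T_{11}-T_{22},\, T_{12}+T_{21})$, and the magnitude of that vector is exactly $2\gamma_s$. Consequently
\[
\gamma_s(p)\cos(\theta(p)) \;=\; \tfrac{1}{2}\bigl(T_{11}(p)-T_{22}(p)\bigr), \qquad \gamma_s(p)\sin(\theta(p)) \;=\; \tfrac{1}{2}\bigl(T_{12}(p)+T_{21}(p)\bigr).
\]
Substituting these identities into the definitions of $c$ and $s$ expresses each as an affine combination of affine tensor entries, finishing the proof.

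I do not anticipate any genuine obstacle. The only point that might read as counterintuitive is that the product $\gamma_s\cos\theta$ is affine even though neither factor is individually affine (indeed $\gamma_s$ involves a square root), and that observation is essentially the entire content of the lemma. The proof is thus a short verification that hinges on recognizing the pair $(\gamma_s\cos\theta, \gamma_s\sin\theta)$ as the half-vector from which $\theta$ was defined in the first place.
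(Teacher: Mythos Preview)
Your proposal is correct and matches the paper's proof essentially line for line: the paper simply writes down the four identities $\gamma_d = \tfrac{1}{2}(T_{11}+T_{22})$, $\gamma_r = \tfrac{1}{2}(T_{21}-T_{12})$, $c = \tfrac{1}{2}(T_{11}-T_{22})$, $s = \tfrac{1}{2}(T_{12}+T_{21})$ and observes that each $T_{ij}$ is affine. Your extra sentence explaining \emph{why} $\gamma_s\cos\theta$ and $\gamma_s\sin\theta$ collapse to these affine expressions (by recalling that $\theta$ is the angle of the vector $(T_{11}-T_{22},\,T_{12}+T_{21})$) is a useful clarification the paper leaves implicit.
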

\begin{proof}
For $p \in \sigma$, recall that
\begin{itemize}[noitemsep]
\item $\gamma_d(p) = \frac{1}{2}(f(p)_{1,1}+f(p)_{2,2})$
\item $\gamma_r(p) = \frac{1}{2}(f(p)_{2,1}-f(p)_{1,2})$
\item $c(p) = \frac{1}{2}(f(p)_{1,1}-f(p)_{2,2})$
\item $s(p) = \frac{1}{2}(f(p)_{1,2}+f(p)_{2,1})$
\end{itemize}
Because each $f_{i,j}$ is affine, so are the functions above.
\end{proof}

\begin{lemma}
The function $\gamma_s$ is convex.
\label{lemma:s-is-convex}
\end{lemma}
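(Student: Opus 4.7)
The plan is to express $\gamma_s$ as the composition of a convex function on $\mathbb{R}^2$ with an affine map from $\sigma$ to $\mathbb{R}^2$, and then invoke the standard fact that such a composition is convex.

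First I would recall from the preceding \cref{lemma:its-all-PL} that the two functions $c:\sigma\to\mathbb{R}$ and $s:\sigma\to\mathbb{R}$ are affine, and that by definition
\[
\gamma_s(p)=\sqrt{c(p)^2+s(p)^2}.
\]
Let $A:\sigma\to\mathbb{R}^2$ be the affine map $A(p)=(c(p),s(p))$, and let $N:\mathbb{R}^2\to\mathbb{R}$ be the Euclidean norm $N(u,v)=\sqrt{u^2+v^2}$. Then $\gamma_s = N\circ A$.

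Next I would note that $N$ is convex: for any $x,y\in\mathbb{R}^2$ and $t\in[0,1]$, the triangle inequality and positive homogeneity yield $N(tx+(1-t)y)\le N(tx)+N((1-t)y)=tN(x)+(1-t)N(y)$. The map $A$ is affine, so for any $p,q\in\sigma$ and $t\in[0,1]$ we have $A(tp+(1-t)q)=tA(p)+(1-t)A(q)$.

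Finally I would combine these: for $p,q\in\sigma$ and $t\in[0,1]$,
\[
\gamma_s(tp+(1-t)q)=N\bigl(tA(p)+(1-t)A(q)\bigr)\le tN(A(p))+(1-t)N(A(q))=t\gamma_s(p)+(1-t)\gamma_s(q),
\]
so $\gamma_s$ is convex on $\sigma$. There is no real obstacle here; the only thing to be careful about is citing \cref{lemma:its-all-PL} to justify that $c$ and $s$ are affine (not merely piecewise linear across the whole mesh), which is why the argument is restricted to a single cell $\sigma$.
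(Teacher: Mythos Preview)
Your proof is correct and follows essentially the same approach as the paper: both exploit that $\gamma_s$ is the Euclidean norm of the affine map $p\mapsto(c(p),s(p))$, with affinity supplied by \cref{lemma:its-all-PL}. The only difference is presentational---you invoke the triangle inequality and the convex-with-affine composition fact directly, whereas the paper expands $\gamma_s(tp_1+(1-t)p_2)^2$ by hand and bounds the cross term $c(p_1)c(p_2)+s(p_1)s(p_2)\le\gamma_s(p_1)\gamma_s(p_2)$ via a trigonometric identity (in effect re-deriving Cauchy--Schwarz for this special case).
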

\begin{proof}
Fix $p_1,p_2 \in \sigma$. Denote $\theta_1 := \theta(p_1)$ and $\theta_2 := \theta(p_2)$. Notice that
\begin{align*}
&\quad c(p_1)c(p_2)+s(p_1)s(p_2) \\
&= \gamma_s(p_1)\cos(\theta_1)\gamma_s(p_2)\cos(\theta_2)+\gamma_s(p_1)\sin(\theta_1)\gamma_s(p_2)\sin(\theta_2) \\
&= \gamma_s(p_1)\gamma_s(p_2)(\cos(\theta_1)\cos(\theta_2)+\sin(\theta_1)\sin(\theta_2)) \\
&= \gamma_s(p_1)\gamma_s(p_2)\cos(\theta_1-\theta_2) \\
&\leq \gamma_s(p_1)\gamma_s(p_2)
\end{align*}
Thus, $c(p_1)c(p_2) + s(p_1)s(p_2) \leq \gamma_s(p_1)\gamma_s(p_2)$. Using this fact, let $t \in [0,1]$ and observe that
\begin{align*}
& \quad \gamma_s(tp_1 + (1-t)p_2)^2 \\
&= c(tp_1+(1-t)p_2)^2 + s(tp_1+(1-t)p_2)^2 \\
&= (tc(p_1)+(1-t)c(p_2))^2 + (ts(p_1)+(1-t)s(p_2))^2 \\
&= t^2(c(p_1)^2+s(p_1)^2) + 2t(1-t)(c(p_1)c(p_2)+s(p_1)s(p_2)) \\
&\quad + (1-t)^2(c(p_2)^2+s(p_2)^2) \\
&= t^2\gamma_s(p_1)^2 + 2t(1-t)(c(p_1)c(p_2)+s(p_1)s(p_2)) \\
&\quad+ (1-t)^2\gamma_s(p_2)^2 \\
&\leq t^2\gamma_s(p_1)^2 + 2t(1-t)\gamma_s(p_1)\gamma_s(p_2) + (1-t)^2\gamma_s(p_2) \\
&= (t\gamma_s(p_1)+(1-t)\gamma_s(p_2))^2
\end{align*}
Thus, $\gamma_s(tp_1+(1-t)p_2)^2 \leq (t\gamma_s(p_1)+(1-t)\gamma_s(p_2))^2$. Because $\gamma_s$ is nonnegative, this implies that $\gamma_s(tp_1+(1-t)p_2) \leq t\gamma_s(p_1)+(1-t)\gamma_s(p_2)$. Thus, $\gamma_s$ is convex.
\end{proof}

\begin{lemma}
$\DP$, $\DN$, $\RP$, $\RN$, $\RRP$, $\RRN$, $\DDP$, and $\DDN$ are convex.
\label{lemma:regions-are-convex}
\end{lemma}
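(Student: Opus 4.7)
My plan is to exhibit each of the eight sets as an intersection of superlevel sets of concave functions, and then invoke the standard fact that such superlevel sets are convex (and that intersections of convex sets are convex). The ingredients are already in place: by the preceding lemma, $\gamma_d$ and $\gamma_r$ are affine on $\sigma$ (hence both convex and concave), and by the other preceding lemma, $\gamma_s$ is convex on $\sigma$ (so $-\gamma_s$ is concave).

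First I would handle the ``big'' regions $\DDP, \DDN, \RRP, \RRN$. For example,
\[
\DDP = \{p \in \sigma : (\gamma_d - \gamma_s)(p) > 0\}.
\]
Since $\gamma_d$ is affine and $-\gamma_s$ is concave, the sum $\gamma_d - \gamma_s$ is concave, so for any $p_1,p_2 \in \DDP$ and $t \in [0,1]$,
\[
(\gamma_d-\gamma_s)(tp_1+(1-t)p_2) \;\geq\; t(\gamma_d-\gamma_s)(p_1) + (1-t)(\gamma_d-\gamma_s)(p_2) \;>\; 0,
\]
proving $\DDP$ is convex. The arguments for $\DDN$, $\RRP$, $\RRN$ are identical, using $-\gamma_d-\gamma_s$, $\gamma_r-\gamma_s$, and $-\gamma_r-\gamma_s$ respectively.

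Next I would handle the ``small'' regions $\DP, \DN, \RP, \RN$. Writing $\gamma_d > |\gamma_r|$ as the conjunction $\gamma_d > \gamma_r$ and $\gamma_d > -\gamma_r$, I get
\[
\DP \;=\; \DDP \;\cap\; \{p : (\gamma_d - \gamma_r)(p) > 0\} \;\cap\; \{p : (\gamma_d + \gamma_r)(p) > 0\}.
\]
The first factor is convex by the previous paragraph, and the other two factors are open half-planes (superlevel sets of affine functions), hence convex. Thus $\DP$ is an intersection of three convex sets and is itself convex. The proofs for $\DN$, $\RP$, $\RN$ proceed in precisely the same way, swapping signs and the roles of $\gamma_d$ and $\gamma_r$ as needed.

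There is no real obstacle here beyond being systematic about the decomposition; the only substantive input is the convexity of $\gamma_s$ (already proved) plus the affineness of $\gamma_d$ and $\gamma_r$. If anything, the mildly delicate point is remembering that ``$\gamma_d > |\gamma_r|$'' must be rewritten as a conjunction of two affine strict inequalities so that it presents as an intersection of half-planes rather than a single nonlinear constraint; once that observation is made, the rest is bookkeeping.
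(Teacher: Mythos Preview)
Your proposal is correct and is essentially the same argument as the paper's: both use that $\gamma_d,\gamma_r$ are affine and $\gamma_s$ is convex, then verify convexity directly from the definition. The only cosmetic difference is that the paper treats $|\gamma_r|$ as a single convex function (so $\gamma_d>|\gamma_r|$ is preserved under convex combinations in one step), whereas you split $\gamma_d>|\gamma_r|$ into two affine half-plane constraints; both are equally valid.
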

\begin{proof}
We provide a proof for $\DP$. The other region types follow similarly. Let $p_1, p_2 \in \DP$. Let $p_3$ lie on the segment between $p_1$ and $p_2$. Since $\gamma_d$ is affine, and $|\gamma_r|$ and $\gamma_s$ are convex, it must hold that $\gamma_d(p) > |\gamma_r(p)|$ and $\gamma_d(p) > \gamma_s(p)$. Thus, $p \in \DP$, so $\DP$ is convex.
\end{proof}
\begin{corollary}
Each of $\DP$, $\DN$, $\RP$, $\RN$, $\RRP$, $\RRN$, $\DDP$, and $\DDN$ have only one connected component each.
\label{corollary:one-connected-component}
\end{corollary}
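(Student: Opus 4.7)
The plan is to derive this corollary directly from Lemma \ref{lemma:regions-are-convex}, using the standard fact that any convex subset of $\mathbb{R}^2$ is connected. Since each of $\DP$, $\DN$, $\RP$, $\RN$, $\RRP$, $\RRN$, $\DDP$, and $\DDN$ has already been shown to be convex, essentially no work remains beyond invoking this topological fact.

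First I would recall the standard argument: if $R \subseteq \mathbb{R}^2$ is convex and nonempty, then for any two points $p_1, p_2 \in R$ the line segment $\{tp_1 + (1-t)p_2 : t \in [0,1]\}$ lies entirely in $R$ by convexity, which gives a continuous path in $R$ from $p_1$ to $p_2$. Hence $R$ is path-connected, and in particular connected, so it consists of exactly one connected component. If $R$ is empty, then trivially it has no more than one connected component.

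Next, I would apply Lemma \ref{lemma:regions-are-convex} to each of the eight named subsets in turn. That lemma guarantees convexity of each, so the argument above yields at most one connected component for each. This is precisely the statement of the corollary.

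I do not anticipate any obstacle: the corollary is a direct logical consequence of the preceding lemma together with the elementary fact that convex sets are connected. The only minor subtlety is the case of empty regions (e.g., when $\sigma$ contains no points of type $\DP$), but this is handled transparently by phrasing the conclusion as ``at most one connected component,'' which matches the corollary's claim.
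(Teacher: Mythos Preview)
Your proposal is correct and matches the paper's approach exactly: the corollary is stated immediately after Lemma~\ref{lemma:regions-are-convex} with no separate proof, so the intended argument is precisely that convexity implies (path-)connectedness and hence at most one connected component. There is nothing to add.
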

Because each of the subsets in \cref{corollary:one-connected-component} have one connected component, we may refer to them hereafter as regions.

\begin{lemma}
The following pairs of regions cannot border each other:
\begin{itemize}[noitemsep]
\item[(i)] $\DDP$ cannot border $\DDN$
\item[(ii)] $\RRP$ cannot border $\RRN$
\item[(iii)] $\DP$ cannot border $\DN$
\item[(iv)] $\RP$ cannot border $\RN$
\end{itemize}
\label{lemma:illegal-intersections}
\end{lemma}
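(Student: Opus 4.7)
The plan is to argue by contradiction: for each of the four cases, suppose there is a point $p$ lying on the shared boundary of the two given regions, and use the defining inequalities of each region to show that $\gamma_d(p)$, $\gamma_r(p)$, and $\gamma_s(p)$ must simultaneously vanish, which violates assumption (viii).

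For parts (i) and (ii) the argument is immediate. If $p \in \partial \DDP \cap \partial \DDN$, continuity of $\gamma_d$ and $\gamma_s$ forces $\gamma_d(p) = \gamma_s(p)$ and $-\gamma_d(p) = \gamma_s(p)$; adding these yields $\gamma_s(p) = 0$, and hence $\gamma_d(p) = 0$ as well, contradicting the ``no point with $\gamma_d = \gamma_s = 0$'' clause of (viii). The $\RRP$/$\RRN$ case is identical after replacing $\gamma_d$ with $\gamma_r$ and invoking the ``no point with $\gamma_r = \gamma_s = 0$'' clause.

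Parts (iii) and (iv) are slightly more subtle because the boundaries of $\DP$ and $\DN$ include pieces where $\gamma_d = |\gamma_r|$, not just $\gamma_d = \pm \gamma_s$. I plan to handle (iii) as follows: if $p \in \partial \DP \cap \partial \DN$, then by continuity $p$ must satisfy the closed versions of both region definitions, i.e.\ $\gamma_d(p) \geq |\gamma_r(p)|$, $\gamma_d(p) \geq \gamma_s(p)$, and simultaneously $-\gamma_d(p) \geq |\gamma_r(p)|$, $-\gamma_d(p) \geq \gamma_s(p)$. Combining the two inequalities $\gamma_d(p) \geq |\gamma_r(p)| \geq 0$ and $-\gamma_d(p) \geq |\gamma_r(p)| \geq 0$ pins $\gamma_d(p) = 0$, which then forces $\gamma_r(p) = 0$ and $\gamma_s(p) \leq 0$; since $\gamma_s \geq 0$ by definition, $\gamma_s(p) = 0$ as well, contradicting (viii). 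Part (iv) follows the same chain of inequalities with the roles of $\gamma_d$ and $\gamma_r$ swapped.

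The only subtlety worth being careful about is the transition from ``$p$ lies on the boundary of an open region $\DP$'' to ``$p$ satisfies the corresponding weak inequalities.'' This is standard because $\DP$ is defined by strict inequalities of continuous functions, so $\overline{\DP} \subseteq \{\gamma_d \geq |\gamma_r|, \gamma_d \geq \gamma_s\}$, and boundary points lie in this closed set. Once that is spelled out, the four cases reduce to the short linear-inequality arguments above, and no step requires more than elementary manipulation.
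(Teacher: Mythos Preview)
Your proof is correct and follows essentially the same contradiction strategy as the paper: force $\gamma_d=\gamma_s=0$ (or $\gamma_r=\gamma_s=0$) at a shared boundary point and invoke assumption~(viii). The only difference is that the paper handles (iii) and (iv) by the one-line observation $\DP\subset\DDP$, $\DN\subset\DDN$ (resp.\ $\RP\subset\RRP$, $\RN\subset\RRN$), so they follow immediately from (i) and (ii), whereas you re-run the inequality argument directly.
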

\begin{proof}
We show (i) and (iii), where (ii) and (iv) follow similarly. Suppose, to the contrary, that $\DDP$ bordered $\DDN$. Let $x$ be a point that separates them. Since $\gamma_d > 0$ one one side of $x$, and $\gamma_d < 0$ on the other, it must follow that $\gamma_d(x) = 0$. Since $|\gamma_d| > \gamma_s$ for all points nearby to $x$, it must follow that $|\gamma_d(x)| \geq \gamma_s(x)$, implying that $\gamma_d(x) = \gamma_s(x) = 0$. However, we previously assumed that this is impossible. This gives (i).

Because (i) is true and $\DP \subset \DDP$ and $\DN \subset \DDN$, (iii) must follow.
\end{proof}

\begin{lemma}
$\RRP$ cannot border $\SRN$ and $\RRN$ cannot border $\SRP$.
\label{lemma:illegal-intersections-2}
\end{lemma}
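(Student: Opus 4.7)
The plan is to argue by contradiction using the defining inequalities together with continuity of $\gamma_r$ and $\gamma_s$. Suppose $\RRP$ and $\SRN$ shared a boundary point $x \in \sigma$. I would approach $x$ from each side and track what constraints the definitions of $\RRP$ and $\SRN$ force on the values $\gamma_r(x)$ and $\gamma_s(x)$ in the limit.

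On the $\RRP$ side we have $\gamma_r > \gamma_s \geq 0$, so in particular $\gamma_r > 0$ on a sequence approaching $x$; on the $\SRN$ side we have $\gamma_s > -\gamma_r > 0$, so $\gamma_r < 0$ on a sequence approaching $x$. By continuity of $\gamma_r$ (which is affine by \cref{lemma:its-all-PL}), this forces $\gamma_r(x) = 0$. Next, on the $\RRP$ side, $\gamma_s \leq \gamma_r \to 0$, together with $\gamma_s \geq 0$, pinches $\gamma_s(x) = 0$ by continuity of $\gamma_s$. Thus $x$ is a point with $\gamma_r(x) = \gamma_s(x) = 0$, contradicting assumption (viii), which explicitly rules out such points in $\sigma$.

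The case of $\RRN$ bordering $\SRP$ is symmetric: swap the roles of $\gamma_r > 0$ and $\gamma_r < 0$ in the argument above, conclude again that $\gamma_r(x) = 0$ and then that $\gamma_s(x) = 0$ (using that $\gamma_s \leq -\gamma_r$ on the $\RRN$ side forces $\gamma_s(x) \leq 0$, combined with $\gamma_s \geq 0$). Invoke assumption (viii) to conclude the contradiction.

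I do not expect any real obstacle here: the lemma is essentially a restatement of the defining inequalities plus the non-degeneracy hypothesis (viii). The only subtlety worth spelling out is that one must use continuity from \emph{both} sides to pin down the values at $x$, since neither side alone forces $\gamma_s(x) = 0$; it is the $\RRP$ side that pinches $\gamma_s$ down through $\gamma_r$, once one already knows $\gamma_r(x) = 0$ from the sign disagreement between the two sides.
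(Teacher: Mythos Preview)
Your proposal is correct and follows essentially the same approach as the paper: the paper simply says the proof is similar to that of \cref{lemma:illegal-intersections}, which is precisely the contradiction argument you give (sign change of $\gamma_r$ across the boundary forces $\gamma_r(x)=0$, then the inequality $\gamma_s<\gamma_r$ on the $\RRP$ side pins $\gamma_s(x)=0$, contradicting assumption (viii)).
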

\begin{proof}
The proof is similar to that of \cref{lemma:illegal-intersections}.
\end{proof}

\begin{lemma}
The following statements are true:
\begin{itemize}[noitemsep]
\item[(1)] If $\gamma_d^2 = \gamma_s^2$ is an ellipse contained entirely within $\sigma$, then either $\DDP = \emptyset$ or $\DDN = \emptyset$.
\item[(2)] If $\gamma_d = \gamma_s$ is an ellipse contained within $\sigma$, then $\DDN = \emptyset$. If $-\gamma_d = \gamma_s$ is an ellipse contained within $\sigma$, then $\DDP = \emptyset$.
\end{itemize}
Analogous claims are true about $\gamma_r^2 = \gamma_s^2$.
\label{lemma:internal-ellipse-other}
\end{lemma}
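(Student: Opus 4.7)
The plan is to prove~(2) first and then deduce~(1). For~(2), assume $\{\gamma_d=\gamma_s\}$ is the ellipse $E\subseteq\sigma$. The first step is to show that the companion branch $\{\gamma_d=-\gamma_s\}$ must be empty. Any point in this companion branch also lies on the conic $\{\gamma_d^2=\gamma_s^2\}$. By \cref{lemma:its-all-PL}, $\gamma_d$ and $c,s$ are affine on $\R^2$, so $\gamma_d^2-\gamma_s^2=\gamma_d^2-c^2-s^2$ is a polynomial of degree at most two. Since its zero set contains the proper ellipse $E$, and (by assumption~(iii)) the polynomial is not identically zero, the classification of degree-two zero sets in $\R^2$ forces the full conic to equal $E$. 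But on $E$ we have $\gamma_d=\gamma_s\geq 0$, while a point of $\{\gamma_d=-\gamma_s\}$ would require $\gamma_d\leq 0$, so both together force $\gamma_d=\gamma_s=0$, which assumption~(viii) forbids.

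Having shown $\{\gamma_d=-\gamma_s\}=\emptyset$, the continuous function $\gamma_d+\gamma_s\colon\R^2\to\R$ is nowhere zero. Next pick any $p\in E$ with $\gamma_s(p)>0$; such a $p$ exists, because if $\gamma_s\equiv 0$ on $E$ then $\gamma_d=\gamma_s\equiv 0$ on $E$, again violating~(viii). At this $p$, $\gamma_d(p)+\gamma_s(p)=2\gamma_s(p)>0$, so by the intermediate value theorem on the connected space $\R^2$, $\gamma_d+\gamma_s>0$ everywhere. Thus $-\gamma_d<\gamma_s$ everywhere, and in particular $\DDN=\emptyset$. The case where $\{-\gamma_d=\gamma_s\}$ is an ellipse is entirely symmetric (swap $\gamma_d\leftrightarrow -\gamma_d$) and yields $\DDP=\emptyset$.

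For part~(1), suppose $\{\gamma_d^2=\gamma_s^2\}$ is an ellipse. It is the union of the two closed loci $\{\gamma_d=\gamma_s\}$ and $\{\gamma_d=-\gamma_s\}$, which cannot share a point (a common point would give $\gamma_d=\gamma_s=0$, contradicting~(viii)). Since the ellipse is connected and partitioned into two disjoint relatively closed sets, one of them must be empty and the other must equal the full ellipse, so part~(2) applied to the nonempty branch immediately yields $\DDN=\emptyset$ or $\DDP=\emptyset$. The analogous claims about $\gamma_r^2=\gamma_s^2$, $\gamma_r=\gamma_s$, and $-\gamma_r=\gamma_s$ follow by the same argument after replacing $\gamma_d$ with $\gamma_r$ and the corresponding region labels ($\DDP\to\RRP$, $\DDN\to\RRN$).

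The technically delicate step is the reduction in part~(1): one must carefully combine the classification of zero sets of degree-two polynomials with assumption~(viii) to rule out the conic consisting of two simultaneously nonempty branches $\gamma_d=\pm\gamma_s$. Once this reduction is in hand, the remainder of the argument is a one-line intermediate value computation using the continuity of $\gamma_d+\gamma_s$ on the connected space $\R^2$, and I anticipate no further obstacles.
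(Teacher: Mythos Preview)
Your proof is correct and uses essentially the same ingredients as the paper's: the conic-classification observation that the full quadratic locus $\{\gamma_d^2=\gamma_s^2\}$ must coincide with the ellipse, the disjointness of the two branches via assumption~(viii), and a connectedness/IVT argument to pass from ``branch empty'' to ``region empty.'' The only difference is organizational---the paper proves (1) first and deduces (2), whereas you prove (2) first and deduce (1)---and your IVT phrasing of the last step is equivalent to the paper's ``$\DDN$ would need nonempty boundary curve $-\gamma_d=\gamma_s$'' argument.
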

\begin{proof}
We prove the claims for $\gamma_d^2 = \gamma_s^2$. Notice that because there are no points where $\gamma_d = \gamma_s = 0$, the curves $\gamma_d = \gamma_s$ and $-\gamma_d = \gamma_s$ must be disjoint.

We first show (1). The curve $\gamma_d^2 = \gamma_s^2$ must border $\DDP$ or $\DDN$, so one of those sets is not empty. Without loss of generality, suppose that $\DDP \not = \emptyset$. Since $\DDP \not = \emptyset$, the region $\DDP$ must have, as its boundary, the curve $\gamma_d = \gamma_s$. So the curve $\gamma_d = \gamma_s$ is not empty.

Notice that $\gamma_d^2 = \gamma_s^2$ will be the union of the curves $\gamma_d = \gamma_s$ and $-\gamma_d = \gamma_s$. Since $\gamma_d^2 = \gamma_s^2$ is an ellipse, it has one connected component. Since $-\gamma_d = \gamma_s$ cannot intersect $\gamma_d = \gamma_s$, and $\gamma_d = \gamma_s$ is not empty, it must follow that $-\gamma_d = \gamma_s$ is empty.

If $\DDN \not = \emptyset$, then it would need to be bounded by the non-empty curve $-\gamma_d = \gamma_s$. Thus, $\DDN = \emptyset$. This gives (1).

We now show (2). Without loss of generality, suppose that $\gamma_d = \gamma_s$ is an ellipse contained within $\sigma$. The conic section $\gamma_d^2 = \gamma_s^2$ is a conic section that contains the curve $\gamma_d = \gamma_s$ and the curve $-\gamma_d = \gamma_s$. Thus, $\gamma_d^2 = \gamma_s^2$ contains an ellipse. The only possible conic section that $\gamma_d^2 = \gamma_s^2$ could be, then, is an ellipse. Invoking (1) gives $\DDN = \emptyset$.
\end{proof}

\begin{lemma}
If the conic section $\gamma_d^2 = \gamma_s^2$ is an ellipse contained entirely within $\sigma$, then the interior of $\gamma_d^2 = \gamma_s^2$ must intersect $\RRP$ or $\RRN$.
The analogous claim for $\gamma_r^2 = \gamma_s^2$ is also true.
\label{lemma:internal-ellipse-is-both}
\end{lemma}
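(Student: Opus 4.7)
The plan is to exhibit a specific point that already lies both in the interior of the ellipse and in $\RRP \cup \RRN$, namely the unique common zero of $c$ and $s$. The argument is symmetric in the two conic sections, so I will only describe the case where the internal ellipse is $\gamma_d^2 = \gamma_s^2$; the claim for $\gamma_r^2 = \gamma_s^2$ will follow by interchanging the roles of $\gamma_d$ and $\gamma_r$.

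First I would analyze the leading quadratic form of the polynomial $\gamma_d^2 - \gamma_s^2$. Writing $d, c, s \in \R^2$ for the gradient vectors of the affine functions $\gamma_d, c, s$, this leading form has matrix $M_d = dd^T - (cc^T + ss^T)$. Since $dd^T$ has rank at most one while $cc^T + ss^T$ is positive semidefinite, $M_d$ can never be positive definite, and the only way for the level set $\gamma_d^2 = \gamma_s^2$ to be a non-empty bounded ellipse is for $M_d$ to be negative definite. A short determinant computation yields
\[
\det M_d = (c_1 s_2 - c_2 s_1)^2 - (d_1 c_2 - d_2 c_1)^2 - (d_1 s_2 - d_2 s_1)^2,
\]
and negative definiteness forces $c_1 s_2 - c_2 s_1 \neq 0$. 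Hence the linear parts of $c$ and $s$ are linearly independent, so the system $c = s = 0$ has a unique solution $p_0 \in \R^2$, at which $\gamma_s(p_0) = 0$. Moreover, negative definiteness of $M_d$ also pins down the interior $E^\circ$ of the ellipse as $\{\, |\gamma_d| > \gamma_s \,\}$, and $E^\circ \subset \sigma$ because the bounding ellipse lies in the triangle $\sigma$.

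Next I would locate $p_0$ inside $E^\circ$. The identity $\gamma_d(p_0)^2 - \gamma_s(p_0)^2 = \gamma_d(p_0)^2$ shows that $p_0 \in E^\circ$ whenever $\gamma_d(p_0) \neq 0$. If instead $\gamma_d(p_0) = 0$, then $p_0$ would sit on the ellipse itself, hence in $\sigma$, producing a point of $\sigma$ with $\gamma_d = \gamma_s = 0$ and contradicting assumption (viii). Thus $\gamma_d(p_0) \neq 0$ and $p_0 \in E^\circ \subset \sigma$. Now apply assumption (viii) a second time, this time to the forbidden combination $\gamma_r = \gamma_s = 0$: since $p_0 \in \sigma$ and $\gamma_s(p_0) = 0$, we must have $\gamma_r(p_0) \neq 0$. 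Therefore $|\gamma_r(p_0)| > 0 = \gamma_s(p_0)$, placing $p_0 \in \RRP \cup \RRN$ and giving $p_0 \in E^\circ \cap (\RRP \cup \RRN)$ as required.

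The only delicate step I expect is pinning down the signature of $M_d$ so as to both rule out positive-definiteness and extract the non-vanishing of $c_1 s_2 - c_2 s_1$; once the candidate point $p_0$ is produced, everything reduces to two applications of assumption (viii) and no estimate on $\gamma_r$ away from $p_0$, nor any finer geometric feature of the ellipse, is required.
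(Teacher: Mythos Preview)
Your proposal is correct and follows essentially the same approach as the paper: locate the unique point $p_0$ with $c(p_0)=s(p_0)=0$, so $\gamma_s(p_0)=0$, show it lies in the interior of the ellipse via assumption~(viii), and then invoke~(viii) again to get $\gamma_r(p_0)\neq 0$. The one substantive difference is the justification that the gradients of $c$ and $s$ are linearly independent: you extract this algebraically from the negative definiteness of the leading quadratic form and the determinant identity, whereas the paper argues geometrically that linear dependence would let one slide along a ray on which $\gamma_s$ is constant and $\gamma_d$ is non-decreasing, forcing $\DDP$ to be unbounded and contradicting its containment in the ellipse.
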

\begin{proof}
Extend $f$ to $\R^2$. The conic section $\gamma_d^2 = \gamma_s^2$ is the union of the curve $\gamma_d = \gamma_s$ and $-\gamma_d = \gamma_s$. Since there is no point where $\gamma_d = \gamma_s = 0$, these curves do not intersect. Since $\gamma_d^2 = \gamma_s^2$ has only one connected component, it must be equal to $\gamma_d = \gamma_s$ or $-\gamma_d = \gamma_s$. Without loss of generality, assume that the conic section $\gamma_d^2 = \gamma_s^2$ is equal to the curve $\gamma_d = \gamma_s$.

By assumption, the curve $\gamma_d = \gamma_s$ separates a region of type $\DDP$ from a region where $\gamma_s > |\gamma_d|$. Because $\DDP$ is convex, and $\gamma_d = \gamma_s$ is an ellipse, $\DDP$ must lie in the interior of the ellipse.

We claim that the gradient directions of $c$ and $s$ are linearly independent. Suppose, to the contrary, that they were not. Then there exists some unit vector $v$ perpendicular to the gradients of $c$ and $s$. Since $c$ and $s$ are affine, they both have constant gradients. Thus, $c$ and $s$ are constant in direction $v$, and thus so is $\gamma_s$.

Let $p \in \DDP$. Let $l$ be the line through $p$ in direction $v$. Since $\gamma_d$ is affine, we can move along line $l$ in one direction $w$ (either $w = v$ or $w = -v$) so that $\gamma_d$ does not decrease. Also, $\gamma_s$ is constant along $l$. Thus, the ray in direction $w$ rooted at $p$ satisfies $\gamma_d > \gamma_s$, and the entire ray belongs to $\DDP$. This violates the assumption that $\DDP$ is contained within the ellipse.

Thus, the gradient directions of $c$ and $s$ are independent. So the curves $c(x,y) = 0$ and $s(x,y) = 0$ are two lines that are not parallel. Thus, they will intersect, and there is a point $q$ where $c(q) = s(q) = \gamma_s(q) = 0$. By \cref{lemma:internal-ellipse-other}, $\DDN = \emptyset$ so $\gamma_d(q) \geq 0$. By assumption $,\gamma_d(q) \not = \gamma_s(q) = 0$. Hence, $\gamma_d(q) > \gamma_s(q) = 0$, so $q \in \DDP$ and $q$ lies in the interior of the ellipse $\gamma_d^2 = \gamma_s^2$.

Trivially, $|\gamma_r(q)| > \gamma_s(q)$ so either $q \in \RRP$ or $q \in \RRN$. In either case, the claim is proven.
\end{proof}

\begin{lemma}
Let $e$ be a line segment. If $\DDP \cap e \not = \emptyset$ and $\DDN \cap e \not = \emptyset$, then one vertex of $e$ is of type $\DDP$ and the other is of type $\DDN$.
The analogous claim for $\RRP$ and $\RRN$ is also true.
\label{lemma:opposite-vertices}
\end{lemma}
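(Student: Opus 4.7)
The plan is to parameterize $e$ by $t\in[0,1]$ with $p(0)=v_1$ and $p(1)=v_2$, and to study the single auxiliary polynomial
\[
\psi(t) := \gamma_d(p(t))^2 - \gamma_s(p(t))^2.
\]
By \cref{lemma:its-all-PL}, $\gamma_d$, $c$, and $s$ are affine on $\sigma$, so along $e$ each is affine in $t$; hence $\gamma_s(p(t))^2 = c(p(t))^2 + s(p(t))^2$ is a quadratic polynomial in $t$, and $\psi$ has degree at most $2$. The definitions give $\DDP \cap e = \{t : \gamma_d(p(t)) > 0 \text{ and } \psi(t) > 0\}$ and $\DDN \cap e = \{t : \gamma_d(p(t)) < 0 \text{ and } \psi(t) > 0\}$, so both sets sit inside $\{\psi>0\}$ and are separated only by the sign of $\gamma_d$.

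First I would argue that $\gamma_d$ must change sign on $e$: otherwise $\gamma_d$ has a fixed sign throughout $e$, and only one of $\DDP\cap e,\ \DDN\cap e$ can be non-empty, contradicting the hypothesis. Hence $\gamma_d$ is nonconstant linear in $t$ with a unique zero $t_0$, and a short check using assumption (vi) rules out $t_0\in\{0,1\}$, forcing $t_0\in(0,1)$. Without loss of generality assume $\gamma_d(v_1)>0>\gamma_d(v_2)$, so $\gamma_d>0$ on $[0,t_0)$ and $\gamma_d<0$ on $(t_0,1]$. Since $\psi(t_0)=-\gamma_s(p(t_0))^2\leq 0$, the point $t_0$ does not lie in $\{\psi>0\}$, so
\[
\DDP\cap e = \{t\in[0,t_0):\psi(t)>0\}, \quad \DDN\cap e = \{t\in(t_0,1]:\psi(t)>0\}.
\]

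The core of the proof is a case analysis on $\{\psi>0\}\cap[0,1]$ driven by the degree and leading coefficient of $\psi$. If $\psi$ is constant, nonconstant linear, or a downward-opening quadratic, then $\{\psi>0\}\cap[0,1]$ is either empty or a single sub-interval; since it cannot contain $t_0$, such an interval sits entirely on one side of $t_0$ and empties one of $\DDP\cap e,\ \DDN\cap e$, contradicting the hypothesis. The only surviving case is $\psi$ upward-opening with two real roots $r_1<r_2$ both in $[0,1]$ and with $t_0\in[r_1,r_2]$, which yields $\{\psi>0\}\cap[0,1]=[0,r_1)\cup(r_2,1]$. Then $\psi(0)>0$ and $\psi(1)>0$, i.e.\ $|\gamma_d(v_i)|>\gamma_s(v_i)$ for $i=1,2$; combined with the opposite signs of $\gamma_d(v_1)$ and $\gamma_d(v_2)$ this forces $v_1\in\DDP$ and $v_2\in\DDN$. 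The claim for $\RRP,\RRN$ follows verbatim by replacing $\gamma_d$ with $\gamma_r$, which is also affine on $e$ by \cref{lemma:its-all-PL}.

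The main obstacle I anticipate is the edge-case bookkeeping where $\psi$ loses degree or where a root of $\psi$ coincides with $t_0$ or with an endpoint of $e$. Assumption (vi) (no equality of $|\gamma_d|$ and $\gamma_s$ at a vertex) excludes $r_1=0$ and $r_2=1$, and assumption (viii) (the exclusion $\gamma_d=\gamma_s=0$) excludes $t_0\in\{r_1,r_2\}$; together these ensure the relevant inequalities remain strict, so the endpoints of $e$ land in the open regions $\DDP$ and $\DDN$ rather than merely on their boundaries.
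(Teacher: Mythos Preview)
Your proof is correct and uses essentially the same idea as the paper: restrict to the segment and exploit that $\gamma_d^2 - \gamma_s^2$ is a polynomial of degree at most two in the parameter, so its positivity set on $[0,1]$ is tightly constrained. The paper's version is terser---it invokes \cref{lemma:illegal-intersections} to get a separating interval $(t_1,t_2)$ whose endpoints exhaust the two possible roots, while you carry out an explicit case split on the shape of $\psi$ and argue directly from the zero $t_0$ of $\gamma_d$---but the mechanism is identical.
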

\begin{proof}
Extend $f$ to $\R^2$. Let $\phi:\R \rightarrow \R^2$ parametrize the line containing $e$, such that $\phi([0,1]) = e$. One can verify that the functions $(\gamma_d \circ \phi)^2$ and $(\gamma_s \circ \phi)^2$ are quadratic functions $[0,1] \rightarrow \R$.

Since $\RRP \cap e$ cannot border $\RRN \cap e$, there must be some interval $I := (t_1,t_2) \in [0,1]$ such that $\phi(I)$ lies between $\RRP \cap e$ and $\RRN \cap e$, and $\gamma_s \circ \phi > |\gamma_d \circ \phi|$ on all of $I$. Further, $(\gamma_d \circ \phi)^2$ and $(\gamma_s \circ \phi)^2$ must intersect at $t_1$ and $t_2$. Since those functions are quadratic, they can only intersect at most twice. 

Thus, outside of $I$, there will be no other points where $|\gamma_d| = \gamma_s$. This implies that the only region of $e$ where $\gamma_s > |\gamma_d|$ that borders $\DDP \cap e$ or $\DDN \cap e$ is $\phi(I)$, which lies between $\DDP$ and $\DDN$. So one vertex of $e$ must lie within $\DDP$ and the other within $\DDN$.
\end{proof}

\begin{figure}
\begin{overpic}[width=\linewidth]{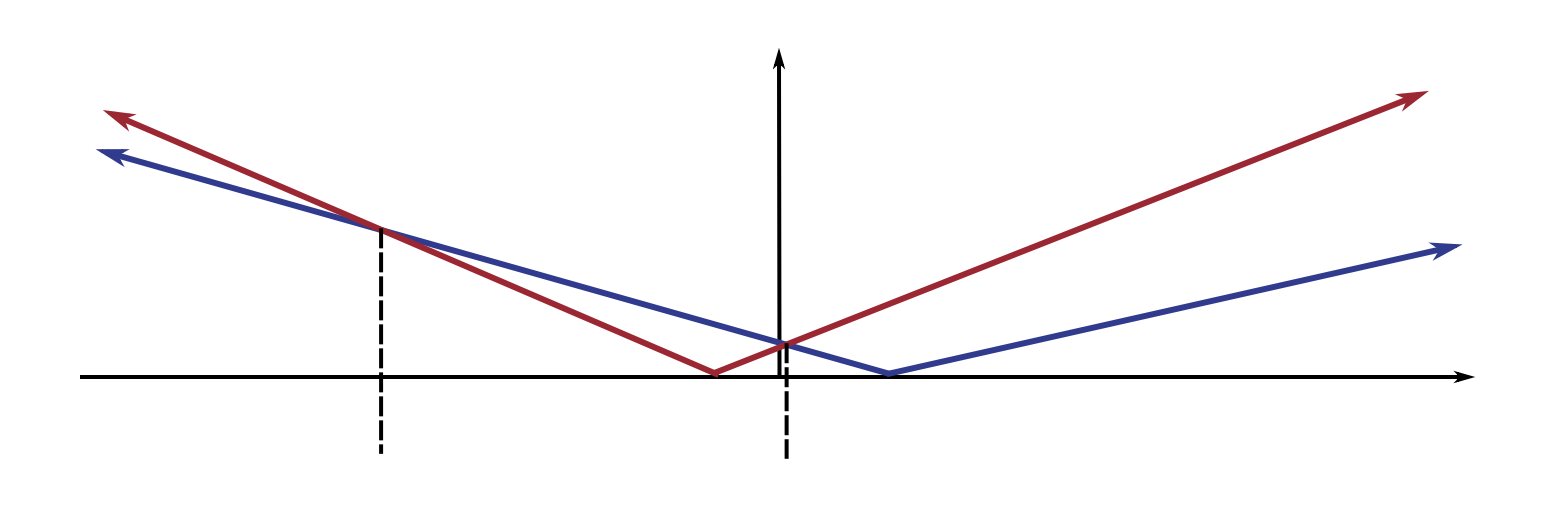}

\put (95.5,8.5){$x$}
\put (51,30){$y$}
\put (59, 24){$y = |\gamma_d(x)|$}
\put (75, 19){$y = |\gamma_r(x)|$}
\put (5, 5){$-\gamma_d > |\gamma_r|$}
\put (28, 5){$-\gamma_r > |\gamma_d|$}
\put (65, 5){$\gamma_d > |\gamma_r|$}

\end{overpic}
\vspace{-6mm}
\caption{The two functions $|\gamma_d|$ and $\gamma_r|$ can intersect at most twice. Further, every region where $|\gamma_d| > |\gamma_r|$ will border every region where $|\gamma_r| > |\gamma_d$}
\label{fig:proofs-absolute-value}
\vspace{-6mm}
\end{figure}

\begin{figure}[!h]

\begin{overpic}[width=\linewidth]{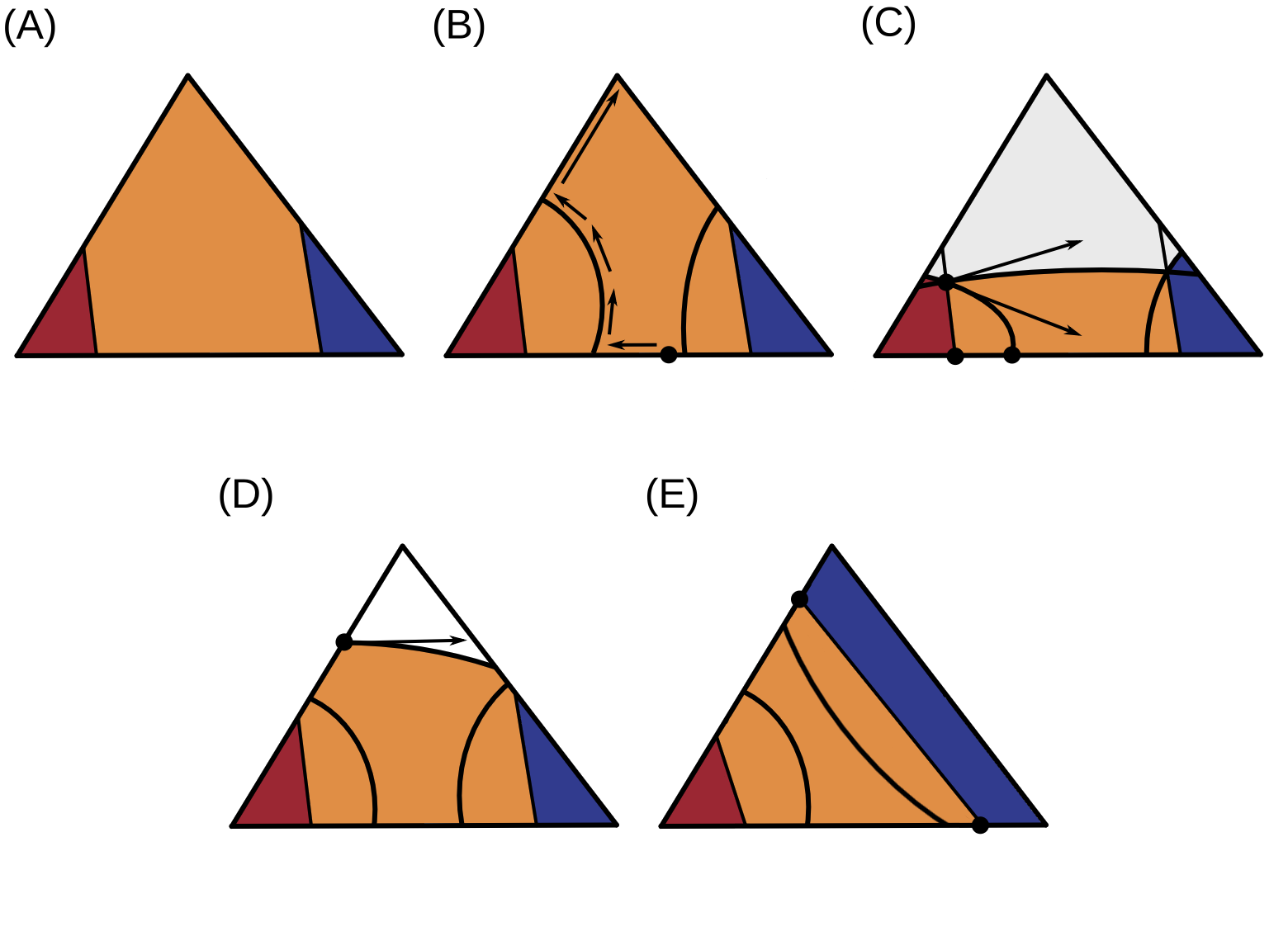}

%a
\put (1, 43) {\smaller[2]{$v_1$}}
\put (15, 43) {\smaller[2]{$e$}}
\put (29, 43) {\smaller[2]{$v_2$}}
\put (6, 60) {\smaller[2]{$e_1$}}
\put (13, 69) {\smaller[2]{$v_3$}}
\put (22, 60) {\smaller[2]{$e_2$}}
\put (7.5, 48) {\smaller[2]{$\gamma_r = \gamma_d$}}
\put (10, 53) {\smaller[2]{$-\gamma_r = \gamma_d$}}

%b
\put (51, 48) {\smaller[2]{$p$}}
\put (40, 43) {\smaller[2]{$\gamma_r = \gamma_s$}}

%c
\put (74, 53) {\smaller[2]{$q$}}
\put (78, 43) {\smaller[2]{$r$}}
\put (74, 43) {\smaller[2]{$q'$}}

%d
\put (24, 24) {\smaller[2]{$q$}}

%e
\put (59,27) {\smaller[2]{$q$}}
\put (75.5,6) {\smaller[2]{$q'$}}

\end{overpic}
\vspace{-8mm}
\caption{Sample eigenvector partitions of $\sigma$ as used in \cref{lemma:edge-intersections}. The regions are Orange: $\DP$. Red: $\RP$. Blue: $\RN$. White $\SA$. (A) We label the edges and vertices of $\sigma$, along with the line $\gamma_r = \gamma_d$, which separates $\DP$ from $\RP$, and the line $-\gamma_r = \gamma_d$, which separates $\DP$ from $\RN$. (B) We demonstrate the path from $p$ up to $v_3$. The path goes towards $v_2$, then up the curve $\gamma_r = \gamma_s$ until hitting $e_1$. It then continues from $e_1$ to $v_3$. In this case, $v_3 \in \DP$. (C) Here, $q$ is a junction point between $\gamma_r = \gamma_s$ and $\gamma_d = \gamma_s$. We label the point $r$ where $\gamma_r = \gamma_s$ intersects $e$. We also label the oriented tangent vectors of each curve at point $p$. Here $\gamma_r = \gamma_s$ enters $\DP$. (D) Here $q$ is a topologically significant intersection point between $\gamma_d = \gamma_s$ and $e_1$. We show the oriented tangent vector of $\gamma_d = \gamma_s$ at $q$, where it points into $\sigma$. (E) Here $q$ separates $\DP$ from $\RN$, and lies on the line $-\gamma_r = \gamma_d$. We also label the point $q'$ used in the proof.}
\label{fig:edge-intersections}
\vspace{-4mm}
\end{figure}

\begin{lemma}
Let $e$ be a line segment. Let $E_1 = \{p \in e : \gamma_d(p) > |\gamma_r(p)|\}$ and $E_2 = \{p \in e : \gamma_r(p) > |\gamma_d(p)|\}$. If $E_1$ and $E_2$ are not empty, then they will border each other. 
The analogous claims using $-\gamma_d$ or $-\gamma_r$ are also true.
\label{lemma:mini-region-border}
\label{lemma:not-all-four-support}
\end{lemma}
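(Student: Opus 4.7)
The plan is to reduce this two-dimensional geometric claim to a one-dimensional analysis of affine functions on $e$. First, I would invoke Lemma \ref{lemma:its-all-PL} to conclude that $\gamma_d$ and $\gamma_r$ are affine on $\sigma$, and hence their restrictions to the segment $e$ are affine. Rather than working directly with the absolute values, I would introduce the auxiliary affine functions $a := \gamma_d + \gamma_r$ and $b := \gamma_d - \gamma_r$ on $e$. A short unpacking of the sign conditions gives the clean reformulation $E_1 = \{p \in e : a(p) > 0, \ b(p) > 0\}$ and $E_2 = \{p \in e : a(p) > 0, \ b(p) < 0\}$, since $\gamma_d > |\gamma_r|$ is equivalent to $\gamma_d > \gamma_r$ together with $\gamma_d > -\gamma_r$, and similarly for $E_2$.

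The key observation is then that $E_1$ and $E_2$ both lie inside the open set $U := \{p \in e : a(p) > 0\}$, which is a subinterval of $e$ because $a$ is affine. On $U$, the sole difference between $E_1$ and $E_2$ is the sign of the affine function $b$. If both $E_1$ and $E_2$ are nonempty, then $b$ takes both strictly positive and strictly negative values on $U$, so by the intermediate value theorem there is a point $p^* \in U$ with $b(p^*) = 0$. Since $U$ is relatively open in $e$ and $b$ is non-constant in a neighborhood of $p^*$, every neighborhood of $p^*$ inside $U$ contains points of both $E_1$ and $E_2$. Thus $p^*$ lies in the closure of each set, so $E_1$ and $E_2$ border each other.

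The only subtlety is the degenerate case in which $b$ vanishes identically on $U$; but then $\gamma_d \equiv \gamma_r$ on $U$, so the strict inequalities defining $E_1$ and $E_2$ cannot both hold anywhere, making the hypothesis vacuous. The analogous statements obtained by swapping $\gamma_d \leftrightarrow -\gamma_d$ or $\gamma_r \leftrightarrow -\gamma_r$ follow by the identical argument with $a$ and $b$ redefined up to sign, and I would simply remark on this rather than repeat the derivation. I expect the main conceptual step to be the initial algebraic rewriting in terms of $a$ and $b$; after that change of variables, the argument reduces to an elementary IVT application, and no geometric reasoning about $\sigma$ is needed.
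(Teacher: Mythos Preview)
Your proof is correct and in fact cleaner than the paper's. The paper's argument parametrizes $e$, observes that $|\gamma_d|$ and $|\gamma_r|$ restrict to absolute values of affine functions on $[0,1]$, and then appeals (with a picture) to the fact that two such V-shaped graphs intersect at most twice, so any region where one dominates must border any region where the other dominates. Your approach sidesteps absolute values entirely via the substitution $a=\gamma_d+\gamma_r$, $b=\gamma_d-\gamma_r$, which turns the two sets into $\{a>0,\,b>0\}$ and $\{a>0,\,b<0\}$ and reduces the claim to a single application of the intermediate value theorem for the affine function $b$ on the interval $\{a>0\}$. Both arguments exploit affineness of $\gamma_d,\gamma_r$ on $e$, but your change of variables makes the proof fully rigorous without a picture and isolates exactly the one sign change that matters.
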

\begin{proof}
Let $\phi:[0,1] \rightarrow e$ parametrize $e$. Notice that the functions $|\gamma_d \circ \phi|$ and $|\gamma_r \circ \phi|$ are the absolute values of affine functions on $[0,1]$. Thus, every region where $|\gamma_d \circ \phi| \geq |\gamma_r \circ \phi|$ will border every region where $|\gamma_d \circ \phi| \leq |\gamma_r \circ \phi|$. We illustrate an example in \cref{fig:proofs-absolute-value}.
\end{proof}

\begin{corollary}
Let $e$ be a line segment. Then at least one of the following sets must be empty
\begin{itemize}[noitemsep]
\item $\{p \in e : \gamma_d(p) > |\gamma_r(p)|\}$
\item $\{p \in e : -\gamma_d(p) > |\gamma_r(p)|\}$
\item $\{p \in e : \gamma_r(p) > |\gamma_d(p)|\}$
\item $\{p \in e : -\gamma_r(p) > |\gamma_d(p)|\}$
\end{itemize}
\end{corollary}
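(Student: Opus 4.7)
The plan is to parametrize the segment by $\phi:[0,1]\to e$ and reason about the behavior of the auxiliary function $h(t) := (\gamma_d\circ\phi)(t)^2 - (\gamma_r\circ\phi)(t)^2$ on $[0,1]$. By \cref{lemma:its-all-PL}, the functions $\gamma_d\circ\phi$ and $\gamma_r\circ\phi$ are affine in $t$, so $h$ is a polynomial of degree at most two. The four sets listed in the statement split naturally into two pairs: the first two are subsets of $\{h>0\}$ (where $|\gamma_d|>|\gamma_r|$), and the last two are subsets of $\{h<0\}$ (where $|\gamma_r|>|\gamma_d|$).

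The key structural fact I would exploit is that, because $h$ is at most quadratic on $[0,1]$, at least one of the open sets $\{h>0\}$ or $\{h<0\}$ is connected, i.e.\ a single subinterval of $[0,1]$. Concretely, if $h$ has two real roots in $[0,1]$ with positive leading coefficient, then $\{h<0\}$ is the interval strictly between them and is connected; with negative leading coefficient, $\{h>0\}$ plays that role; if $h$ has at most one root in $[0,1]$ or is affine, both sets are intervals. This is the one compact geometric input the argument needs.

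Now suppose for contradiction that all four listed sets are nonempty. Without loss of generality, assume $\{h>0\}$ is connected. The first listed set forces some point where $\gamma_d>0$, and the second forces some point where $\gamma_d<0$, with both points lying inside the interval $\{h>0\}$. Applying the Intermediate Value Theorem to the continuous affine function $\gamma_d\circ\phi$ on that interval yields a parameter $t^\star\in\{h>0\}$ with $\gamma_d(\phi(t^\star))=0$. But then
\[
h(t^\star) \;=\; 0 - \gamma_r(\phi(t^\star))^2 \;\le\; 0,
\]
contradicting $t^\star\in\{h>0\}$. The symmetric case, in which $\{h<0\}$ is the connected one, rules out sets 3 and 4 being simultaneously nonempty by the same IVT argument applied to $\gamma_r\circ\phi$.

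The main obstacle I anticipate is simply being careful about degenerate situations: if $h\equiv 0$ along $e$ then all four strict-inequality sets are empty and the claim is vacuous, and if $h$ is affine (rather than strictly quadratic) both $\{h>0\}$ and $\{h<0\}$ are single intervals and the argument above applies verbatim. So the edge cases only strengthen the conclusion. Beyond this, the proof is a short application of \cref{lemma:its-all-PL} plus the observation that quadratics on an interval have at most one ``inside'' sign-component.
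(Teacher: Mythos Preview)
Your argument is correct. The paper states this result without proof, as an immediate corollary of the preceding lemma (\cref{lemma:mini-region-border}), whose proof observes that $|\gamma_d\circ\phi|$ and $|\gamma_r\circ\phi|$ are absolute values of affine functions and hence cross at most twice, so the interval splits into at most three alternating regions; since the four listed sets would occupy four distinct regions, one must be empty. Your route is essentially the same idea in algebraic clothing: working with the quadratic $h=\gamma_d^2-\gamma_r^2$ in place of the two V-shaped graphs, and making the ``sign cannot change within a connected piece'' step explicit via the IVT rather than leaving it to the picture. The only difference is presentational---you give a self-contained argument rather than invoking the lemma---and both rest on the identical ``at most two sign changes'' fact.
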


\begin{corollary}
The regions $\DP$, $\DN$, $\RP$ and $\RN$ cannot all intersect a segment $e$.
\label{lemma:not-all-four}
\end{corollary}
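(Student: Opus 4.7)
The plan is to deduce this as an immediate consequence of the preceding corollary. The key observation is simply the containment: from the definitions,
\begin{align*}
\DP &\subseteq \{p \in \sigma : \gamma_d(p) > |\gamma_r(p)|\}, \\
\DN &\subseteq \{p \in \sigma : -\gamma_d(p) > |\gamma_r(p)|\}, \\
\RP &\subseteq \{p \in \sigma : \gamma_r(p) > |\gamma_d(p)|\}, \\
\RN &\subseteq \{p \in \sigma : -\gamma_r(p) > |\gamma_d(p)|\},
\end{align*}
since each of $\DP$, $\DN$, $\RP$, $\RN$ is defined by two conditions, and we are only keeping the condition comparing $|\gamma_d|$ and $|\gamma_r|$.

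Next, I would invoke the preceding corollary applied to the segment $e$: at least one of the four sets $\{p \in e : \gamma_d > |\gamma_r|\}$, $\{p \in e : -\gamma_d > |\gamma_r|\}$, $\{p \in e : \gamma_r > |\gamma_d|\}$, $\{p \in e : -\gamma_r > |\gamma_d|\}$ must be empty. By the containments above, the corresponding one of $\DP \cap e$, $\DN \cap e$, $\RP \cap e$, $\RN \cap e$ is also empty. Therefore, the four regions cannot all intersect $e$ simultaneously.

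There is essentially no obstacle: the heavy lifting was done in Lemma~\ref{lemma:not-all-four-support} (and the corollary that follows it), which exploits the fact that $|\gamma_d \circ \phi|$ and $|\gamma_r \circ \phi|$ are absolute values of affine functions on $[0,1]$ and can therefore cross at most twice. The present corollary is just a restatement of that fact in terms of the four partition regions that matter for the eigenvalue topology.
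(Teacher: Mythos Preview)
Your proposal is correct and matches the paper's approach: the paper states this as an immediate corollary of the preceding one (which says at least one of the four comparison sets on $e$ must be empty), and your containment argument is exactly the intended deduction.
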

\begin{lemma}
Let $e$ be an edge of $\sigma$ with vertices $v_1$ and $v_2$. Suppose that $v_1$ comes after $v_2$ when traversing $\sigma$ clockwise. Let $v_3$ be the remaining vertex of $\sigma$. Let $e_1$ be the edge connecting $v_1$ to $v_3$, and let $e_2$ be the edge connecting $v_2$ to $v_3$. (See \cref{fig:edge-intersections}(A) for a diagram.)

Suppose that $\DP$ is not empty, yet does not contain either $v_1$ or $v_2$, and the curve $\gamma_d = \gamma_s$ does not have any topologically significant intersections with $e$. Then $\DP$ will intersect edge $e$ if and only if the following are true:

\begin{itemize}[noitemsep]
\item[(a)] One of $v_1$ or $v_2$ is of type $\RP$ and the other is $\RN$. (assume, w.l.o.g. that $v_1 \in \RP$).
\item[(b)] $\gamma_r^2 = \gamma_s^2$ has no topologically significant intersections with $e$
\item[(c)] At least one of the following is true:
\begin{itemize}[noitemsep]
\item[(i)] The curve $\gamma_d = \gamma_s$ never intersects $e_1$ or $\gamma_r = \gamma_s$ and $v_3$ is of type $\DP$.
\item[(ii)] The curve $\gamma_d = \gamma_s$ never intersects $e_2$ or $-\gamma_r = \gamma_s$ and $v_3$ is of type $\DP$.
\item[(iii)] There is exactly one junction point $p$ between $\gamma_d = \gamma_s$ and $\gamma_r = \gamma_s$. At that junction point, $\gamma_r = \gamma_s$ enters $\DDP$.
\item[(iv)] There is exactly one junction point $p$ between $\gamma_d = \gamma_s$ and $-\gamma_r = \gamma_s$. At that junction point, $-\gamma_r = \gamma_s$ leaves $\DDP$.
\item[(v)] The curve $\gamma_d = \gamma_s$ has exactly one topologically significant intersection with $e_1$, where it enters $\sigma$. $\gamma_d = \gamma_s$ never intersects $\gamma_r = \gamma_s$.
\item[(vi)] The curve $\gamma_d = \gamma_s$ has exactly one topologically significant intersection with $e_2$ where it leaves $\sigma$. $\gamma_d = \gamma_s$ never intersects $-\gamma_r = \gamma_s$.
\item[(vii)] The curve $\gamma_d = \gamma_s$ has no junction points or topologically significant intersections with any edge of $\sigma$. $v_3$ is of type $R_-$.
\end{itemize}
\end{itemize}
\label{lemma:edge-intersections}
\end{lemma}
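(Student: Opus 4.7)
The plan is to parametrize $e$ and exploit the affine structure on it. By \cref{lemma:its-all-PL}, the functions $\gamma_d$, $\gamma_r$, $c$, and $s$ are affine on $e$, while $\gamma_s^2 = c^2 + s^2$ is quadratic and $\gamma_s$ is convex. Hence $\gamma_d \pm \gamma_r$ each have at most one zero on $e$, and $\gamma_r^2 - \gamma_s^2$ has at most two. This tight combinatorial control on $e$ will drive the forward direction.

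For the forward direction, assume $\DP \cap e \neq \emptyset$. By \cref{lemma:regions-are-convex}, $\DP \cap e$ is an open subinterval of $e$ whose endpoints lie on $\partial \DP$. Globally, $\partial \DP$ decomposes into arcs of $\gamma_d = \gamma_s$ (separating $\DP$ from $\SA$), $\gamma_d = \gamma_r$ (separating $\DP$ from $\RP$), and $\gamma_d = -\gamma_r$ (separating $\DP$ from $\RN$). Any boundary point on $e$ coming from $\gamma_d = \gamma_s$ would be topologically significant, contradicting the standing hypothesis; so the endpoints of $\DP \cap e$ lie on $\gamma_d = \pm \gamma_r$. Combining this with \cref{lemma:illegal-intersections}, \cref{lemma:mini-region-border}, and the linearity of $\gamma_d \pm \gamma_r$ on $e$ forces $v_1, v_2 \in \RP \cup \RN$ with exactly one in each, giving (a). For (b), I would argue by contradiction: a topologically significant $\gamma_r^2 = \gamma_s^2$ intersection on $e$ would insert an $\SA$ subinterval separating the $\RP$ and $\RN$ portions, requiring the $\DP$ subinterval to border $\SA$ and hence to pass through a topologically significant $\gamma_d = \gamma_s$ intersection, again a contradiction.

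Given (a) and (b), let $q_0 \in e$ be the unique zero of $\gamma_r$. Using the signs of $\gamma_d - \gamma_r$ and $\gamma_d + \gamma_r$ at $v_1$ and $v_2$, a short computation places the unique zeros $t_1, t_2$ of these two linear functions with $t_1 < q_0 < t_2$ exactly when $\gamma_d(q_0) > 0$, and shows $(t_1, t_2)$ is precisely the subset of $e$ where $\gamma_d > |\gamma_r|$. Combined with (b) and the convexity of $\DDP$ from \cref{lemma:regions-are-convex}, the question $\DP \cap e \neq \emptyset$ reduces to the single geometric condition $q_0 \in \DDP$. The remaining task is to show $q_0 \in \DDP$ iff one of (c)(i)--(vii) holds. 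I would enumerate the possible global configurations of the boundary curve $\gamma_d = \gamma_s$ in $\sigma$: reaching $v_3$ via $e_1$ (case (i)) or $e_2$ (case (ii)); terminating at a single junction with $\gamma_r = \gamma_s$ (case (iii)) or $-\gamma_r = \gamma_s$ (case (iv)); entering through a single topologically significant intersection on $e_1$ (case (v)) or $e_2$ (case (vi)); or being empty or forming a closed conic (case (vii)). For each, the boundary of $\DDP$ can be traced clockwise under the orientation convention, and $\DDP$ is shown to contain $q_0$ exactly when the stated combinatorial condition holds.

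The main obstacle is the backward direction for (c)(iii), (c)(iv), and (c)(vii). In (c)(iii) and (c)(iv), the orientation condition (``$\gamma_r = \gamma_s$ enters $\DDP$'' or ``$-\gamma_r = \gamma_s$ leaves $\DDP$'') must be translated into a geometric statement placing $\DDP$ on the correct side of the junction point; here \cref{lemma:illegal-intersections-2} and the convexity of $\RRP, \RRN$ from \cref{corollary:one-connected-component} are essential. Case (c)(vii) actually splits into two sub-situations: $\gamma_d = \gamma_s$ empty (forcing $\DDP = \sigma$, whence $q_0 \in \DDP$ trivially) and $\gamma_d = \gamma_s$ a closed conic whose interior is $\DDP$; in the latter, the hypothesis $v_3 \in \RN$, combined with \cref{lemma:internal-ellipse-other} and \cref{lemma:internal-ellipse-is-both}, is used to locate the closed curve so that $q_0$ lies inside it.
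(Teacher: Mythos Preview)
Your derivation of (a) and (b) matches the paper's, and your reduction of the question to the single condition ``$q_0\in\DDP$'' (with $q_0$ the unique zero of $\gamma_r$ on $e$) is a clean reformulation the paper leaves implicit---its point $p$ with $\gamma_r(p)=0$ plays exactly this role. For (c), though, the paper does not enumerate global configurations of $\gamma_d=\gamma_s$. Instead it traces a concrete path: from $p$ along $e$ toward $v_1$ until hitting $\gamma_r=\gamma_s$, then along that curve into $\sigma$. The first place this path exits $\DP$ (a junction with $\gamma_d=\gamma_s$, a crossing of $e_1$ into $\SA$, a crossing of $e_1$ into $\RN$, or never) determines which of (c)(i),(iii),(v),(vii) holds; the symmetric trace toward $v_2$ along $-\gamma_r=\gamma_s$ gives the even cases. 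Crucially, the ``exactly one junction point'' and ``exactly one topologically significant intersection'' clauses in (c)(iii)--(vi) are not automatic: the paper proves uniqueness by parametrizing the line $\gamma_d=\gamma_r$ through the junction and a second point on $e$, then using affineness of $\gamma_d$ and convexity of $\gamma_s$ to rule out any further crossing. Your enumeration sketch does not address these uniqueness claims.

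Your backward argument for (c)(vii) has a genuine gap. The hypothesis ``$\gamma_d=\gamma_s$ has no junction points or topologically significant intersections with any edge'' does \emph{not} force the dichotomy you state: $\gamma_d=\gamma_s$ can still meet edges of $\sigma$ at points lying inside $\RRP$ or $\RRN$, and such crossings are by definition not topologically significant, so the curve need be neither empty nor a closed ellipse inside $\sigma$. Even in the ellipse sub-case, \cref{lemma:internal-ellipse-is-both} only tells you the interior meets $\RRP$ or $\RRN$---it does not place $q_0$ inside. The paper's argument here is entirely different: it shows $\DN\cap e=\emptyset$ by contradiction, introducing the four regions $E_i$ where one of $\pm\gamma_d,\pm\gamma_r$ dominates the other, and arguing that if all four were nonempty the lines $\gamma_d=\pm\gamma_r$ would meet inside $\sigma$ at a point of $\SA$; this forces $\DP$ to border $\SA$ along a portion of $\gamma_d=\gamma_s$ that (having no junctions or significant edge crossings) must be a closed ellipse, whence $\DN=\emptyset$ by \cref{lemma:internal-ellipse-other}. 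With $\DN\cap e=\emptyset$ in hand, $\RP\cap e$ has no legal neighbor on $e$ except $\DP\cap e$, which finishes the implication.
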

\begin{proof}
Suppose that $D_+$ borders $e$ but $\gamma_d = \gamma_s$ has no topologically significant intersections with $e$, and neither vertex of $e$ is $\DP$. We first show that if $\DP$ borders $e$, then (a) (b) and (c) must be true. 

Because neither vertex is of type $\DP$, the region $\DP \cap e$ must be an interval in the interior of $e$ bordered by a different region on either side. $\DP \cap e$ cannot border $\DN \cap e$. Because $\gamma_d = \gamma_s$ has no topologically significant intersections with $e$, $\DP \cap e$ cannot border $\SA \cap e$. Thus, $\DP \cap e$ can only border $\RP \cap e$ and $\RN \cap e$.

Because $\RP$ and $\RN$ are convex, so are $\RP \cap e$ and $\RN \cap e$. Thus, $\RP \cap e$ and $\RN \cap e$ have one connected component each, and $\DP \cap e$ will border $\RP$ one one side and $\RN$ on the other side.

Since $\RP$ and $\RN$ both border $e$, by \cref{lemma:opposite-vertices}, one vertex must lie in $\RRP$ and the other in $\RRN$. Without loss of generality, suppose that $v_1 \in \RRP$. By assumption, neither vertex is type $\DP$. By \cref{lemma:not-all-four}, neither vertex is type $\DN$. Also, neither vertex can be type $\SA$ because $|\gamma_r| > |\gamma_s|$ at both vertices. Thus, we must have that $v_1$ is type $\RP$ and $v_2$ is type $\RN$. This gives (a). 

Because (a) is true, both $\RP \cap e$ and $\RN \cap e$ must only border $\DP$, meaning that $\gamma_r^2 = \gamma_s^2$ will have no topologically significant intersections with $e$, giving (b).

We now show (c). Because $\gamma_r(v_1) > 0$ and $\gamma_r(v_2) < 0$, there must be a point $p \in e$ such that $\gamma_r(p) = 0$. Since $e$ only intersects $R_+$, $D_+$ and $D_-$, it must hold that $p \in \RP$. Because $\gamma_s(p) \geq 0$, it follows that $\gamma_r(p) \leq \gamma_s(p)$. However, $\gamma_r(v_1) > \gamma_s(v_1)$. Thus, if we move towards $v_1$ from $p$, we must encounter the curve $\gamma_r = \gamma_s$. 

Thus, $\gamma_r = \gamma_s$ intersects edge $e$. Using similar reasoning, $-\gamma_r = \gamma_s$ must intersect $e$. Since the conic $\gamma_r^2 = \gamma_s^2$ cannot intersect $e$ more than twice, the curves $\gamma_r = \gamma_s$ and $-\gamma_r = \gamma_s$ intersect $e$ once each.

Because $\gamma_r = \gamma_s$ transversally intersects the boundary of $\sigma$ at edge $e$, and $\partial \sigma$ is a closed loop, the curve $\gamma_r = \gamma_s$ must intersect the boundary of $\sigma$ somewhere else. Because $\gamma_r = \gamma_s$ only intersects edge $e$ once, it must intersect $e_1$ or $e_2$. We check both cases:

\noindent \underline{case 1: $\gamma_r = \gamma_s$ intersects $e_1$:} In this case, we travel along the path from $p$ to $\gamma_r = \gamma_s$, then up $\gamma_r = \gamma_s$ until reaching $e_1$, and then along $e_1$ until hitting vertex $v_3$. We illustrate this path in \cref{fig:edge-intersections}(B).

When traveling along this path, one will either stay in the region $D_+$ the entire time, or it will leave $D_+$. If we stay in $D_+$ the entire time, then $v_3$ will be of type $D_+$. Further, because the entire path lies within $D_+$, no junction point was encountered betewen $\gamma_d=\gamma_s$ and $\gamma_r=\gamma_s$, giving (c)(i). In this case, $\sigma$ will look similar to the depiction in \cref{fig:edge-intersections}(B). Otherwise, there will be some point $q$ where this path leaves $D_+$. We check three cases for $q$, and illustrate each case as \cref{fig:edge-intersections}.

\noindent \underline{Case 1.a: $q$ lies on $\gamma_r = \gamma_s$:} We illustrate this case in \cref{fig:edge-intersections}(C). In this case, $q$ will be a junction point between $\gamma_d = \gamma_s$ and $\gamma_r = \gamma_s$. Since $v_1 \in \RRP$, the orientation of $\gamma_r = \gamma_s$ must be such that $v_1$ is on the right side of $\gamma_r = \gamma_s$. As a result, at the point $r$ where $\gamma_r = \gamma_s$ intersects $e$ (see \cref{fig:edge-intersections}(C)), the oriented tangent vector of $\gamma_r = \gamma_s$ must point outside of $\sigma$.

Thus, $\gamma_r = \gamma_s$ is oriented in the opposite direction as the path that we trace. Therefore, at junction point $q$, the oriented vector tangent to $\gamma_r = \gamma_s$ must point into the interior of the curve $\gamma_d = \gamma_s$, so $\gamma_r = \gamma_s$ enters $\DDP$ at $q$. We now show that $q$ can be the only junction point between $\gamma_d$ and $\gamma_r$.

Since $\gamma_d(p) > \gamma_r(p)$, but $\gamma_r(v_1) > \gamma_d(v_1)$, then there must be some point $q' \in e$ between $p$ and $v_1$ such that $\gamma_d(q') = \gamma_r(q')$. Notice that the line $\gamma_d = \gamma_r$ runs through $q$ and $q'$. Denote that line by $l$. Any junction point must lie on line $l$.

Let $\phi:\R \rightarrow l$ parametrize $l$ such that $\phi(0) = q'$ and $\phi(1) = q$. Suppose that there existed some $t \not = 1$ such that $\phi(t)$ was a junction point. We check three subcases:

\noindent \underline{Case 1.a.i: $t < 0$:} Since $\phi(0)$ lies on the boundary of $\sigma$, and $\phi(1)$ lies in the interior, then $\phi(t)$ must lie outside of $\sigma$. Thus, $\phi(t)$ cannot be a junction point.

\noindent \underline{Case 1.a.ii: $0 \leq t < 1$:} Notice that
\begin{align*}
(\gamma_d \circ \phi)(t) &= (\gamma_d \circ \phi)((1-t)(0)+(t)(1)) \\
&= (1-t)(\gamma_d \circ \phi)(0) + t(\gamma_d \circ \phi)(1) \\
&> (1-t)(\gamma_s \circ \phi)(0) + t(\gamma_s \circ \phi)(1) \\
&\geq (\gamma_s \circ \phi)((1-t)(0)+t(1)) \\
&= (\gamma_s \circ \phi)(t)
\end{align*}
Thus, $(\gamma_d \circ \phi)(t) > (\gamma_s \circ \phi)(t)$, so $\phi(t)$ cannot be a junction point.

\noindent \underline{Case 1.a.iii: $t > 1$:} In this case, $\gamma_d(t) = \gamma_s(t)$. Notice that
\[ 1 = \left( 1 - \frac{1}{t} \right)(0) + \left(\frac{1}{t}\right)(t) \]
And therefore
\begin{align*}
(\gamma_d \circ \phi)(1) &= (\gamma_d \circ \phi)\left( \left(1 - \frac{1}{t}\right)(0) + \left(\frac{1}{t}\right)(t) \right) \\
&= \left(1 - \frac{1}{t}\right)(\gamma_d \circ \phi)(0) + \left(\frac{1}{t}\right)\gamma_d(t) \\
&> \left(1 - \frac{1}{t}\right)(\gamma_s \circ \phi)(0) + \left(\frac{1}{t}\right)(\gamma_s \circ \phi)(t) \\
&\geq (\gamma_s \circ \phi)(1)
\end{align*}
Thus, $(\gamma_d \circ \phi)(1) > (\gamma_s \circ \phi)(1)$, which is a contradiction. Thus, $\phi(1)$ is the only junction point between $\gamma_d = \gamma_s$ and $\gamma_r = \gamma_s$. This proves (c)(iii).

\underline{Case 1.b: $q$ lies on $e_1$ and $q$ separates $D_+$ from $S$:} We illustrate this case in \cref{fig:edge-intersections}(C). Similar reasoning to the previous case gives (c)(v).

\underline{Case 1.c: $q$ lies on the border between $D_+$ and $R_-$:} In this case $\DP \cap e_1$ will not border $\SA \cap e_1$, as it borders $\RP \cap e_1$ and $\RN \cap e_1$. Thus, there must be no topologically significant intersections between $\gamma_d = \gamma_s$ and edge $e_1$. By assumption, there can be no junction points between $\gamma_d = \gamma_s$ and $\gamma_r = \gamma_s$. We illustrate this case in \cref{fig:edge-intersections}(D).

By assumption, $\gamma_d(q) = -\gamma_r(q)$. Let $q'$ be the point on edge $e$ where $\gamma_d(q') = -\gamma_r(q')$. Let $l$ denote the line connecting $q$ to $q'$. Because $\gamma_d$ and $\gamma_r$ are affine, then $\gamma_d = -\gamma_r$ on all of $l$. Thus, $l$ is the line $\gamma_d = -\gamma_r$ and all junction points between $\gamma_d = \gamma_s$ and $-\gamma_r = \gamma_s$ must occur on $l$.

By assumption, $\gamma_d(q) > \gamma_s(q)$ and $\gamma_d(q') > \gamma_s(q')$. Because $\gamma_d$ is affine and $\gamma_s$ is convex, it follows that $\gamma_d > \gamma_s$ on all of $l$. Thus, there can be no junction points between $\gamma_d = \gamma_s$ and $-\gamma_r = \gamma_s$.

Because $v_1 \in \RP$, and $\RN$ intersects $e_1$, it follows from \cref{lemma:opposite-vertices} that $v_3 \in \RRN$. Thus, either $v_3 \in \RN$ or $v_3 \in \DN$. But because $\DP$ borders $e_1$, from \cref{lemma:not-all-four}, it cannot hold that $v_3 \in \DN$. So $v_3 \in \RN$. Since $v_2 \in \RN$, the entire edge $e_2 \subset \RN$. Thus, there are no topologically significant intersections between $\gamma_d = \gamma_s$ and $e_3$. This gives (c)(vii).

Thus, in all of case 1, we will either have (c)(i), (c)(iii), (c)(v), or (c)(vii).

\noindent \underline{Case 2: $\gamma_r = \gamma_s$ intersects $e_2$:} In this case, we can proceed analogously to case 1, but instead of moving from $p$ towards $v_1$, we move from $p$ towards $v_2$. We then  travel up the curve $-\gamma_d = \gamma_s$, which will intersect $e_2$. Arguing analogously to the previous case will give either (c)(ii), (c)(iv), (c)(vi), or (c)(vii).

Thus, if $\DP$ borders edge $e$, then conditions (a), (b), and (c) will  be true. Now suppose that (a), (b), and (c) are true. If one of (c)(i) - (c)(vi) are true, we can find a point $p \in D_+$. By arguing in reverse, we can find a path to the edge $e$ that never leaves $\DP$, so $\DP$ will intersect $e$.

Now suppose that (c)(vii) is true. On edge $e$, $\RP \cap e$ cannot border $\RN \cap e$. Since $\gamma_r^2 = \gamma_s^2$ has no topologically significant intersections with $e$, then $\RP \cap e$ and $\RN \cap e$ cannot border any region of type $\SA$. Thus, each of $\RP \cap e$ or $\RN \cap e$ must border $\DP \cap e$ or $\DN \cap e$. We show that $\DN \cap e = \emptyset$.

Instead, suppose that $\DN \cap e \not = \emptyset$. Then define the following sets.
\begin{itemize}[noitemsep]
\item $E_1 = \{ p \in \sigma : \gamma_d(p) > \gamma_r(p)$ and $\gamma_d(p) > -\gamma_r(p)\}$
\item $E_2 = \{ p \in \sigma : -\gamma_d(p) > \gamma_r(p)$ and $-\gamma_d(p) > -\gamma_r(p)\}$
\item $E_3 = \{ p \in \sigma : \gamma_r > \gamma_d$ and $\gamma_r > -\gamma_d\}$
\item $E_4 = \{p \in \sigma : -\gamma_r > \gamma_d$ and $-\gamma_r > -\gamma_d\}$
\end{itemize}
Notice that $\DP \subset E_1$, $\DN \subset E_2$, $\RP \subset E_3$, and $\RN \subset E_4$. Thus, none of the $E_i$ are empty. It is easy to verify that each $E_i$ is connected, and the $\{E_i\}$ are disjoint. Further, whenever two of the $E_i$ border each other, the boundary will lie along one of the lines $\gamma_r = \gamma_d$, or $-\gamma_r = \gamma_d$.

In order for this to occur, the lines $\gamma_r = \gamma_d$ and $-\gamma_r = \gamma_d$ must divide $\sigma$ into at least four regions (one for each of the $E_i$). This can only happen if $\gamma_r = \gamma_s$ intersects $-\gamma_r = \gamma_s$ within $\sigma$. Let $q$ be the intersection point. Then $\gamma_r(q) = -\gamma_r(q)$, implying that $\gamma_r(q) = \gamma_d(q) = 0$. By assumption, it cannot hold that $\gamma_s(q) = 0$, so $q \in \SA$.

Let $p \in \DP$. Let $s$ be the segment between $p$ and $q$. Then $\SA \cap s \not = \emptyset$ and $\DP \cap s \not = \emptyset$. Because $\gamma_d(p) > \gamma_r(p)$ and $\gamma_d(q) = \gamma_r(q)$, and $\gamma_d$ and $\gamma_r$ are both affine, there will be no point on $s$ where $\gamma_r > \gamma_d$, so $\RP \cap s = \emptyset$. Similarly, $\RN \cap s = \emptyset$. Also, it is not possible for $\DP \cap s$ to border $\DN \cap s$. Thus, $\DP \cap s$ will border $\SA \cap s$, meaning that $\DP$ borders $\SA$.

Because $\DP$ borders $\SA$ it follows that the curve $\gamma_d = \gamma_s$ is not empty. Further, a portion of the curve $\gamma_d = \gamma_s$ separates $\DP$ from $\SA$. Because $\gamma_d = \gamma_s$ has no junction points and no topologically significant intersections with any edge, it follows that $\gamma_d = \gamma_s$ must be a circle or ellipse in the interior of $\sigma$. By \cref{lemma:internal-ellipse-other}, this would imply that $\DN = \emptyset$.

Thus, $\DN$ cannot intersect $\sigma$ at all. Therefore, $\RP \cap e$ borders $\DP \cap e$, implying that $\DP$ borders $e$.
\end{proof}

\begin{lemma}
Within $\sigma$, $D_+$ will border $R_+$ if and only if $D_+ \not = \emptyset$, $R_+ \not = \emptyset$, and one of the following is true:
\begin{itemize}[noitemsep]
\item[(a)] There is a junction point between $\gamma_d = \gamma_s$ and $\gamma_r = \gamma_s$.
\item[(b)] There exists an edge $e$ of $\sigma$, such that $D_+ \cap e \not = \emptyset$ and $R_+ \cap e \not = \emptyset$ and one of the following statements is true:
\begin{itemize}[noitemsep]
\item[(i)] One vertex of $e$ is of type $D_+$, and $\gamma_d = \gamma_s$ has no topologically significant intersections with $e$.
\item[(ii)] One vertex of $e$ is of type $R_+$ and $\gamma_r = \gamma_s$ has no topologically significant intersections with $e$.
\item[(iii)] Neither vertex is of type $R_+$ or $D_+$, and $\gamma_d = \gamma_s$ and $\gamma_r = \gamma_s$ each have at most one topologically significant intersection with $e$.
\end{itemize}
\end{itemize}
\noindent Analogous lemmas can be made with each pair in $\{D_+,D_-\} \times \{R_+,R_-\}$.
\label{lemma:region-boundaries}
\end{lemma}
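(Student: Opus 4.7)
\smallskip
\noindent\textbf{Proof Plan.} The plan is to prove the two directions separately, exploiting the fact that the only possible shared boundary between $\DP$ and $\RP$ lies on the line $\gamma_d=\gamma_r$ (where both exceed $\gamma_s$). For the forward direction, suppose $\DP$ borders $\RP$. Since $\DP$ is the set where $\gamma_d>|\gamma_r|$ and $\gamma_d>\gamma_s$ while $\RP$ is the set where $\gamma_r>|\gamma_d|$ and $\gamma_r>\gamma_s$, any common boundary point $p$ must satisfy $\gamma_d(p)=\gamma_r(p)\geq \gamma_s(p)$. Because $\gamma_d,\gamma_r$ are affine (\cref{lemma:its-all-PL}), the locus $\gamma_d=\gamma_r$ is a straight line, and the shared boundary is a segment of that line. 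I would then look at the endpoints of this segment. An interior endpoint forces $\gamma_d=\gamma_r=\gamma_s$, which is a junction point between $\gamma_d=\gamma_s$ and $\gamma_r=\gamma_s$, giving case (a). An endpoint on an edge $e$ of $\sigma$ yields $\DP\cap e\ne\emptyset$ and $\RP\cap e\ne\emptyset$; classifying the vertex types of $e$ (using convexity of $\DP,\RP$ from \cref{lemma:regions-are-convex} and the edge-intersection patterns in \cref{lemma:opposite-vertices} and \cref{lemma:mini-region-border}) separates the possibilities into (b)(i), (b)(ii), and (b)(iii).

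For the reverse direction in case (a), suppose $p$ is a junction point between $\gamma_d=\gamma_s$ and $\gamma_r=\gamma_s$ and that both $\DP,\RP$ are nonempty. At $p$ we have $\gamma_d(p)=\gamma_r(p)=\gamma_s(p)$. The three affine differences $\gamma_d-\gamma_s$, $\gamma_r-\gamma_s$, $\gamma_d-\gamma_r$ vanish at $p$ (modulo convexity of $\gamma_s$), and the line $\gamma_d=\gamma_r$ passes through $p$. I would use the fact that since $\DP$ and $\RP$ are both nonempty and convex, and they are disjoint from $\DN,\RN$ by \cref{lemma:illegal-intersections}, the sectors around $p$ where $\gamma_d>\gamma_r>\gamma_s$ and $\gamma_r>\gamma_d>\gamma_s$ must both be realized inside $\sigma$; these sectors abut along the line $\gamma_d=\gamma_r$ at $p$, witnessing the border. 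For case (b), in each sub-case I would produce a boundary point on $e$ directly. In (b)(i) and (b)(ii), the edge intersection patterns from \cref{lemma:mini-region-border} force $\DP\cap e$ and $\RP\cap e$ to abut without an intervening $\SA$ component, since the relevant curve $\gamma_d=\gamma_s$ or $\gamma_r=\gamma_s$ contributes no topologically significant intersection. In (b)(iii), a parametrization argument along $e$, like the one used in the proof of \cref{lemma:mini-region-border}, shows that if each of $\gamma_d=\gamma_s$ and $\gamma_r=\gamma_s$ has at most one topologically significant intersection and both $\DP\cap e$ and $\RP\cap e$ are nonempty, then the two subintervals must share an endpoint on $e$.

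The main obstacle will be the reverse direction under (a): one has to rule out the possibility that $\DP$ and $\RP$ lie in non-adjacent sectors around the junction point $p$, with an $\SA$ wedge separating them. To handle this I would argue via the gradients: because $\gamma_d,\gamma_r$ are affine and (generically) have independent linear parts, the three loci $\gamma_d=\gamma_s$, $\gamma_r=\gamma_s$, $\gamma_d=\gamma_r$ divide a neighborhood of $p$ into at most six sectors, and in exactly two of them we have $\gamma_d,\gamma_r>\gamma_s$; these two are separated only by $\gamma_d=\gamma_r$ and are of types $\DP$ and $\RP$ respectively. Convexity of $\DP,\RP$ (\cref{lemma:regions-are-convex}), together with nonemptiness, ensures they extend into those two sectors inside $\sigma$, so they border along the common edge $\gamma_d=\gamma_r$ at $p$. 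The analogous statements for the other three pairs in $\{\DP,\DN\}\times\{\RP,\RN\}$ follow by reflecting signs throughout the argument.
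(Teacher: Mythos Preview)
Your approach matches the paper's proof closely: both directions hinge on the observation that the shared boundary of $\DP$ and $\RP$ lies on the line $\gamma_d=\gamma_r$, and both trace this line to either a junction point (giving (a)) or an edge (giving (b)), followed by a case split on the vertex types of that edge.

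The one place where your emphasis diverges from the paper is worth noting. You flag the reverse direction under (a) as the ``main obstacle'' and sketch a sector argument at the junction point. The paper, by contrast, dismisses this as trivial, and with the standing assumptions it really is: a junction point $p$ between $\gamma_d=\gamma_s$ and $\gamma_r=\gamma_s$ satisfies $\gamma_d(p)=\gamma_r(p)=\gamma_s(p)$, and by assumption (viii) this common value is strictly positive; transversality (assumption (i)) and the fact that $p$ is interior to $\sigma$ (assumption (ii)) then guarantee that the sector where $\gamma_d>\gamma_r>\gamma_s$ and the sector where $\gamma_r>\gamma_d>\gamma_s$ are both realized in $\sigma$ and abut along $\gamma_d=\gamma_r$. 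So your sector picture is correct, but no genuine obstacle exists there.

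Where the paper invests more care than your sketch is the reverse direction under (b)(iii). You say the two subintervals ``must share an endpoint on $e$,'' but this requires ruling out that $\DP\cap e$ abuts $\RN\cap e$ instead of $\RP\cap e$. The paper handles this with a short case split: if $\RN\cap e=\emptyset$ the conclusion is immediate; if $\RN\cap e\neq\emptyset$, then \cref{lemma:not-all-four} forces $\DN\cap e=\emptyset$, and the symmetric argument applied to $\RP$ shows it must border $\DP$. Your parametrization idea would work, but you should build in this exclusion of the ``wrong'' $R$ neighbor explicitly.
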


\begin{figure}[!h]
\begin{overpic}[width=\linewidth]{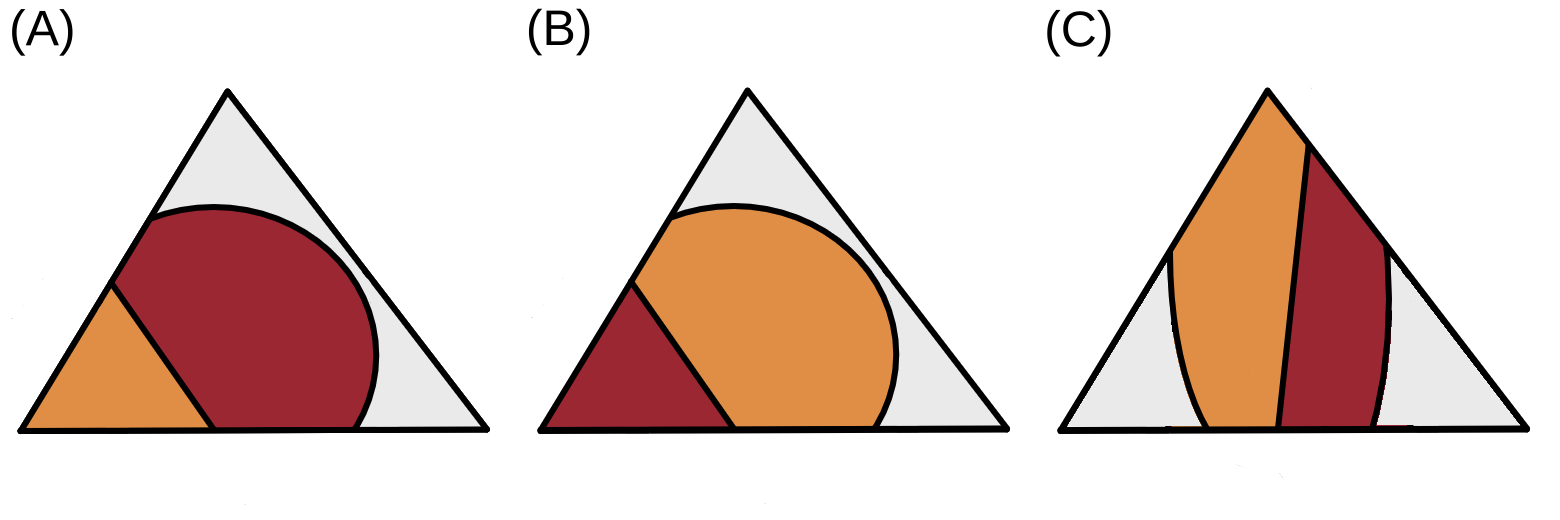}

%a
\put (0,3) {\smaller[2]{$\phi(t_d)$}}
\put (10,3) {\smaller[2]{$\phi(t)$}}
\put (23,3) {\smaller[2]{$\phi(t_r)$}}

%c
\put (72,3) {\smaller[2]{$\phi(t_d)$}}
\put (80,3) {\smaller[2]{$\phi(t)$}}
\put (88,3) {\smaller[2]{$\phi(t_r)$}}

\end{overpic}
\vspace{-4mm}
\caption{We demonstrate various cases in \cref{lemma:region-boundaries}. Here the edge $e$ corresponds to the bottom edge of the triangle. The orange region is $\DP$ while the red region is $\RP$. (A) Case 1 of the proof, where one vertex is of type $\DP$. (B) Case 2 of the proof, where one vertex is of type $\RP$. (C) Case 3 of the proof, where neither vertex is of type $\RP$ or $\DP$.}
\label{fig:region-boundaries}
\vspace{-6mm}
\end{figure}
\begin{proof}
We first show that if $\DP$ border $\RP$, then one of statements (a) or (b) must be true. If $\DP$ borders $\RP$, then consider the line $\gamma_r = \gamma_d$. Choose some point $p \in \sigma$ on that line that is on the border between $\DP$ and $\RP$, and trace the line. When tracing along the line, one will either encounter a region of type $\SA$, whereby a junction point is reached, giving (a), or one will hit an edge $e$. 

If an edge $e$ is hit, then trivially $\DP \cap e \not = \emptyset$ and $\RP \cap e \not = \emptyset$. Let $\phi:[0,1] \rightarrow e$ be a parametrization. Let $t \in [0,1]$ be such that $\phi(t)$ is the point where $\gamma_d = \gamma_r$ intersects $e$. Then there will be intervals $I_R,I_D \in [0,1]$ that have $t$ as an endpoint such that $\phi(I_1) = R_+ \cap e$ and $\phi(I_2) = D_+ \cap e$. Let $t_r, t_d \in [0,1]$ be such that $I_R = [t_r,t]$ and $I_D = [t,t_d]$.

We now check three cases for the vertices of $e$. We illustrate each case in \cref{fig:region-boundaries}, where the bottom edge of the triangle corresponds to $e$..

\noindent \underline{Case 1: One vertex is of type $\DP$:} We illustrate this case in \cref{fig:region-boundaries}(A). This case implies that $t_d = 1$. Thus, the only region that $\DP \cap e$ borders is $\RP \cap e$ with boundary at $\phi(t)$. Thus, $\DP \cap e$ will never border $\SA \cap e$, so $\gamma_d = \gamma_s$ will have no topologically significant intersections with $e$. This gives (b)(i).

\noindent \underline{Case 2: One vertex is of type $\RP$:} We illustrate this case in \cref{fig:region-boundaries}(B). This case is similar to case 1 and gives (b)(ii)

\noindent \underline{Case 3: Neither vertex is of type $\DP$ or $\RP$:} We illustrate this case in \cref{fig:region-boundaries}(C). In this case, $\DP \cap e$ and $\RP \cap e$ border each other at $\phi(t)$. Thus, $\gamma_d = \gamma_s$ could only have a topologically significant intersection at $\phi(t_d)$ and $\gamma_r = \gamma_s$ could only have a topologically significant intersection at $\phi(t_r)$. Thus, each curve could only have at most one topologically significant intersection. This gives (b)(iii).

Thus, we have shown that $\RP$ bordering $\DP$ implies that (a) or (b) is true. We now prove the opposite direction. We show that each condition implies that $\DP$ borders $\RP$. It is trivial that (a) implies that $\DP$ borders $\RP$. We now show this for (b). Let $\phi:[0,1] \rightarrow e$ parametrize $e$.

\para{(b)(i):} By \cref{lemma:mini-region-border}, the region $E_1 := \{ p \in e : \gamma_d > |\gamma_r|\}$ will border the region $E_2 := \{ p \in e : \gamma_r(p) > |\gamma_d(p)|\}$. Because $D_+$ contains a vertex $v$ of $e$, $E_1$ contains a vertex $v$ of $e$ (w.l.o.g. suppose that $E_1$ contains $\phi(0)$). Thus, there exists some $t \in [0,1]$ so that $E_1 = \phi([0,x])$. Because $E_1$ contains a vertex of $e$, it can only border one of $E_2$ or $E_3 := \{p \in e : -\gamma_r(p) > |\gamma_d(p)|\}$ at $\phi(x)$. Because $E_1$ borders $E_2$, it thus cannot border $E_3$.

Because $\gamma_d = \gamma_s$ does not have any topologically significant intersections with $e$, the region $D_+$ must border either $R_+$ or $R_-$. Notice that $D_+ \subset E_1$, $R_+ \subset E_2$, and $R_- \subset E_3$. Because $E_1$ does not border $E_3$, then $D_+$ cannot border $R_-$. Thus, $D_+$ borders $R_+$.

\para{(b)(ii):} This is similar to (b)(i).

\para{(b)(iii):} Because neither vertex is of class $D_+$, the region $D_+ \cap e$ must border a different region on both sides. Since $\gamma_d = \gamma_s$ has at most one topologically significant intersection with $e$, then one of the regions that borders $D_+$ must be $R_+$ or $R_-$. We check two cases:

\noindent \underline{case 1: $R_- \cap e = \emptyset$:} In this case, $D_+$ cannot border $R_-$ so $D_+$ borders $R_+$.

\noindent \underline{case 2: $R_- \cap e \not = \emptyset$:} In this case, by \cref{lemma:not-all-four}, we must have $D_- \cap e = \emptyset$. Arguing symmetrically, we can find that $R_+$ must border $D_+$ or $D_-$. Since $D_- = \emptyset$, it must follow that $R_+$ borders $D_+$.
\end{proof}

\begin{lemma}
Suppose that $\RRP$ and $\RRN$ both intersect $\sigma$. Excluding degenerate points, the eigenvector graph of $\sigma$ will be a linear graph of the structure $\RRP$--$\SRP$--$\SRN$--$\RRN$.
\label{lemma:eigenvector-single-r}
\end{lemma}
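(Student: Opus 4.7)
The plan is to establish three facts: (i) all four regions $\RRP, \SRP, \SRN, \RRN$ are non-empty; (ii) each of $\SRP$ and $\SRN$ consists of a single connected component (the analogous claim for $\RRP$ and $\RRN$ is already \cref{corollary:one-connected-component}); and (iii) the only adjacencies among the four regions are $\RRP$--$\SRP$, $\SRP$--$\SRN$, and $\SRN$--$\RRN$. Together these imply the claimed linear chain.

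For (i) and the presence of each edge in (iii), I would fix $p \in \RRP$ and $q \in \RRN$ and inspect the straight segment from $p$ to $q$. Along this segment $\gamma_r$ is affine with $\gamma_r(p) > \gamma_s(p) \geq 0$ and $\gamma_r(q) < -\gamma_s(q) \leq 0$, so there is a unique point $z$ on the segment at which $\gamma_r(z) = 0$; assumption (viii) of \cref{appendix:partition-correctness-definitions} then forces $\gamma_s(z) > 0$, placing $z$ strictly on the shared boundary of $\SRP$ and $\SRN$. Walking from $p$ toward $z$, the function $\gamma_s - \gamma_r$ transitions from negative to positive, so the segment crosses the curve $\gamma_r = \gamma_s$, and the subinterval just past this crossing satisfies $\gamma_s > \gamma_r > 0$ and hence lies in $\SRP$. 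A symmetric argument on $[z, q]$ yields a subinterval in $\SRN$. Thus $\SRP, \SRN \neq \emptyset$ and this single segment witnesses the three claimed adjacencies. The forbidden adjacencies follow from existing results: $\RRP$--$\RRN$ is ruled out by \cref{lemma:illegal-intersections}(ii), and $\RRP$--$\SRN$, $\RRN$--$\SRP$ by \cref{lemma:illegal-intersections-2}.

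The main obstacle is (ii), showing $\SRP$ (and by symmetry $\SRN$) is connected. My plan is to work inside the closed convex polygon $P = \sigma \cap \{\gamma_r \geq 0\}$, noting by \cref{lemma:s-is-convex} that $\gamma_s - \gamma_r$ is convex and hence $\overline{\RRP} = \{\gamma_s - \gamma_r \leq 0\} \cap \sigma$ is a convex subset of $P$. Assumption (viii) moreover rules out $\overline{\RRP}$ meeting the edge $L = \{\gamma_r = 0\} \cap P$ of $P$, because a point with $\gamma_r = 0$ and $\gamma_r \geq \gamma_s$ would force $\gamma_s = 0$. By compactness there exists $\epsilon > 0$ such that $\gamma_r \geq \epsilon$ on all of $\overline{\RRP}$, so the thin strip $S_\epsilon = \{p \in P : 0 < \gamma_r(p) < \epsilon\}$ lies entirely inside $\SRP$ and is itself connected (being a convex slab parallel to $L$ inside $P$). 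It then suffices to show that every $p \in \SRP$ can be path-connected within $\SRP$ to $S_\epsilon$, which I would do by routing around the convex obstacle $\overline{\RRP}$: move $p$ in the direction of $-\nabla \gamma_r$ until either $S_\epsilon$ is reached or $\partial \overline{\RRP}$ is hit; in the latter case, traverse $\partial \overline{\RRP}$ pushed slightly outward (into $\{\gamma_s - \gamma_r > 0\} \cap P$) until an unobstructed descent toward $L$ becomes available. The symmetric argument for $\SRN$ completes (ii).
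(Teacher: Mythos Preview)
Your steps (i) and (iii) are sound; the segment argument together with \cref{lemma:illegal-intersections} and \cref{lemma:illegal-intersections-2} handles non-emptiness and all the adjacencies.

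The real gap is in (ii). Your routing argument for the connectedness of $\SRP$ never invokes the hypothesis $\RRN\neq\emptyset$, yet that hypothesis is essential. Concretely, take $\sigma$ with vertices $(0,2),(-1,-1),(1,-1)$, put $\gamma_r(x,y)=y$, and put $c(x,y)=\sqrt{2}\,(y-\tfrac12)$, $s(x,y)=\tfrac12$, so that $\gamma_s=\sqrt{2(y-\tfrac12)^2+\tfrac14}$. Then $\overline{\RRP}=\{\tfrac12\le y\le\tfrac32\}\cap\sigma$ is a convex belt that spans the full width of $\sigma$ at every height in $[\tfrac12,\tfrac32]$, so it touches both lateral edges of $\sigma$ while staying disjoint from $L=\{y=0\}$. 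Here $\SRP$ breaks into two pieces, the strip $\{0<y<\tfrac12\}\cap\sigma$ near $L$ and the cap $\{y>\tfrac32\}\cap\sigma$ near the apex. Your route from the apex descends to $\partial\overline{\RRP}$ at $y=\tfrac32$, slides sideways along it, and then runs into $\partial\sigma$ with no way around; no ``unobstructed descent toward $L$'' ever appears. In this configuration $\RRN=\emptyset$, so the lemma itself is not violated, but every ingredient you use in (ii) is present, and your argument would ``prove'' the false conclusion.

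The paper's proof avoids this by working with $\SE=\SRP\cup\SRN$ rather than $\SRP$ directly, and by exploiting both $\RRP$ and $\RRN$ simultaneously: if $\SE$ were disconnected, a straight segment between the pieces would be forced (by convexity of $\RRP$ and $\RRN$) to cross the conic $\gamma_r^2=\gamma_s^2$ three times, which is impossible. Only after $\SE$ is known to be connected is it cut by the line $\gamma_r=0$. If you want to keep a direct attack on $\SRP$, you must bring $\RRN\neq\emptyset$ into the argument---for instance by arguing that a belt-shaped $\overline{\RRP}$ reaching two separated arcs of $\partial\sigma$ would consume both branches of the conic and leave no room for $\RRN$.
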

\begin{figure}
\begin{overpic}[width=\linewidth]{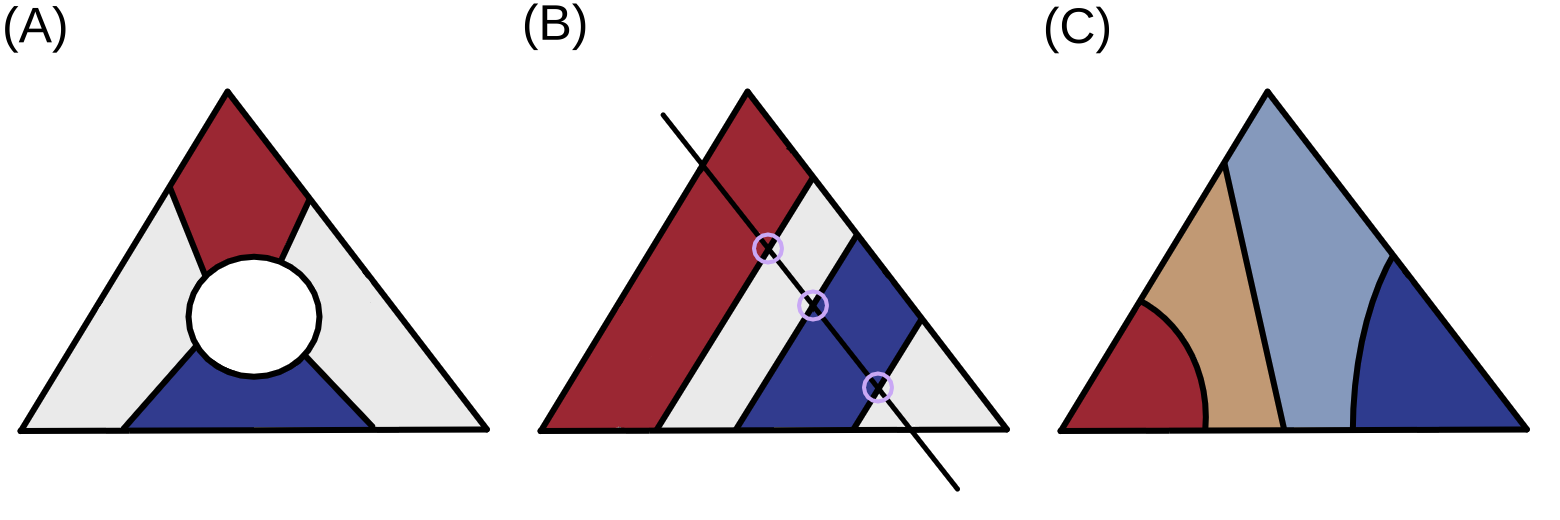}
%a
\put (13,20) {\smaller[2]{\textcolor{white}{$\RRP$}}}
\put (22,10) {\smaller[2]{$\SE_1$}}
\put (15,7) {\smaller[2]{\textcolor{white}{$\RRN$}}}
\put (6,10) {\smaller[2]{$\SE_2$}}
\put (15,12) {\smaller[2]{$C$}}

%b
\put (37.5, 8) {\smaller[2]{\textcolor{white}{$\RRP$}}}
\put (44.5, 8) {\smaller[2]{$\SE_1$}}
\put (50, 8) {\smaller[2]{\textcolor{white}{$\RRN$}}}
\put (58, 8) {\smaller[2]{$\SE_2$}}
\put (63, 1) {\smaller[2]{$l$}}

%c
\put (71, 8) {\smaller[2]{\textcolor{white}{$\RRP$}}}
\put (75, 15) {\smaller[2]{\textcolor{white}{$\SRP$}}}
\put (82, 15) {\smaller[2]{\textcolor{white}{$\SRN$}}}
\put (89, 8) {\smaller[2]{\textcolor{white}{$\RRN$}}}
\put (63,16) {\smaller[2]{$\gamma_r = \gamma_s$}}
\put (80, 3) {\smaller[2]{$\gamma_r = 0$}}
\put (87.5, 19) {\smaller[2]{$-\gamma_r = \gamma_s$}}

\end{overpic}
\vspace{-4mm}
\caption{(A) We show regions $\RRP$, $\RRN$, $\SE_1$ and $\SE_2$ from the proof of \cref{lemma:eigenvector-border-edge-r}. It is not possible to fill in the region $C$ to prevent $\RRP$ from bordering $\RRN$ and $\SE_1$ from bordering $\SE_2$. (B) We show regions $\RRP$, $\RRN$, $\SE_1$ and $\SE_2$ from the proof of \cref{lemma:eigenvector-border-edge-r}. In this case, the line $l$ would intersect $\gamma_r^2 = \gamma_s^2$ three times (marked in purple). (C) A partition corresponding to the only possible eigenvector graph if $\RRP \not = \emptyset$ and $\RRN \not = \emptyset$.}
\label{fig:eigenvector-border-edge-r}
\vspace{-6mm}
\end{figure}
\begin{proof}
Let $\SE = \SRP \cup \SRN$. We claim that $\SE$ has only one connected component. Suppose, to the contrary, that $\SE$ had two connected components, which we will call $\SE_1$ and $\SE_2$.

It is not possible for both $\RRP$ and $\RRN$ to border both $\SE_1$ and $\SE_2$. Because $\RRP$ and $\RRN$ do not border each other, and $\SE_1$ and $\SE_2$ do not border each other, such a case would be geometrically impossible, as we demonstrate in \cref{fig:eigenvector-border-edge-r}(A). In \cref{fig:eigenvector-border-edge-r}(A), there is no way to fill in the region $C$ such that $\RRP$ does not border $\RRN$ and $\SRP$ does not border $\SRN$. Thus, at least one of $\RRP$ or $\RRN$ borders only one of $\SE_1$ or $\SE_2$. Without loss of generality, suppose that $\RRP$ borders $\SE_1$, but not $\SE_2$. 

Let $l$ be a line connecting $\RRP$ to $\SE_2$. Since $\RRP$ does not border $\RRN$ or $\SE_2$, it follows that, after exiting $\RRP$, the line $l$ must enter $\SE_1$. Thus, the line $l$ passes between $\SE_1$ and $\SE_2$. Because $\RRP$ and $\RRN$ are convex, and $l$ has already left $\RRP$, $l$ must pass through $\RRN$ between $\SE_1$ and $\SE_2$. In total, when traveling from $\RRP$ to $\SE_2$, the line $l$ passes through $\RRP \rightarrow \SE_1 \rightarrow \RRN \rightarrow \SE_2$. We illustrate this case in \cref{fig:eigenvector-border-edge-r}(B), where we highlight the intersectiosn between $\gamma_r^2 = \gamma_s^2$ and $l$ in purple.

Whenever $l$ passes from a region where $\gamma_r^2 > \gamma_s^2$ to one where $\gamma_s^2 > \gamma_r^2$ (or vice versa), $l$ must intersect the curve $\gamma_r^2 = \gamma_s^2$. In the path from $\RRP \rightarrow \SE_1 \rightarrow \RRN \rightarrow \SE_2$, then, the line $l$ must intersect $\gamma_r^2 = \gamma_s^2$ exactly three times. However, since $\gamma_r^2 = \gamma_s^2$ is a conic section, it is impossible for a line to intersect it exactly three times. Thus, there is only one connected component of type $S$.

Therefore, $\gamma_r^2 = \gamma_s^2$ will divide $\sigma$ into three regions: $\RRP$, $\RRN$, and $\SE$. Since there is no point where $\gamma_r = \gamma_s = 0$, the line $\gamma_r = 0$ must not intersect $\gamma_r^s = \gamma_s^2$. Thus, the line $\gamma_r = 0$ will divide $\SE$ into two connected components corresponding to $\SRP$ and $\SRN$. We illustrate the final partition in \cref{fig:eigenvector-border-edge-r}(C). Trivially, $\SRP$ borders $\SRN$. $\RRP$ must only border $\SRP$, and $\RRN$ must only border $\SRN$. Thus, we $\sigma$ will have en eigenvector graph of the structure $\RRP$--$\SRP$--$\SRN$--$\RRN$.
\end{proof}

\begin{lemma}
Let $e$ be an edge of $\sigma$. Then $\RRP$ will border $e$ if and only if $\RRP$ contains a vertex of $e$, or if $\gamma_r = \gamma_s$ intersects $e$. The analogous claim for $\RRN$ is also true.
\label{lemma:eigenvector-border-edge-r}
\end{lemma}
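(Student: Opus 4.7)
The plan is to reduce the claim to a one-dimensional analysis along $e$. Parametrize $e$ by an affine map $\phi:[0,1] \to e$. By \cref{lemma:its-all-PL}, $\gamma_r \circ \phi$ is affine, and by \cref{lemma:s-is-convex}, $\gamma_s \circ \phi$ is convex and continuous, so the function $h := (\gamma_r - \gamma_s) \circ \phi$ is continuous on $[0,1]$. The set $\RRP \cap e$ corresponds to $U := \{t \in [0,1] : h(t) > 0\}$, an open subset of $[0,1]$.

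For the forward direction, assume $\RRP$ borders $e$, so $U \neq \emptyset$. Pick $t_0 \in U$ and let $I \subseteq U$ be the maximal open subinterval of $U$ containing $t_0$. Each endpoint of $I$ lies either in $(0,1)$ or at one of $\{0,1\}$. In the first case, continuity of $h$ together with maximality of $I$ forces $h = 0$ at that endpoint, so $\gamma_r = \gamma_s$ intersects $e$ at the corresponding point. In the second case, the corresponding vertex of $e$ lies in $\RRP$. Either alternative gives the required conclusion.

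For the reverse direction I would treat each hypothesis in turn. If a vertex $v$ of $e$ belongs to $\RRP$, then trivially $v \in \RRP \cap e$, so $\RRP$ borders $e$. If instead $\gamma_r = \gamma_s$ intersects $e$ at some point $q = \phi(t^\ast)$, the key is to upgrade this crossing to a nearby point where $\gamma_r > \gamma_s$ along $e$. By assumption (i) the intersection is transverse, and by assumption (viii) $\gamma_s(q) > 0$, so $\gamma_s$ is smooth in a neighborhood of $q$. Transversality at a smooth zero of $h$ forces $h$ to change sign at $t^\ast$, producing a nearby $t \in [0,1]$ with $h(t) > 0$, i.e., $\phi(t) \in \RRP \cap e$.

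The main obstacle is ensuring that the stated transversality condition genuinely yields a sign change of $h$ along $e$: this could fail only if $\gamma_s$ were non-smooth at the crossing (which happens precisely where $\gamma_s = 0$), but assumption (viii) forbids $\gamma_r = \gamma_s = 0$ on $\sigma$ and rules this out. Once this subtlety is addressed, both directions follow from elementary one-dimensional continuity arguments, and the analogous statement for $\RRN$ is obtained by replacing $\gamma_r$ by $-\gamma_r$ throughout.
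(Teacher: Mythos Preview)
Your proof is correct and takes a genuinely different route from the paper. The paper's argument for the forward (necessity) direction invokes the structural adjacency lemmas (\cref{lemma:illegal-intersections} and \cref{lemma:illegal-intersections-2}) to conclude that $\RRP \cap e$ can only border $\SRP \cap e$, whence the separating curve $\gamma_r = \gamma_s$ must meet $e$; for the reverse (sufficiency) direction the paper simply declares it trivial, implicitly leaning on assumption (v). By contrast, you bypass the region-adjacency machinery entirely: your forward direction is a direct one-variable continuity argument on $h = (\gamma_r - \gamma_s)\circ\phi$, and your reverse direction carefully unpacks ``trivial'' by using transversality (i) together with (viii) to force a sign change of $h$ at the crossing. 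Your approach is more elementary and self-contained, and in fact exploits the concavity of $h$ only implicitly (you could have noted that $h$ is concave, so $U$ is already a single interval). The paper's approach has the advantage of reusing lemmas already proved for the eigenvalue-partition analysis, keeping the narrative uniform; yours has the advantage of not depending on those lemmas at all. One small point worth tightening: in your endpoint analysis you should note that assumption (vi) rules out $h=0$ at a vertex, so an endpoint of $I$ landing at $0$ or $1$ genuinely gives $h>0$ there.
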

\begin{proof}
It is trivial that these conditions are sufficent for $\RRP$ to border $e$. We now show that they are necessary. Suppose that $\RRP$ borders $e$. If $\RRP$ borders a vertex of $e$, then the condition is met. Otherwise, $\RRP \cap e \not = e$. From \cref{lemma:illegal-intersections} and \cref{lemma:illegal-intersections-2}, this implies that $\RRP \cap e$ must border $\SRP \cap e$ in which case $\RRP \cap e$ will be separated from its neighbor along the curve $\gamma_r^2 = \gamma_s^2$, whereby $\gamma_r = \gamma_s$ will intersect $e$.
\end{proof}

\begin{lemma}
Let $e$ be an edge of $\sigma$. Suppose that $\SRP$ has only one connected component. Then $\SRP$ will intersect $e$ if and only if $\SRP$ contains a vertex of $e$, if $\gamma_r = \gamma_s$ intersects $e$, or if $\gamma_r = 0$ intersects $e$. The analogous claim for $\SRN$ is also true. 
\label{lemma:eigenvector-border-edge-s}
\end{lemma}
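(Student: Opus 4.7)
The plan is to argue directly from the definition $\SRP = \{p \in \sigma : \gamma_s(p) > \gamma_r(p) > 0\}$, using only that $\SRP$ is open in $\sigma$ (since it is cut out by two strict inequalities on continuous functions) together with assumption (viii), which forbids $\gamma_r = \gamma_s = 0$ within $\sigma$. The connectedness hypothesis on $\SRP$ is not needed for this lemma itself; it is carried along to match the standing hypotheses used in the surrounding results.

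For the ``if'' direction I would handle each of the three conditions separately. If $\SRP$ contains a vertex of $e$, the conclusion is immediate. If $\gamma_r = \gamma_s$ meets $e$ at a point $p$, then $\gamma_r(p) = \gamma_s(p)$; assumption (viii) rules out $\gamma_r(p) = \gamma_s(p) = 0$, so $\gamma_r(p) = \gamma_s(p) > 0$, and by assumption (i) the intersection is transverse. On the side of $p$ along $e$ where $\gamma_s - \gamma_r$ becomes positive, continuity keeps $\gamma_r > 0$ in a short segment, placing that segment in $\SRP$. If $\gamma_r = 0$ meets $e$ at a point $p$, assumption (viii) gives $\gamma_s(p) > 0$, and on the side of $p$ where $\gamma_r > 0$, continuity again produces a short segment of $e$ satisfying $\gamma_s > \gamma_r > 0$, hence contained in $\SRP$.

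For the ``only if'' direction, suppose $\SRP \cap e \neq \emptyset$ and neither endpoint of $e$ lies in $\SRP$. Since $\SRP$ is open in $\sigma$, the set $\SRP \cap e$ is relatively open in $e$ and, by assumption, misses both endpoints; hence it has a boundary point $p$ in the relative interior of $e$. Taking limits of the defining inequalities from within $\SRP \cap e$ yields $\gamma_s(p) \geq \gamma_r(p) \geq 0$, while $p \notin \SRP$ forces at least one of these to be an equality. If $\gamma_s(p) = \gamma_r(p)$, then the curve $\gamma_r = \gamma_s$ passes through $p \in e$, giving condition (ii); if instead $\gamma_r(p) = 0$, then $\gamma_r = 0$ passes through $p \in e$, giving condition (iii). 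By assumption (viii) the two equalities cannot hold simultaneously, so exactly one of (ii) or (iii) occurs.

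The $\SRN$ version follows by substituting $-\gamma_r$ for $\gamma_r$ throughout, noting that the boundary curves become $-\gamma_r = \gamma_s$ and $\gamma_r = 0$. I do not anticipate a real obstacle; the only subtlety is confirming that assumption (viii) is exactly what is needed to prevent the limiting boundary point $p$ from degenerating into a case producing neither (ii) nor (iii), so it is worth flagging explicitly in the write-up.
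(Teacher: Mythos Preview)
Your argument is correct and follows essentially the same approach as the paper: the paper defers to the proof of the preceding lemma (for $\RRP$), which likewise examines a boundary point of the region restricted to $e$ and identifies which separating curve passes through it. The only difference is cosmetic—you work directly from the defining inequalities $\gamma_s > \gamma_r > 0$ at a limiting point, whereas the paper's template invokes the earlier ``illegal intersections'' lemmas to name the neighboring region type first and then read off the boundary curve; both routes arrive at the same dichotomy ($\gamma_r = \gamma_s$ or $\gamma_r = 0$) with equal ease.
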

\begin{proof}
The proof is very similar to that of \cref{lemma:eigenvector-border-edge-r}.
\end{proof}

\begin{figure}
\includegraphics[width=\linewidth]{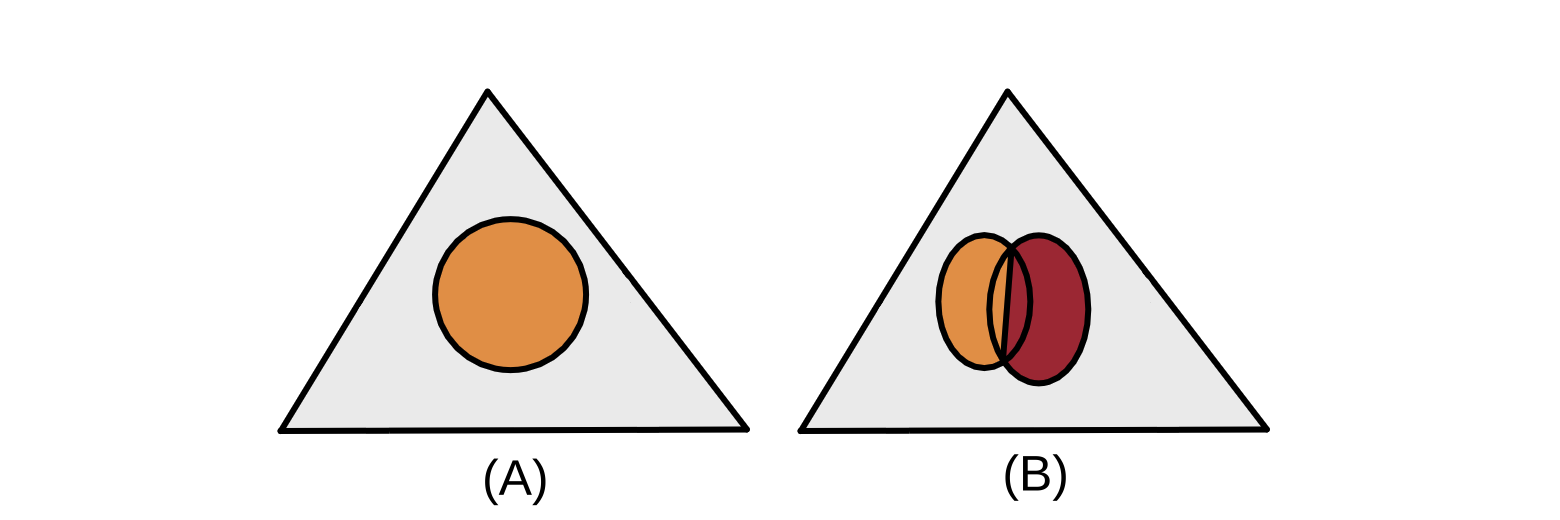}
\vspace{-8mm}
\caption{We depict the two possible ways that a region of type $\SA$ can completely surround another region.  (A) The case where $\DP$ (or whichever region is surrounded) has no junction points. (B) The case where $\DP$ has exactly two junction points with $\RP$ (where either $\DP$ or $\RP$ can be substituted for $\DN$ or $\RN$).}
\label{fig:internal-ellipse-detection}
\vspace{-6mm}
\end{figure}

\begin{lemma}
We say that that a region $R \subset \sigma$ of type $\SA$ \textit{completely surrounds} $\DP$ if $\DP \not = \emptyset$ and all paths starting from any point in $\DP$ will encounter $\SA$ before an edge of $\sigma$.
\begin{itemize}[noitemsep]
\item[(1)]This situation will occur if and only if all of the following are true:
\begin{itemize}[noitemsep]
\item[(a)] $\DP$ intersects the interior of $\sigma$.
\item[(b)] $\gamma_d = \gamma_s$ has no topologically significant intersections with any edges of $\sigma$.
\item[(c)] All vertices of $\sigma$ are of type $\SA$.
\item[(d)] $\DP$ has no junction points, or the following are true:
\begin{itemize}[noitemsep]
\item[(i)] The curve $\gamma_d = \gamma_s$ has exactly two junction points. The junction points are either both with $\gamma_r = \gamma_s$, or both with $-\gamma_r = \gamma_s$ (w.l.o.g. suppose they are with $\gamma_r = \gamma_s$).
\item[(ii)] The curve $\gamma_r = \gamma_s$ has exactly two junction points. Both junction points are with $\gamma_d = \gamma_s$.
\item[(iii)] The curve $\gamma_r = \gamma_s$ has no topologically significant intersections with any edge of $\sigma$.
\end{itemize}
\end{itemize}
\item[(2)] If a region $R$ of type $\SA$ completely surrounds $\DP$, then the eigenvector partition of $\sigma$ will only be one region of type $\SA$, and any regions completely surrounded by $\SA$.
\end{itemize}
Analogous lemmas can be shown for $\DN$, $\RP$, and $\RN$.
\label{lemma:internal-ellipse-detection}
\end{lemma}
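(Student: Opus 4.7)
My plan is to prove both parts of the lemma simultaneously, since the structural picture in part (2) falls out of the forward direction of part (1). I will abbreviate the hypothesis as ``$\SA$ completely surrounds $\DP$''.

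For the easy conditions in the forward direction of (1), suppose $\SA$ completely surrounds $\DP$. Conditions (a) and (b) both reduce to showing that $\DP$ is disjoint from $\partial\sigma$: if $\DP \cap \partial\sigma \neq \emptyset$, a path from a boundary point directed outward reaches an edge before entering $\SA$, contradicting the hypothesis. This immediately forbids any vertex of $\sigma$ from lying in $\DP$, and it rules out any topologically significant intersection of $\gamma_d = \gamma_s$ with an edge, since such an intersection would place a $\DP$--$\SA$ boundary point on $\partial\sigma$ and hence a point of $\overline{\DP}$ on $\partial\sigma$.

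For conditions (c) and (d), and simultaneously for part (2), I would introduce $U \subseteq \sigma$, defined as the set of points reachable from $\DP$ by paths in $\sigma \setminus \SA$. By hypothesis $U$ lies in the interior of $\sigma$, and its boundary inside $\sigma$ lies in $\overline{\SA}$, hence on arcs of $\gamma_d = \gamma_s$ or $\gamma_r^2 = \gamma_s^2$. The non-bordering \cref{lemma:illegal-intersections,lemma:illegal-intersections-2}, together with the uniqueness of components in \cref{corollary:one-connected-component}, force $U$ to be either $\DP$ alone or $\DP$ together with exactly one of $\RP, \RN$; assume without loss of generality that it is $\DP \cup \RP$ in the non-trivial sub-case. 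The complement $\sigma \setminus \overline{U}$ is connected and contains every vertex of $\sigma$; since each of $\DDP, \DDN, \RRP, \RRN$ has at most one component and those components already lie inside $U$, no further non-$\SA$ region can fit outside $\overline{U}$, so the complement must be entirely of type $\SA$. This yields (c) and part (2) at once. Condition (d) then follows by inspecting $\partial U$: if $U = \DP$, then $\partial U$ lies entirely on $\gamma_d = \gamma_s$ with no junction points; if $U = \DP \cup \RP$, then $\partial U$ alternates between arcs of $\gamma_d = \gamma_s$ and $\gamma_r = \gamma_s$ meeting at junction points, and a parity-plus-convexity argument combined with the fact that two conics meet in at most four points pins the count at exactly two.

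For the backward direction of (1), I would assume (a)-(d) and build the surrounding $\SA$ ring directly. In the no-junction sub-case, \cref{lemma:region-boundaries} together with (a) and (c) rules out any border of $\DP$ with $\DN$, $\RP$, or $\RN$, so $\partial \DP \cap \sigma$ lies on $\gamma_d = \gamma_s$; by (b) this boundary does not reach $\partial\sigma$, so it closes up in the interior with $\SA$ on the outside. In the junction sub-case (d)(i)-(iii), the two junction points split the relevant arcs of $\gamma_d = \gamma_s$ and $\gamma_r = \gamma_s$ into pieces that by (b), (c), and (d)(iii) avoid $\partial\sigma$ entirely; gluing them at the two junction points yields a single closed curve enclosing $\DP \cup \RP$, with $\SA$ on the outside. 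The main obstacle I anticipate is the junction-point case of the forward direction, specifically proving that $\partial U$ really is one simple closed Jordan curve rather than several disjoint loops or a pinched figure-eight configuration. I expect to need the clockwise-orientation conventions from \cref{appendix:partition-correctness-definitions} to track how the oriented arcs of $\gamma_d = \gamma_s$ and $\gamma_r = \gamma_s$ leave and enter each junction point, together with the transversality and non-degeneracy assumptions (i)-(viii) to exclude pathological gluings. Once $\partial U$ is shown to be a single Jordan curve, the backward direction follows by running the same local gluing argument in reverse.
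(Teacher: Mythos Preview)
Your forward direction for (a) and (b) is fine, and your backward direction is close to the paper's (it uses \cref{lemma:region-boundaries} and \cref{lemma:opposite-vertices} in essentially the same way, though you gloss over the step showing $\DP$ misses every edge; the paper needs both (b) and (c) together with \cref{lemma:opposite-vertices} there).

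The real gap is in your forward argument for (c) and part (2). You define $U$ as the set reachable from $\DP$ without entering $\SA$, correctly identify $U$ as $\DP$ or $\DP\cup\RP$, and then assert that ``those components already lie inside $U$'' when speaking of $\DDP,\DDN,\RRP,\RRN$. This assertion is exactly what has to be proved, and nothing you have written justifies it. A priori there is no obstruction to $\RRN$ (or $\DDN$) sitting as a separate convex blob in $\sigma\setminus\overline{U}$, disjoint from $U$ and surrounded by its own annulus of $\SA$; the complement would then fail to be entirely of type $\SA$, and some vertex could fail to lie in $\SA$. The paper closes this gap with two dedicated lemmas you do not invoke: once $\DDP$ is shown to be bounded by an interior ellipse, \cref{lemma:internal-ellipse-other} forces $\DDN=\emptyset$, and \cref{lemma:internal-ellipse-is-both} forces one of $\RRP,\RRN$ to meet the interior of that ellipse, so that it too is surrounded, whence its curve is an interior ellipse and the opposite-sign region is empty. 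Only after this chain do you know the complement of $\overline{U}$ is pure $\SA$.

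For (d), your ``two conics meet in at most four points plus parity'' sketch is weaker than what is needed. The paper's cleaner observation is that every junction point of $\gamma_d=\gamma_s$ with $\gamma_r=\gamma_s$ lies on the line $\gamma_d=\gamma_r$, and a line meets a conic in at most two points; this immediately bounds the junction count by two without any parity or Jordan-curve bookkeeping.
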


\begin{proof}
In \cref{fig:internal-ellipse-detection}, we illustrate the two ways that a region of type $\SA$ can completely surround a region of another type. \cref{fig:internal-ellipse-detection}(A) corresponds to the case where $\DP$ has no junction points, whereas (B) corresponds to the case where $\DP$ has exactly two junction points with $\RP$.

We first show (1). First, suppose that $R$ completely surrounds $\DP$. (a) is trivial. Also, $\DP$ cannot intersect any edges, giving (b).

In order for $R$, which is type $\SA$, to surround $\DP$, it must follow that $\DDP$ is also completely surrounded by $R$. As such, the curve $\gamma_d = \gamma_s$ is a connected component of a conic section that does not intersect any edges of $\sigma$, so it must be a circle or ellipse whose interior is of type $\DDP$. By \cref{lemma:internal-ellipse-is-both}, either $\RRP$ or $\RRN$ must intersect $\gamma_d = \gamma_r$. Without loss of generality, suppose that $\RRP$ intersects $\gamma_d = \gamma_r$.

By definition, $\RRP \cap \SA = \emptyset$. By assumption, every path starting at $\DDP$ that passes through $\RRP$ must intersect $\SA$ before intersecting an edge of $\sigma$. Since $\DDP$ intersects $\RRP$, it follows that $\SA$ also completely surrounds $\RRP$. Thus, the curve $\gamma_r = \gamma_s$ is also a circle or ellipse in the interior of $R$. From \cref{lemma:internal-ellipse-other}, $\DDN = \emptyset$ and $\RRN = \emptyset$, implying that $\DN = \emptyset$ and $\RN = \emptyset$. Since $\DDP$ nor $\RRP$ are ellipses in the interior of $\sigma$, and $\DDN = \emptyset$ and $\RRN = \emptyset$, it follows that all three vertices of $\sigma$ must be of type $\SA$, giving (c).

% Since $\RRP$ intersects $\gamma_d = \gamma_r$, in order for $R$ to completely surround $\DP$, $R$ must also completely surround $\RRP$. Thus, $\gamma_r = \gamma_s$ will also be a circle or ellipse in the interior of $R$. Using \cref{lemma:internal-ellipse-other}, $\DDN = \emptyset$ and $\RRN = \emptyset$, implying that $\DN = \emptyset$ and $\RN = \emptyset$.

If $\gamma_r = \gamma_s$ lies completely inside of $\gamma_d = \gamma_s$, then any points in $\RRN$ must lie in the interior of $\DP$. However, since $\DP$ is convex, this forces $\RP = \emptyset$. Thus, $\DP$ will not border any other regions, giving (d).

Now suppose that $\gamma_r = \gamma_s$ does not lie completely inside of $\gamma_d = \gamma_s$. Then the two curves will intersect at points along the line where $\gamma_d = \gamma_r$. Since $\gamma_d = \gamma_s$ and $\gamma_r = \gamma_s$ are both subsets of conic sections, the line $\gamma_d = \gamma_r$ can only intersect each curve at most twice. Thus, $\gamma_d = \gamma_s$ and $\gamma_r = \gamma_s$ will intersect exactly two times. Since $\DDN = \emptyset$ and $\RRN = \emptyset$, these will be the only junction points along either curve. This fact gives (d)(i) and (d)(ii).

Also, because $\gamma_r = \gamma_s$ is an ellipse in the interior of $\sigma$, (d)(iii) must follow, so we have (d).

Thus, if $R$ contains $\DP$ in its interior, then (a), (b), (c), and (d) must follow.

Now suppose that conditions (a), (b), (c), and (d) are true. Without loss of generality, assume that $\gamma_d = \gamma_s$ has no junction points with $-\gamma_r = \gamma_s$.

We claim that $\DP$ will not border any edge $e$. Suppose, to the contrary, that it did. Since no vertex is of type $\DP$, $\DP \cap e$ will border two other regions within $e$. If either region is of type $\SA$, then $\gamma_d = \gamma_s$ will have a topologically significant intersection between that region and $\DP$, which we assume cannot occur. Thus, $\DP \cap e$ can only border $\RP \cap e$ and $\RN \cap e$. Since $\RP \cap e$ and $\RN \cap e$ are connected, $\DP \cap e$ will border both $\RP \cap e$ and $\RN \cap e$, intersecting one region on each side. In that case, by \cref{lemma:opposite-vertices}, then, one vertex of $e$ is of type $\RRP$ and the other is of type $\RRN$. In that case, neither vertex can be of type $\SA$, violating assumption (c). Thus, $\DP$ does not border any edge $e$.

We also claim that $\DP$ cannot border $\RN$. Suppose that it did. By \cref{lemma:region-boundaries}, if $\DP$ did border $\RN$, then either $\gamma_d = \gamma_s$ and $-\gamma_r = \gamma_s$ would share a junction point, or $\DP$ and $\RN$ would both intersect some edge $e$ of $\sigma$. However, neither case is possible by assumption. Thus, $\DP$ does not border $\RN$.

We now check two cases:

\underline{Case 1: $\gamma_d = \gamma_s$ has no junction points:} Then, by \cref{lemma:region-boundaries}, $\DP$ will not border $\RP$. Thus, $\DP$ can only border a region $R$ of type $\SA$, and does not border any vertices or edges of $\sigma$. Therefore, $\DP$ must be completely surrounded by $R$.

\underline{Case 2: $\gamma_d = \gamma_s$ has exactly two junction points with $\gamma_r = \gamma_s$:} In this case, then clearly $\DP$ will border $\RP$. Using similar reasoning to before, we can show that $\RP$ will only border $\DP$ and a region of type $\SA$. We can also show that $\RP$ will not border any edges of $\sigma$. Thus, $\RP \cup \DP$ will only border regions of type $\SA$, and will not border any edges of $\sigma$. This is possible only if $\RP \cup \DP$ is completely surrounded by a single region $R$ of type $\SA$, meaning that both $\RP$ and $\DP$ lie within the interior of $R$. This completes the proof of (1).

To show (2), in the reasoning above, we demonstrated that if $\DP$ lies in the interior of a region $R$ of type $S$, then the curves $\gamma_d = \gamma_s$ and $\gamma_r = \gamma_s$ are both ellipses contained in the interior of $\sigma$. Further, either $\RP = \emptyset$, or $\RP$ and $\DP$ are both completely surrounded by $R$.

By \cref{lemma:internal-ellipse-other}, it follows that $\DDN = \emptyset$ and $\RRN = \emptyset$. Thus, the only regions within $\sigma$ will be $\DP$, $\RP$ (if it is not empty), and $R$, which has type $\SA$.
\end{proof}
\subsection{Proofs of Main Results}
\label{appendix:partition-correctness-main-results}

We now prove the main results. For \cref{lemma:eigenvector-topology}, the goal is to use our invariant to recover the structure of the eigenvector graph of $\sigma$. We also wish to recover: for each node $n$ of the eigenvector graph, which vertices and edges of $\sigma$ border the regions within $\sigma$ corresponding to $n$.
\lemmaEigenvectorTopology*
\begin{figure}
\includegraphics[width=\linewidth]{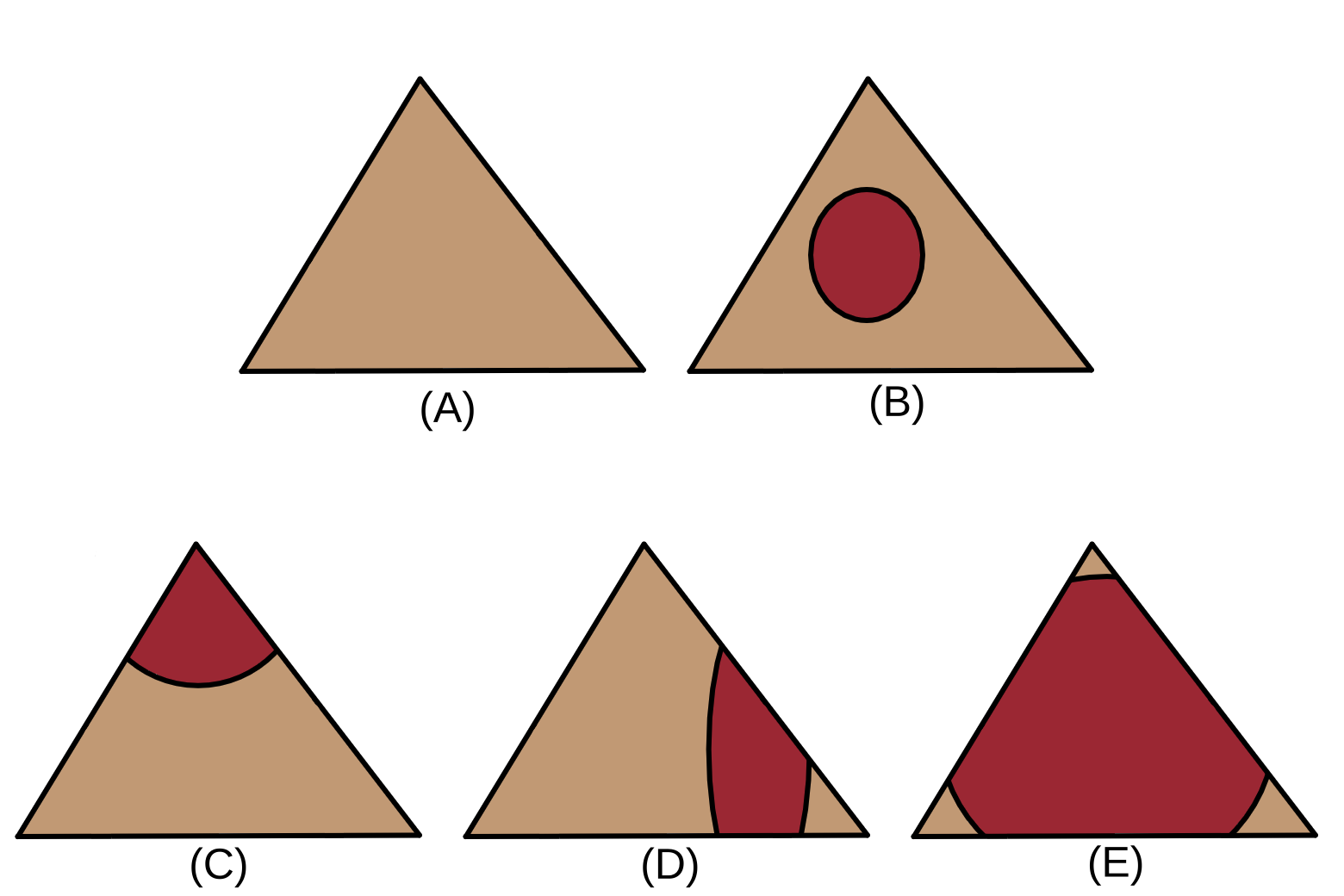}
\vspace{-4mm}
\caption{We demonstrate the different possible ways that $\gamma_d = \gamma_s$ can divide $\sigma$ into multi regions. The red region is the interior of $\gamma_d = \gamma_s$, and the orange region is the outside. (A) $\gamma_d = \gamma_s$ does not intersect $\sigma$. (B) $\gamma_d = \gamma_s$ does not intersect the boundary of $\sigma$, but forms a region within its interior. (C) $\gamma_d = \gamma_s$ intersects the boundary of $\sigma$ twice, forming two regions. (D) $\gamma_d = \gamma_s$ intersects the boundary of $\sigma$ four times, forming three regions. (E) $\gamma_d = \gamma_s$ intersects the boundary of $\sigma$ six times, forming four regions.}
\label{fig:eigenvector-topology}
\vspace{-6mm}
\end{figure}
\begin{proof}
If $\RRP$ and $\RRN$ are both inside of $\sigma$, then by \cref{lemma:eigenvector-single-r}, there is only one possible eigenvector graph. That eigenvector graph will have one node of each type $\RRP$, $\SRP$, $\SRN$ and $\RRN$. Which regions border which vertices can be determined from the classifications of each vertex, and \cref{lemma:eigenvector-border-edge-r} gives which edges are bordered by $\RRP$ and $\RRN$. From the vertex classification, we know whether $\gamma_r > 0$ for each vertex. From that information, we can deduce which edges intersect the line $\gamma_r = 0$. Then, \cref{lemma:eigenvector-border-edge-s} allows us to deduce which edges border $\SRP$ and $\SRN$.

Now suppose that $\RRP$ and $\RRN$ do not both intersect $\sigma$. Without loss of generality, suppose that $\RRN$ does not intersect $\sigma$. Since $\RRP$ is convex, its boundary, given by $\gamma_r = \gamma_s$, will intersect the boundary of $\sigma$ either zero, two, four, or six times. In doing so, $\gamma_r = \gamma_s$ will divide $\sigma$ into, respectively, one, two, three, or four regions. If $\gamma_r = \gamma_s$ never intersects the boundary of $\sigma$, it can also possibly be an ellipse in the interior of $\sigma$, thus dividing $\sigma$ into two regions. We illustrate each of these cases in \cref{fig:eigenvector-topology}. Further, the line $\gamma_r = 0$ We now check each case.

\underline{Case 1: $\gamma_r = \gamma_s$ intersects $\sigma$ zero times:} We check four subcases:

\underline{Case 1(i): $\partial \RRP = \emptyset$; all vertices are the same classification:} In this case, every vertex must be of the same class, either $\RRP$, $\SRP$ or $\SRN$. The eigenvector graph will be a single node, and its corresponding region will border every vertex and edge of $\sigma$.

\underline{Case 1(ii): $\partial \RRP = \emptyset$; some vertices have different classifications:} In this case, $\sigma$ will have regions of multiple classifications. None of them can be $\RRP$, as otherwise $\partial \RRP$ would separate $\RRP$ from another region. Thus, some vertices will be $\SRP$ and others $\SRN$. By inspecting the vertex classifications, we can find which edges will intersect the line $\gamma_r = 0$. The line $\gamma_r = 0$ will divide $\sigma$ into two regions, one of type $\SRP$ and the other of type $\SRN$. Using \cref{lemma:eigenvector-border-edge-s}, we can deduce which edges intersect each of the two regions.

\underline{Case 1(iii): $\partial \RRP \not = \emptyset$; all vertices are the same classification:} In this case, $\RRP$ will be a convex region contained within the interior of $\sigma$. Because $\RRP$ can only border $\SRP$, it must be contained within the interior of $\SRP$. Thus, every vertex must be $\SRP$. As a result, the eigenvector graph will have two nodes, one of type $\RRP$ and the other of type $\SRP$. The region of type $\SRP$ will border every vertex and edge.

\underline{Case 1(iv): $\partial \RRP \not = \emptyset$; not all vertices are the same classification:} In this case, $\RRP$ will be a convex region in the interior of $\sigma$. As before, it must be contained within $\SRP$. In this case, some vertices must be of type $\SRP$ and others of type $\SRN$. The topology can be found by using similar reasoning to before.

\underline{Case 2: $\gamma_r = \gamma_s$ intersects $\partial \sigma$ twice:} In this case, $\gamma_r = \gamma_s$ will divide $\sigma$ into two regions. We can determine which is which by inspecting the classifications of the vertices.

We can inspect the classifications of the vertices to detect if the line $\gamma_r = 0$ runs through $\sigma$. If it does, it will divide the region where $\gamma_r < \gamma_s$ into two regions, one of type $\SRP$ and the other $\SRN$, where $\RRP$ will border $\SRP$. If $\gamma_r = 0$ does not run through $\sigma$, then $\sigma$ will be divided into two regions, one of type $\SRP$ and the other of type $\RRP$. In either case, we can compute the eigenvector graph. Then, which regions intersect which vertices and edges can be computed from \cref{lemma:eigenvector-border-edge-r} and \cref{lemma:eigenvector-border-edge-s}.

\underline{Case 3: $\gamma_r = \gamma_s$ intersects $\partial \sigma$ four times:} In this case, $\gamma_r = \gamma_s$ will divide $\sigma$ into three regions. In two regions, $\gamma_s > \gamma_r$. The third region, which lies between the other two, will satisfy $\gamma_r > \gamma_s$.

Because $\RRP$ is convex, any region where $\gamma_s > \gamma_r$ (that lies outside of $\RRP$) must contain at least one vertex of $\sigma$. One can verify that two regions where $\gamma_s > \gamma_r$ will be connected if and only if both regions border some edge $e$ that does not intersect $\gamma_r = \gamma_s$. By inspecting which vertices and edges are intersected by $\gamma_r = \gamma_s$, we can uniquely identify the different regions where $\gamma_r > \gamma_s$.

Using similar reasoning previous cases, we can compute the regions that intersect the line $\gamma_r = 0$. Any region where $\gamma_s > \gamma_r$ not intersected by the line $\gamma_r = 0$ will be entirely classified as $\SRP$. Any region where $\gamma_s > \gamma_r$ intersected by $\gamma_r = 0$ will be divided into two regions, one of type $\SRP$ and the other $\SRN$. The region of type $\SRP$ border $\RRP$, while $\SRN$ will not border $\RRP$.

Which vertices intersect $\RRP$ follow from their classifications, and \cref{lemma:eigenvector-border-edge-r} determines which edges border $\RRP$. For any region where $\gamma_s > \gamma_r$, if that region borders an edge $e$, then it must border some vertex $v$ of $e$. Thus, we can identify all edges and vertices bordering that region. For any region where $\gamma_s > \gamma_r$ that intersects the line $\gamma_r = 0$, by inspecting the position of the line we can determine which edges intersect the sub-regions of type $\SRP$ and $\SRN$.

\underline{Case 4: $\gamma_r = \gamma_s$ intersects $\partial \sigma$ six times:} In this case, $\gamma_r = \gamma_s$ will divde $\sigma$ into four regions. There will be three regions where $\gamma_s > \gamma_r$. From here, computing the topology follows closely to the previous case.
\end{proof}

\lemmaEigenvalueTopology*
\begin{figure}
\begin{overpic}[width=0.9\linewidth]{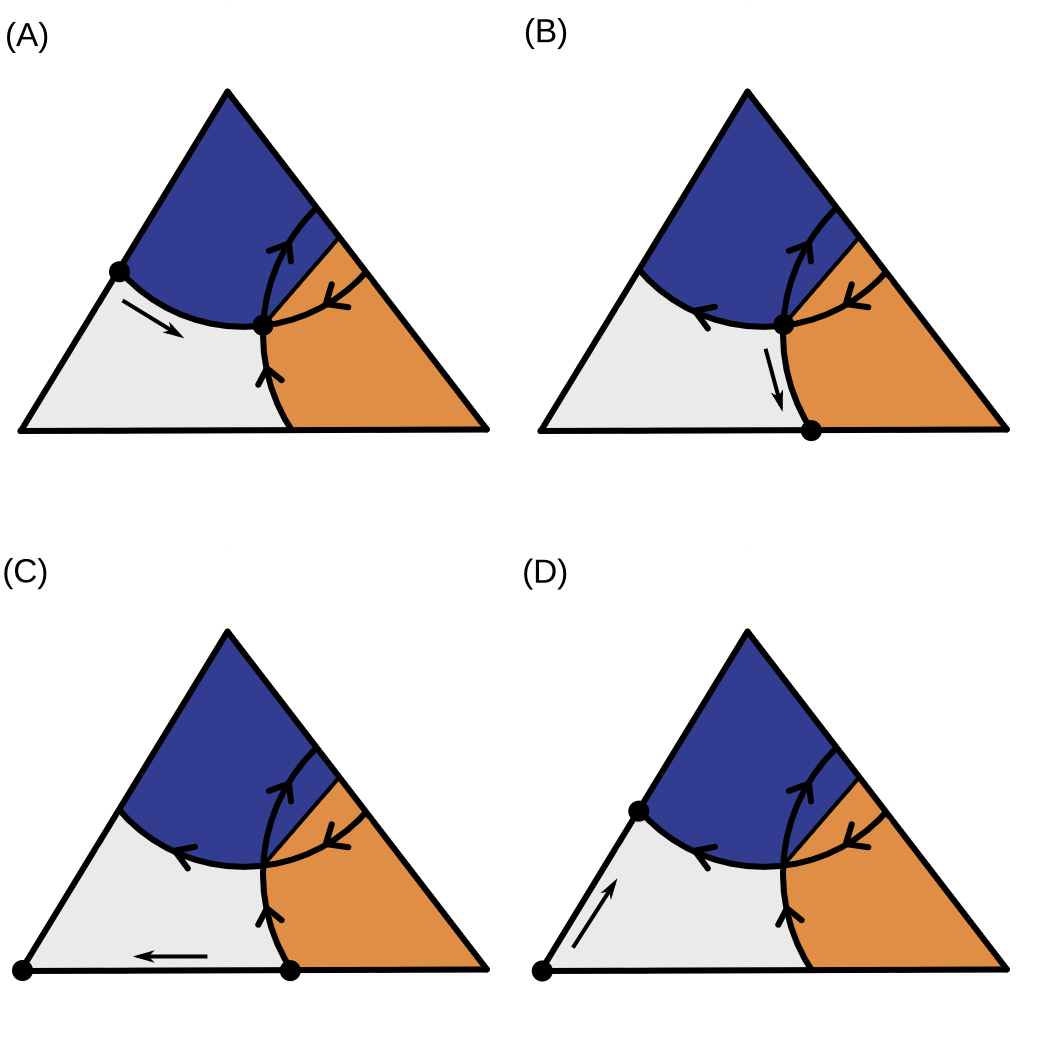}
%a
\put (7, 74){$p_i$}
\put (21, 65){$p_i'$}

%b

\put (71, 71){\textcolor{white}{$p_i$}}
\put (76, 54){$p_i'$}

%c

\put (25, 3){$p_i$}
\put (1, 3){$p_i'$}

%d

\put (51, 3) {$p_i$}
\put (57, 21) {$p_i'$}

\end{overpic}
\vspace{-4mm}
\caption{If $p$ is the vertex of the region of type $\SA$, we identify the next vertex along the boundary of $\SA$ when traveling clockwise. (A) $p_i$ is a point where a conic section enters $\sigma$. (B) $p_i$ is a junction point. (C) $p_i$ is a point where a conic section leaves $\sigma$. (D) $p_i$ is a vertex of $\sigma$.}
\label{fig:eigenvalue-topology}
\vspace{-6mm}
\end{figure}
\begin{proof}
We assume that every region of the eigenvalue partition does not completely surround any other region. Because $\DP$, $\DN$, $\RP$, and $\RN$ are convex, the only type of region that could surround another region is the region of type $\SA$. \Cref{lemma:internal-ellipse-detection}(1) gives a necessary and sufficient condition for such a situation to occur that can be determined from (a), (b), (c), and (d). Further, \Cref{lemma:internal-ellipse-detection}(2) demonstrates that, when such a case arises, the topology is very easy to recover from (a) (b) (c) and (d).

We first claim that the number of nodes in the eigenvector graph of each region type can be recovered. 

If any vertex is of type $\DP$, as in (a), or if $\gamma_d = \gamma_s$ has any junction points or topologically significant intersections with edges of $\sigma$, as described in (b), then clearly $\DP \cap \sigma \not = \emptyset$. If $\gamma_d = \gamma_s$ does not have any intersections, then from (d) we will know whether or not $\DP \cap \sigma = \emptyset$. The same is true for $\DN$, $\RP$, and $\RN$. Thus, we will know whether there is at least one node of each of these types in the eigenvalue graph. By \cref{lemma:regions-are-convex}, $\DP$, $\DN$, $\RP$, and $\RN$ each have at most one connected component, so there can be at most one node of each of these types in the eigenvalue graph. Thus, we can recover the number of nodes in the eigenvalue graph of types $\DP$, $\DN$, $\RP$, and $\RN$.

We are only left to find the number of type $\SA$. We will compute the number of nodes of type $\SA$ by tracing the boundary of each region within the eigenvalue partition of $\sigma$ using the points from (b).

Let $P = \{p_1,p_2,\ldots,p_n\}$ be all of the points from (b), along with the vertices of $\sigma$ that are of type $\SA$ (which we can identify from (a)). It is easy to show that each $p \in P$ borders one region of type $\SA$.

The boundary of each region $R$ of type $\SA$ must consist of sections of the conics $\gamma_d^2 = \gamma_s^2$ and $\gamma_r^2 = \gamma_s^2$, as well as sections of edges of $\sigma$. Any time that two of these segments meet on the boundary of $R$, they will intersect at some point $p \in P$. Thus, for each region of type $\SA$, by traversing the boundary one will encounter a loop of points in $P$.

Now let $p_i \in P$ be a vertex of the boundary of some region $R$ of type $S$. We claim that (b) and (c) allow us to compute the next vertex $p_i'$ that immediately follows $p_i$ when tracing $R$ clockwise. We check four cases. We illustrate each case in \cref{fig:eigenvalue-topology}.

\noindent \underline{Case 1: A conic section $l$ leaves $\sigma$ at $p_i$:} In this case, $p_i'$ will be the next point in $P$ that $l$ intersects when traveling counterclockwise along $l$. We can find this point $p_i'$ from (b). We illustrate this case in \cref{fig:eigenvalue-topology}(A).

\noindent \underline{Case 2: $p_i$ is a junction point:} Denote the two curves that meet at the junction point as $l_1$ and $l_2$. At the intersection, one of the conics will be entering the interior of the other at $p_i$. Without loss of generality, suppose that $l_1$ enters the interior of $l_2$. Then we find the point $p_i'$ along $l_1$ that comes after $p_i$ when traveling counterclockwise along $l_1$. we illustrate this case in \cref{fig:eigenvalue-topology}(B).

\noindent \underline{Case 3: A conic section $l$ enters $\sigma$ at $p_i$:} Let $e$ be the edge where $l$ enters $\sigma$. Let $v_1$ and $v_2$ be the vertices of $e$, such that $v_1$ comes after $v_2$ when traversing $\sigma$ clockwise. In this case, then the boundary of $R$ will travel towards $v_1$. Then, $p_i'$ will be the closest point in $P$ to $p_i$ along $e$ when traveling towards $v_1$. We can identify $p_i'$ by identifying the relative positions of each point in $P \cap e$ relative to vertices $v_1$ and $v_2$. We illustrate this case in \cref{fig:eigenvalue-topology}(C).

\noindent \underline{Case 4: $p_i$ is a vertex of $\sigma$:} In this case, let $e$ be the edge that comes after $p_i$ when traversing $\sigma$ in clockwise order. Then $p_i'$ will be the point on $e$ closest to $p_i$. We illustrate this case in \cref{fig:eigenvector-topology}(D).

Across all cases, we can find the next point $p_i'$ that follows after point $p_i$. This allows us to define a permutation on the points in $P$. Each cycle of the permutation will correspond to a different connected component of the boundary of some region $R$ of type $\SA$. Since no region of type $\SA$ surrounds any other region, the boundary of each region $R$ of type $\SA$ will be connected. Thus, we can uniquely identify each region of type $\SA$.

Next, we find which edges and vertices are adjacent to each region. Clearly, the construction of each region of type $S$ will yield which vertices and edges it borders. 

For the other types of regions, we can inspect the vertex classifications and edge intersections to identify which regions intersect each edge. For $\DP$, if a vertex $v$ is of type $\DP$, then $\DP$ will border $v$ as well as the two edges the join at $v$. Also, $\DP$ must border an edge $e$ if $\gamma_d = \gamma_s$ has a topologically significant intersection with $e$. If, for some edge $e$, $\DP$ does not intersect either vertex of $e$, and $\gamma_d = \gamma_s$ has no topologically significant intersections with $e$, then \cref{lemma:edge-intersections} gives a necessary and sufficient condition for $\DP$ to border $e$ that can be determined from (a), (b), (c), and (d). We can use the same approach for $\DN$, $\RP$, and $\RN$.

Finally, we must determine which regions of the eigenvalue partition border one another. The construction of each region of type $S$ will yield which other regions it borders. \Cref{lemma:region-boundaries} provides a necessary and sufficient condition for two regions of types $\DP$, $\DN$, $\RP$, or $\RN$ to border each other that can be computed from (a), (b), (c), (d), and which edges and vertices each region borders.
\end{proof}
\subsection{Handling of Edge Cases}
\label{appendix:partition-correctness-edge-cases}

In the previous proofs, we made the following assumptions:

\begin{itemize}[noitemsep]
\item[(i)] $\gamma_d^2 = \gamma_s^2$ and $\gamma_r^2 = \gamma_s^2$ only intersect each other and each edge of $\sigma$ transversally. 
\item[(ii)] No junction points occur on the edges of $\sigma$. 
\item[(iii)] No two of the functions $|\gamma_d|$, $|\gamma_r|$, or $\gamma_s$ are exactly equal on all of $\R^2$.
\item[(iv)] The conic sections $\gamma_d^2 = \gamma_s^2$ and $\gamma_r^2 = \gamma_s^2$ are either empty or have infinitely many points.
\item[(v)] The curve $\gamma_d = \gamma_s$ will border one region of type $\DDP$ and another where $\gamma_s > |\gamma_d|$. Make similar assumptions about $-\gamma_d = \gamma_s$, $\gamma_r = \gamma_s$, and $-\gamma_r = \gamma_s$.
\item[(vi)] $|\gamma_d|$, $|\gamma_r|$ and $\gamma_s$ are never equal on a vertex of $\sigma$.
\item[(vii)] There are no vertices of $\sigma$ where $\gamma_r = 0$.
\item[(viii)] There are no points in $\sigma$ where $\gamma_d = \gamma_s = 0$ or $\gamma_r = \gamma_s = 0$.
\end{itemize}

In \cref{appendix:edge-cases}, we ensure (i)-(v) using symbolic perturbations. Thus, these are fair assumptions to make in our proofs. 

For (vi) or (vii), it is possible that $|\gamma_d| = |\gamma_r|$ or $|\gamma_d| = \gamma_s$ or $|\gamma_r| = \gamma_s$ or $\gamma_r = 0$ on some vertex of $\sigma$. In that case, we track it in our topological invariant. It is not difficult to modify our existing proofs to accommodate these cases.

(viii) is more difficult to handle. If assumption (viii) does not hold, and there exists some $z \in \sigma$ satisfying $\gamma_d(z) = \gamma_s(z) = 0$, but $|\gamma_r(z)| > 0$, then the point $z$ will lie within $\RP$ or $\RN$ and will not lie on the boundary between any different regions. Thus, we can apply a small perturbation to $\gamma_d$ or $\gamma_s$ to ensure that $\gamma_d(z) \not = 0$ or $\gamma_s(z) \not = 0$ (ensuring that (viii) holds) without altering the topology. It is similar if $\gamma_r(z) = \gamma_s(z) = 0 < |\gamma_d(z)|$.

Thus, we are only left to handle the case where there is a point $z$ where $\gamma_d(z) = \gamma_r(z) = \gamma_s(z) = 0$. To handle the topology, we employ the following lemmas:

\begin{lemma}
Suppose that $z \in \sigma$ is a point where $\gamma_d(z) = \gamma_r(z) = \gamma_s(z) = 0$. Let $p$ lie on the boundary of $\sigma$. Suppose that $p \in \DP$. Let $q$ lie on the segment between $p$ and $z$. Then $q \in \DP$.

Similar lemmas are true for $\DN$, $\RP$, $\RN$, $\RRN$, $\RRP$ and $\RRN$.
\end{lemma}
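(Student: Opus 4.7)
The plan is to reduce the claim to the affineness of $\gamma_d$ and $\gamma_r$ (Lemma \ref{lemma:its-all-PL}) together with the convexity of $\gamma_s$ (Lemma \ref{lemma:s-is-convex}). First I would parametrize the segment from $p$ to $z$ as $q(t) = (1-t)p + tz$ for $t \in [0, 1)$. Note that $q = z$ must be excluded, since $\gamma_d(z) = \gamma_r(z) = \gamma_s(z) = 0$ fails the strict inequalities in the definition of $\DP$; the statement implicitly means points on the segment other than $z$.

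Next, evaluating the coefficient functions at $q(t)$: because $\gamma_d$ and $\gamma_r$ are affine and vanish at $z$, we get $\gamma_d(q(t)) = (1-t)\gamma_d(p)$ and $\gamma_r(q(t)) = (1-t)\gamma_r(p)$. Because $\gamma_s$ is convex and nonnegative with $\gamma_s(z) = 0$, we get $\gamma_s(q(t)) \leq (1-t)\gamma_s(p)$. Scaling the two defining inequalities of $\DP$ at $p$ by the positive factor $(1-t)$ then yields
\[
\gamma_d(q(t)) = (1-t)\gamma_d(p) > (1-t)|\gamma_r(p)| = |\gamma_r(q(t))|
\]
and
\[
\gamma_d(q(t)) = (1-t)\gamma_d(p) > (1-t)\gamma_s(p) \geq \gamma_s(q(t)),
\]
so $q(t) \in \DP$.

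For the remaining cases, the same template works with only the dominant coefficient changed: replace $\gamma_d$ by $-\gamma_d$ for $\DN$, by $\gamma_r$ for $\RP$, and by $-\gamma_r$ for $\RN$; in each case the dominant coefficient is affine and vanishes at $z$. For $\RRP$ and $\RRN$ the conditions involve only $\gamma_r$ and $\gamma_s$, so the argument is even simpler, requiring just one scaled inequality.

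I do not anticipate a real obstacle here; the whole lemma is essentially a convex-combination computation. The only subtlety is being careful to exclude the endpoint $z$ (where all three coefficients are zero and the strict inequalities defining the region fail), and to keep track that the convexity bound on $\gamma_s$ goes the correct direction to preserve the strict inequalities $\gamma_d(p) > \gamma_s(p)$ (or analogous) after scaling.
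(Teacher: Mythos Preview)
Your proposal is correct and follows essentially the same approach as the paper, which simply states that the result follows from $\gamma_d$ and $\gamma_r$ being affine and $\gamma_s$ being convex; you have merely filled in the details of that computation. Your observation that the endpoint $q=z$ must be excluded is a valid refinement that the paper's one-line proof glosses over.
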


\begin{proof}
This follows from the fact that $\gamma_d$ and $\gamma_r$ are affine, and $\gamma_s$ is convex.
\end{proof}

\begin{lemma}
Suppose that $z \in \sigma$ is a point where $\gamma_d(z) = \gamma_r(z) = \gamma_s(z) = 0$. Let $p$ lie on the boundary of $\sigma$. Suppose that $p \in \SRP$. Let $q$ lie on the segment between $p$ and $z$. Then $q \in \SRP$.

Similar lemmas are true for $\SRN$ and $\SA$.
\end{lemma}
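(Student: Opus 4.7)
The plan is to exploit the fact that at the special point $z$ not only $\gamma_s(z) = 0$ but also the two affine coordinate functions $c$ and $s$ from \cref{lemma:its-all-PL} both vanish at $z$; this turns $\gamma_s$ from a merely convex function into a linear function along the segment $[p,z]$, which is what makes the required strict inequalities work.

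First, I would recall from \cref{lemma:its-all-PL} that $c$, $s$, $\gamma_d$ and $\gamma_r$ are affine on $\sigma$, and that $\gamma_s = \sqrt{c^2 + s^2}$. Since $\gamma_s(z) = 0$, we must have $c(z) = s(z) = 0$. Combined with $\gamma_d(z) = \gamma_r(z) = 0$, this means all four affine functions $c, s, \gamma_d, \gamma_r$ vanish at $z$.

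Next, parametrize $q = t\,p + (1-t)\,z$ for some $t \in (0,1]$ (the point $q = z$ is excluded since it lies in none of the regions being considered). By affineness of $c$, $s$, and $\gamma_r$, and the vanishing of these at $z$, we get
\[
c(q) = t\,c(p), \qquad s(q) = t\,s(p), \qquad \gamma_r(q) = t\,\gamma_r(p).
\]
Therefore
\[
\gamma_s(q) = \sqrt{c(q)^2 + s(q)^2} = t\sqrt{c(p)^2 + s(p)^2} = t\,\gamma_s(p),
\]
where we used $t \geq 0$ to pull $t$ out of the square root. Since $p \in \SRP$ gives $\gamma_s(p) > \gamma_r(p) > 0$, multiplying by $t > 0$ yields $\gamma_s(q) > \gamma_r(q) > 0$, so $q \in \SRP$ as claimed.

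The analogous statements for $\SRN$ and $\SA$ follow by exactly the same scaling identity: along the segment $[p,z]$, each of $|\gamma_d|$, $|\gamma_r|$, $\gamma_s$ gets scaled by the same nonnegative factor $t$, so any strict inequality among them at $p$ is preserved at $q$. I do not expect a real obstacle here; the only subtle point is recognizing that $\gamma_s(z) = 0$ forces $c(z) = s(z) = 0$, which upgrades convexity of $\gamma_s$ to positive homogeneity along this particular segment. Without this upgrade, the convex-minus-affine argument used for the affine-only regions $\DP, \DN, \RP, \RN$ would not give the strict lower bound needed on $\gamma_s(q)$.
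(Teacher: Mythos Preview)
Your proof is correct and follows essentially the same approach as the paper: both parametrize the segment from $z$ to $p$ and exploit the vanishing of all coefficients at $z$ to obtain a scaling relation. The paper phrases this via the observation that $(\gamma_r\circ\phi)^2$ and $(\gamma_s\circ\phi)^2$ are quadratics with vertex at $t=0$, whereas you make the equivalent but slightly more direct observation that $c(z)=s(z)=0$ forces $\gamma_s(q)=t\,\gamma_s(p)$ exactly; your formulation is arguably cleaner since it works with $\gamma_s$ itself rather than its square.
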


\begin{proof}
Let $l$ be the line connecting $p$ to $z$. Let $\phi$ parametrize $l$ with $\phi(0) = z$ and $\phi(1) = p$. Notice that $(\gamma_r \circ \phi)^2$ and $(\gamma_s \circ \phi)^2$ are both quadratic functions. Both of these functions will have a vertex at $t = 0$, where they intersect. Since $(\gamma_r \circ \phi)(1)^2 < (\gamma_s \circ \phi)(1)^2$, it follows that $(\gamma_r \circ \phi)(t)^2 < (\gamma_s \circ \phi)(t)^2$ for every $t > 0$. Thus, $\gamma_s(q)^2 > \gamma_r(q)^2$.

Additionally, because $\gamma_r$ is affine, $\gamma_r(p) > 0$, and $\gamma_r(z) = 0$, it follows that $\gamma_r(q) > 0$. Thus, $q \in \SRP$.
\end{proof}

Thus, for both the eigenvector and eigenvalue partitions, the toplogy of $\sigma$ is determined entirely based on the classifications of points along the boundary. By analyzing where the curves $\gamma_d = \gamma_s$ and $\gamma_r = \gamma_s$ have topologically significant intersections with the boundary of $\sigma$, as well as the classifications of each vertex, we are able to determine the topology around the boundary of $\sigma$. If there is a point $z \in \sigma$ where $\gamma_d(z) = \gamma_r(z) = \gamma_s(z) = 0$, we also note this in our invariant. Thus, when there is a point where $\gamma_d(z) = \gamma_r(z) = \gamma_s(z)$, our invariant is enough to recover the topology of $\sigma$.

\end{document}